\newcommand{\ud}{\mathrm{d}}
\newcommand{\ii}{\mathrm{i}}
\newcommand{\cH}{\mathcal{H}}
\newcommand{\tb}{\mathtt{b}}
\newcommand{\N}{\mathbb N}
\newcommand{\R}{\mathbb R}
\newcommand{\Z}{\mathbb Z}
\newcommand{\T}{\mathbb T}
\newcommand{\strong}{\mathtt{s}}
\newcommand{\uv}{\mathtt{uv}}
\newcommand{\ir}{\mathtt{ir}}
\newcommand{\llan}{\langle\! \langle}
\newcommand{\rran}{\rangle\! \rangle}
\newcommand{\inv}{\mathrm{inv}}
\newcommand{\obs}{\mathrm{obs}}
\newcommand{\eff}{\mathrm{eff}}
\newcommand{\ccccc}{\mathtt{c}}
\newcommand{\odd}{\mathrm{odd}}
\newcommand{\even}{\mathrm{even}}
\definecolor{azzurro}{rgb}{0.19, 0.55, 0.91}
\definecolor{giallino}{rgb}{0.97,	0.78,	0.05}
\definecolor{verdino}{rgb}{0.49, 0.74, 0.54}
\definecolor{verdissimo}{rgb}{0.0, 0.5, 0.0}
\theoremstyle{plain}
\newtheorem{theorem}{Theorem}[section]
\newtheorem{lemma}[theorem]{Lemma}
\newtheorem{corollary}[theorem]{Corollary}
\newtheorem{proposition}[theorem]{Proposition}
\renewenvironment{thebibliography}[1]{
  \begin{oldthebibliography}{#1}
    \setlength{\itemsep}{0.1em}
    \setlength{\parskip}{0em}
}
{
  \end{oldthebibliography}
}
\theoremstyle{definition}
\newtheorem{remark}[theorem]{Remark}
\newtheorem*{remark*}{Remark}
\numberwithin{equation}{section}
\begin{document}

%

\title{\textbf{{Prethermalization and Conservation Laws in Quasi-Periodically Driven Quantum Systems}}}

\author{Matteo Gallone\footnote{\emph{Email:} \texttt{matteo.gallone@sissa.it}} $\,$ and Beatrice Langella\footnote{\emph{Email:} \texttt{beatrice.langella@sissa.it}} \\ \vspace{-10pt} \\
\small \textit{International School for Advanced Studies -- SISSA }\\ \vspace{-18pt} \\ \small \textit{via Bonomea 265 -- 34136 Trieste (Italy)}.}

%
%


\date{\today}




\maketitle

\vspace{-10pt}

\begin{abstract}
We study conservation laws of a general class of quantum many-body systems subjected to an external time dependent quasi-periodic driving. {When the frequency of the driving is large enough or the strength of the driving is small enough, we prove a Nekhoroshev-type stability result: we show that the system exhibits a prethermal state for stretched exponentially long times in the perturbative parameter}. Moreover, we prove the quasi-conservation of the constants of motion of the unperturbed Hamiltonian and we analyze their physical meaning in examples of relevance to condensed matter and statistical physics.
\end{abstract}


\tableofcontents



\section{Introduction}
The investigation of physics of many-body systems out of thermal equilibrium has recently attained a considerable amount of attention. Due to new developments in manipulating quantum systems, fundamental questions related to the thermalization process can be investigated both theoretically and experimentally. Moreover, in the last few years, the establishment of the paradigm of \emph{prethermalization} opened an avenue in the experimental realization of new, exotic out-of-equilibrium phases of matter. Mathematically rigorous investigations on this phenomenon have also been initiated, especially in the context of quantum many-body systems in presence of external drivings \cite{Abanin2017,DeRoeck-Verreet,Else2020}.

With this motivation, in the present paper we consider a class of quantum many-body systems on a lattice whose dynamics is generated by a time dependent Hamiltonian $H(t)$ of the form
\begin{equation}\label{H0}
H(t)\;=\;H_0+P( \omega t)\,,
\end{equation} 
where $\omega \in \R^m$ and $P(\omega t)$ is regarded as a perturbation of $H_0$. We consider perturbations that are quasi-periodic in time and either small in size, i.e. $P \sim \epsilon$, or having high frequency, i.e. $\omega \sim \epsilon^{-1}$, $\epsilon \ll 1$. We focus on the case where the unperturbed operator admits the structure
\begin{equation}\label{JdotN}
H_0\;=\;\sum_{\alpha=1}^r J_\alpha N^{(\alpha)} \;=:\; J \cdot N\,,
\end{equation}
where $J_1, \dots, J_r \in \mathbb{R}$, and $N^{(1)}, \dots,  N^{(r)}$ are local\footnote{With a slight abuse of terminology, we say that an extensive operator is local when it can be written as the sum of local terms, the support of which does not grow with volume.}, mutually commuting, extensive self-adjoint operators with integer spectrum. In this setting, in absence of the perturbation, the dynamics conserves each of the $N^{(\alpha)}$'s separately. Under suitable non resonance and strong locality hypotheses, we prove that, for the two classes of perturbations delined above, each of the $N^{(\alpha)}$'s is \emph{quasi-conserved} for a \emph{stretched exponentially long time} in the small perturbative parameter $\epsilon$. All our results hold uniformly in the size of the lattice. Moreover, we construct an effective \emph{time independent} \emph{local} Hamiltonian $H_\eff$ whose flow approximates very well the dynamics of local observables in the following sense. If $O$ is a local observable and $U$ denotes the unitary operator representing the time evolution generated by $H(t)$, then
\begin{itemize}
\vspace{-6pt}
	\item[(i)] for all $t \lesssim e^{\epsilon^{-c}}$
	\begin{equation}
		\Vert U^*(t) O U(t)-e^{-\ii H_\eff t} O e^{\ii H_\eff t} \Vert_{\mathrm{op}} \;\lesssim\; \epsilon^a \, ,
	\end{equation}
\vspace{-12pt}
	\item[(ii)] there exists an infinite sequence of times $\{t_j\}_{j \in \N}$, $t_j \sim j e^{f \epsilon^{-c}}$ for which this vicinity of dynamics is actually stretched exponentially small, namely
	\begin{equation}
		\Vert U^*(t_j) O U(t_j)-e^{-\ii H_\eff t_j} O e^{\ii H_\eff t_j} \Vert_{\mathrm{op}} \;\lesssim\; j^{d+2} e^{-\epsilon^{-c}} \, ,
	\end{equation}
\end{itemize}
\vspace{-6pt}
	where $d$ is the dimension of the lattice and $a, f, c>0$ are independent of $\epsilon$ and of the size of the lattice.
	
The conservation of each of the $N^{(\alpha)}$'s encodes information on the \emph{collective dynamical behavior} of the model during the prethermal phase. Practically, this suggests that in the absence of other conserved quantities, a good candidate state for predictions in the prethermal phase is given by the Generalized Gibbs Ensemble $\rho_{GGE} = e^{-\sum_{j=1}^{r} \lambda_j N^{(j)}}/Z_{GGE}$ with $Z_{GGE}=\mathrm{Tr}(\rho_{GGE})$ \cite{Kollar2011}.

Paradigmatic models that fit the assumptions of our analysis are the quantum Ising chain in $d$ spatial dimensions (with or without external magnetic field) and generalizations of the Fermi-Hubbard Hamiltonian where $H_0$ is given by the interaction and $P(\omega t)$ by the kinetic energy. Leaving a more detailed discussion on these two models to Subsection \ref{subsec:GranchiettiCarini}, here we briefly discuss the consequences of our result on the dynamics of the one-dimensional quantum Ising chain with an external magnetic field. The magnetic field has two components: a time independent one along the $(3)$ axis, space-periodic with period $2$ lattice sites, and a quasi-periodically time dependent one along the $(1)$ axis. The Hamiltonian of the model is
\begin{equation}
	H(t)\;=\; - J \sum_{x \in \Lambda} \sigma_x^{(3)} \sigma_{x+1}^{(3)} - J_{\odd} \sum_{x \in \Lambda_{\odd}} \sigma_x^{(3)} -J_{\even} \sum_{x \in \Lambda_{\even}} \sigma_x^{(3)} -\epsilon B(\omega_1 t,\omega_2 t) \sum_{x \in \Lambda} \sigma_x^{(1)}  \, .
\end{equation}
When the perturbation is turned off ($\epsilon=0$) the number operators
\begin{equation}
	N_{\odd}:=\sum_{x \in \Lambda_\odd} \sigma_x^{(3)} \, , \qquad N_{\even}:=\sum_{x \in \Lambda_\even} \sigma_x^{(3)}\,, 
\end{equation}
which are the total magnetizations on even and odd sites, are conserved. 

When the perturbation is turned on, $N_\odd$ and $N_\even$ are not conserved anymore. 
A naive analysis suggests that $N_\odd$ and $N_\even$ are quasi-conserved for times $t \lesssim \epsilon^{-1}$.  Our result improves significantly this bound: if the vector $(J,J_\odd,J_\even,\omega_1,\omega_2)$ is non resonant, then  $N_\odd$ and $N_\even$	are quasi-conserved for stretched exponentially long times in $\epsilon$. In practice, this means that if in the initial state of the system all sites with spin up are in the odd sublattice, the number of sites with spin up on the even sublattice is $\sim \epsilon^a |\Lambda|$ for times that are stretched exponentially long in $\epsilon$.

Perturbations of Hamiltonians $H_0$ of the form \eqref{JdotN} were first studied by De Roeck and Verreet \cite{DeRoeck-Verreet} in the small size, time periodic case, proving quasi-conservation of the $N^{(\alpha)}$ operators and the existence of an effective \emph{time dependent} Hamiltonian $H_\eff(t)$ whose flow approximates the evolution of local observables for stretched exponentially long time.

\smallskip

In our work we prove that the quasi-conservation of the $N^{(\alpha)}$'s persists under quasi-periodic perturbations. This  extends what is found in \cite{DeRoeck-Verreet} for the small size time periodic case and is completely new in the case of high frequency perturbations. Moreover, we conjugate the dynamics, up to exponentially small errors, to the one of a \emph{time independent} Hamiltonian $H_\eff$ quasi-commuting with all the operators $N^{(\alpha)}$'s, in contrast with the general expectations expressed in \cite{DeRoeck-Verreet}. To obtain this result, we implement a new normal form procedure with respect to the ones developed in the previous works in the field \cite{Abanin2017,DeRoeck-Verreet,Else2020}, deeply based on the spectral properties of the operators $N^{(\alpha)}$. In addition, we prove that the phenomenon of stroboscopic times originally found by Abanin et al.\ \cite{Abanin2017} in the time periodic setting can also be observed in systems with quasi-periodic driving. Such a result is new both for fast forcing and for small size perturbations. It requires the combination of Lieb-Robinson bounds \cite{Lieb1972}, in order to control the time evolution of local observables, with a refined analysis of the ergodization time on the torus, for which we apply the results of \cite{Berti2003}.

\vspace{5pt}

\noindent
{\textbf{Prethermalization in classical and quantum systems.}
In general, prethermalization refers to the fact that certain physical systems, when initialized far from equilibrium, quickly relax to long-living non-thermal states, before eventually reaching the thermal state on much longer timescales \cite{Berges2004,Moeckel2008,Gring2012,Bertini2015,Lindner2017,Mori2016,Mallayya2019,Howell2019,Huveneers2020,RubioAbadal2020,Collura2022}. Often, during the prethermal dynamics, the time-evolution of the system is well-approximated by an effective Hamiltonian with some additional symmetries that are broken in the Hamiltonian of the system. This is the case, for example, of the first model on which prethermalization was observed: the Fermi-Pasta-Ulam-Tsingou chain \cite{FPU-Original}. For this nonintegrable model, the dynamics in the prethermal phase (in the past, this has often been referred to as the ``metastable state'') has been related to the dynamics of two integrable systems: the Toda chain \cite{Ferguson1982,BambusiMaspero2016,Benettin2023-vw}, and the Korteweg-de Vries hierarchy of equations  \cite{Zabusky1965,Bambusi2006,Gallone2021,Gallone2022PRL}. Rigorous prethermalization results for this model holding in the thermodynamic limit rely either on estimates of autocorrelation of the energies of Fourier modes \cite{Maiocchi2014} or on the quasi-conservation of bunches of Toda actions as adiabatic invariants \cite{Grava2020}. In general, the dynamics of a model in its prethermal phase is not necessarily described by an integrable Hamiltonian. This is the case, for example, of certain systems subjected to external drivings, mathematically encoded in a time dependent Hamiltonian: in the cases when the system prethermalizes, during the prethermal state the dynamics is often well-approximated by the dynamics generated by a time independent Hamiltonian, but this in general guarantees an approximate conservation of energy only. 

For many-body quantum systems, an effective protocol to produce prethermal states with certain controlled features is given by external drivings. It has been predicted theoretically that in absence of Many-Body Localization, the time dependent perturbation increases the energy of the system, which eventually reaches thermal equilibrium in a featureless, infinite-temperature Gibbs state \cite{Ponte2015,DAlessio2014,Lazarides2014}. In the particular case of a fast periodic driving, the heating rate of the system is suppressed and for a very long time, which is exponential in the frequency \cite{Abanin2017}, the system remains in a prethermal state. This prethermal state often presents interesting, non trivial features, and it is now customary to refer to the phases arising in this way as \emph{Floquet phases of matter} \cite{Potter2016,Ye2021,Eckhardt2022,Zhang2022,Zhang2022e}. 

In the case of quasi-periodic drivings, recent theoretical works analyzed the possibility of creating prethermal states which exhibit non-trivial thermodynamic features that are different from the Floquet ones \cite{Martin2017PRX,Lapierre2020,Long2021,Qi2021,Zhao2021,Zhao2022bis,Martin2022,Long2022PRB}.
Experimental works confirmed this possibility  \cite{Boyers2020-exp,Malz2021,Guanghui}.

\vspace{5pt}

\noindent
\textbf{Main novelties and related literature.}
As mentioned above, our work stems from a recent series of rigorous results on prethermalization for quantum many-body systems subjected to time periodic or time quasi-periodic driving.  Abanin et al.\ in \cite{Abanin2017} exhibit one observable $H_\eff$ which is quasi-conserved for stretched exponentially long times in systems with time periodic driving, and prove the existence of stroboscopic times $t_j \sim j$, $j \in \Z$, at which the dynamics generated by $H_\eff$ approximates the time evolution of local observables up to exponentially small errors. Else, Ho and Dumitrescu in \cite{Else2020} extend the existence result of one quasi-conserved quantity $H_\eff$ proven in \cite{Abanin2017} to the case of Hamiltonians of the form \eqref{H0}, with $ P(\omega t)$ a fast quasi-periodic forcing and without any further assumption on the unperturbed Hamiltonian $H_0$. De Roeck and Verreet in \cite{DeRoeck-Verreet} treat models of the form $H(t) = H_0 + \epsilon P(t)$, with $\epsilon P(t)$ a small size time periodic driving and $H_0$ as in \eqref{JdotN}. They prove quasi-conservation of the $N^{(\alpha)}$'s separately and the existence of an effective time dependent Hamiltonian $H_\eff(t)$ approximating the evolution of local observables with polynomial errors for exponentially long times.


In our work we extend both the results of \cite{Abanin2017} and of \cite{DeRoeck-Verreet}, {which hold for the time periodic case,} showing that quasi-conservation of the $N^{(\alpha)}$'s, the existence of an effective Hamiltonian whose flow approximates time evolution of local observables, and the phenomenon of recurrence times all persist under time quasi-periodic perturbations, either small in size or fast forcing. {We also extend the results in \cite{Else2020}, in the sense that, assuming the more specific structure \eqref{JdotN} for the unperturbed Hamiltonian, we exhibit the existence of several conserved quantities, instead of only one. Furthermore, we also treat the case of small size time quasi-periodic perturbations, which was not covered in the above papers. Up to our knowledge, ours is the first existence result of a prethermal Hamiltonian with perturbations in such class. We remark that none of the improvements that we present could be obtained with the normal form techniques of \cite{Abanin2017,DeRoeck-Verreet,Else2020}.}


\smallskip

In finite dimensional quasi-integrable classical systems, non convergent normal forms have been largely used to prove confinement of the dynamics for exponentially long times. Among all, we only mention here the milestone result given by Nekhoroshev Theorem \cite{Nek77, Nek79}, see also \cite{Poschel_nek, Lochak}, under assumptions of convexity or, more generally, steepness, of the unperturbed Hamiltonian $H_0$ in the actions (for a simplified exposition in the quadratic case see also \cite{GioPisa,NekCInfinito}), and the result in \cite{Benettin_Gallavotti},
which deals with the linear, non steep case $H_0 = J \cdot n$, where $n_1, \dots, n_r$ are classical action variables and $J \in \R^r$ is a non resonant vector.

We point out that, from a mathematical point of view, the main interest of the normal form we perform is that we require the smallness conditions on the perturbative parameter to be \emph{uniform} with respect to the volume of the lattice $\Lambda$. The typical situation is that the Hamiltonian $H(t)$ of a many-body system of particles on a lattice of volume $|\Lambda|$ has operator norm growing linearly with $|\Lambda|$; for such a reason, one cannot implement a normal form reducing at every step the size of the time dependent perturbation in operator norm, since this would yield empty results in the limit $|\Lambda| \rightarrow \infty$. For a prethermalization result on a quantum system \emph{without uniformity} in the volume $|\Lambda|$, see the work \cite{Monti_Jauslin}.
Instead, in order to obtain uniformity in the parameter $|\Lambda|$, following the ideas in \cite{Abanin2017}, we exploit the fact that $H(t)$ can be written as the sum of local operators and all its local parts have operator norm independent of $|\Lambda|$. Then one defines a suitable norm, which takes into account these properties and turns out to be uniform in $|\Lambda|$ in all our applications. Similar attempts in this direction were previously made in \cite{PonPon} (see also \cite{Wayne1, Wayne3, Benettin_Fro_Giorgilli}), where a KAM scheme is implemented for several classes of classical systems, among which arbitrary long chains of weakly coupled harmonic oscillators. The work \cite{PonPon} actually presents several choices of norms, suitably tuned to  optimize the quantitative aspects of the theory with respect to the number of degrees of freedom of the system, with the same idea that here we pursue of ``considering perturbations not as a single chunk but rather as
composites of smaller pieces reflecting an underlying spatial structure''. However, none of the norms presented therein suites to our analysis, as they would not allow us to assume smallness of the perturbation uniformly in $|\Lambda|$. Again in the classical case, a very similar idea based on the decomposition of functions on a lattice in terms of their local parts is implemented in \cite{Carati_Maiocchi}.

Finally, for finite and infinite dimensional KAM schemes dealing with fast quasi-periodic frequencies, we mention the works \cite{Baldi_Berti, Corsi_Genovese, Franzoi_Maspero, Franzoi}.

Our results hold under suitable non resonance hypotheses on the energy levels $J$ of the operators $N^{(\alpha)}$ and on the frequencies $\omega$ of the time forcing. Such conditions differ according to the class of perturbations we deal with. In the fast forced case, due to the fact that $|\omega| \gg |J|$, it is sufficient to require separately that the vectors $J$ and $\omega$ are non resonant, and in particular Diophantine (see the definitions in \eqref{eq:siamo.dei.cani.ff.1} and \eqref{eq:siamo.dei.cani.ff.2}). Recall that this is not a restrictive assumption, since Diophantine vectors are a full measure set. Instead, in the case of small size perturbations we impose that the frequency of the forcing term does not resonate with the energy levels of the unperturbed Hamiltonian $H_0$, which amounts to require that $(J, \omega) \in \R^{r+m}$  is Diophantine. For an example of a prethermalization result obtained in a specific case where the vector $(J, \omega)$ is instead resonant (and without the presence of an effective time independent Hamiltonian), see the work \cite{Ho-DeRoeck-2020}. For a classical result in absence of non resonance conditions, we mention the work \cite{Dario1}, see also \cite{Dario2}, which deals with systems of high frequency harmonic oscillators coupled with a slow system. In these works the frequencies $J$ of oscillation may actually be resonant; as a counterpart, differently from our case, nothing prevents energy exchanges among the harmonic oscillators in the fast system.

\vspace{5pt}

\noindent
\textbf{Structure of the work.} In Section \ref{sec:MainRes} we present the functional setting of the problem and the main results of the paper: conservation of the number operators $N^{(\alpha)}$'s and the properties of the effective Hamiltonian. We then proceed to explicitly study the consequences of our results on a couple of physically relevant models: the Fermi-Hubbard model with next-to nearest neighbor interactions and the quantum Ising chain in a quasi-periodic transverse field.

In Section \ref{sec:NormalFormResults} we provide the statements of the normal form results, together with a description of the main ideas of the proof. In Section \ref{sec:Preliminaries} we discuss the properties of strongly local operators and provide the solution(s) to the homological equations that is the key point in the proof of normal form results, presented in Section \ref{sec:norma.normale}. Finally, in Section \ref{sec:physical.cons} we combine the normal form results, suitable Lieb-Robinson bounds and results from dynamical systems to deduce the main results of the paper.

\vspace{5pt}

\noindent
\textbf{Notation.} All the notation we use is standard. Nevertheless, for clarity of the reader, we would like to specify the following

\begin{tabular}{ll}
	$|k|_1$ & is the $\ell^1$ norm of the vector $k \in \mathbb{R}^q$ \\
	$|k|_\infty$ & is the $\ell^{\infty}$ norm of the vector $k \in \mathbb{R}^q$  \\
	$\langle \omega \rangle$ & if $\omega \in \mathbb{R}^m$, denotes the Japanese bracket: $\langle \omega \rangle=\max\{1,|\omega|_\infty\}$. \\
	$\langle V \rangle$ & instead, if $V$ is an operator, is defined in \eqref{zeta.esplicita} \\
	$\llan V \rran$ & if $V$ is an operator, is defined in \eqref{cacio.e.pepe} \\
	$\lfloor \lambda \rfloor$ & is the integer part of $\lambda \in \mathbb{R}$. 
\end{tabular}


\vspace{5pt}

\noindent
\textbf{Acknowledgments.} The authors would like to thank  Dario Bambusi, Massimiliano Berti, Federico Bonetto, Alberto Maspero, Vieri Mastropietro, Gianluca Panati, Antonio Ponno and Marcello Porta for interesting discussions and suggestions on the present work, Alessio Lerose for pointing to our attention part of the literature on prethermalization and for his precious advice, and Stefano Marcantoni for his stimulating remarks on the preliminary version of this work. \newline
\indent M.G.\ acknowledges financial support by the European Research Council (ERC) under the European Union’s Horizon 2020 research and innovation program ERC StG MaMBoQ, n.80290 {and by the MIUR-PRIN 2017 project MaQuMa cod.\ 2017ASFLJR}. B.L.\ acknowledges financial support by PRIN 2020XB3EFL, Hamiltonian and dispersive PDEs. This work was partially supported by GNFM (INdAM) the Italian National Group for Mathematical Physics.

\section{Setting and Main Results}\label{sec:MainRes}
We consider a system on a finite lattice $\Lambda=\mathbb{Z}^d \cap [-L,L]^d$, $L \gg 1$ in $d$ spatial dimensions. The dynamics is generated by a quasi-periodic time dependent self-adjoint Hamiltonian $H(\omega t)$ acting on the Hilbert space $\mathcal{H}_\Lambda:=\bigotimes_{x \in \Lambda} \mathbb{C}^q \simeq (\mathbb{C}^q)^{\otimes |\Lambda|}$ for some $q \in \N$. 

For our purposes, we need to recall {the} standard notion of \emph{locality}. Let $\mathscr{B}(\mathcal{H}_\Lambda)$ be the algebra of bounded operators on $\mathcal{H}_\Lambda$ with the usual operator norm, namely
\begin{equation}
\Vert A \Vert_{\mathrm{op}}\; := \; \sup_{\substack{\psi \in \mathcal{H}_\Lambda \\ \Vert \psi \Vert_{\mathcal{H}_\Lambda}=1}} \Vert A \psi \Vert_{\mathcal{H}_\Lambda} \, 
\end{equation}
for any $A \in \mathscr{B}(\mathcal{H}_\Lambda)$.
If $S \subset \Lambda$, we denote by $\mathcal{H}_S$ the Hilbert subspace obtained as $\mathcal{H}_S:=\bigotimes_{x \in S} \mathbb{C}^q$. An operator $A \in \mathscr{B}(\mathcal{H}_\Lambda)$ is said to be a \emph{local operator acting within  $S \subset \Lambda$} if there exists an operator $A_S \in \mathscr{B}(\mathcal{H}_S)$ such that $A=\mathbbm{1}_{\Lambda \setminus S} \otimes A_S$. We denote by $\mathcal{P}_c(\Lambda)$ the collection of all the connected subsets $S \subset \Lambda$. We define the set of quasi-local operators as the closure (with respect to the operator norm) of the set of finite linear combinations of local operators.
A collection of operators $\{A_S\}_{S \in \mathcal{P}_c(\Lambda)}$ defines an operator $A=\sum_{S \in \mathcal{P}_c(\Lambda)} A_S {\in \mathscr{B}(\mathcal{H}_\Lambda)}$.

Following \cite{Abanin2017}, given $\kappa\geq 0$, we define the norm 
\begin{equation}
\Vert A \Vert_{\kappa} \;:=\; \sup_{x \in \Lambda} \sum_{\substack{S \in \mathcal{P}_c(\Lambda) \\ x \in S}} \Vert A_S \Vert_{\mathrm{op}} \, e^{\kappa |S|} \, .
\end{equation}
{If there exists a constant $C>0$, independent of $|\Lambda|$, such that $\Vert A \Vert_\kappa < C$, we say that the collection $\{A_S\}_S \in \mathcal{O}_\kappa$. By abuse of notation, we also simply say that $A \in \mathcal{O}_\kappa$.}

Given two local operators $A_S$ and $B_{S'}$, we notice that if $S \cap S' = \varnothing$, then $[A_S,B_{S'}]=0$; otherwise if $S \cap S' \neq \varnothing$, there exists an operator $C_{S \cup S'} \in \mathscr{B}(\mathcal{H}_{S \cup S'})$ such that $[A_S,B_{S'}]=C_{S\cup S'} \otimes \mathbbm{1}_{\Lambda \setminus (S \cup S')}$. Then, given two operators of the form $A=\sum_{S' \in \mathcal{P}_c(\Lambda)} A_{S'}$, $B=\sum_{S'' \in \mathcal{P}_c(\Lambda)} B_{S''}$ one can consider for their commutator $[A,B]$ the following decomposition: 
\begin{equation}\label{eq:LocalCommutator}
[A, B]\;=\;\sum_{S \in \mathcal{P}_c(\Lambda)} [A,B]_S\,, \quad	[A,B]_S \;:=\; \sum_{\substack{S',S'' \in \mathcal{P}_c(\Lambda) \\ S' \cup S'' = S, \, S' \cap S'' \neq \varnothing}} [A_{S'},B_{S''}] \, .
\end{equation}

%

A time quasi-periodic family of quasi-local operators is a map $\R \ni t \mapsto \{A_S(\omega t)\}_{S \in \mathcal{P}_c(\Lambda)}$ such that $\omega \in \R^m$ for some $m\in \R$ and for any $S \in \mathcal{P}_c(\Lambda)$ the map $\T^m \ni \varphi \mapsto A_S(\varphi)$ is analytic. Then, one defines
\begin{equation}
	A(\omega t) \;:=\; \sum_{S \in \mathcal{P}_c(\Lambda)} A_S(\omega t) \, .
\end{equation} 
In the normal form procedure, we need to take into account the amount of regularity of the map $\varphi \mapsto A(\varphi)$. Thus, for $\kappa \geq 0$, $\rho \geq 0$, we introduce the norm
\begin{equation}
	\Vert A \Vert_{\kappa,\rho} \;:=\; \sup_{x \in \Lambda} \sum_{\substack{S \in \mathcal{P}_c(\Lambda) \\ x \in S}} \sum_{l \in \mathbb{Z}^m} \big\Vert (\widehat{A_S})_{l} \big\Vert_{\mathrm{op}} \, e^{\kappa |S|} e^{\rho |l|} \, ,
\end{equation}
where, for $l \in \mathbb{Z}^m$,  $(\widehat{A_S})_{l} := (2 \pi)^{-m} \int_{\mathbb{T}^m} e^{-\ii l \cdot \varphi} A_S(\varphi) \, \ud \varphi$. Again, by the same abuse of notation, we say that $A \in \mathcal{O}_{\kappa,\rho}$ {if there exists $C>0$, independent of $|\Lambda|$, such that $\Vert A \Vert_{\kappa,\rho} < C$. 

We point out that the classes of operators $\mathcal{O}_\kappa$ and $\mathcal{O}_{\kappa,\rho}$ contain a quite general class of physically relevant elements. Among them, translationally invariant Hamiltonians with finite range interaction, whose strength and range are independent of $|\Lambda|$. More generally, all operators $A \equiv \{ A_S\}_{S \in \mathcal{P}_c(\Lambda)}$ with
\begin{equation}\label{norma.alvise}
	\sup_{S \in \mathcal{P}_c(\Lambda)} \| A_S\|_{\mathrm{op}} \leq C_A \quad \mbox{and} \quad A_{S} \neq 0  \ \Rightarrow \  |S| \leq \overline{s}\,
\end{equation}
for some positive $C_A, \overline{s}$ independent of $|\Lambda|$, are such that $A \in \mathcal{O}_{\kappa}$. Indeed, one has
\begin{equation}
\|A\|_{\kappa} = \sup_{x \in \Lambda} \sum_{S \in \mathcal{P}_{c}(\Lambda) \atop S \ni x\,,\ |S| \leq \overline{s}} \|A_{S}\|_{\mathrm{op}} e^{\kappa|S|} \leq C_A \sup_{x \in \Lambda}  \sum_{S \in \mathcal{P}_{c}(\Lambda) \atop S \ni x\,,\ |S| \leq \overline{s}} e^{\kappa |S|} =  C_A \sum_{S \in \mathcal{P}_{c}(\Lambda) \atop S \ni 0\,,\ |S| \leq \overline{s}} e^{\kappa |S|} \leq C_A C_{\overline{s} ,\kappa, d}\,
\end{equation}
for some $C_{\overline{s}, \kappa, d}>0$. Note that the operators treated in Section \ref{subsec:GranchiettiCarini} are in $\mathcal{O}_{\kappa}$, since they satisfy \eqref{norma.alvise}, while they have ${\|A\|_{\mathrm{op}} \simeq |\Lambda|.}$}

\smallskip
\noindent\textbf{Hamiltonian of the model. }
As stated in the Introduction, we consider perturbations of the following unperturbed Hamiltonian:
\begin{equation}\label{Hzero}
	H_0 \;:=\; J \cdot N\; :=\; \sum_{\alpha=1}^r J_\alpha N^{(\alpha)}\,,
\end{equation}
where $J \in \mathbb{R}^r$ and $N=(N^{(1)},\dots,N^{(r)})$. Each of the $N^{(\alpha)}$ has the following properties:
\begin{itemize}
	\item[(N.i)] for any $\alpha\in\{1,\dots,r\}$, $N^{(\alpha)}$ is self-adjoint on $\mathcal{H}_\Lambda$;
	\item[(N.ii)] there exists $\kappa>0$ such that $N^{(\alpha)} \in \mathcal{O}_{\kappa}$ for any $\alpha \in \{1,\dots,r\}$;
 	\item[(N.iii)] $\left[N^{(\alpha)}_S, N^{(\alpha')}_{S'}\right] = 0 \quad \forall \alpha, \alpha' = 1, \dots, r,$ $\forall S, S' \in \mathcal{P}_c(\Lambda)$;
 	\item[(N.iv)] $N^{(\alpha)}$ has integer spectrum $\forall \alpha = 1, \dots, r$.
\end{itemize}
Due to property (N.iv), the operators $N^{(\alpha)}$ are called \emph{number operators}.
Following \cite{DeRoeck-Verreet}, we say that $A=\sum_{S \in \mathcal{P}_c(\Lambda)} A_S$ is a \emph{strongly {local} operator} (in symbols $A\in\mathcal{O}^\strong$) if {for any $\alpha=1,\dots,r$}
 \begin{equation}\label{strong.loc}
\forall S' \in \mathcal{P}_c(\Lambda) \quad \textrm{s.t.} \quad S' \nsubseteq S, \quad [A_{S}, N^{(\alpha)}_{S'}] = 0\,.
\end{equation}
Moreover, we write $A\in\mathcal{O}^\strong_\kappa$  if it is strongly {local} and there exists $\kappa >0$ such that $\|A\|_{\kappa}< \infty$.  We write $A\in\mathcal{O}^\strong_{\kappa,\rho}$ if there exists $\omega \in \R^m$ such that the map $t \mapsto A(\omega t)$ is quasi-periodic, for any $\varphi \in \mathbb{T}^m$, $A(\varphi)$ is strongly {local}, and there exist $\kappa, \rho>0$ such that $\|A\|_{\kappa, \rho}< \infty$ and \eqref{strong.loc} holds.

Here and in the following, given a family of self-adjoint operators $H(t)$ with quasi-periodic dependence on time, we will denote by $U_H(t)$ the (unique) solution of the equation
\begin{equation}
	U_H(t) \;=\; -\ii \int_0^t H(s) U_H(s) \, \ud s  {\, + U_{H}(0)}\, , \qquad U_H(0)=\mathbbm{1} \, .
\end{equation}

\subsection{Small-Size Perturbations} 
The first class of perturbations we consider is composed by small in size and quasi-periodic operators; that is, we consider the self-adjoint time dependent Hamiltonian
\begin{equation}\label{eq:our.H}
	H(\omega t)\;=\; J \cdot N + \varepsilon V(\omega t)\, , 
	\qquad \varepsilon \ll 1\,,
\end{equation}
with $V \in \mathcal{O}^\strong_{\kappa,\rho} $ for some $\kappa, \rho >0$,  $J \in \R^r$ and $\omega \in \R^m$. We consider models for which the vector $(J,\omega) \in \R^{r+m}$ is Diophantine, that is there exist $\gamma>0$, $\tau>r+m-1$ such that $(J, \omega)$ belongs to the set
\begin{equation}\label{eq:siamo.dei.cani}
	 \mathcal{D}_{\gamma, \tau}:= \left \lbrace (J, \omega) \in \R^{m+r}\ \left|\ |J\cdot k + \omega \cdot l | \; > \; \frac{\gamma}{(|k|+|l|)^\tau} \, , \quad \forall  (k,l) \in \mathbb{Z}^{m+r} \setminus \{ (0,0) \} \right. \right \rbrace\,.
\end{equation}
Moreover, we fix $\Omega>0$ and $\mathcal{J}>0$ and we shall assume that $|\omega|\leq \Omega$, $|J| \leq \mathcal{J}$.

We will say that a quantity \emph{depends on the parameters of the system associated to \eqref{eq:our.H}} if it depends only on $d, r, m, \kappa, \rho, \gamma, \tau, \mathcal{J}, \Omega$, $\{\|N^{(\alpha)}\|_{0}\}_{\alpha = 1}^r$, $\{\|N^{(\alpha)}\|_{\kappa}\}_{\alpha = 1}^r$, $\|V\|_{\kappa, \rho}$. {In particular, a quantity depending on the parameters of the system associated to \eqref{eq:our.H} \emph{does not depend} on {$|\Lambda|$} or on $\varepsilon$.

	\begin{theorem}[Quasi-conservation laws for small perturbations]\label{teo:SlowHeating}
Let $H(\omega t)$ be as in \eqref{eq:our.H} and let $\kappa, \rho, \gamma, \Omega >0$ and $\tau > r + m -1$ be such that the operators $\{N^{(\alpha)}\}_{\alpha=1}^r$ satisfy assumptions \textnormal{(N.i)--(N.iv)}, $V \in \mathcal{O}_{\kappa, \rho}^\strong$, and $(J, \omega) \in \mathcal{D}_{\gamma, \tau}$ with $|J| \leq \mathcal{J}$ and $|\omega| \leq \Omega$.
Then there exists $\varepsilon_0>0$ depending on the parameters of the system associated to \eqref{eq:our.H}
such that, if $0< \varepsilon <\varepsilon_0$, the following holds. For any $\alpha= 1, \dots, r$ one has
		\begin{equation}
		\frac{1}{|\Lambda|}\|U^*_{H}(t)  N^{(\alpha)}U_{H}(t)  -N^{(\alpha)}\|_{\mathrm{op}} \leq  K \left( e^{-{\varepsilon^{-2\tb}}} t + \varepsilon^{\frac 1 2} \right) \, \qquad \forall t > 0\, , 
	\end{equation}
	where
	\begin{equation}\label{parametri.vitali}
	\begin{gathered}
	\tb := \frac{1}{4(\tau + r +2)}\,, \quad K:=C(\kappa, \rho) \Vert V \Vert_{\kappa,\rho} \max_{\alpha=1, \dots, r}\{\Vert N^{(\alpha)} \Vert_{\kappa} \}\,,
	\end{gathered}
	\end{equation}
	for some $C(\kappa,\rho)>0$. {As a consequence,
	\begin{equation}\label{vicini.small}
	\frac{1}{|\Lambda|}\|U^*_{H_{}}(t)  N^{(\alpha)}U_{H_{}}(t) -N^{(\alpha)}\|_{\mathrm{op}} \leq 2 K \varepsilon^{\frac 1 2} \quad \forall 0< t < e^{{\varepsilon^{-2\tb}}} \varepsilon^{\frac 12}\,.
	\end{equation}
	}
	\end{theorem}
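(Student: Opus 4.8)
The plan is to run an iterative normal form scheme that, at each step, removes the time dependence and the ``off-diagonal'' part of the perturbation with respect to the joint spectral decomposition of the number operators $N^{(\alpha)}$, while keeping all estimates uniform in $|\Lambda|$ by working with the local norms $\|\cdot\|_{\kappa,\rho}$ rather than the operator norm. Since $\varepsilon V \in \mathcal{O}^\strong_{\kappa,\rho}$, I would conjugate $H(\omega t)$ by a quasi-periodic family of unitaries $e^{\ii A_n(\omega t)}$, where the generator $A_n$ solves a \emph{homological equation} of the schematic form $-\omega\cdot\partial_\varphi A_n + \ii[J\cdot N, A_n] = V_n^{\mathrm{od}}$, with $V_n^{\mathrm{od}}$ the part of the $n$-th perturbation that is either non-constant in $\varphi$ or does not commute with all the $N^{(\alpha)}$. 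Crucially, because $V_n^{\mathrm{od}}$ is (inductively) strongly local, its action on a joint eigenstate of $\{N^{(\alpha)}\}$ shifts the eigenvalues by a bounded integer vector $k$ supported on the (bounded) support $S$ of the local term, so the small divisors that appear are exactly $J\cdot k + \omega\cdot l$ with $|k|+|l|$ controlled by $|S|$ and the Fourier index; the Diophantine condition $(J,\omega)\in\mathcal{D}_{\gamma,\tau}$ then bounds $A_n$ in a slightly smaller analyticity/locality strip, losing a factor like $\delta_n^{-\tau}$ per step. This is precisely the content of the homological-equation lemma announced for Section \ref{sec:Preliminaries}, which I would invoke as a black box.

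Next I would set up the quantitative iteration. Letting $\kappa_n\downarrow\kappa/2$, $\rho_n\downarrow\rho/2$ with $\kappa_n-\kappa_{n+1}, \rho_n-\rho_{n+1}\sim 1/n^2$, and $\varepsilon_n$ the size of the perturbation after $n$ steps, the commutator estimates for the $\|\cdot\|_{\kappa,\rho}$ norms (the analogue of the algebra/Lie-bracket bounds used in \cite{Abanin2017,DeRoeck-Verreet}) give a recursion of the form $\varepsilon_{n+1}\lesssim (C n^{c}\delta_n^{-\tau})^2 \varepsilon_n^2$ after one more conjugation, because the new perturbation is $O(\varepsilon_n^2)$ (the linear term is killed by construction). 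Iterating, I would choose the number of steps $n=n(\varepsilon)$ optimally — the standard Nekhoroshev balancing — so that $\varepsilon_n \sim e^{-\varepsilon^{-2\tb}}$ with $\tb = \frac{1}{4(\tau+r+2)}$; the exponent $\tb$ comes from counting how the $\delta_n^{-\tau}$ losses and the doubling of $\varepsilon$ combine, together with the fact that the relevant integer shifts $|k|$ are bounded in terms of $|S|$ which costs one more ``$+r$''-type factor. At the end the Hamiltonian reads $H_n(\omega t) = J\cdot N + Z(\omega t) + R_n(\omega t)$, where $Z$ commutes with every $N^{(\alpha)}$ (hence $[Z, N^{(\alpha)}]=0$) and $\|R_n\|_{\kappa/2,\rho/2}\le \varepsilon_n$.

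Finally I would transfer this back to the conservation statement. Write $U_H(t) = W(\omega t)^* \, U_{H_n}(t) \, W(0)$, where $W$ is the composition of the finitely many conjugating unitaries, so $\|W-\mathbbm 1\|\lesssim\varepsilon^{1/2}$ in the appropriate sense (in fact the relevant bound is on $\frac{1}{|\Lambda|}\|[W, N^{(\alpha)}]\|_{\mathrm{op}}$, which is controlled by $\|A_1\|_{\kappa/2}\|N^{(\alpha)}\|_{\kappa}\lesssim\varepsilon$ via the local commutator decomposition \eqref{eq:LocalCommutator} and extensivity of $N^{(\alpha)}$, yielding the $\varepsilon^{1/2}$-type error after the first step). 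Then
\[
\tfrac{\ud}{\ud t}\big(U_{H_n}^*(t) N^{(\alpha)} U_{H_n}(t)\big) = \ii\, U_{H_n}^*(t)\,[H_n(\omega t), N^{(\alpha)}]\, U_{H_n}(t) = \ii\, U_{H_n}^*(t)\,[R_n(\omega t), N^{(\alpha)}]\, U_{H_n}(t),
\]
using $[J\cdot N + Z, N^{(\alpha)}]=0$; integrating and dividing by $|\Lambda|$ gives a drift bounded by $\varepsilon_n \|N^{(\alpha)}\|_\kappa\, t \lesssim K e^{-\varepsilon^{-2\tb}} t$, and adding back the $O(\varepsilon^{1/2})$ conjugation errors produces the claimed bound; \eqref{vicini.small} is then immediate by restricting to $t< e^{\varepsilon^{-2\tb}}\varepsilon^{1/2}$. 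The main obstacle, and the place where genuinely new work is needed beyond \cite{Abanin2017,DeRoeck-Verreet}, is the homological equation: one must simultaneously invert the transport operator $\omega\cdot\partial_\varphi$ and the adjoint action $\ii[J\cdot N,\cdot]$ on the \emph{strongly local} class, show the solution stays strongly local (so the scheme closes), and track the small divisors $J\cdot k+\omega\cdot l$ with $k$ an integer vector whose size is tied to the locality support — this coupling of the spectral structure of $N$ to the spatial structure is what forces the specific $\tb$ and is the crux of the argument.
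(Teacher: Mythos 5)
Your overall architecture matches the paper's: an iterative normal form that conjugates $H$ to $J\cdot N + Z_\inv + V_\inv$ with $[Z_\inv,N^{(\alpha)}]=0$ and $\|V_\inv\|$ stretched-exponentially small, followed by Duhamel on $U_{H_\inv}^*N^{(\alpha)}U_{H_\inv}$ and control of the conjugating unitary's effect on $N^{(\alpha)}$ via the local norm and the bound $\|\cdot\|_{\mathrm{op}}\le|\Lambda|\|\cdot\|_{0,0}$. You also correctly identify the heart of the matter as the homological equation with small divisors $J\cdot k+\omega\cdot l$, where the integer shift $k$ stays uniformly bounded because the $N^{(\alpha)}$ are strongly local and integer-valued; this is exactly what the paper exploits via the auxiliary lift $\mathcal{G}(\theta,\varphi)=e^{\ii\theta\cdot N}G(\varphi)e^{-\ii\theta\cdot N}$ and the resulting double Fourier series in $(\theta,\varphi)$.

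There is, however, a genuine error in your quantitative outline. You claim that after one conjugation step ``the new perturbation is $O(\varepsilon_n^2)$ (the linear term is killed by construction),'' giving a Newton-type recursion $\varepsilon_{n+1}\lesssim(\cdots)^2\varepsilon_n^2$, and then you invoke ``Nekhoroshev balancing'' to stop at a finite $n$. This is internally inconsistent: if the recursion were truly quadratic with exactly solvable homological equations, the scheme would converge to zero and there would be no Nekhoroshev obstruction. The scheme cannot be quadratic because the homological equation is solved against $J\cdot N$ alone, not against the updated $J\cdot N+Z^{(n)}$, so the transformed Hamiltonian contains the cross term $e^{-\ii G^{(n)}}Z^{(n)}e^{\ii G^{(n)}}-Z^{(n)}\approx[G^{(n)},Z^{(n)}]$. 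Since $\|Z^{(n)}\|\sim\varepsilon$ (it is the accumulated normal form, not $\varepsilon_n$) while $\|G^{(n)}\|\sim\delta_n^{-(\tau+r)}\varepsilon_n$, this contributes $\sim\delta_n^{-(\tau+r+1)}\varepsilon\cdot\varepsilon_n$ to $V^{(n+1)}$, which is linear, not quadratic, in $\varepsilon_n$. It is precisely this term that forces the scheme to be geometric (the paper sets $\varepsilon_{n+1}=\varepsilon_n/e$ and pre-chooses $n_*=\lfloor\varepsilon^{-2\mathtt b}\rfloor$ with uniform $\delta_n\sim\rho_0/n_*$), and the constraint $\delta_{n}^{-(\tau+r+1)}\varepsilon\lesssim 1$ is what limits $n_*$ and produces $\mathtt b=\frac{1}{4(\tau+r+2)}$. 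With your stated recursion you would not be able to derive the exponent $\mathtt b$ consistently; the rest of the reduction to the conservation estimate is correct.
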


\begin{theorem}[Evolution of local observables {for small perturbations}]\label{thm:EvLocObs}
Let $H(\omega t)$ be as in \eqref{eq:our.H} and let $\kappa, \rho, \gamma, \Omega >0$ and $\tau > r + m -1$ be such that the operators $\{N^{(\alpha)}\}_{\alpha=1}^r$ satisfy assumptions \textnormal{(N.i)--(N.iv)}, $V \in \mathcal{O}_{\kappa, \rho}^\strong$, and $(J, \omega) \in \mathcal{D}_{\gamma, \tau}$ with $|J| \leq \mathcal{J}$ and $|\omega| \leq \Omega$.
Then there exists $\varepsilon_0>0$ depending on the parameters of the system associated to \eqref{eq:our.H}
such that, if $0<\varepsilon< \varepsilon_0$ and $\kappa_* := \frac{\min\{\kappa, \rho\}}{\sqrt{2}}$, the following holds. There exists a quasi-local, time independent effective Hamiltonian
\begin{equation}\label{vaccataprossimavoltastozitto}
H_{\mathrm{eff}} \;:=\;J \cdot N + Z_{\mathrm{eff}}\,, \qquad \| Z_{\mathrm{eff}}\|_{\kappa_*} \leq \frac{e}{e-1} \varepsilon^{\frac12} \|V\|_{\kappa, \rho}\,,
\end{equation}
such that for any $S \in \mathcal{P}_c(\Lambda)$ and any local operator $O$ acting only within $S$, there exist $C_1(O)>0$ and $C_2(O)>0$, depending on the parameters of the system associated to \eqref{eq:our.H} and on $O$ only,
such that
\begin{itemize}
		\item[(i)] for any $t < \left(e^{{\varepsilon^{-\tb}}} \varepsilon^{\frac 1 2}\right)^{\frac{1}{d+2}}$ 
		\begin{equation}\label{eq:prosciutto.crudo}
			\Vert U_H^*(t) O U_H(t) - e^{-\ii H_{\mathrm{eff}} t} O e^{\ii H_{\mathrm{eff}} t} \Vert_{\mathrm{op}} \leq C_1(O) \varepsilon^{\frac 1 2} \, ,
		\end{equation}
		with $\tb$ defined in \eqref{parametri.vitali};
		\item[(ii)] there exists a collection of recurrence times $\{t_j\}_{j\in \mathbb{N}}$ such that {for any $j \in \N$}
		\begin{equation}\label{eq:melone}
			\Vert U_H^*(t_j) O U_H (t_j)- e^{-\ii H_{\mathrm{eff}} t_j} O e^{\ii H_{\mathrm{eff}} t_j} \Vert_{\mathrm{op}} \leq C_2(O) {j^{d+2}} \varepsilon^{\frac{1-\mathsf{f}}{2}} e^{-{\mathsf{f}}\varepsilon^{-\tb}}\,,
		\end{equation}
	where
	\begin{equation}\label{eq:EquazioneCarina}
		\mathsf{f}:=\frac{1}{\tau(d+2)+1} \, .
	\end{equation}
	More precisely, there exists a constant $a_{\gamma, m, \tau}>0$ such that $\forall j \in \mathbb{N}$
	\begin{equation}\label{baucco}
		t_j\in[jT(\varepsilon),(j+1)T(\varepsilon)] \, , \qquad T(\varepsilon):={a_{\gamma, m, \tau}e^{{\mathsf{f}}\tau \varepsilon^{-\tb}}\varepsilon^{\frac{\mathsf{f}}{2}}}\,.
	\end{equation}
\end{itemize}
\end{theorem}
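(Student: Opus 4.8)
The plan is to derive Theorem \ref{thm:EvLocObs} as a consequence of the normal form results announced in Section \ref{sec:NormalFormResults} (which I am free to invoke), combined with Lieb--Robinson bounds and a quantitative ergodization estimate on the torus. First I would invoke the normal form theorem: conjugating $H(\omega t)$ by a suitable quasi-periodic, strongly local, near-identity unitary $e^{A(\omega t)}$ (constructed by solving iteratively the homological equations of Section \ref{sec:Preliminaries}), one brings the Hamiltonian to the form $H_0 + Z(\omega t) + R(\omega t)$, where $Z$ commutes with every $N^{(\alpha)}$ (hence with $H_0$), $\|Z\|_{\kappa_*}\lesssim \varepsilon^{1/2}\|V\|_{\kappa,\rho}$, and the remainder is stretched-exponentially small, $\|R\|_{\kappa_*}\lesssim \varepsilon^{1/2}\exp(-\varepsilon^{-\tb})$, after $O(\varepsilon^{-\tb})$ steps — the step count and the loss of analyticity width from $\min\{\kappa,\rho\}$ down to $\kappa_* = \min\{\kappa,\rho\}/\sqrt2$ being exactly what dictates $\tb = 1/(4(\tau+r+2))$. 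I would then define $H_{\eff} := H_0 + \overline{Z}$, where $\overline{Z} := (2\pi)^{-m}\int_{\T^m} Z(\varphi)\,\ud\varphi$ is the time-average; since $Z$ already commutes with $H_0$, $H_{\eff}$ is time independent, quasi-local with $\|Z_{\eff}\|_{\kappa_*}\le \tfrac{e}{e-1}\varepsilon^{1/2}\|V\|_{\kappa,\rho}$ (the geometric factor coming from summing the per-step bounds), and quasi-commutes with all $N^{(\alpha)}$.

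For item (i), the idea is a three-term telescoping. Writing $U_H(t) = W^*(\omega t)\, \widetilde U(t)\, W(0)$ with $W = e^{A}$ the normal-form conjugation and $\widetilde U$ the propagator of $H_0+Z+R$, I would estimate: (a) replacing $W^*(\omega t) O W(\omega t)$-type conjugations at the endpoints costs $O(\varepsilon^{1/2})$ in operator norm because $A = O(\varepsilon^{1/2})$ and $O$ is local (here one uses that $\|[A,O]\|_{\op}$ is controlled by $\|A\|_{\kappa_*}$ times a constant depending on $|S|$, i.e. on $O$); (b) dropping the remainder $R$ from $\widetilde U$ costs $\lesssim t\,\|R\|_{\op}$ by Duhamel — and here is where the local structure matters: the naive bound has $\|R\|_{\op}\sim|\Lambda|$, so instead I apply a Lieb--Robinson bound to replace $O$ by an observable supported in a ball of radius $\sim vt$ around $S$, against which only the local parts $R_{S'}$ with $S'$ meeting that ball contribute, giving an effective error $\lesssim t\cdot (vt)^d\,\|R\|_{\kappa_*}$; (c) comparing the propagator of $H_0+Z(\omega t)$ with that of the averaged $H_{\eff}$: since $Z$ already commutes with $H_0$, the only obstruction is the oscillating-in-time part $Z(\omega t)-\overline Z$, whose contribution over $[0,t]$ I control by integrating by parts against the Diophantine phases (or equivalently by the ergodization estimate), giving again a small error provided $t$ is not too large. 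Balancing these three errors — $(vt)^{d+1}\exp(-\varepsilon^{-\tb})$ against $\varepsilon^{1/2}$ — forces $t \lesssim (e^{\varepsilon^{-\tb}}\varepsilon^{1/2})^{1/(d+2)}$, matching the stated window.

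For item (ii), I would exploit that the leftover oscillating error in step (c) is small not for all $t$ but dramatically small at times when the torus trajectory $t\mapsto \omega t \bmod 2\pi$ has ergodized, i.e. when $\tfrac1t\int_0^t Z(\omega s)\,\ud s$ is exceptionally close to $\overline Z$. Invoking the quantitative ergodization result of Berti--Bolle \cite{Berti2003} for Diophantine $\omega$, there is a length scale $T(\varepsilon)$ — chosen as in \eqref{baucco}, $T(\varepsilon)\sim e^{\mathsf f\tau\varepsilon^{-\tb}}\varepsilon^{\mathsf f/2}$ — after which the ergodization defect is itself exponentially small; every integer multiple $t_j\in[jT(\varepsilon),(j+1)T(\varepsilon)]$ inherits this, and reassembling with the Lieb--Robinson light-cone factor $(vt_j)^d \sim j^d$ and the Duhamel factor $t_j\sim j$ produces the $j^{d+2}$ prefactor and the exponent $\mathsf f = 1/(\tau(d+2)+1)$, which is precisely the balance between the ergodization timescale and the $(d+2)$-th root light-cone constraint from part (i).

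The main obstacle I anticipate is not the normal form itself (which is assumed) but the careful interplay in step (b)/(c) between three competing scales: the volume growth $|\Lambda|$, which must be defeated by the Lieb--Robinson light cone; the time window, which the light cone then re-limits; and the quasi-periodic ergodization, which only helps along the sparse sequence $\{t_j\}$. Making the constants $C_1(O), C_2(O)$ genuinely depend only on $O$ and the system parameters — uniformly in $|\Lambda|$ — requires that every application of Lieb--Robinson and every commutator estimate be phrased in the $\|\cdot\|_{\kappa}$ / $\|\cdot\|_{\kappa,\rho}$ norms rather than in operator norm, and that the loss of $\kappa$-width incurred by the normal form still leave enough room ($\kappa_*>0$) for the Lieb--Robinson estimate to close. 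That bookkeeping, together with verifying that the Diophantine exponent $\tau$ propagates correctly through both the normal-form step count and the ergodization estimate to yield exactly $\tb$ and $\mathsf f$, is where the real work lies.
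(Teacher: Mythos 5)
Your proposal correctly identifies the three ingredients (normal form, Lieb--Robinson, quantitative ergodization), but it misreads the normal form the proof actually needs, and that misreading propagates into a wrong mechanism for the recurrence times.

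The paper has two normal forms (Proposition \ref{teo:main}): $H_\inv$, whose $Z_\inv$ is time \emph{independent} and commutes with every $N^{(\alpha)}$ but whose conjugating map $Y_\inv$ does not satisfy $Y_\inv(0)=\mathbbm{1}$; and $H_\obs$, whose $Z_\obs$ is also time \emph{independent}, does \emph{not} commute with the $N^{(\alpha)}$'s, but has $Y_\obs(0)=\mathbbm{1}$. As Remark \ref{pipistrello} stresses, these two features are incompatible, and Theorem \ref{thm:EvLocObs} must use the second one. Your proposal instead assumes a time-\emph{dependent} $Z(\omega t)$ that commutes with every $N^{(\alpha)}$, then averages it. This is neither normal form: no averaging is needed (both $Z_\inv$ and $Z_\obs$ are already constant in time, and $H_\eff = J\cdot N + Z_\obs$), and your ``step (c)'' --- controlling the oscillatory $Z(\omega t)-\overline Z$ by integration by parts --- has no counterpart in the actual proof, where the only discrepancy between $U_{H_\obs}$ and $e^{-\ii H_\eff t}$ is the stretched-exponentially small remainder $V_\obs$, handled by Duhamel plus Lieb--Robinson (Lemma \ref{lem.spagnolette}).

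The consequence for item (ii) is a genuine gap. You attribute the recurrence gain to ``the ergodization defect being exponentially small'', i.e.\ to the time average $\frac1t\int_0^t Z(\omega s)\,\ud s$ being close to $\overline Z$ at special $t$. That is not what the Berti--Biasco--Bolle theorem gives: it provides a scale $\widetilde T_{\omega,\delta}$ such that in every window of length $\widetilde T_{\omega,\delta}$ there is a time $t_j$ with $|\omega t_j|_{\T^m}<\delta$. The actual mechanism in the paper is entirely about $Y_\obs$: because each generator satisfies $G^{(n)}_\obs(0)=0$, one has $\|G^{(n)}_\obs(\omega t)\|\lesssim |\omega t|_{\T^m}$ (Lemma \ref{lemma:nemo}~(ii)), so at recurrence times $Y_\obs(\omega t_j)\approx\mathbbm{1}$ and the replacement $U_H\mapsto U_{H_\obs}$ costs only $\varepsilon^{1/2}|\omega t_j|_{\T^m}$, which combined with the window size $T(\varepsilon)$ yields $\mathsf f = \tfrac{1}{\tau(d+2)+1}$. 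Your proposal never uses $Y_\obs(0)=\mathbbm{1}$, which is precisely the feature that makes item (ii) work, so as written the recurrence estimate does not close.
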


The proofs of Theorems \ref{teo:SlowHeating} and \ref{thm:EvLocObs} are in Section \ref{sec:physical.cons}. We make the following comments:
\begin{itemize}
	\item Theorem \ref{teo:SlowHeating} and Item (i) of Theorem \ref{thm:EvLocObs} generalize the result in \cite{DeRoeck-Verreet}, which holds for time periodic small size perturbations, to the time quasi-periodic case. Moreover, in \cite{DeRoeck-Verreet} De Roeck and Verreet show the existence of a time dependent effective Hamiltonian $H_\eff(t)$ such that \eqref{eq:prosciutto.crudo} holds, whereas the Hamiltonian $H_\eff$ in Theorem \ref{thm:EvLocObs} is time independent.
	\item In \cite{Abanin2017} Abanin et al.  study the case $H(t) = H_0 + P(\nu t)$, with $P$ time periodic and $\nu \gg 1$ a fast frequency. They prove the existence of  an effective Hamiltonian $H_\eff$ and of \emph{stroboscopic times} $t_j := \frac{2\pi}{j}$, $j \in \N$, such that $U_H^*(t) O U_H(t)$ and $e^{-\ii H_\eff t} O e^{\ii H_\eff t}$ become exponentially close at times $t=t_j$. In Item (ii) of Theorem \ref{thm:EvLocObs} we show that such a phenomenon is true also for quasi-periodic systems, replacing stroboscopic times with recurrence times $\{t_j\}_j$ \eqref{baucco}.
	\item Theorem \ref{teo:SlowHeating} is based on the non convergent normal form result of Proposition \ref{teo:main}. Theorem \ref{thm:EvLocObs} requires the combination of the normal form results of Proposition \ref{teo:main} with Lieb-Robinson bounds \cite{Lieb1972} and results from classical dynamical systems \cite{Berti2003}. The same holds for the proof of Theorems \ref{thm:SlowHeatingFF} and \ref{thm:EvLocObs.ff} below, concerning the fast forced case.
\end{itemize}

\subsection{Fast-Forcing Perturbations}
The second class of perturbations we consider are non-small in size but high frequency and time quasi-periodic. We consider the self-adjoint time dependent Hamiltonian
\begin{equation}\label{eq:our.H.lambda}
	H(\lambda \omega t)\;=\; J \cdot N + V(\lambda \omega t)\, , 
	\qquad \lambda \gg 1\,,
\end{equation}
with $ V \in \mathcal{O}^\strong_{\kappa,\rho}$ for some $\kappa,\rho>0$. For these models, we {require} $J \in \R^r$ and $\omega \in \R^m$ to be Diophantine separately:
there exist $\gamma_\omega>0$ and $\gamma_J>0$, $\tau_\omega> m-1$ and $\tau_J > r-1$ such that $J \in \mathcal{D}_{\gamma_J, \tau_J}$ and $\omega \in \mathcal{D}_{\gamma_\omega, \tau_\omega}$, with
\begin{subequations}\label{eq:siamo.dei.cani.ff}
	\begin{equation}\label{eq:siamo.dei.cani.ff.1}
	\mathcal{D}_{\gamma_J, \tau_J} := \left \lbrace J \in \mathbb{R}^r \ \Big|\ |J\cdot k| \geq \frac{\gamma_J}{|k|^{\tau_J}} \, , \qquad \forall k \in \mathbb{Z}^r \setminus \{0\} \right\rbrace\, ,
	\end{equation}
	\begin{equation}\label{eq:siamo.dei.cani.ff.2}
	\mathcal{D}_{\gamma_\omega, \tau_\omega} := \left \lbrace \omega \in \mathbb{R}^m \ \Big|\ |\omega\cdot l| \geq \frac{\gamma_\omega}{|l|^{\tau_\omega}} \, , \qquad \forall l \in \mathbb{Z}^m \setminus \{0\} \right\rbrace\, .
	\end{equation}
{Furthermore, we shall fix $\mathcal{J}>0$ and require that $|J| \leq \mathcal{J}$.}
\end{subequations}

	We will say that a quantity depends on the parameters of the system associated to \eqref{eq:our.H.lambda} if it depends only on $d, r, m, \kappa, \rho,  \tau_J, \tau_\omega, \gamma_J, \gamma_\omega, {\mathcal{J}}, \{\| N^{(\alpha)} \|_{0}\}_\alpha, \{\| N^{(\alpha)} \|_{\kappa}\}_\alpha, \|V\|_{\kappa, \rho}$. In particular, a quantity depending on the parameters of the system associated to \eqref{eq:our.H.lambda} \emph{does not depend} on $|\Lambda|$, or on $\lambda$.
	\begin{theorem}[Quasi-conservation laws {for fast forcing perturbations}]\label{thm:SlowHeatingFF}
	Let $H(\lambda \omega t)$ be as in \eqref{eq:our.H.lambda} and let $\kappa, \rho, \gamma_J, \gamma_\omega, \mathcal{J}>0$, $\tau_J > r-1$  and $\tau_\omega> m-1$ be such that the operators $\{N^{(\alpha)}\}_{\alpha=1}^r$ satisfy assumptions \textnormal{(N.i)--(N.iv)}, $V \in \mathcal{O}^\strong_{\kappa, \rho}$, $J \in \mathcal{D}_{\gamma_J, \tau_J}$, $\omega \in \mathcal{D}_{\gamma_\omega, \tau_\omega}$ and $|J| \leq \mathcal{J}$.
	Then, there exists $\lambda_0>0$, depending on the parameters of the system associated to \eqref{eq:our.H.lambda} only,
	such that for any $\lambda > \lambda_0$ and any $\alpha = 1, \dots, r$ the following holds.
		\begin{equation}
		\frac{1}{|\Lambda|}\|U^*_{H_{}}(t)  N^{(\alpha)}U_{H_{}}(t) -N^{(\alpha)}\|_{\mathrm{op}} \leq  K \left( e^{- \lambda^{2\beta}} t + \lambda^{-\frac{1}{8 \tau_\omega}} \right) \, \qquad \forall t > 0\,,
		\end{equation}
		where
	\begin{equation}\label{parametri.vitali.ff}
		\beta := \frac{1}{16 \tau_\omega( \tau_J +r+2)}\,, \qquad
		K:=C(\kappa,\rho) \Vert V \Vert_{\kappa,\rho} \max_{\alpha = 1, \dots, r}\{ \Vert N^{(\alpha)} \Vert_{\kappa}\}\, 
	\end{equation}
	for some $C(\kappa,\rho)>0$.  {As a consequence,
		\begin{equation}\label{vicini.small.ff}
		\frac{1}{|\Lambda|}\|U^*_{H_{}}(t)  N^{(\alpha)}U_{H_{}}(t) -N^{(\alpha)}\|_{\mathrm{op}} \leq 2 K \lambda^{-\frac{1}{8 \tau_\omega}} \quad \forall 0< t < e^{ \lambda^{2\beta}} \lambda^{-\frac{1}{8\tau_\omega} }\,.
		\end{equation}
	}
	\end{theorem}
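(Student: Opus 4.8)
I would deduce Theorem~\ref{thm:SlowHeatingFF} from a non-convergent normal form, exactly as Theorem~\ref{teo:SlowHeating} is deduced from Proposition~\ref{teo:main}, using the fast-forcing counterpart of that proposition (to be established in Section~\ref{sec:NormalFormResults}). The mechanism I expect is two-scale. Passing to the co-moving frame and expanding in the fast angle $\varphi=\lambda\omega t$, the divisors governing the homological equations relative to $J\cdot N$ have the form $J\cdot k+\lambda\,\omega\cdot l$: for $l\ne0$, since $\omega\in\mathcal D_{\gamma_\omega,\tau_\omega}$ and $|J\cdot k|\le\mathcal J|k|$, one gets $|J\cdot k+\lambda\,\omega\cdot l|\ge\lambda\gamma_\omega|l|^{-\tau_\omega}-\mathcal J|k|$, which is \emph{large}, of order $\lambda|l|^{-\tau_\omega}$, throughout the (finite) range of $(k,l)$ that the normal form steps ever touch; hence the time-dependent (non-zero Fourier) component of the perturbation can be removed at each step with a gain of a power of $\lambda^{-1}$, up to a polynomial loss in the Fourier cutoff governed by $\tau_\omega$. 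The $l=0$ part that fails to commute with $N$ is instead eliminated using $J\in\mathcal D_{\gamma_J,\tau_J}$, at a polynomial cost in the spatial support governed by $\tau_J$, while the integer spectrum and strong locality of the $N^{(\alpha)}$'s keep all these operations quasi-local with norms uniform in $|\Lambda|$. Optimizing the number of steps against these two losses should produce precisely the exponent $\beta$ of \eqref{parametri.vitali.ff} and the rate $\lambda^{-1/(8\tau_\omega)}$.

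\textbf{From the normal form to the statement.} The output I would extract is: for $\lambda$ large there exist $\kappa'<\kappa$, a quasi-periodic family of quasi-local, near-identity unitaries $\mathcal Y(\lambda\omega t)$ — small enough that conjugation by $\mathcal Y$ displaces each $N^{(\alpha)}$ by at most $C(\kappa,\rho)\,|\Lambda|\,\lambda^{-1/(8\tau_\omega)}\|V\|_{\kappa,\rho}\|N^{(\alpha)}\|_\kappa$ in operator norm — and a quasi-periodic family $G(\lambda\omega t)=J\cdot N+Z_n(\lambda\omega t)+R_n(\lambda\omega t)$ with $[Z_{n,S},N^{(\alpha)}_{S'}]=0$ for all $S,S'\in\mathcal P_c(\Lambda)$ and all $\alpha$ and $\|R_n\|_{\kappa'}\le C(\kappa,\rho)\,e^{-\lambda^{2\beta}}\|V\|_{\kappa,\rho}$, such that $U_H(t)=\mathcal Y(\lambda\omega t)\,U_G(t)\,\mathcal Y^*(0)$, $U_G$ being the propagator of $G$. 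Granting this, the dynamical estimate runs as in the small-size case. Since conjugation by a unitary is an operator-norm isometry, $\tfrac{1}{|\Lambda|}\|U_H^*(t)N^{(\alpha)}U_H(t)-N^{(\alpha)}\|_{\mathrm{op}}$ is bounded by $\tfrac{1}{|\Lambda|}\|U_G^*(t)N^{(\alpha)}U_G(t)-N^{(\alpha)}\|_{\mathrm{op}}$ plus two displacement terms coming from $\mathcal Y(\lambda\omega t)$ and $\mathcal Y(0)$, each $\le C(\kappa,\rho)\lambda^{-1/(8\tau_\omega)}\|V\|_{\kappa,\rho}\|N^{(\alpha)}\|_\kappa$. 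For the first term I would use Duhamel: $\tfrac{\ud}{\ud s}\big(U_G^*(s)N^{(\alpha)}U_G(s)\big)=\ii\,U_G^*(s)[G(\lambda\omega s),N^{(\alpha)}]U_G(s)=\ii\,U_G^*(s)[R_n(\lambda\omega s),N^{(\alpha)}]U_G(s)$, since $[J\cdot N+Z_n,N^{(\alpha)}]=0$ by \textnormal{(N.iii)} and $[Z_{n,S},N^{(\alpha)}_{S'}]=0$; integrating and combining the standard commutator estimates for the norms $\|\cdot\|_\kappa$ (Section~\ref{sec:Preliminaries}) with $\|C\|_{\mathrm{op}}\le|\Lambda|\,\|C\|_0$ gives $\tfrac{1}{|\Lambda|}\|U_G^*(t)N^{(\alpha)}U_G(t)-N^{(\alpha)}\|_{\mathrm{op}}\le C(\kappa,\rho)\,t\,e^{-\lambda^{2\beta}}\|V\|_{\kappa,\rho}\|N^{(\alpha)}\|_\kappa$. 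Summing the three contributions and absorbing the constants into $K$ as in \eqref{parametri.vitali.ff} gives the stated bound, and \eqref{vicini.small.ff} is then immediate since $e^{-\lambda^{2\beta}}t\le\lambda^{-1/(8\tau_\omega)}$ for $0<t<e^{\lambda^{2\beta}}\lambda^{-1/(8\tau_\omega)}$.

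\textbf{The hard part.} I expect the real difficulty to lie entirely in the normal form, not in the bookkeeping above. What is new compared with the small-size setting is that $V$ is not small: the only smallness available is $\lambda^{-1}$, coming from the large external frequencies $\lambda\,\omega\cdot l$ ($l\ne0$), and the scheme has to be arranged so that this gain is genuinely exploited at every step while the time-independent, $N$-non-commuting sector — which carries no such gain and must be dealt with purely via the Diophantine condition on $J$ and the integer spectrum of $N$ — does not destroy the contraction. This interplay is also why one cannot just rescale time to reduce to Theorem~\ref{teo:SlowHeating}: the rescaled frequency vector $(\lambda^{-1}J,\omega)$ is not Diophantine, and only the cutoff version of the conditions — valid up to a scale that grows with $\lambda$, which is enough because only finitely many steps are needed — is at one's disposal. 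The second point, routine but essential, is maintaining uniformity in $|\Lambda|$: every commutator must be estimated in the local norms $\|\cdot\|_\kappa$ and converted to operator norm only at the very end through $\|C\|_{\mathrm{op}}\le|\Lambda|\,\|C\|_0$, which is exactly why the theorem controls the energy density $|\Lambda|^{-1}N^{(\alpha)}$ and is independent of the lattice size.
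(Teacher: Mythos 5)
Your proposal is correct and follows essentially the same route as the paper: the theorem is deduced from the fast-forcing normal form (Proposition \ref{teo:main.ff}) exactly as Theorem \ref{teo:SlowHeating} is deduced from Proposition \ref{teo:main}, combining the near-identity conjugation bound (Item (a.iv)), a Duhamel formula that exploits the commutation $[Z_\inv,N^{(\alpha)}]=0$ (Item (a.i)) to isolate the exponentially small remainder $V_\inv$ (Item (a.iii)), and Lemma \ref{lem.op.loc.norm} to pass from local to operator norms and bring out the $|\Lambda|$ factor. The only cosmetic discrepancy is that you write $Z_n(\lambda\omega t)$ as time-dependent whereas the paper's $Z_\inv$ is time-independent; this is harmless since the argument only uses the commutation property, but the time-independence is precisely what lets the companion result Theorem \ref{thm:EvLocObs.ff} exhibit a time-independent effective Hamiltonian.
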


\begin{theorem}[Evolution of local observables {for fast forcing perturbations}]\label{thm:EvLocObs.ff}
	Let $H(\lambda \omega t)$ be as in \eqref{eq:our.H.lambda} and let $\kappa, \rho, \gamma_J, \gamma_\omega, \mathcal{J}>0$, $\tau_J > r-1$  and $\tau_\omega> m-1$ be such that the operators $\{N^{(\alpha)}\}_{\alpha=1}^r$ satisfy assumptions \textnormal{(N.i)--(N.iv)}, $V \in \mathcal{O}^\strong_{\kappa, \rho}$, $J \in \mathcal{D}_{\gamma_J, \tau_J}$, $\omega \in \mathcal{D}_{\gamma_\omega, \tau_\omega}$, and $|J| \leq \mathcal{J}$.
	Then, there exists $\lambda_0>0$, depending only on the parameters of the system associated to \eqref{eq:our.H.lambda}
	such that, if $\lambda \geq \lambda_0$ and $\kappa_* := \frac{\min\{\kappa, \rho\}}{\sqrt{2}}$, the following holds. There exists a quasi-local time independent effective Hamiltonian
	\begin{equation}\label{vaccataprossimavoltastozitto.ff}
	H_{\mathrm{eff}} \;:=\;J \cdot N + Z_{\mathrm{eff}}\,, \qquad \| Z_{\mathrm{eff}}\|_{\kappa_*} \leq \frac{e}{e-1} \lambda^{-\frac{1}{8 \tau_\omega}} \|V\|_{\kappa, \rho}\,,
	\end{equation}
	such that for any
	{$S \in \mathcal{P}_c(\Lambda)$ and any local operator $O$ acting only within $S$, there exist} $C_1(O)>0$ and $C_2(O)>0$, depending only on $O$ and on the parameters of the systems associated to \eqref{eq:our.H.lambda},
	such that
	\begin{itemize}
	\item[(i)] for any $t < \left(e^{ \lambda^\beta} \lambda^{-\frac{1}{8 \tau_\omega}}\right)^{\frac{1}{d+2}}$ 
	\begin{equation}
	\Vert U^*(t) O U (t)- e^{-\ii H_{\mathrm{eff}} t} O e^{\ii H_{\mathrm{eff}} t} \Vert_{\mathrm{op}} \leq C_1(O) \lambda^{-\frac{1}{8\tau_\omega}} \,,
	\end{equation}
	where $\beta$ is defined as in \eqref{parametri.vitali.ff};
	\item[(ii)] there exists a collection of recurrence times $\{t_j\}_{j\in \mathbb{N}}$ such that {for any $j \in \N$}
	\begin{equation}
	\Vert U^*({t_j}) O U ({t_j})- e^{-\ii H_{\mathrm{eff}} {t_j}} O e^{\ii H_{\mathrm{eff}} {t_j}} \Vert_{\mathrm{op}} \leq C_2(O) {j^{d+2}} e^{- \mathsf{g} \lambda^\beta} \lambda^{\frac{(8\tau_\omega-1)(1-\mathsf{g})}{8 \tau_\omega}}\,,
	\end{equation}
	where 
	\begin{equation}\label{abbandono.di.dal.maso}
	\mathsf{g} := \frac{1}{\tau_\omega(d+2) + 1}\,.
	\end{equation}
	More precisely, there exists a constant $a_{\gamma, m , \tau_\omega} >0$ such that $\forall j \in \N$
	\begin{equation}\label{tempi.tj.ff}
	t_j \in [j T(\lambda), (j+1) T(\lambda)]\,, \quad T(\lambda) := a_{\gamma_\omega, m , \tau_\omega} e^{ \tau_\omega \mathsf{g} \lambda^{\beta}} \lambda^{\frac{(8 \tau_\omega -1) \mathsf{g}}{8}} \,.
	\end{equation}
	\end{itemize}
\end{theorem}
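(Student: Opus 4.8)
The proof proceeds exactly as for Theorem~\ref{thm:EvLocObs}, combining the non-convergent normal form of Proposition~\ref{teo:main} (in its fast-forcing version), Lieb--Robinson bounds~\cite{Lieb1972}, and a quantitative ergodization estimate on $\mathbb{T}^m$~\cite{Berti2003}. First I would apply Proposition~\ref{teo:main} to $H(\lambda\omega t)=J\cdot N+V(\lambda\omega t)$, choosing the number $n$ of normal-form steps so that the resulting remainder is stretched-exponentially small in $\lambda$. This produces a quasi-periodic, near-identity family of unitaries $Y(\lambda\omega t)$ --- analytic in $\varphi$ on a strip narrower than $\rho$, with $Y(0)=\mathbbm{1}$ and $\|Y(\varphi)-\mathbbm{1}\|_{\kappa_*}\lesssim\lambda^{-1/(8\tau_\omega)}\|V\|_{\kappa,\rho}$ uniformly in $\varphi$ --- a time-independent $H_{\mathrm{eff}}=J\cdot N+Z_{\mathrm{eff}}$ obeying \eqref{vaccataprossimavoltastozitto.ff}, and a quasi-periodic remainder $R_n$ with $\|R_n\|_{\kappa_*}\lesssim e^{-\lambda^{\beta}}$ ($\beta$ as in \eqref{parametri.vitali.ff}), such that (writing $U_H$ for the propagator of $H(\lambda\omega t)$) the Heisenberg flow $U_H^*(t)(\cdot)U_H(t)$ equals that of $H_{\mathrm{eff}}+R_n(\lambda\omega t)$ conjugated by the micromotion $Y(\lambda\omega t)$. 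The threshold $\lambda_0$ is the one guaranteeing convergence of the $n$ steps. The time-independence of $H_{\mathrm{eff}}$ --- the new feature compared to \cite{DeRoeck-Verreet} --- is built into Proposition~\ref{teo:main} via the spectral structure of the $N^{(\alpha)}$'s; the separate Diophantine conditions on $\omega$ and on $J$ are used there to invert, respectively, the operators $\omega\cdot\partial_\varphi$ and $\mathrm{ad}_{J\cdot N}$ in the homological equations, which is why both $\tau_\omega$ and $\tau_J$ enter $\beta$.

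\textbf{Item (i).} Using that conjugation by $Y$ preserves the operator norm, I would split the difference into a \emph{micromotion error} $\|Y^*(\lambda\omega t)\,O\,Y(\lambda\omega t)-O\|_{\mathrm{op}}$ and a \emph{remainder error} $\|U_{H_{\mathrm{eff}}+R_n}^*(t)\,O\,U_{H_{\mathrm{eff}}+R_n}(t)-e^{-\ii H_{\mathrm{eff}}t}\,O\,e^{\ii H_{\mathrm{eff}}t}\|_{\mathrm{op}}$. Since $O$ acts within a finite set $S$ and $Y-\mathbbm{1}$ is quasi-local with $\kappa_*$-norm of size $\lambda^{-1/(8\tau_\omega)}$, a commutator estimate gives the micromotion error $\lesssim C(O)\,\lambda^{-1/(8\tau_\omega)}$ for all $t$. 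For the remainder error, Duhamel expresses it as a time integral of commutators $[R_n(\lambda\omega s),\,e^{-\ii H_{\mathrm{eff}}(t-s)}\,O\,e^{\ii H_{\mathrm{eff}}(t-s)}]$; here the Lieb--Robinson bound for the quasi-local, time-independent $H_{\mathrm{eff}}$ is essential, as it confines $e^{-\ii H_{\mathrm{eff}}(t-s)}\,O\,e^{\ii H_{\mathrm{eff}}(t-s)}$ (up to exponential tails) to a ball of radius $\sim v(t-s)$ about $S$, so only the local part of $R_n$ inside that ball contributes, with operator norm $\lesssim(v(t-s)+|S|)^d\,\|R_n\|_{\kappa_*}$. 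Integrating in $s$ yields $\lesssim C(O)\,t^{d+2}\,\|R_n\|_{\kappa_*}\lesssim C(O)\,t^{d+2}\,e^{-\lambda^{\beta}}$. This is exactly the step that needs uniformity in $|\Lambda|$: $\|R_n\|_{\mathrm{op}}$ itself grows with $|\Lambda|$, and only the locality encoded in $\|\cdot\|_{\kappa_*}$ lets the estimate survive the thermodynamic limit. For $t<(e^{\lambda^{\beta}}\lambda^{-1/(8\tau_\omega)})^{1/(d+2)}$ the remainder error is also $\lesssim C(O)\,\lambda^{-1/(8\tau_\omega)}$, which proves (i).

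\textbf{Item (ii).} For the recurrence times I would fix a precision $\eta=\eta(\lambda)$ and invoke the ergodization estimate of \cite{Berti2003}: since $\omega\in\mathcal{D}_{\gamma_\omega,\tau_\omega}$, the rescaled frequency $\lambda\omega$ is Diophantine with the same exponent $\tau_\omega$ and constant $\gamma_\omega\lambda$, hence the linear flow $t\mapsto\lambda\omega t$ becomes $\eta$-dense on $\mathbb{T}^m$ within a time $T(\lambda)$ that is polynomial in $\eta^{-1}$ (with exponent governed by $\tau_\omega$) and decreasing in $\lambda$. Calibrating $\eta$ against this relation --- so that $T(\lambda)$ acquires the form \eqref{tempi.tj.ff} --- and choosing $t_j\in[jT(\lambda),(j+1)T(\lambda)]$ with $\mathrm{dist}(\lambda\omega t_j,2\pi\mathbb{Z}^m)<\eta$, the micromotion error becomes much smaller at these times: by analyticity of $\varphi\mapsto Y(\varphi)$ and $Y(0)=\mathbbm{1}$ one gets $\|Y(\lambda\omega t_j)-\mathbbm{1}\|_{\kappa_*}\lesssim\eta\,\lambda^{-1/(8\tau_\omega)}$, so the micromotion error at $t_j$ is $\lesssim C(O)\,\eta\,\lambda^{-1/(8\tau_\omega)}$, stretched-exponentially small; meanwhile the remainder error is $\lesssim C(O)\,t_j^{d+2}\,e^{-\lambda^{\beta}}\lesssim C(O)\,(j+1)^{d+2}\,T(\lambda)^{d+2}\,e^{-\lambda^{\beta}}$. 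With $\mathsf{g}$ as in \eqref{abbandono.di.dal.maso} one has $\tau_\omega(d+2)\mathsf{g}-1=-\mathsf{g}$ and $1-\mathsf{g}=\tau_\omega(d+2)\mathsf{g}$, so $T(\lambda)^{d+2}e^{-\lambda^{\beta}}$ is stretched-exponentially small with precisely the power of $\lambda$ appearing in the statement, which yields (ii).

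\textbf{Main obstacle.} The genuinely delicate point is the quantitative matching underlying Item (ii): extracting from \cite{Berti2003} the sharp dependence of the ergodization time on the precision $\eta$ and on the flow speed $\lambda$, and then optimizing it jointly against the stretched-exponential size $e^{-\lambda^{\beta}}$ of $R_n$ so that the micromotion error and the Lieb--Robinson/Duhamel error at $t_j$ are of the same order --- this balance is what fixes $\mathsf{g}$, $T(\lambda)$ and the final exponents. The remaining ingredients are a careful but routine combination of Proposition~\ref{teo:main} with Lieb--Robinson bounds, entirely parallel to the proof of Theorem~\ref{thm:EvLocObs}.
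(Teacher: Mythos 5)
Your proposal follows the same route as the paper: apply the fast-forcing normal form (Proposition~\ref{teo:main.ff}, Item (b)) to produce a near-identity micromotion $Y_\obs$ with $Y_\obs(0)=\mathbbm{1}$, a time-independent $H_\eff=J\cdot N+Z_\obs$, and a stretched-exponentially small remainder $V_\obs$; bound the remainder error by Duhamel plus the Lieb--Robinson/locality estimate of Lemma~\ref{lem:LemmaSempreDiverso} (yielding Lemma~\ref{lem.nome.castissimo}); bound the micromotion error via commutator estimates (Corollary~\ref{lem:TaroccoBound}); and for Item (ii) combine $Y_\obs(0)=\mathbbm{1}$ with the Berti--Biasco--Bolle ergodization estimate. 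This is exactly the paper's scheme, and your account of the role of locality, the time restriction in Item (i), and the identity $1-\mathsf{g}=\tau_\omega(d+2)\mathsf{g}$ are all in order.

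The one spot where you execute differently is the ergodization step in Item (ii). You rescale the Diophantine constant to $\gamma_\omega\lambda$ and ergodize the fast flow $t\mapsto\lambda\omega t$ directly, so that the ergodization time at precision $\eta$ scales as $\sim(\lambda\gamma_\omega)^{-1}\eta^{-\tau_\omega}$ and the micromotion error at $t_j$ is $\lesssim\eta\,\lambda^{-1/(8\tau_\omega)}$. The paper instead ergodizes the slow flow $t\mapsto\omega t$ at precision $\delta$ (so $T\sim\gamma_\omega^{-1}\delta^{-\tau_\omega}$ with no factor $\lambda^{-1}$), and absorbs the fast driving into the micromotion Lipschitz estimate, which picks up a factor $\lambda$ --- see \eqref{zombie.ff}, $\lesssim\lambda^{-1/(8\tau_\omega)+1}|\omega t|_{\mathbb{T}^m}$. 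The $\lambda^{-1}$ you save in the ergodization time and the $\lambda^{+1}$ the paper inserts in the micromotion bound play the same role, and both normalizations produce the stated decay $j^{d+2}e^{-\mathsf{g}\lambda^\beta}\lambda^{(8\tau_\omega-1)(1-\mathsf{g})/(8\tau_\omega)}$. A small caveat about your own formulation: with your scaling, once you fix $\eta$ so that $T(\lambda)$ takes exactly the form \eqref{tempi.tj.ff}, the micromotion error at $t_j$ is in fact strictly dominated by the remainder error, rather than being ``of the same order'' --- the two are equalized only in the paper's normalization. This does not affect the conclusion, since the dominant (remainder) term has precisely the claimed size, but the description of the calibration as a balance is slightly off for your version.
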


 We point out the following:
 \begin{itemize}
 \item In \cite{Else2020} Else, Ho and Dumitrescu treat the case of a quasi-periodically fast forced Hamiltonian $H(t)$ as in \eqref{H0} without any structural assumption on $H_0$, and they exhibit \emph{one} quasi-conserved quantity for stretched exponentially long times. Theorem \ref{thm:SlowHeatingFF}, under the stronger assumption that the unperturbed Hamiltonian has the form $H_0 = J \cdot N$, with $N^{(1)}, \dots, N^{(r)}$ number operators, exhibits \emph{$r$ distinct} quasi-conserved quantities over analogous time scales. {We point out that this result would not be reachable with the normal form procedure of \cite{Else2020}, through which  energy exchanges among the number operators $N^{(\alpha)}$ cannot be excluded.}
 	\item Item (i) of Theorem \ref{thm:EvLocObs.ff} is not new, and here we recall it only to draw a complete picture also for the case of fast-forcing perturbation. Its proof is contained in \cite{Else2020}, without assuming that $H_0$ has the structure in \eqref{JdotN} (see Remark \ref{rem:GranchiettiEBussolai} below).
 	\item Item (ii) of Theorem \ref{thm:EvLocObs.ff} is instead original, and it shows for the first time the existence of recurrence times for local observables, previously proved in \cite{Abanin2017} only for time periodic forcing, to the more delicate case of time quasi-periodic perturbations. As for item (i), the assumption on the special form \eqref{JdotN} of $H_0$ is not required.
 \end{itemize}
\subsection{Applications: Fermi-Hubbard Model and Quantum Ising Chain} \label{subsec:GranchiettiCarini}

\subsubsection{Generalized Fermi-Hubbard Model}
We consider a model of $\frac 1 2$-spin fermions on a lattice composed of sites whose distance oscillates quasi-periodically in time. The Hamiltonian of the model is
\begin{equation}
	H\;=\; \varepsilon \sum_{\substack{x \sim y  \\ \sigma \in\{\uparrow,\downarrow\}}} K_{x,y}(\omega t) (a^+_{x,\sigma} a_{y, \sigma} + a^+_{y,\sigma} a_{x,\sigma}) + \sum_{\alpha=1}^r  J_\alpha \sum_{x \in \Lambda}\sum_{\substack{y \in \Lambda \\ |x-y|_1 = \alpha}} n_{x,\uparrow} n_{y,\downarrow}\,,
\end{equation}
where $\forall x \in \Lambda :=\mathbb{Z}^d \cap [-L,L]^d$ and $\sigma \in \{\uparrow, \downarrow\}$, $a_{x,\sigma}$ and $a^+_{x,\sigma}$ are fermionic annihilation and creation operators, $n_{x,\sigma}:=a^+_{x,\sigma} a_{x,\sigma}$, $\forall x, y \in \Lambda$ $K_{x, y} = K_{y, x}: \mathbb{T}^m \rightarrow \R$ are analytic functions {uniformly bounded in $x,y$}, $(J, \omega) \in \R^{r+m}$ satisfies Diophantine condition \eqref{eq:siamo.dei.cani}, $\varepsilon \ll 1$, and $x \sim y$ denotes the sum over the couples $x, y \in \Lambda$ such that $|x-y|_1 = 1$. 
As number operators, we consider
\begin{equation}
	N^{(\alpha)} \;:=\; \sum_{x \in \Lambda} \sum_{\substack{y \in \Lambda \\ |x-y|_1=\alpha}} n_{x,\uparrow} n_{y,\downarrow}\,,
\end{equation}
which for any $\alpha$ count the number of couples of particles with opposite spin and occupying sites at distance $\alpha$. For example, for $\alpha = 1$, $N^{(1)}$ counts the number of particles with opposite spin and occupying nearest-neighbor sites. Each $N^{(\alpha)}$ admits the decomposition in local operators
\begin{equation}
	N^{(\alpha)} = \sum_{x \in \Lambda} N^{(\alpha)}_{S_x^{(\alpha)}}\,, \quad S_x^{(\alpha)} := \{ y \in \Lambda \ | \ |x-y|_1 \leq \alpha\}\,, \quad N^{(\alpha)}_{S_x^{(\alpha)}} := \sum_{y \in \Lambda \atop |x-y|_1 = \alpha} n_{x, \uparrow} n_{y, \downarrow}\,,
\end{equation}
and in order to ensure strong locality, it is convenient to decompose the perturbation as
\begin{equation}
\begin{gathered}
\varepsilon V(\omega t) = \varepsilon \sum_{x \in \Lambda} V_{S'_x}(\omega t)\,, \quad S'_{x} := \{ y \in \Lambda \ |\ |x-y|_1 \leq 2r +1\}\,, \\
 V_{S'_x}(\omega t) = \sum_{y \in \Lambda \atop  y \sim x}  \sum_{\sigma \in \{\uparrow, \downarrow \}} K_{x, y}(\omega t) (a^+_{x,\sigma} a_{y, \sigma} + a^+_{y, \sigma} a_{x, \sigma})\,.
\end{gathered}
\end{equation} {We point out that, since $\Vert n_{x,\sigma} \Vert_{\mathrm{op}}=1$, each $N^{(\alpha)}$ satisfies \eqref{norma.alvise} with $\Vert N^{(\alpha)}_{S^{(\alpha)}} \Vert_{\mathrm{op}} \leq C_{\alpha,d}$ and $\bar{s}=\alpha^d$ for certain constants $C_{\alpha,d}>0$, and therefore $\Vert N^{(\alpha)} \Vert_{\kappa} < C_{\alpha,\kappa,d}$. With analogous argument, it is possible to show that, for a certain explicit constant $C_{d,\alpha}>0$,  $\Vert N^{(\alpha)} \Vert_{\mathrm{op}}=C_{d,\alpha} |\Lambda|$. Analogously, using that $V$ has only nearest-neighbour interactions, the analyticity in time of $K_{x,y}(\omega t)$ and the boundedness of the fermionic creation and annihilation operators, we have $V \in \mathcal{O}_{\kappa,\rho}$ for any $\kappa,\rho > 0$.}

If $\varepsilon=0$, then all $N^{(\alpha)}$'s are conserved. For $\varepsilon \neq 0$ and $K_{x,y}$ non trivial, one has $[N^{(\alpha)}, V(\omega t)] \neq 0$. Theorem \ref{teo:SlowHeating} ensures that each $N^{(\alpha)}$
is quasi-conserved for exponentially long times in $\varepsilon$:
\begin{equation}
 |\Lambda|^{-1} \| U^*_{H}(t) N^{(\alpha)} U_H(t) - N^{(\alpha)} \|_{\mathrm{op}} \leq C \varepsilon^{\frac 12} \quad \forall 0< t < e^{ \varepsilon^{-2\tb}} \varepsilon^{\frac 1 2}\,.
\end{equation}
The dynamical consequences of these quasi-conservation laws are pictorically represented in Figures \ref{gialli.e.verdi} and \ref{verdi.e.gialli}. In particular, for an initial datum as in the first line of Figures \ref{gialli.e.verdi} and \ref{verdi.e.gialli}, Figure \ref{gialli.e.verdi} shows that the highlighted package can not disperse, while Figure \ref{verdi.e.gialli} shows that it can not get too close to the group of particles on the right, as it happens for hard spheres.


Furthermore, Theorem \ref{thm:EvLocObs} ensures that there exists an effective  time independent and local Hamiltonian $H_{\mathrm{eff}}$ which approximates the dynamics of local observables for exponentially long times, according to \eqref{eq:prosciutto.crudo} and \eqref{eq:melone}. $H_\eff$ is also explicitly computable: at the first order in $\varepsilon$, using \eqref{H.eff}, \eqref{new.objects} and \eqref{cacio.e.pepe}, one obtains
\begin{eqnarray}
	\nonumber
	H_{\mathrm{eff}}\!\!\!\!&=&\!\!\!\!J\cdot N + Z^{(1)} +O(\varepsilon^2)\,,
	\\
	Z^{(1)}  \!\!\!\!&=&\!\!\!\! \varepsilon \sum_{x \sim y, \sigma} {(\widehat{ K_{x,y}})_0} (a^+_{y,\sigma} a_{x,\sigma}+a^+_{x,\sigma} a_{y,\sigma}) \prod_{\alpha=1}^rP_{\mathcal{K}_{x,y,\alpha,\sigma}}(0) \nonumber \\
	&\,&+\varepsilon \sum_{x \sim y, \sigma} \sum_{\substack{l \in \mathbb{Z}^m \\ k \in \mathbb{Z}^r}}\frac{J\cdot k}{\omega \cdot l+J \cdot k} \left( ( \widehat{K_{x,y}} )_l  a_{x,\sigma}^+a_{y,\sigma}+ ( \widehat{K_{x,y}} )_l^* a_{y,\sigma}^+a_{x,\sigma}  \right) \prod_{\alpha=1}^rP_{\mathcal{K}_{x,y,\alpha,\sigma}}(k_\alpha) \nonumber
\end{eqnarray}
where {$(\widehat{K_{x,y}})_l:= (2\pi)^{-m}\int_{\mathbb{T}^m} K_{x,y}(\varphi) e^{-\ii l\cdot \varphi} \ud \varphi$,} and $P_{\mathcal{K}_{x,y,\alpha,\sigma}}(k_\alpha)$ denotes the projector onto the space $\mathcal{K}_{x,y,\alpha,\sigma}(k_\alpha)$ defined as
\begin{equation}
\mathcal{K}_{x,y,\alpha,\sigma}(k_\alpha):=\ker \left(\sum_{\eta \, : \, |\eta-x|_1=\alpha} n_{\eta,-\sigma}-\sum_{\eta \, : \, |\eta-y|_1=\alpha} n_{\eta,-\sigma}-k_\alpha \right) \, .
\end{equation}

\begin{figure}[h]
	\begin{center}
		\begin{tikzpicture}[thick,scale=1.1]
		
		\draw[->] (0,0.8) -- (0,-1.8);
		\node at (0.3,-1.8) {$t$};
		
		\draw[dotted] (1,0) -- (1.5,0);
		\draw[fill] (2,0) circle[radius=0.09]; 
		\draw[fill] (3,0) circle[radius=0.09];
		\draw[fill] (4,0) circle[radius=0.09];
		\draw[fill] (5,0) circle[radius=0.09];
		\draw[fill] (6,0) circle[radius=0.09];
		\draw[fill] (7,0) circle[radius=0.09];
		\draw[fill] (8,0) circle[radius=0.09];
		\draw[fill] (9,0) circle[radius=0.09];
		\draw[fill] (10,0) circle[radius=0.09];
		\draw[dotted] (10.5,0) -- (11,0);
		
		\node at (4,0) [above=0.3cm, circle, draw, ball color = verdino] {$\uparrow$};	
		\node at (5,0) [above=0.3cm, circle, draw, ball color = giallino] {$\downarrow$};	
		\node at (6,0) [above=0.3cm, circle, draw, ball color = verdino] {$\uparrow$};
		\node at (9,0) [above=0.3cm, circle, draw, ball color = verdino] {$\uparrow$};	
			\node at (10,0) [above=0.3cm, circle, draw, ball color = giallino] {$\uparrow$};

			\draw[color=red] (3.5,-0.2) -- (3.5,1.1);	
			\draw[color=red] (3.5,1.1) -- (6.5, 1.1);
			\draw[color=red] (6.5,1.1) -- (6.5,-0.2);
			\draw[color=red] (3.5,-0.2) -- (6.5,-0.2);

			\draw[color=red] (3.5,-1.7) -- (3.5,-0.4);	
			\draw[color=red] (3.5,-0.4) -- (5.5, -0.4);
			\draw[color=red] (3.5,-1.7) -- (5.5,-1.7);
			\draw[color=red] (6.5,-1.7) -- (7.5,-1.7);
			\draw[color=red] (7.5,-0.4) -- (7.5,-1.7);
			\draw[color=red] (7.5,-0.4) -- (6.5,-0.4);
			
			\draw[dashed, color = red, decoration={zigzag,segment length=10mm}] decorate{(5.5,-1.7)--(5.5,-0.4)};
			\draw[dashed, color = red, decoration={zigzag,segment length=10mm}] decorate{(6.5,-1.7)--(6.5,-0.4)};
		
		\draw[dotted] (1,-1.5) -- (1.5,-1.5);
		\draw[fill] (2,-1.5) circle[radius=0.09]; 
		\draw[fill] (3,-1.5) circle[radius=0.09];
		\draw[fill] (4,-1.5) circle[radius=0.09];
		\draw[fill] (5,-1.5) circle[radius=0.09];
		\draw[fill] (6,-1.5) circle[radius=0.09];
		\draw[fill] (7,-1.5) circle[radius=0.09];
		\draw[fill] (8,-1.5) circle[radius=0.09];
		\draw[fill] (9,-1.5) circle[radius=0.09];
		\draw[fill] (10,-1.5) circle[radius=0.09];
		\draw[dotted] (10.5,-1.5) -- (11,-1.5);
		
		\node at (4,-1.5) [above=0.3cm, circle, draw, ball color = verdino] {$\uparrow$};	
		\node at (5,-1.5) [above=0.3cm, circle, draw, ball color = giallino] {$\downarrow$};	
		\node at (7,-1.5) [above=0.3cm, circle, draw, ball color = verdino] {$\uparrow$};
		\node at (9,-1.5) [above=0.3cm, circle, draw, ball color = verdino] {$\uparrow$};	
			\node at (10,-1.5) [above=0.3cm, circle, draw, ball color = giallino] {$\uparrow$};	
			\node at(12.8,-1) {\textbf{not allowed}};
	\node at(12.8,0.5) {initial datum};		
		
%
%
		\end{tikzpicture}
	\end{center}
	\caption{
	Pictorial representation of the Fermi-Hubbard dynamics. Time evolution is represented vertically from top to bottom; at each level, solid dots represent the lattice sites. A particle is represented by a colored sphere above the site where it lies. The arrow inside the particle represents the spin. The highlighted package of particles cannot disperse, since this is not allowed by conservation laws. Indeed, given the initial datum on the first line, one has $N^{(1)}=6$ and $N^{(2)}=0$ while, after the first time-lapse one has $N^{(1)}=4$ and $N^{(2)}=2$. 
	}
	\label{gialli.e.verdi}
\end{figure}
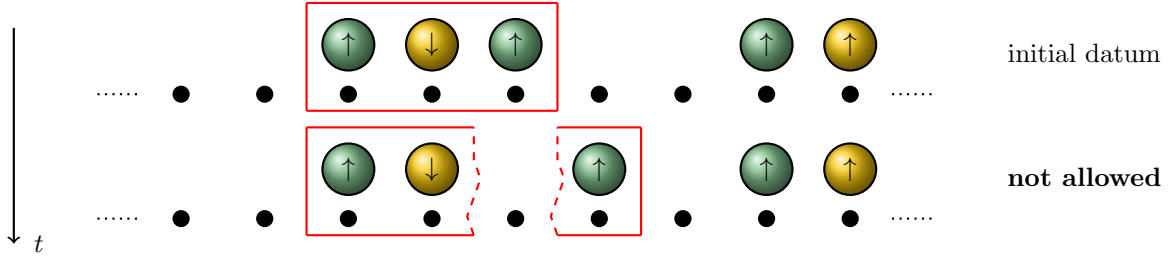

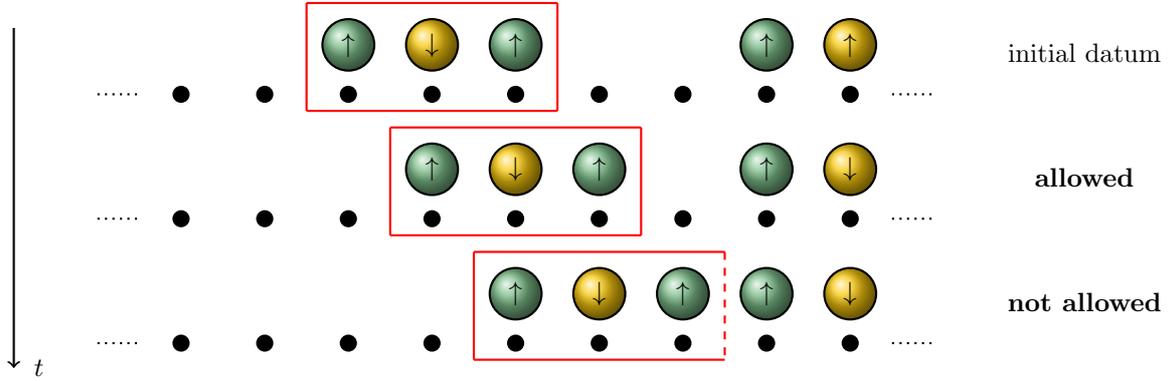
\begin{figure}[h]
	\begin{center}
		\begin{tikzpicture}[thick,scale=1.1]
		
		\draw[->] (0,0.8) -- (0,-3.3);
		\node at (0.3,-3.3) {$t$};
		
		\draw[dotted] (1,0) -- (1.5,0);
		\draw[fill] (2,0) circle[radius=0.09]; 
		\draw[fill] (3,0) circle[radius=0.09];
		\draw[fill] (4,0) circle[radius=0.09];
		\draw[fill] (5,0) circle[radius=0.09];
		\draw[fill] (6,0) circle[radius=0.09];
		\draw[fill] (7,0) circle[radius=0.09];
		\draw[fill] (8,0) circle[radius=0.09];
		\draw[fill] (9,0) circle[radius=0.09];
		\draw[fill] (10,0) circle[radius=0.09];
		\draw[dotted] (10.5,0) -- (11,0);
	
			\draw[color=red] (3.5,-0.2) -- (3.5,1.1);	
			\draw[color=red] (3.5,1.1) -- (6.5, 1.1);
			\draw[color=red] (6.5,1.1) -- (6.5,-0.2);
			\draw[color=red] (3.5,-0.2) -- (6.5,-0.2);

			\draw[color=red] (4.5,-1.7) -- (4.5,-0.4);	
			\draw[color=red] (4.5,-0.4) -- (7.5, -0.4);
			\draw[color=red] (4.5,-1.7) -- (7.5,-1.7);
			\draw[color=red] (7.5,-0.4) -- (7.5,-1.7);

			\draw[color=red] (5.5,-3.2) -- (5.5,-1.9);	
			\draw[color=red] (5.5,-1.9) -- (8.5, -1.9);
			\draw[color=red] (5.5,-3.2) -- (8.5,-3.2);
			\draw[color=red,dashed] (8.5,-1.9) -- (8.5,-3.2);

		\node at (4,0) [above=0.3cm, circle, draw, ball color = verdino] {$\uparrow$};	
		\node at (5,0) [above=0.3cm, circle, draw, ball color = giallino] {$\downarrow$};	
		\node at (6,0) [above=0.3cm, circle, draw, ball color = verdino] {$\uparrow$};
		\node at (9,0) [above=0.3cm, circle, draw, ball color = verdino] {$\uparrow$};	
			\node at (10,0) [above=0.3cm, circle, draw, ball color = giallino] {$\uparrow$};	
		
		\draw[dotted] (1,-1.5) -- (1.5,-1.5);
		\draw[fill] (2,-1.5) circle[radius=0.09]; 
		\draw[fill] (3,-1.5) circle[radius=0.09];
		\draw[fill] (4,-1.5) circle[radius=0.09];
		\draw[fill] (5,-1.5) circle[radius=0.09];
		\draw[fill] (6,-1.5) circle[radius=0.09];
		\draw[fill] (7,-1.5) circle[radius=0.09];
		\draw[fill] (8,-1.5) circle[radius=0.09];
		\draw[fill] (9,-1.5) circle[radius=0.09];
		\draw[fill] (10,-1.5) circle[radius=0.09];
		\draw[dotted] (10.5,-1.5) -- (11,-1.5);
		
		\node at (5,-1.5) [above=0.3cm, circle, draw, ball color = verdino] {$\uparrow$};	
		\node at (6,-1.5) [above=0.3cm, circle, draw, ball color = giallino] {$\downarrow$};	
		\node at (7,-1.5) [above=0.3cm, circle, draw, ball color = verdino] {$\uparrow$};
		\node at (9,-1.5) [above=0.3cm, circle, draw, ball color = verdino] {$\uparrow$};	
		\node at (10,-1.5) [above=0.3cm, circle, draw, ball color = giallino] {$\downarrow$};	
		
		\draw[dotted] (1,-3) -- (1.5,-3);
		\draw[fill] (2,-3) circle[radius=0.09]; 
		\draw[fill] (3,-3) circle[radius=0.09];
		\draw[fill] (4,-3) circle[radius=0.09];
		\draw[fill] (5,-3) circle[radius=0.09];
		\draw[fill] (6,-3) circle[radius=0.09];
		\draw[fill] (7,-3) circle[radius=0.09];
		\draw[fill] (8,-3) circle[radius=0.09];
		\draw[fill] (9,-3) circle[radius=0.09];
		\draw[fill] (10,-3) circle[radius=0.09];
		\draw[dotted] (10.5,-3) -- (11,-3);
		
		\node at (6,-3) [above=0.3cm, circle, draw, ball color = verdino] {$\uparrow$};	
		\node at (7,-3) [above=0.3cm, circle, draw, ball color = giallino] {$\downarrow$};	
		\node at (8,-3) [above=0.3cm, circle, draw, ball color = verdino] {$\uparrow$};
		\node at (9,-3) [above=0.3cm, circle, draw, ball color = verdino] {$\uparrow$};	
		\node at (10,-3) [above=0.3cm, circle, draw, ball color = giallino] {$\downarrow$};	
%
%
%

			\node at(12.8,-1) {\textbf{allowed}};
	\node at(12.8,0.5) {initial datum};	
			\node at(12.8,-2.5) {\textbf{not allowed}};
		\end{tikzpicture}
	\end{center}
	\caption{Pictorial representation of the Fermi-Hubbard dynamics. Time evolution is represented as in Figure \ref{gialli.e.verdi}. The picture shows that, according to conservation laws, the highlighted package of particles can translate rigidly but cannot get too close to the pair of particles on the right. 
} 
	\label{verdi.e.gialli}
\end{figure}

\subsubsection{Quantum Ising Model with Magnetic Field}
Let us consider the quantum Ising model in one spatial dimension on a lattice $\Lambda=[-3L,3L-1] \cap \mathbb{Z}$ with periodic boundary conditions and in presence of an external magnetic field. We consider the case where the external magnetic field has two components. The one along the (3)-axis is space periodic of period 3 sites. The one along the (1)-axis is time quasi-periodic with Diophantine frequency $\omega$ satisfying \eqref{eq:siamo.dei.cani.ff.2}. To write the Hamiltonian of the model, it is convenient to introduce the sublattices $\Lambda_1$, $\Lambda_2$ and $\Lambda_3$ defined as
\begin{subequations}
\begin{eqnarray}
	\Lambda_1&=&\{-3L,-3L+3, \dots, 0,3,\dots\} \, , \\
	\Lambda_2&=&\{-3L+1,-3L+4,\dots,1,4,\dots\} \, ,\\
	\Lambda_3&=&\{-3L+2,-3L+5,\dots,2,5,\dots\} \, .
\end{eqnarray}
\end{subequations}
%
%
%

The Hamiltonian of the model is
\begin{equation}\label{eq:HamiltonianIsole}
	H\;=\; -J \sum_{x \in \Lambda} \sigma_{x}^{(3)} \sigma_{x+e_j}^{(3)}-\sum_{\alpha=1}^{3} h_\alpha \sum_{x \in \Lambda_\alpha} \sigma_x^{(3)}-B(\lambda \omega t) \sum_{x \in \Lambda} \sigma_x^{(1)}
\end{equation}
where $h_1,h_2,h_3 \in \mathbb{R}$ are the values of the magnetic field on the sublattices $\Lambda_1,\Lambda_2,\Lambda_3$ respectively, $\sigma^{(a)}$ is the $a$-th Pauli matrix, $(h_1, h_2,h_3, J) \in \R^{4}$  satisfies the Diophantine condition \eqref{eq:siamo.dei.cani.ff.1}, $\lambda \gg 1$ and $B: \mathbb{T}^m \rightarrow \R$ is an analytic function.
We consider as number operators:
	$$
	N^{(\alpha)}:=\sum_{x \in \Lambda_{\alpha}} \sigma_x^{(3)} \, {, \quad  \alpha=1,2,3\, ,} \qquad
	N^{(4)}:=\sum_{x \in \Lambda} \sigma_{x}^{(3)} \sigma_{x+1}^{(3)} \, .
	$$
The operators $N^{(\alpha)}$ are local, since they admit the decomposition
\begin{equation}
\begin{gathered}
 N^{(\alpha)} = \sum_{x \in \Lambda_\alpha} N^{(\alpha)}_{S_x}\,, \quad S_{x} := \{x\}\,, \quad N^{(\alpha)}_{S_x} = \sigma^{(3)}_x\,, \quad \forall \alpha = 1, 2, 3\,,\\
 N^{(4)} = \sum_{x \in \Lambda} N^{(4)}_{S'_x}\,, \quad S'_x := \{ x, x + 1 \}\,, \quad N^{(4)}_{S'_x} := \sigma^{(3)}_{x} \sigma^{(3)}_{x+1}\,.
\end{gathered}
\end{equation}
For the perturbation, we consider the decomposition
\begin{equation}
\begin{gathered}
B(\lambda \omega t) \sum_{x \in \Lambda} \sigma^{(1)}_x = \sum_{x \in \Lambda} V_{S^{''}_x}(\lambda \omega t)\,, \\
S^{''}_x : = \{ y \in \Lambda \ |\ |x-y|_1 \leq 2 \}\,, \quad V_{S^{''}_x}(\lambda \omega t) = B(\lambda \omega t) \sigma_{x}^{(1)}\,. 
\end{gathered}
\end{equation}
{We remark that, using the notation of \eqref{norma.alvise}, for $\alpha=1,2,3$, $\Vert N^{(\alpha)}_{S_x} \Vert_{\mathrm{op}} \leq 1$ and $\bar{s}=1$. Therefore, $\Vert N^{(\alpha)} \Vert_{\kappa} \leq e^{\kappa}$ and $N^{(\alpha)} \in \mathcal{O}_{\kappa}$ for any $\kappa >0$. Also, $\Vert N^{(\alpha)} \Vert_{\mathrm{op}}=|\Lambda|$. For $\alpha=4$, $\Vert N_{S_x'}^{(4)} \Vert_{\mathrm{op}}=1$, $\bar{s}=2$ and therefore $\Vert N^{(4)} \Vert_{\kappa} = 2 e^{2 \kappa}$. On the other hand, $\Vert N^{(4)} \Vert_{\mathrm{op}} = |\Lambda|$.
}

Theorem \ref{thm:SlowHeatingFF} guarantees that each $N^{(\alpha)}$ is quasi-conserved for times $t \leq \lambda^{-\frac{1}{8\tau_\omega}} e^{\lambda^{2\beta}}$.
Note that, since the Hamiltonian \eqref{eq:HamiltonianIsole} has a term with $\sigma_x^{(1)}$, then the total magnetization along the $(3)$-direction is not conserved. 
Each $N^{(\alpha)}$ for $\alpha=1,2,3$ represents the total magnetization along the direction $(3)$ restricted on the sites $x \in \Lambda_\alpha$. In the case $\alpha = 4$, $N^{(4)}$ represents the number of domain walls, namely the number of pairs of consecutive particles with opposite spin along the $(3)$ component. As a first consequence of Theorem \ref{thm:SlowHeatingFF}, both \emph{the total magnetization of each $\Lambda_\alpha$} and the \emph{number of domain walls} are quasi-conserved up to exponentially long times in $\lambda$. 

\begin{figure}[h]
		\begin{center}
			\begin{tikzpicture}[thick,scale=0.8]
			
			\draw[dotted] (1,0) -- (1.5,0);
			\draw[fill] (2,0) circle[radius=0.09]; 
			\draw[fill] (3,0) circle[radius=0.09];
			\draw[fill] (4,0) circle[radius=0.09];
			\draw[fill] (5,0) circle[radius=0.09];
			\draw[fill] (6,0) circle[radius=0.09];
			\draw[fill] (7,0) circle[radius=0.09];
			\draw[fill] (8,0) circle[radius=0.09];
			\draw[fill] (9,0) circle[radius=0.09];
			\draw[fill] (10,0) circle[radius=0.09];
			\draw[dotted] (11,0) -- (11.5,0);
			
			\node at (2,0) [above=0.3cm] {$\downarrow$};
			\node at (3,0) [above=0.3cm] {$\downarrow$};	
			\node at (4,0) [above=0.3cm] {$\downarrow$};	
			\node at (5,0) [above=0.3cm] {$\uparrow$};	
			\node at (6,0) [above=0.3cm] {$\uparrow$};
			\node at (7,0) [above=0.3cm] {$\downarrow$};	
			\node at (8,0) [above=0.3cm] {$\downarrow$};
			\node at (9,0) [above=0.3cm] {$\downarrow$};
			\node at (10,0) [above=0.3cm] {$\downarrow$};
			
			\draw[dashed, color = blue] (4.5,-5) -- (4.5,1.2);
			\draw[dashed, color = blue] (7.5,-5) -- (7.5,1.2);
			\draw[dashed, color = blue] (10.5,-5) -- (10.5,1.2);
			\draw[color=red] (4.6,-0.3) -- (6.4,-0.3);	
			\draw[color=red] (6.4,-0.3) -- (6.4, 1.1);
			\draw[color=red] (4.6,-0.3) -- (4.6,1.1);
			\draw[color=red] (4.6,1.1) -- (6.4,1.1);
			
			{
			\begin{scope}[shift={(0,-2)}]
			\draw[dotted] (1,0) -- (1.5,0);
			\draw[fill] (2,0) circle[radius=0.09]; 
			\draw[fill] (3,0) circle[radius=0.09];
			\draw[fill] (4,0) circle[radius=0.09];
			\draw[fill] (5,0) circle[radius=0.09];
			\draw[fill] (6,0) circle[radius=0.09];
			\draw[fill] (7,0) circle[radius=0.09];
			\draw[fill] (8,0) circle[radius=0.09];
			\draw[fill] (9,0) circle[radius=0.09];
			\draw[fill] (10,0) circle[radius=0.09];
			\draw[dotted] (11,0) -- (11.5,0);
			
			\node at (2,0) [above=0.3cm] {$\uparrow$};
			\node at (3,0) [above=0.3cm] {$\uparrow$};	
			\node at (4,0) [above=0.3cm] {$\downarrow$};	
			\node at (5,0) [above=0.3cm] {$\downarrow$};	
			\node at (6,0) [above=0.3cm] {$\downarrow$};
			\node at (7,0) [above=0.3cm] {$\downarrow$};	
			\node at (8,0) [above=0.3cm] {$\downarrow$};
			\node at (9,0) [above=0.3cm] {$\downarrow$};
			\node at (10,0) [above=0.3cm] {$\downarrow$};
			
			\draw[color=red] (1.6,-0.3) -- (3.4,-0.3);	
			\draw[color=red] (3.4,-0.3) -- (3.4, 1.1);
			\draw[color=red] (1.6,-0.3) -- (1.6,1.1);
			\draw[color=red] (1.6,1.1) -- (3.4,1.1);
			\end{scope}
			}
			\begin{scope}[shift={(0,-2.5)}]
			\node at (0.7, -0.25) [left=0.2cm] {$h_j$};
			\draw[->] (0.7,-2.2) -- (0.7,-0.25);
			\draw[dotted](0.5,-2) -- (1.5,-2);
			\draw[->] (1.5,-2) -- (11.5,-2);
			
			\draw[dotted,color=blue] (1,-1.3) to[out=40, in= 180] (2,-0.7); 
			\draw[color=blue] (2,-0.7) to[out=0, in= 180] (3,-1.8);  	
			\draw[color=blue] (3,-1.8) to[out=0, in= 220] (4,-1.3); 
			\draw[color=blue] (4,-1.3) to[out=40, in= 180] (5,-0.7); 
			\draw[color=blue] (5,-0.7) to[out=0, in= 180] (6,-1.8);  	
			\draw[color=blue] (6,-1.8) to[out=0, in= 220] (7,-1.3);	
			\draw[color=blue] (7,-1.3) to[out=40, in= 180] (8,-0.7); 
			\draw[color=blue] (8,-0.7) to[out=0, in= 180] (9,-1.8);  	
			\draw[color=blue] (9,-1.8) to[out=0, in= 220] (10,-1.3);
			\draw[color=blue,dotted] (10,-1.3) to[out=40, in= 180] (11.5,-0.7);
			
			\node[color=blue] at (2,-0.7) {$\boldsymbol{\times}$};
			\node[color=blue] at (2,-0.7) [above=0.1] {$h_1$};
			\node[color=blue] at (3,-1.8) {$\boldsymbol{\times}$};
			\node[color=blue] at (3,-1.8) [above=0.1] {$h_2$};
			\node[color=blue] at (4,-1.3) {$\boldsymbol{\times}$};
			\node[color=blue] at (4,-1.3) [above=0.1] {$h_3$};			
			
			\node[color=blue] at (5,-0.7) {$\boldsymbol{\times}$};
			\node[color=blue] at (5,-0.7) [above=0.1] {$h_1$};
			\node[color=blue] at (6,-1.8) {$\boldsymbol{\times}$};
			\node[color=blue] at (6,-1.8) [above=0.1] {$h_2$};
			\node[color=blue] at (7,-1.3) {$\boldsymbol{\times}$};
			\node[color=blue] at (7,-1.3) [above=0.1] {$h_3$};			
			\node[color=blue] at (8,-0.7) {$\boldsymbol{\times}$};
			\node[color=blue] at (8,-0.7) [above=0.1] {$h_1$};
			\node[color=blue] at (9,-1.8) {$\boldsymbol{\times}$};
			\node[color=blue] at (9,-1.8) [above=0.1] {$h_2$};
			\node[color=blue] at (10,-1.3) {$\boldsymbol{\times}$};
			\node[color=blue] at (10,-1.3) [above=0.1] {$h_3$};	
			\end{scope}
			\end{tikzpicture}
		\end{center}
	\caption{Pictorial representation of the dynamics of the quantum Ising model. Solid dots represent the lattice sites; above each dot, the arrow represents the (3)-component of the spin state of the particle on the related site. Two different time-lapses are represented vertically. Below them, a pictorial representation of the magnetic field is given (blue line). In each portion of lattice delimited by two vertical blue dashed lines there is (exactly) one site per element of the partition $\{\Lambda_{\alpha}\}$. In particular, from left to right, the first site belongs to $\Lambda_1$, the second site belongs to $\Lambda_2$, and the third site belongs to $\Lambda_3$. Starting from a configuration as in the first line, with only two spins up in two consecutive sites, conservation of the operators $N^{(\alpha)}$ implies that the two spin up packet has to move rigidly from one portion of the lattice $\Lambda$ delimited by two blue dashed lines to another.
	}
	\label{fig:calimero}
\end{figure}
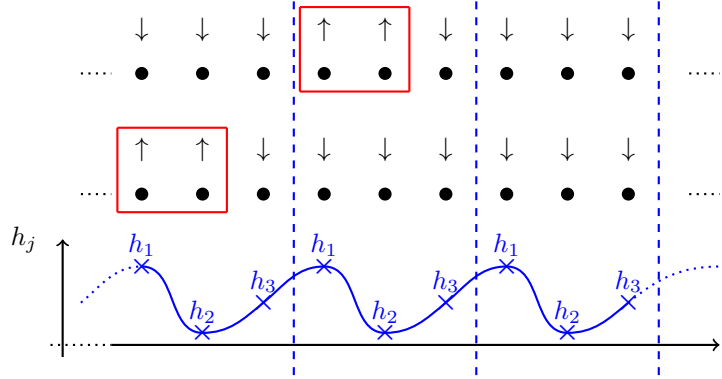

Let us now focus on the dynamical consequences of such quasi-conservation laws for particular initial states:
\begin{itemize}
\item If at time $t=0$ the system is in the state where all sites in the lattice $\Lambda_1$ have spin up and all the other spins are down, equation \eqref{vicini.small.ff} states that, calling $N_{\uparrow}^{(2)}(t),N_{\uparrow}^{(3)}(t)$ the number of sites with spin up in the sublattices $\Lambda_2$ and $\Lambda_3$ respectively, $N_\downarrow^{(1)}(t)$ the number of sites with spin down in the sublattice $\Lambda_1$, one has 
$$N_\downarrow^{(1)}(t) \leq 12 L K \lambda^{-\frac{1}{8\tau_\omega}} \, , \qquad N_\uparrow^{(2)}(t) \leq 12 L K \lambda^{-\frac{1}{8\tau_\omega}} \, , \qquad N_\uparrow^{(3)}(t) \leq 12 L K \lambda^{-\frac{1}{8\tau_\omega}} \, , $$
for times $t < e^{\lambda^{2\beta}} \lambda^{-\frac{1}{8\tau_\omega}}$. 
\item If at time $t=0$ the system is in the state where only two consecutive sites of the lattice have spin up and all the others have spin down, constraints on the dynamics are discussed in Figure \ref{fig:calimero}.
\end{itemize} 

One can easily generalize this example to Ising models in higher spatial dimensions and more general structures of the sublattices $\Lambda_\alpha$.


\section{Normal Form Results}\label{sec:NormalFormResults}
{A crucial role in the proof of Theorems \ref{teo:SlowHeating} and \ref{thm:EvLocObs} is played by the following normal form results.}
\begin{proposition}[Normal form for small perturbations]\label{teo:main}
	Let $H(t)$ satisfy assumptions of Theorem \ref{thm:EvLocObs}.
	Then there exists $\varepsilon_0 := \varepsilon_0(r, \tau, \gamma, \kappa, \rho, \Omega, \|V\|_{\kappa, \rho}, \{ \|N^{(\alpha)}\|_{0}\}_{\alpha}) >0$ such that $\forall\ 0< \varepsilon < \varepsilon_0$, the following holds.
	\begin{itemize}
	\item[(a)] There exist a unitary time quasi-periodic transformation $Y_{\mathrm{inv}}$ and a time quasi-periodic self-adjoint operator $H_{\mathrm{inv}}$ such that
	\begin{equation}\label{uH.uHstar}
	U_{H_{\mathrm{inv}}}(t) \;=\; Y_{\mathrm{inv}}(\omega t) U_{H}(t)  {Y_{{\mathrm{inv}}}^*(0)}
	\end{equation}
	and
	\begin{equation}
	H_{{\mathrm{inv}}}(\omega t)=J \cdot N + Z_{\mathrm{inv}}+V_{\mathrm{inv}}(\omega t)\,,
	\end{equation}
	with
	\begin{itemize}
		\item[(a.i)] $[Z_{\mathrm{inv}}, N^{(\alpha)}] = 0\quad \forall \alpha = 1, \dots, r$;
		\item[(a.ii)]  $Z_{\mathrm{inv}} \in \mathcal{O}^\strong_{\kappa_*}$, $\Vert Z_{\mathrm{inv}} \Vert_{\kappa_*} \leq {\frac{{e}}{e-1}} \varepsilon \Vert V \Vert_{\kappa, \rho}$;
		\item[(a.iii)]  $V_{\mathrm{inv}} \in \mathcal{O}^\strong_{\kappa_*, \rho_*}$, with $\Vert V_{\mathrm{inv}} \Vert_{\kappa_{*}, \rho_*} \leq e^{-{\varepsilon^{-2\tb}}} \Vert V \Vert_{\kappa,\rho}\,$;
		\item[(a.iv)] $\forall P \in \mathcal{O}_{\kappa_*, \rho_*},$ $\|Y_{\mathrm{inv}} P Y_{\mathrm{inv}}^* -P\|_{\kappa_{*}, \rho_*} \leq {\frac{2{e}}{e-1}} \varepsilon^{\frac 1 2} \|P\|_{\kappa_*, \rho_*}\,$;
	\end{itemize}
	\item[(b)] There exist a unitary time quasi-periodic transformation $Y_{\mathrm{obs}}$ and a time quasi-periodic self-adjoint operator $H_{\mathrm{obs}}$ such that
	\begin{equation}\label{uH.uHstar.Hobbes}
	U_{H_{\mathrm{obs}}}(t) \;=\; Y_{\mathrm{obs}}(\omega t) U_{H}(t) 
	\end{equation}
	and
	\begin{equation}
	H_{{\mathrm{obs}}}(\omega t)=J \cdot N + Z_{\mathrm{obs}}+V_{\mathrm{obs}}(\omega t)\,,
	\end{equation}
	with
	\begin{itemize}
		\item[(b.i)]  $Y_{\mathrm{obs}}(0) = \mathbbm{1}$;
		\item[(b.ii)]  $Z_{\mathrm{obs}} \in \mathcal{O}^\strong_{\kappa_*}$, $\Vert Z_{\mathrm{obs}} \Vert_{\kappa_*} \leq {\frac{{e}}{e-1}} \varepsilon^{\frac12} \Vert V \Vert_{\kappa, \rho}$;
		\item[(b.iii)] $V_{\mathrm{obs}} \in \mathcal{O}^\strong_{\kappa_*, \rho_*}$, with $\Vert V_{\mathrm{obs}} \Vert_{\kappa_{*}, \rho_*} \leq e^{-{\varepsilon^{-\tb}}} \Vert V \Vert_{\kappa,\rho}\,$;
		\item[(b.iv)] $\forall P \in \mathcal{O}_{\kappa_*, \rho_*},$ $\|Y_{\mathrm{obs}} P Y_{\mathrm{obs}}^* -P\|_{\kappa_{*}, \rho_*} \leq {\frac{2{e}}{e-1}} \varepsilon^{\frac 1 2} \|P\|_{\kappa_*, \rho_*}$. {More precisely, there exist} $n^* \in \N$ and self-adjoint operators $G^{(0)}_\obs, \dots, G^{({n^*-1})}_\obs \in \mathcal{O}^\strong_{\kappa_*, \rho_*}$ such that $Y_\obs(\omega t) = e^{-\ii G^{({n^*-1})}_\obs(\omega t)} \cdots e^{-\ii G^{(0)}_\obs(\omega t)}$, with $\|G^{(n)}_\obs\|_{\kappa_*, \rho_*} \leq \varepsilon^{\frac 1 2} e^{-n} \|V\|_{\kappa, \rho}$ for any $n$.
	\end{itemize}
	\end{itemize}
	Here $\tb$ is the constant defined in \eqref{parametri.vitali} and
	\begin{equation}\label{def:oggetti.finali}
	 \kappa_* := \frac{\min\{\kappa, \rho\}}{\sqrt{2}}\,, \quad \rho_* := \frac{\min\{\kappa,\rho\}}{8 \sqrt{2} r \max_\alpha\{\|N^{(\alpha)}\|_0\} }\,. 
	\end{equation}
\end{proposition}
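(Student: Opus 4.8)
The plan is to prove Proposition~\ref{teo:main} by an iterative Nekhoroshev-type normal form, run entirely in the weighted norms $\|\cdot\|_{\kappa,\rho}$ so that every smallness requirement stays uniform in $|\Lambda|$ (recall that $\|H(\omega t)\|_{\mathrm{op}}$ grows like $|\Lambda|$, so a scheme based on the operator norm would be empty in the thermodynamic limit). The elementary building block, to be proved in Section~\ref{sec:Preliminaries}, is the solution of a homological equation of the form
\[
\ii\,[J\cdot N,\,G(\varphi)] + \omega\cdot\partial_\varphi G(\varphi) \;=\; W(\varphi) - \mathcal{R}W\,,
\]
where $\mathcal{R}W$ is the ``resonant'' part of $W$: one Fourier-expands in $\varphi\in\T^m$ and, for each local piece $W_S$, decomposes into joint eigenoperators of the commuting adjoint actions $[\,N^{(\alpha)}_{\subseteq S},\cdot\,]$ with $N^{(\alpha)}_{\subseteq S}:=\sum_{S'\subseteq S}N^{(\alpha)}_{S'}$ (this is legitimate because, by \textnormal{(N.iii)} and \textnormal{(N.iv)}, the $N^{(\alpha)}_{\subseteq S}$ mutually commute and have integer spectrum); $\mathcal{R}W$ keeps only the component with zero time-frequency $l=0$ and zero $N$-charge $k=0$, hence it is time independent and commutes with every $N^{(\alpha)}$. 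The first point I would check is that strong locality \eqref{strong.loc} is preserved by $G$ and by $\mathcal{R}W$: the spectral projections involved act within $S$ and, again by \textnormal{(N.iii)}, commute with all the $N^{(\alpha)}_{S'}$, so they intertwine with the defining relation of $\mathcal{O}^{\strong}$. The quantitative estimate uses the Diophantine lower bound $|J\cdot k+\omega\cdot l|\ge\gamma/(|k|+|l|)^{\tau}$; the loss $(|k|+|l|)^{\tau}$, together with the number of admissible charges $k$, is only \emph{polynomial} in $|S|$ and $|l|$, because for a strongly local term supported on $S$ one has $|k|\lesssim |S|\,r\max_\alpha\|N^{(\alpha)}\|_{0}$; hence it is absorbed by shrinking the weights $\kappa\to\kappa-\delta$, $\rho\to\rho-\sigma$ at the price of a factor $\lesssim\gamma^{-1}\delta^{-O(\tau+r)}\sigma^{-O(\tau)}$. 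The final weights $\kappa_*,\rho_*$ of \eqref{def:oggetti.finali} — in particular the smallness of $\rho_*$, which carries the factor $r\max_\alpha\|N^{(\alpha)}\|_0$ — record precisely the cumulative width spent in these absorptions.

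The single normal-form step then reads as follows: given $H^{(n)}(\omega t)=J\cdot N+Z^{(n)}+V^{(n)}(\omega t)$ with $Z^{(n)}\in\mathcal{O}^{\strong}$ and $[Z^{(n)},N^{(\alpha)}]=0$, solve the homological equation with $W=V^{(n)}$, take $G^{(n)}$ equal to its solution, and conjugate by $Y^{(n)}:=e^{-\ii G^{(n)}(\omega t)}$. Using $U_{YHY^*+\ii\dot Y Y^*}(t)=Y(\omega t)U_H(t)Y(0)^*$ and expanding $Y^{(n)}H^{(n)}(Y^{(n)})^*+\ii\dot Y^{(n)}(Y^{(n)})^*$ via a BCH/Duhamel series in which each commutator is controlled by the local-commutator estimates of Section~\ref{sec:Preliminaries} (based on the decomposition \eqref{eq:LocalCommutator}), one obtains $H^{(n+1)}=J\cdot N+Z^{(n+1)}+V^{(n+1)}$ with $Z^{(n+1)}=Z^{(n)}+\mathcal{R}V^{(n)}$ — still in $\mathcal{O}^{\strong}$ and commuting with each $N^{(\alpha)}$ — and the quadratic-type bound
\[
\|V^{(n+1)}\|_{\kappa_{n+1},\rho_{n+1}}\;\lesssim\;D_n\,\big(\|V^{(n)}\|+\|Z^{(n)}\|\big)\,\|V^{(n)}\|\,,
\]
where $D_n$ is the width/Diophantine loss of step~$n$. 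One then fixes the shrinking schedule $\kappa_n-\kappa_{n+1}=\rho_n-\rho_{n+1}\sim\min\{\kappa,\rho\}/n^*$, so that after $n^*$ steps the weights are still $\ge\kappa_*$ and $\ge\rho_*$; this makes $D_n\le D(n^*)\sim(n^*)^{O(\tau+r)}$, whence, since $\|Z^{(n)}\|,\|V^{(n)}\|\lesssim\varepsilon\|V\|$ stay small, the bound closes into the geometric decay $\|V^{(n)}\|_{\kappa_n,\rho_n}\le e^{-n}\varepsilon\|V\|_{\kappa,\rho}$ for all $n\le n^*$, as long as $\varepsilon\,D(n^*)$ is below an absolute constant. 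Optimizing, this last requirement is met for $n^*\sim\varepsilon^{-2\tb}$ with $\tb$ as in \eqref{parametri.vitali}, which also fixes the admissible $\varepsilon_0$ in terms of the stated parameters only.

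Running the iteration to $n^*\sim\varepsilon^{-2\tb}$ steps gives part~(a): the remainder satisfies $\|V_{\mathrm{inv}}\|_{\kappa_*,\rho_*}\le e^{-\varepsilon^{-2\tb}}\|V\|$; the accumulated $Z_{\mathrm{inv}}=\sum_{n<n^*}\mathcal{R}V^{(n)}$ obeys $\|Z_{\mathrm{inv}}\|_{\kappa_*}\le\sum_{n\ge0}e^{-n}\varepsilon\|V\|=\tfrac{e}{e-1}\varepsilon\|V\|$ and commutes with every $N^{(\alpha)}$ by construction (this is \textnormal{(a.i)}); and $Y_{\mathrm{inv}}=e^{-\ii G^{(n^*-1)}}\cdots e^{-\ii G^{(0)}}$ satisfies \textnormal{(a.iv)} because the total generator has norm $\sum_n\|G^{(n)}\|_{\kappa_*,\rho_*}\lesssim\varepsilon^{1/2}\|V\|$. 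Part~(b) is the same iteration stopped after the fewer steps $n^*\sim\varepsilon^{-\tb}$ — whence the weaker remainder $\|V_{\mathrm{obs}}\|\le e^{-\varepsilon^{-\tb}}\|V\|$ — with the extra bookkeeping needed for the Lieb--Robinson argument of Section~\ref{sec:physical.cons}: the conjugation is recorded as the explicit ordered product $Y_{\mathrm{obs}}(\omega t)=e^{-\ii G^{(n^*-1)}_{\mathrm{obs}}(\omega t)}\cdots e^{-\ii G^{(0)}_{\mathrm{obs}}(\omega t)}$ with $\|G^{(n)}_{\mathrm{obs}}\|_{\kappa_*,\rho_*}\le\varepsilon^{1/2}e^{-n}\|V\|$, and each generator is replaced by $G^{(n)}_{\mathrm{obs}}-G^{(n)}_{\mathrm{obs}}(0)$ to force $Y_{\mathrm{obs}}(0)=\mathbbm 1$ (this is harmless: the shifted generator solves the homological equation up to a time-independent, $N$-commuting term and obeys the same bounds). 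In both cases the propagator relations \eqref{uH.uHstar} and \eqref{uH.uHstar.Hobbes} follow by composing the step-wise identities.

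The main obstacle — and the reason the normal forms of \cite{Abanin2017,DeRoeck-Verreet,Else2020} do not suffice here — is solving the homological equation with constants \emph{independent of} $|\Lambda|$ while \emph{simultaneously} keeping the corrector $G^{(n)}$ and the accumulated $Z^{(n)}$ strongly local: it is exactly strong locality that converts the a~priori unbounded divisor loss $(|k|+|l|)^{\tau}$ into a summable loss in the locality weight through $|k|\lesssim|S|\,r\max_\alpha\|N^{(\alpha)}\|_0$, and it is the preservation of $[Z^{(n)},N^{(\alpha)}]=0$ at every step that makes the surviving effective Hamiltonian quasi-commute with each $N^{(\alpha)}$ \emph{separately} rather than merely with $H_0$. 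The subsidiary delicate point is the quantitative matching of the geometric decay of $\|V^{(n)}\|$ against the width losses accumulated over the $\varepsilon$-dependent number of steps, which is what pins down the exponent $\tb$ in \eqref{parametri.vitali}.
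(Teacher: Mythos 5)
Your account of part~(a) is essentially the paper's argument, modulo notation: your ``decomposition into joint eigenoperators of $[N^{(\alpha)}_{\subseteq S},\cdot]$'' is precisely the $\theta$-parametrization $\mathcal{A}(\theta,\varphi)=e^{\ii\theta\cdot N}A(\varphi)e^{-\ii\theta\cdot N}$ of Lemma~\ref{lem:op.tau} (the $k$-th Fourier mode in $\theta$ is the $k$-eigenspace of the adjoint action), and your observation that $|k|\lesssim|S|\,r\max_\alpha\|N^{(\alpha)}\|_0$ plays the same role as the paper's exponential decay in $|k|$ on a strip of width $\zeta\sim\sigma/\max_\alpha\|N^{(\alpha)}\|_0$ combined with the trade $\kappa\to\kappa-r\sigma$. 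The iteration schedule (shrinking $\kappa_n,\rho_n$ linearly over $n_*\sim\varepsilon^{-2\tb}$ steps, geometric decay $\|V^{(n)}\|\le e^{-n}\varepsilon\|V\|$, resonant accumulation $Z_\inv=\sum_n\langle V^{(n)}\rangle$) is the content of Lemma~\ref{nf.thm}.

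There is, however, a genuine gap in your treatment of part~(b). You write that replacing the generator by $G^{(n)}-G^{(n)}(0)$ is harmless because ``the shifted generator solves the homological equation up to a time-independent, $N$-commuting term''. That is false, and it contradicts Remark~\ref{pipistrello}: subtracting the constant produces the extra constant $-\ii[J\cdot N,\mathcal{G}(0,0)]$ in the normal-form term, so $Z_\obs$ is $\llan P\rran$ as in \eqref{cacio.e.pepe}, which in general does \emph{not} commute with the $N^{(\alpha)}$'s. This is exactly why (b.i) and (a.i) cannot be achieved simultaneously. Concretely, the extra commutator costs an additional divisor factor $|J\cdot k|/|\omega\cdot l+J\cdot k|$, and after Cauchy estimates this introduces an extra loss $\sim\delta_n^{-1}$ at each step (Lemma~\ref{lem.Sailor.Moon.is.back}, second bound in \eqref{bagno.maria}, and \eqref{not.so.normal.form}), equivalent to a factor $\varepsilon^{-1/2}$ per step. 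This is precisely why (b.ii) has $\varepsilon^{1/2}$ rather than $\varepsilon$, and why the obs-iteration must be stopped after only $n^*\sim\varepsilon^{-\tb}$ steps rather than $\varepsilon^{-2\tb}$ — a reduction you assert but do not derive, and which your (incorrect) claim that the correction is $N$-commuting would not in fact explain. To repair this, you would need to state explicitly that the obs-iteration uses the modified normal-form operator $\llan\cdot\rran$, carry the extra $\varepsilon^{-1/2}$ loss through the recursive estimate for $\|Z^{(n)}_\obs\|$, and show that the shortened step count $n^*\sim\varepsilon^{-\tb}$ is exactly what keeps $\varepsilon^{-1/2}\varepsilon_n$ summable and the smallness condition closed.
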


\begin{remark}\label{pipistrello} The two normal form Hamiltonians $H_{\mathrm{inv}}$ and $H_{\mathrm{obs}}$ differ because of Item (a.i) and (b.i). In general, it is not possible to perform a unique normal form such that both Items (a.i) and (b.i) hold, as the case of stationary perturbations $V(\omega t) \equiv V$ easily shows. However, Item (a.i) plays a fundamental role in showing the quasi-invariance of the $N^{(\alpha)}$ operators (Theorem \ref{teo:SlowHeating}), while Item (b.i) is necessary to obtain the results on recurrence times in Theorem \ref{thm:EvLocObs}.
\end{remark}

{Analogously, a crucial role in the proof of Theorems \ref{thm:SlowHeatingFF} and \ref{thm:EvLocObs.ff} is played by the following:}

\begin{proposition}[Normal form {for fast forcing perturbations}]\label{teo:main.ff}
	Let $H(\lambda \omega t)$ be as in \eqref{eq:our.H.lambda} and suppose that {$\gamma_J>0$, $\gamma_\omega>0$, $\tau_J> r-1$, $\tau_\omega> m-1$ and $\mathcal{J}>0$ are such that $J$ satisfies \eqref{eq:siamo.dei.cani.ff.1} with $|J| \leq \mathcal{J}$ and $\omega$ satisfies \eqref{eq:siamo.dei.cani.ff.2}.} There exists $\lambda_0 = \lambda_0(r, \tau_J, \tau_\omega, \gamma_J, \gamma_\omega, {\mathcal{J}}, \kappa, \rho, {\{\| N^{(\alpha)} \|_{0}\}_\alpha}, \|V\|_{\kappa, \rho}) >0$ such that $\forall \lambda > \lambda_0$, the following holds.
	\begin{itemize}
	\item[(a)] There exist a unitary time quasi-periodic transformation $Y_{\mathrm{inv}}$ and a time quasi-periodic self-adjoint operator $H_{\mathrm{inv}}$ such that
	\begin{equation}
	U_{H_{\mathrm{inv}}}(t) \;=\; Y_{\mathrm{inv}}(\lambda \omega t) U_H(t) {Y_\inv^*(0)}
	\end{equation}
	and
	\begin{equation}
	H_{\mathrm{inv}}(\lambda \omega t)=J \cdot N + Z_{\mathrm{inv}}+V_{\mathrm{inv}}(\lambda \omega t)\,,
	\end{equation}
	with
	\begin{itemize}
		\item[(a.i)] $[Z_{\mathrm{inv}}, N^{(\alpha)}] = 0\quad \forall \alpha = 1, \dots, r$;
		\item[(a.ii)] $Z_{\mathrm{inv}} \in \mathcal{O}^\strong_{\kappa_*}$, $\Vert Z_{\mathrm{inv}} \Vert_{\kappa_*} \leq {\frac{e}{e-1}} \lambda^{-\frac{1}{4\tau_\omega}} \Vert V \Vert_{\kappa, \rho}$;
		\item[(a.iii)] $V_{\mathrm{inv}} \in \mathcal{O}^\strong_{\kappa_*, \rho_*}$, with $\Vert V_{\mathrm{inv}} \Vert_{\kappa_{*}, \rho_*} \leq e^{- \lambda^{2\beta}} \Vert V \Vert_{\kappa,\rho}$; 
		\item[(a.iv)] $\forall P \in \mathcal{O}_{\kappa_*, \rho_*},$ $\|Y_{\mathrm{inv}}(\lambda \omega t) P Y_{\mathrm{inv}}^*(\lambda \omega t)-P \|_{\kappa_{*}, \rho_*} \leq {\frac{2e}{e-1}} \lambda^{-\frac{1}{8 \tau_\omega}} \|P\|_{\kappa_*, \rho_*}$.
	\end{itemize}
	\item[(b)] There exist a unitary time quasi-periodic transformation $Y_{\mathrm{obs}}$ and a time quasi-periodic self-adjoint operator $H_{\mathrm{obs}}$ such that
	\begin{equation}
	U_{H_{\mathrm{obs}}}(t) \;=\; Y_{\mathrm{obs}}(\lambda \omega t) U_H(t)
	\end{equation}
	and
	\begin{equation}
	H_{\mathrm{obs}}(\lambda \omega t)=J \cdot N + Z_{\mathrm{obs}}+V_{\mathrm{obs}}(\lambda \omega t)\,,
	\end{equation}
	with
	\begin{itemize}
		\item[(b.i)] $Y_{\mathrm{obs}}(0) = \mathbbm{1}$;
		\item[(b.ii)] $Z_{\mathrm{obs}} \in \mathcal{O}^\strong_{\kappa_*}$, $\Vert Z_{\mathrm{obs}} \Vert_{\kappa_*} \leq {\frac{{e}}{e-1}} \lambda^{-\frac{1}{8 \tau_\omega}} \Vert V \Vert_{\kappa, \rho}$;
		\item[(b.iii)] $V_{\mathrm{obs}} \in \mathcal{O}^\strong_{\kappa_*, \rho_*}$, with $\Vert V_{\mathrm{obs}} \Vert_{\kappa_{*}, \rho_*} \leq e^{- \lambda^{\beta}} \Vert V \Vert_{\kappa,\rho},$;
		\item[(b.iv)]  $\forall P \in \mathcal{O}_{\kappa_*, \rho_*},$ $\|Y_{\mathrm{obs}}(\lambda \omega t) P Y_{\obs}^*(\lambda \omega t)-P \|_{\kappa_{*}, \rho_*} \leq {\frac{2{e}}{e-1}} \lambda^{-\frac{1}{8 \tau_\omega}} \|P\|_{\kappa_*, \rho_*}\,$. {More precisely, there} exist $n^\star \in \N$ and self-adjoint operators $G^{(0)}_\obs, \dots, G^{({n^\star-1})}_\obs \in \mathcal{O}^\strong_{\kappa_*, \rho_*}$ such that $Y_\obs(\lambda \omega t) = e^{-\ii G^{({n^\star-1})}_\obs(\lambda \omega t)} \cdots e^{-\ii G^{(0)}_\obs(\lambda \omega t)}$, with $\|G^{(n)}_\obs\|_{\kappa_*, \rho_*} \leq \lambda^{-\frac{1}{8 \tau_\omega}} e^{-n} \|V\|_{\kappa, \rho}$ for any $n$.
	\end{itemize}
	\end{itemize}
	Here $\beta$ is the constant defined in \eqref{parametri.vitali.ff} and $\kappa_*, \rho_*$ are defined as in \eqref{def:oggetti.finali}.
\end{proposition}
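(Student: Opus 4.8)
I would prove Proposition~\ref{teo:main.ff} by running the iterative normal form scheme that underlies Proposition~\ref{teo:main}, with a suitable positive power of $\lambda^{-1}$ in place of the small parameter $\varepsilon$; the only substantial difference lies in the structure of the small divisors. Since $|\lambda\omega|\gg|J|$, the homological operator $\mathrm{ad}_{J\cdot N}-\lambda\,\omega\cdot\partial_\varphi$ has, on the Fourier--number mode labelled by $(l,k)\in\Z^{m}\times\Z^{r}$, the eigenvalue $-\ii(J\cdot k+\lambda\,\omega\cdot l)$, which is of size $\sim\lambda$ for every $l\neq0$ as soon as $|k|\,|l|^{\tau_\omega}\lesssim\lambda$ (by \eqref{eq:siamo.dei.cani.ff.2} and $|J|\le\mathcal J$), is bounded below by $\gamma_J|k|^{-\tau_J}$ when $l=0$, $k\neq0$ (by \eqref{eq:siamo.dei.cani.ff.1}), and vanishes only for $(l,k)=(0,0)$. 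A first conjugation by $e^{-\ii G^{(0)}}$, with $G^{(0)}$ solving the homological equation for the modes with $l\neq0$ up to an ultraviolet cutoff, therefore removes the time-dependent part of $V$ at a cost $\|G^{(0)}\|\lesssim\lambda^{-\eta}\|V\|$ for a suitable $\eta\in(0,1)$, leaving a running perturbation of size $\lesssim\lambda^{-\eta}$; from then on the scheme of Proposition~\ref{teo:main} applies almost verbatim, the $l=0$, $k\neq0$ part being removed at each step by a rotation exploiting \eqref{eq:siamo.dei.cani.ff.1} exactly as in \cite{DeRoeck-Verreet}, and the surviving $(0,0)$-block accumulating into $Z$.

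The decomposition $V(\varphi)=\sum_S\sum_l\sum_k e^{\ii l\cdot\varphi}V_{S,l,k}$ with $[N^{(\alpha)},V_{S,l,k}]=k_\alpha V_{S,l,k}$ is exactly the one produced in Section~\ref{sec:Preliminaries} from strong locality and the integrality assumption (N.iv); the same analysis gives $|k|\lesssim\big(\max_\alpha\|N^{(\alpha)}\|_{0}\big)|S|$ on every nonzero mode of a strongly local operator, which links the time-Fourier cutoff to the support-size cutoff and makes the constraint $|k|\,|l|^{\tau_\omega}\lesssim\lambda$ manageable. I would then iterate with analyticity widths $\kappa=\kappa_0>\kappa_1>\cdots$ and $\rho=\rho_0>\rho_1>\cdots$ decreasing to $\kappa_*$ and $\rho_*$ as defined in \eqref{def:oggetti.finali}, a cutoff scale $\mathtt N_n$ growing polynomially in the step index $n$, and, at each step, a quadratic-plus-tail estimate $\|V_{n+1}\|\lesssim c_n\|V_n\|^2+e^{-\mathtt{c}\,\mathtt N_n}\|V_n\|$, with $c_n$ collecting the small-divisor and regularity losses and polynomial in $\mathtt N_n$ and in the inverse width decrements. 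The multiplicative estimates for $\|\cdot\|_\kappa,\|\cdot\|_{\kappa,\rho}$ needed here, and the preservation of strong locality at every step --- which holds because $[G_{S'},N^{(\alpha)}_{S''}]=0$ whenever $S''\not\subseteq S'$ and commutators only enlarge supports --- are the content of Section~\ref{sec:norma.normale}. A standard induction then gives super-exponential decay of $\|V_n\|$ up to the largest step $n^\star$ compatible with the width budget, and optimizing the polynomial dependence of $c_n,\mathtt N_n$ on $n$ against that budget, and the exponent $\eta$ above against the ultraviolet constraint --- with $(\gamma,\tau)$ of the small-size proof replaced by $(\gamma_J,\tau_J)$ and $(\gamma_\omega,\tau_\omega)$ --- gives $n^\star\sim\lambda^{2\beta}$ (resp.\ $\sim\lambda^{\beta}$) with $\beta$ as in \eqref{parametri.vitali.ff}, hence the estimates (a.ii)--(a.iii) (resp.\ (b.ii)--(b.iii)) and (a.iv)/(b.iv); $\lambda_0$ is the threshold past which the inductive smallness condition holds.

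The difference between the two items is the one discussed in Remark~\ref{pipistrello}. For (a) one may compose the elementary rotations with the fixed final unitary $Y_\inv^*(0)$, so each generator can be taken with vanishing $(0,0)$-block and $Z_\inv$ absorbs the entire resonant block; this is the cheaper scheme, giving $2\beta$ and $\lambda^{-1/(4\tau_\omega)}$. For (b) one must instead impose $Y_\obs(0)=\mathbbm{1}$, which forces each $G^{(n)}_\obs$ to be normalized to vanish at $\varphi=0$; subtracting its value at the origin re-injects a time-independent but $N$-non-commuting contribution of size comparable to $\|G^{(n)}_\obs\|$, which can only be reabsorbed at the price of roughly halving the effective number of steps and of enlarging $Z_\obs$, whence $\beta$ and $\lambda^{-1/(8\tau_\omega)}$; the factorization $Y_\obs=e^{-\ii G^{(n^\star-1)}_\obs}\cdots e^{-\ii G^{(0)}_\obs}$ with $\|G^{(n)}_\obs\|_{\kappa_*,\rho_*}\le\lambda^{-1/(8\tau_\omega)}e^{-n}\|V\|_{\kappa,\rho}$ is then read directly off the construction.

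The hardest part will be keeping \emph{every} estimate uniform in $|\Lambda|$. Commutators spread supports, so a naive operator-norm bookkeeping would produce constants growing with the volume; this is avoided by working throughout with the local norms $\|\cdot\|_\kappa,\|\cdot\|_{\kappa,\rho}$ and the decomposition \eqref{eq:LocalCommutator}, whose governing estimates are volume-free. The extra difficulty specific to the present setting --- and, as remarked in the introduction, the reason the normal form of \cite{Else2020} cannot yield several conserved quantities --- is that the ultraviolet cutoff must be compatible with the number-sector decomposition: one truncates $V$ jointly in $(S,l,k)$ while preserving strong locality and the bound $|k|\lesssim|S|$, so that the divisors $J\cdot k+\lambda\,\omega\cdot l$ stay of order $\lambda$ on the retained modes. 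Arranging this cutoff, the analyticity losses, and the gains per step so that the final exponent is a fixed positive power of $\lambda$ --- independent of $|\Lambda|$ and of $\lambda$ --- is the quantitative heart of the argument and the source of the particular value of $\beta$ in \eqref{parametri.vitali.ff}.
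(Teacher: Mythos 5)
Your proposal correctly identifies all the structural ingredients that the paper's proof relies on: the small-divisor dichotomy (modulus $\gtrsim\lambda^{1/2}$ when $l\neq0$ and $|k|,|l|$ lie below suitable $\lambda$-dependent thresholds, which is the content of \eqref{delicato.equilibrio}; modulus $\geq\gamma_J/|k|^{\tau_J}$ when $l=0$ via \eqref{eq:siamo.dei.cani.ff.1}; vanishing only at $(0,0)$), the uv/ir splitting of Fourier modes, the preservation of strong locality under commutators and $e^{\ii\theta\cdot N}$-conjugation, the role of the volume-uniform local norms, and the explanation via Remark~\ref{pipistrello} of why $Z_\inv$ commutes with all $N^{(\alpha)}$ while $Z_\obs$ does not and why this halves the exponent. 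That said, the algorithmic organization you sketch differs from the paper's in a way worth flagging. You describe a Newton-type scheme: a cutoff $\mathtt N_n$ growing with $n$, a quadratic-plus-tail estimate $\|V_{n+1}\|\lesssim c_n\|V_n\|^2+e^{-\mathtt c\mathtt N_n}\|V_n\|$, and super-exponential per-step decay until the analyticity budget runs out. The paper instead runs a \emph{linear} (geometric) scheme: the uv/ir thresholds $K\sim\lambda^{1/2}$, $L\sim\lambda^{1/(2\tau_\omega)}$ in \eqref{m1.m2} are fixed once and for all as functions of $\lambda$ alone; each step solves the homological equation on all ir modes simultaneously (Corollary~\ref{cor.trova.g.ff}) and keeps the uv part as a remainder of relative size $\sim\lambda^{-1/(2\tau_\omega)}\delta_n^{-(r+1)}$ (Corollary~\ref{cor.p.small}); and the per-step gain is a fixed factor $1/e$, as in \eqref{granchietti.ff}. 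Running $n_\star\sim\lambda^{2\beta}$ such steps, each costing $\delta_n\sim1/n_\star$ in analyticity, produces the stretched-exponential remainder, with $\beta$ in \eqref{parametri.vitali.ff} fixed by the budget constraints $\lambda^{-1/(2\tau_\omega)}n_\star^{r+1}\lesssim1$ and (for item (b)) $\delta_n^{-(\tau_J+r+1)}\lesssim\lambda^{1/(8\tau_\omega)}$. The linear scheme is both sufficient and considerably lighter to book-keep than the quadratic one you propose — and in fact with the paper's fixed cutoff the uv tail term is dominant at every step, so super-exponential decay never actually kicks in; if you pursue your route you should budget for the tail rather than the quadratic term. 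The two-phase description in your first paragraph (one big conjugation to kill the $l\neq0$ modes, then iterate) is slightly misleading: after the first conjugation the error terms are again genuinely time-dependent with $l\neq0$ modes, so the hybrid homological equation solving both the $l\neq0$ ir modes (via $\omega$-Diophantine) and the $l=0,k\neq0$ modes (via $J$-Diophantine) must be solved at every step, which is exactly what Lemma~\ref{lem.Senza.Nome} and Corollary~\ref{cor.trova.g.ff} do.
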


\begin{remark}\label{rem:GranchiettiEBussolai}
	The result in Item (b) of Proposition \ref{teo:main.ff} does not depend on the particular structure $J \cdot N$ of the unperturbed Hamiltonian; furthermore, with minor changes in the regularity parameters, it is proven in \cite{Else2020}. Here we state it in the particular case of $H_0 = J \cdot N$ only for self-consistency. The same does not hold true for Item (b) of Proposition \ref{teo:main}, which instead explicitly relies on the structure of $H_0$ and is one of the novelties of the present work.
\end{remark}

\subsection{Ideas of the Normal Form Proofs} \label{sec:Ideas}
The class of models we deal with is described by the Hamiltonian
\begin{equation}
	H(\nu t) \;=\; J \cdot N + W(\nu t)
\end{equation}
where {$\nu \in \R^m$ and} $W(\nu t)$ can be either a small perturbation (i.e.\ $W(\nu t)=\varepsilon V(\omega t)$, {$\varepsilon \ll 1$}) or a fast-forcing one (i.e.\ $W(\nu t)=V(\lambda \omega t)$, {$\lambda \gg 1$}).

We look for a coordinate transformation of the form $Y_1(\nu t)=e^{\ii G_0(\nu t)}$ generated by a quasi-local self-adjoint operator $G_0(\nu t) \in \mathcal{O}^\strong$ which reduces the size of the perturbation $W$. If $G_0$ is small (which will be the case), denoting by $H_1$ the transformed Hamiltonian (see Lemma \ref{lem.pushfwd}), we have
\begin{equation}
	H_1(\varphi)=J\cdot N +\nu \cdot \partial_\varphi G_0(\varphi) + W(\varphi) + \text{h.o.t.}
\end{equation}
with $\varphi:=\nu t$, where h.o.t.\ denotes terms of higher order in the perturbative parameter.
The core of the procedure is to solve for all $\varphi$ the so-called homological equation:
\begin{equation}\label{hom.0}
	\ii [J\cdot N,G_0(\varphi)]+ \nu \cdot \partial_\varphi G_0(\varphi)+ W(\varphi)-Z_1=0
\end{equation}
in the unknowns $G_0(\varphi)$ and $Z_1$, for a suitable time independent operator $Z_1$ (thus, independent of $\varphi$). 

{For our scopes, we have two possible solutions, depending on an additional requirement on the solution to the homological equation: either we require $[Z_1,N^{(\alpha)}]=0$ for any $\alpha = 1, \dots, r$, which leads to the normal form $H_\inv$, or we require $G(0)=0$, and this leads to the normal form $H_\obs$. (Both requirements are not compatible, in general, see Remark \ref{pipistrello}.)}

To solve equation \eqref{hom.0} we follow the ideas in \cite{BGMR_growth,QN, QNbari}, and we add an additional parametric dependence to our operators: we define
\begin{equation}\label{eq:HomologicalFormal}
	\mathcal{G}_0(\theta,\varphi)\;:=\;  e^{\ii \theta \cdot N} G_0(\varphi) e^{-\ii \theta \cdot N}  \, , \qquad \mathcal{W}(\theta,\varphi) \;:=\; e^{\ii \theta \cdot N} W(\varphi) e^{-\ii \theta \cdot N} \, ,
\end{equation}
and we look for a solution of
\begin{equation}\label{eq:Hom.Introduttiva}
\ii [J\cdot N,\mathcal{G}_0(\theta,\varphi)]+ \nu \cdot \partial_\varphi \mathcal{G}_0(\theta,\varphi)+ \mathcal{W}(\theta,\varphi)-Z_1=0
\end{equation}
for any $\theta \in \mathbb{T}^r$. Clearly, the solution $\mathcal{G}_0(\theta,\varphi)$ we find for any $\theta$, is a solution to \eqref{eq:HomologicalFormal} once evaluated at $\theta=0$. Since the spectrum of each of the number operators is integer, i.e. $\sigma(N^{(\alpha)})\subset \mathbb{Z}$ $\forall \alpha$, the dependence on $\theta$ of all the operators is quasi-periodic, allowing us to introduce their Fourier series based on the local structures as
\begin{equation}
	\mathcal{G}_0(\theta,\varphi)=\sum_{S \in \mathcal{P}_c(\Lambda)} \sum_{l \in \mathbb{Z}^m} \sum_{k \in \mathbb{Z}^r } \big( \widehat{\mathcal{G}_{0, S}} \big)_{l,k} \, e^{\ii k \cdot \theta} e^{\ii l \cdot \varphi} \, ,
\end{equation}
and analogously for $\mathcal{W}$. Then one solves \eqref{eq:Hom.Introduttiva} exploiting that $\ii [J\cdot N, \mathcal{G}_0(\theta,\varphi)] = J \cdot \partial_\theta \mathcal{G}_0(\theta,\varphi)$ and passing to Fourier coefficients.
In the solution of \eqref{eq:Hom.Introduttiva}, one has to deal with small denominators which are controlled differently in the small perturbation and fast forcing case. As mentioned in the Introduction, in the small perturbation case one imposes a Diophantine condition on the vector $(J,\omega)$ while on the fast-forcing case, one requires that $J$ and $\omega$ are Diophantine separately, and that $\lambda$ is large enough to separate the energy scales of the perturbation and the ones of the unperturbed Hamiltonian.
Once equation \eqref{hom.0} has been solved, we are left with $H_1(t) = J \cdot N + Z_1 + W_1(\nu t)$, with $W_1(\nu t)$ smaller in size, and we iterate the procedure a finite number of times to further reduce the size of the time dependent part.

\section{Strongly Local Operators and Their Properties}
\label{sec:Preliminaries}

\subsection{General Properties}
We start with the following result, which relates operator and local norms:
\begin{lemma}\label{lem.op.loc.norm}
	Let $t \mapsto A(\omega t)$ be such that $A \in \mathcal{O}_{0, 0}$. Then
	\begin{equation}\label{op.loc.1}
	\sup_{t \in \R}\|A(\omega t)\|_{\mathrm{op}} \leq |\Lambda| \|A\|_{0,0}\,.
	\end{equation}
	Furthermore, if $\forall t$ $A(\omega t) = \sum_{S'\subseteq S} A_{S'}(\omega t)$ for some $S \in \mathcal{P}_c(\Lambda)$, one has
	\begin{equation}\label{op.loc.S}
	\sup_{t \in \R} \|A(\omega t)\|_{\mathrm{op}} \leq |S| \|A\|_{0,0}\,.
	\end{equation}
\end{lemma}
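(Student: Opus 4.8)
The plan is to pass from the operator norm to the local norm in three moves: first dominate $\|A(\omega t)\|_{\mathrm{op}}$ by the sum of the operator norms of the local pieces $A_S(\omega t)$; then kill the time dependence by expanding each $A_S$ in Fourier series in $\varphi$ and using $|e^{\ii l\cdot\omega t}|=1$; and finally convert the resulting sum over all connected subsets (absolutely convergent, since $\Lambda$ is finite and $A\in\mathcal{O}_{0,0}$) into the per-site supremum that defines $\|\cdot\|_{0,0}$, at the price of a combinatorial factor.

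Concretely, for every $t$ one has $\|A(\omega t)\|_{\mathrm{op}}\le\sum_{S\in\mathcal{P}_c(\Lambda)}\|A_S(\omega t)\|_{\mathrm{op}}$, and writing $A_S(\varphi)=\sum_{l\in\Z^m}(\widehat{A_S})_l\,e^{\ii l\cdot\varphi}$ gives $\|A_S(\omega t)\|_{\mathrm{op}}\le\sum_{l\in\Z^m}\|(\widehat{A_S})_l\|_{\mathrm{op}}=:c_S$, uniformly in $t$. The only point requiring an idea is the passage $\sum_{S\in\mathcal{P}_c(\Lambda)}c_S\le|\Lambda|\,\sup_{x\in\Lambda}\sum_{S\ni x}c_S$: fix an arbitrary total order on $\Lambda$, set $m(S):=\min S$ for each connected $S$, and observe that each $S$ has a unique minimum, so that $\sum_{S\in\mathcal{P}_c(\Lambda)}c_S=\sum_{x\in\Lambda}\sum_{S:\,m(S)=x}c_S\le\sum_{x\in\Lambda}\sum_{S\ni x}c_S\le|\Lambda|\,\sup_{x\in\Lambda}\sum_{S\ni x}c_S=|\Lambda|\,\|A\|_{0,0}$. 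Chaining the three estimates yields \eqref{op.loc.1}.

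For \eqref{op.loc.S} I would run exactly the same argument with the sum over connected subsets restricted to $S'\subseteq S$: each such $S'$ has minimum lying in $S$, there are at most $|S|$ possible values for it, and for each $x\in S$ one still has $\sum_{S'\subseteq S:\,m(S')=x}c_{S'}\le\sum_{S''\ni x}c_{S''}\le\|A\|_{0,0}$ (the latter sum ranging over all connected $S''\ni x$, a superset), so that $\sup_t\|A(\omega t)\|_{\mathrm{op}}\le|S|\,\|A\|_{0,0}$. I do not expect any real obstacle: the estimate is a triangle inequality plus the observation that each connected set can be ``charged'' to a single one of its sites, so replacing a genuine sum over subsets by $|\Lambda|$ (resp.\ $|S|$) times the per-site supremum only over-counts. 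The only mild care needed is that the Fourier series of each $A_S$ converges in operator norm — guaranteed by summability of $\|(\widehat{A_S})_l\|_{\mathrm{op}}$ implied by $A\in\mathcal{O}_{0,0}$ — and that, since $A_{S'}$ acts within $S'$, its operator norm on $\mathcal{H}_\Lambda$ coincides with that on $\mathcal{H}_{S'}$, so no spurious volume factor enters through the individual pieces.
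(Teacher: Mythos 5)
Your proposal is correct and follows essentially the same route as the paper: triangle inequality over the local pieces, Fourier expansion to remove the time dependence, and then the passage from the sum over all connected sets to $|\Lambda|$ (resp.\ $|S|$) times the per-site supremum. The only cosmetic difference is that you justify $\sum_{S}c_S\leq\sum_{x}\sum_{S\ni x}c_S$ by charging each $S$ to its minimum, whereas the paper simply uses that every $S$ contains at least one site; both observations give the same overcounting bound.
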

\begin{proof}
	If $A \in \mathcal{O}_{0,0}$, one has
	\begin{align*}
	\sup_{t\in \R} \|A(\omega t)\|_{\mathrm{op}} &= \sup_{t \in \R} \left\|\sum_{S \in \mathcal{P}_c(\Lambda)}  \sum_{l \in \Z^m} (\widehat{A}_S)_l e^{\ii l \cdot \omega t}\right\|_{\mathrm{op}}
	\leq \sum_{x \in \Lambda} \sum_{S \in \mathcal{P}_c(\Lambda) \atop S \ni x} \sum_{l \in \Z^m} \|(\widehat{A}_S)_l\|_{\mathrm{op}}\\
	&\leq |\Lambda| \sup_{x \in \Lambda}  \sum_{S \in \mathcal{P}_c(\Lambda) \atop S \ni x} \sum_{l \in \Z^m} \|(\widehat{A}_S)_l\|_{\mathrm{op}} = |\Lambda| \|A\|_{0,0}\,.
	\end{align*}
	Suppose now $A = \sum_{S' \subseteq S} A_{S'}$; one has
	$$
	\begin{aligned}
	\sup_{t\in \R} \|A(\omega t)\|_{\mathrm{op}} &= \sup_{t \in \R} \left\|\sum_{S' \in \mathcal{P}_c(\Lambda) \atop S' \subseteq S}  \sum_{l \in \Z^m} (\widehat{A}_{S'})_l e^{\ii l \cdot \omega t}\right\|_{\mathrm{op}}
	\leq \sum_{x \in S} \sum_{S' \subseteq S \atop S' \ni x} \sum_{l \in \Z^m} \|(\widehat{A}_{S'})_l\|_{\mathrm{op}}\\
	&\leq |S| \sup_{x \in \Lambda}  \sum_{S' \in \mathcal{P}_c(\Lambda) \atop S' \ni x} \sum_{l \in \Z^m} \|(\widehat{A}_{S'})_l\|_{\mathrm{op}} = |S| \|A\|_{0,0}\,.
	\end{aligned}
	$$
\end{proof}
The normal form procedure described in Section \ref{sec:Ideas} consists in a conjugation of the Hamiltonian with coordinate transformations of the form $e^{\ii G(\nu t)}$, with $G$ being a family of strongly exponentially localized self-adjoint operators. Thus, considering a generic $A \in \mathcal{O}^\strong_{\kappa,\rho}$, we need to control {locality} properties of the quantity
\begin{equation}\label{eq:ConjugationAd}
e^{\ii G(\nu t)} A (\nu t) e^{-\ii G(\nu t)} \;=\; \sum_{q=0}^\infty \frac{(-\ii)^q}{q!} \mathrm{Ad}_{ G(\nu t)}^q A(\nu t) \, ,
\end{equation}
where we have defined
\begin{equation}
\mathrm{Ad}^0_{B} A := A \, ,\qquad \mathrm{Ad}^q_{B} A := [\mathrm{Ad}^{q-1}_B A , B] \qquad \forall q \geq 1, \quad \forall A,B \in \mathcal{O}_{\kappa,\rho}\, .
\end{equation}
This is done in Lemma \ref{cor.lignano.pineta} below. We point out that an analogous result to Lemma \ref{cor.lignano.pineta}, with different quantitative estimates, was given in \cite{Abanin2017} for time periodic operators. For this reason, here we only state it and we postpone a detailed proof of it, suited to our setting, to Appendix \ref{append:ProofTec}.  

\begin{lemma}[see also Lemma 4.1 of \cite{Abanin2017}]\label{cor.lignano.pineta}
	Let $A, B \in \mathcal{O}^\strong_{\kappa + \sigma, \rho}$ for some $\kappa, \rho \geq 0$ and $\sigma >0$, and assume that there exists $\eta \in (0, 1)$ such that
	\begin{equation}
	\frac{4 e^{-\kappa}\Vert A\Vert_{\kappa + \sigma, \rho}}{\sigma} \leq \eta\,.
	\end{equation}
	Then,
	\begin{gather}
	\label{anna}
	\Big\Vert e^{A} B e^{-A}-B\Big\Vert_{\kappa, \rho} \leq C e^{-\kappa} \frac{1}{\sigma} \Vert A\Vert_{\kappa + \sigma, \rho} \Vert B \Vert_{\kappa + \sigma, \rho}\,,\\
	\label{ha.ragione}
	\Big\Vert e^{A} B e^{-A}-B-[A,B] \Big\Vert_{\kappa, \rho} \leq 4 C e^{-2\kappa} \frac{1}{\sigma^2} \Vert A\Vert^2_{\kappa + \sigma, \rho} \Vert B \Vert_{\kappa + \sigma, \rho}\,,
	\end{gather}
	with $C=\frac{4}{1-\eta}>0$.
\end{lemma}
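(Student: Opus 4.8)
The strategy is to expand the conjugation $e^{A}Be^{-A}$ as the norm-convergent series $\sum_{q\ge 0}\frac{1}{q!}\mathrm{Ad}_A^q B$ (the operators involved are bounded once the lattice is finite, and the $\mathcal{O}^\strong_{\kappa,\rho}$-norm will dominate all the terms), and then to estimate each iterated commutator $\mathrm{Ad}_A^q B$ in the $\|\cdot\|_{\kappa,\rho}$-norm. The two inequalities \eqref{anna} and \eqref{ha.ragione} then follow by isolating the first one, respectively the first two, terms of the series and bounding the remainder as a geometric-type sum. So the heart of the matter is a single multilinear estimate: for $A,B\in\mathcal{O}^\strong_{\kappa+\sigma,\rho}$,
\begin{equation}
\big\| \mathrm{Ad}_A^q B \big\|_{\kappa,\rho} \;\le\; q!\, \Big( \frac{4 e^{-\kappa}}{\sigma}\|A\|_{\kappa+\sigma,\rho}\Big)^{q}\, e^{-\kappa}\|B\|_{\kappa+\sigma,\rho}\,,
\end{equation}
or some variant with the combinatorial factor arranged so that, after dividing by $q!$ and summing, one gets a convergent geometric series with ratio $\eta<1$ and the stated constant $C=\frac{4}{1-\eta}$.

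\textbf{Key steps.} First I would record a basic commutator estimate: for two strongly local families $P\in\mathcal{O}^\strong_{\kappa',\rho}$ and $Q\in\mathcal{O}^\strong_{\kappa'',\rho}$ with $\kappa'\ge\kappa''$, one controls $\|[P,Q]\|_{\kappa'',\rho}$ in terms of $\|P\|_{\kappa',\rho}\|Q\|_{\kappa',\rho}$ paying a factor like $\frac{e^{-\kappa''}}{\kappa'-\kappa''}$ coming from the decay-rate loss; the exponential weight $e^{\kappa|S|}$ together with the local-commutator decomposition \eqref{eq:LocalCommutator} and a counting lemma on how many connected sets $S',S''$ with $S'\cup S''=S$, $S'\cap S''\ne\varnothing$ can occur, produces exactly this kind of bound. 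The crucial point that makes strong locality enter is that, as $G$ is conjugated against $A$, one only needs to control commutators and $N^{(\alpha)}$-relative locality is preserved under them; this is what Section \ref{sec:Preliminaries} sets up and what I would cite rather than reprove. Second, I would iterate this commutator estimate $q$ times, at each step shrinking the decay rate from $\kappa+\sigma$ down towards $\kappa$ in steps of size $\sigma/q$, so that the total budget $\sigma$ is spent over the $q$ commutators; the telescoping of the $q$ factors $\frac{e^{-\kappa}}{\sigma/q}=\frac{q e^{-\kappa}}{\sigma}$ is the origin of the $q!$ and of the $\big(\tfrac{4e^{-\kappa}}{\sigma}\big)^q$. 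Third, I would sum: $\sum_{q\ge 1}\frac{1}{q!}\|\mathrm{Ad}_A^q B\|_{\kappa,\rho}\le e^{-\kappa}\|B\|_{\kappa+\sigma,\rho}\sum_{q\ge 1}\eta^{q-1}\cdot(\text{const})$, giving \eqref{anna}; repeating with the $q\ge 2$ tail and noting $\mathrm{Ad}_A^1 B=[B,A]=-[A,B]$ up to sign conventions, the $q=1$ term is exactly $[A,B]$ (or $-[A,B]$, to be reconciled with the sign in \eqref{ha.ragione}) and the remainder $\sum_{q\ge 2}$ contributes the quadratic bound \eqref{ha.ragione} with an extra factor $\big(\tfrac{4e^{-\kappa}}{\sigma}\|A\|_{\kappa+\sigma,\rho}\big)$, i.e.\ the $\frac{1}{\sigma^2}\|A\|^2$ and the numerical constant $4C e^{-2\kappa}$.

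\textbf{Main obstacle.} The delicate part is the bookkeeping in the iterated commutator estimate: one must split the decay-rate loss $\sigma$ into $q$ pieces and track exactly how the geometric factors, the $e^{-\kappa}$ weights, and the combinatorial factors from the set-counting lemma combine, so that the final constants match $C=\frac{4}{1-\eta}$, the hypothesis $\frac{4e^{-\kappa}\|A\|_{\kappa+\sigma,\rho}}{\sigma}\le\eta$, and the factor $4$ in \eqref{ha.ragione}. A secondary subtlety is ensuring at each step of the iteration that $\mathrm{Ad}_A^q B$ remains strongly local (so that the next commutator with $A$ can again be estimated in the strong-locality norm) — this is where one uses that the commutator of two strongly local operators is strongly local, which should be among the general properties of $\mathcal{O}^\strong$ established just before. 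Since the excerpt explicitly says this lemma is an adaptation of Lemma 4.1 of \cite{Abanin2017} with different quantitative estimates and defers the full proof to an appendix, in the body I would only present the series expansion and the reduction to the multilinear estimate, stating the latter as the technical input.
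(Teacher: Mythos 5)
Your plan is essentially sound and would deliver the stated constants, but the route you take to the key multilinear estimate is genuinely different from the paper's. The paper does \emph{not} iterate a single-commutator bound: it writes $\mathrm{Ad}_A^q B$ as one large sum over $(q+1)$-tuples of connected sets $(S_1,\dots,S_{q+1})$ with pairwise overlap conditions, passes the entire $(q+1)$-fold sum through the counting Lemma \ref{lemma.fs} (paying $2^{q}(|S_1|+\dots+|S_{q+1}|)^{q}$), and then spends the whole decay budget $\sigma$ in a single Cauchy estimate $\sup_{x>0}x^q e^{-\sigma x}=(q/(e\sigma))^q$, arriving at
\begin{equation*}
\|\mathrm{Ad}_A^q B\|_{\kappa,\rho}\;\le\; \left(\tfrac{q}{e}\right)^{q}\left(\tfrac{4e^{-\kappa}}{\sigma}\right)^{q}\|A\|^q_{\kappa+\sigma,\rho}\,\|B\|_{\kappa+\sigma,\rho}\,.
\end{equation*}
You instead iterate the single-commutator estimate (Lemma \ref{lem.2k}) $q$ times with step $\sigma/q$; each step costs $4(q/\sigma)e^{-(\kappa_j+1)}$, and the $q$ separate Cauchy estimates collectively produce the same $e^{-q}$ that the paper's one-shot Cauchy estimate gives, so you land on exactly the same $(q/e)^q$. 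The only quantitative slip in what you wrote is the spurious extra $e^{-\kappa}$ multiplying $\|B\|$ in your candidate bound (the iterated estimate you sketch actually yields the display above, with no extra factor); this is immaterial once you divide by $q!$ and sum, and you hedge with "or some variant" anyway. As for trade-offs: the paper's direct estimate avoids tracking a shrinking parameter $\kappa_j$ through $q$ iterations and invokes the counting lemma exactly once, while your iterated scheme is more modular (only the $q=1$ case of Lemma \ref{lem.2k} needs to be proved from scratch) and is more immediately recognizable as a KAM-style composition argument. You are also right that strong locality is preserved under commutators, which is established in the last paragraph of the proof of Lemma \ref{lem.2k} and is what makes the inductive application well-posed.
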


\subsection{Conjugation with the Number Operators}

In the normal form construction we need to exploit the notion of  \emph{strong} locality to obtain that for a strongly local operator $A$, 
\begin{equation}\label{finanziamo.trenitalia}
	\big(e^{\ii \theta \cdot N} A e^{-\ii \theta \cdot N} \big)_{S}\;=\;e^{\ii \theta \cdot N} A_S e^{-\ii \theta \cdot N} \, .
\end{equation}
The aim of this Subsection is to prove \eqref{finanziamo.trenitalia} and some properties of $e^{\ii \theta \cdot N} A e^{-\ii \theta \cdot N}$.
First we prove that the series $\sum_{q} \frac{(-\ii)^q}{q!} \mathrm{Ad}_{\theta \cdot N}^q A$ converges uniformly in $\theta$ (but with a bad dependence on the volume $|\Lambda|$):
	\begin{lemma}\label{lem.ad.q.lambda}
	Fix $\alpha \in \{1, \dots, r\}$ and suppose that $A = A_S$ is a strongly local operator. Then $\forall q \in \N$ $\textnormal{Ad}^q_{N^{(\alpha)}} A_S$ is a strongly local operator, supported on $S$ only, and $\forall \alpha$
	\begin{equation}\label{q.comm.con.N}
	\left\| \textnormal{Ad}_{N^{(\alpha)}}^q A_S \right\|_{\mathrm{op}}   =	\left\| \left(\textnormal{Ad}_{N^{(\alpha)}}^q A_S\right)_S\right\|_{\mathrm{op}}\leq 2^q |S|^q \|N^{(\alpha)}\|^q_{0} \|{A_S}\|_{\mathrm{op}} \leq  2^q |\Lambda|^q \|N^{(\alpha)}\|^q_{0} \|{A_S}\|_{\mathrm{op}}\,.
	\end{equation}
	\end{lemma}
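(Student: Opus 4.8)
The plan is to prove the claim by induction on $q$. The base case $q=0$ is trivial since $\mathrm{Ad}^0_{N^{(\alpha)}} A_S = A_S$ is strongly local and supported on $S$, and $2^0 |S|^0 \|N^{(\alpha)}\|_0^0 \|A_S\|_{\mathrm{op}} = \|A_S\|_{\mathrm{op}}$.

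For the inductive step, suppose $\mathrm{Ad}^{q-1}_{N^{(\alpha)}} A_S$ is strongly local and supported on $S$, with the stated bound. Write $N^{(\alpha)} = \sum_{S' \in \mathcal{P}_c(\Lambda)} N^{(\alpha)}_{S'}$. By the strong locality of $A_S$ (and the fact that taking commutators of a strongly local operator supported on $S$ with the number operators preserves this property, which one should check along the way), we have $[\,(\mathrm{Ad}^{q-1}_{N^{(\alpha)}} A_S), N^{(\alpha)}_{S'}\,] = 0$ whenever $S' \nsubseteq S$. Hence the only terms contributing are those with $S' \subseteq S$, so
\begin{equation}
\mathrm{Ad}^q_{N^{(\alpha)}} A_S \;=\; \sum_{S' \subseteq S} \big[\,\mathrm{Ad}^{q-1}_{N^{(\alpha)}} A_S,\ N^{(\alpha)}_{S'}\,\big]\,,
\end{equation}
which is supported on $S$. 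Taking operator norms, $\big\|\mathrm{Ad}^q_{N^{(\alpha)}} A_S\big\|_{\mathrm{op}} \leq 2 \big\|\mathrm{Ad}^{q-1}_{N^{(\alpha)}} A_S\big\|_{\mathrm{op}} \big\|\sum_{S' \subseteq S} N^{(\alpha)}_{S'}\big\|_{\mathrm{op}}$. The key point is to bound $\big\|\sum_{S' \subseteq S} N^{(\alpha)}_{S'}\big\|_{\mathrm{op}}$ by $|S|\, \|N^{(\alpha)}\|_0$; this is exactly the content of estimate \eqref{op.loc.S} in Lemma \ref{lem.op.loc.norm} (applied in the time-independent case, i.e.\ with trivial $\omega$-dependence), since $\sum_{S' \subseteq S} N^{(\alpha)}_{S'}$ is a sum of local parts contained in $S$. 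Combining with the inductive hypothesis $\big\|\mathrm{Ad}^{q-1}_{N^{(\alpha)}} A_S\big\|_{\mathrm{op}} \leq 2^{q-1} |S|^{q-1} \|N^{(\alpha)}\|_0^{q-1}\|A_S\|_{\mathrm{op}}$ gives the first inequality in \eqref{q.comm.con.N}; the second follows from $|S| \leq |\Lambda|$.

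The main obstacle is making precise the assertion that $\mathrm{Ad}^q_{N^{(\alpha)}} A_S$ is itself strongly local, so that the induction can proceed: one must verify that the commutator $[B, N^{(\alpha)}_{S'}]$ of an operator $B$ supported on $S$ (with $B$ strongly local) with a local part $N^{(\alpha)}_{S'}$ for $S' \subseteq S$ again yields a strongly local operator supported on $S$. This is essentially immediate from the definition \eqref{strong.loc} of strong locality together with property (N.iii) (the mutual commutativity of all the local parts $N^{(\alpha)}_{S'}$ among themselves): for any $S'' \nsubseteq S$, both $B$ and $N^{(\alpha)}_{S'}$ commute with $N^{(\alpha')}_{S''}$, hence so does their commutator by the Jacobi identity. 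The remaining estimates are routine bookkeeping.
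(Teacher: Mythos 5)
Your proof is correct and follows essentially the same inductive scheme as the paper's: restrict the sum over $S'$ to $S' \subseteq S$ by strong locality, bound the truncated sum $\sum_{S' \subseteq S} N^{(\alpha)}_{S'}$ by $|S|\,\|N^{(\alpha)}\|_0$ (which you do by invoking estimate \eqref{op.loc.S} of Lemma~\ref{lem.op.loc.norm}, whereas the paper redoes the underlying double-counting directly), and verify strong locality via the Jacobi identity together with (N.iii) and \eqref{strong.loc}. The paper is terser on the strong-locality point (it defers to the argument of Lemma~\ref{cor.lignano.pineta}), but what you spell out is exactly that argument.
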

	\begin{proof}
	We inductively prove that the first inequality in \eqref{q.comm.con.N} holds, with the second one trivially following from $|S| \leq |\Lambda|.$
	If $q=1$, one has
	\begin{equation}\label{stupida.1}
	\textnormal{Ad}^1_{N^{(\alpha)}} A_S = \left[A_S, N^{(\alpha)}\right] = \sum_{S' \subseteq S}  \left[A_S, N^{(\alpha)}_{S'}\right]\,,
	\end{equation}
	thus $\textrm{Ad}^1_{N^{(\alpha)}} A_S$ is localized on $S$. We now observe that, by \eqref{stupida.1}, one also has
	$$
	\begin{aligned}
	\|\textnormal{Ad}^1_{N^{(\alpha)}} A_S\|_{\mathrm{op}} &\leq 	2 \sum_{ S' \subseteq S} \|A_S\|_{\mathrm{op}} \|N^{(\alpha)}_{S'}\|_{\mathrm{op}}\leq 2  \|A_S\|_{\mathrm{op}} \sum_{S' \subseteq S} \|N^{(\alpha)}_{S'}\|_{\mathrm{op}}\\
	&\hspace{-17pt}\leq 2 \|A_S\|_{\mathrm{op}} \sum_{x \in S} \sum_{S' \ni x \atop S' \subseteq S} \|N^{(\alpha)}_{S'}\|_{\mathrm{op}}  \leq  2 \|A_S\|_{\mathrm{op}} |S| \sup_{x \in \Lambda} \sum_{S' \ni x} \|N^{(\alpha)}_{S'}\|_{\mathrm{op}} = 2 \|A_S\|_{\mathrm{op}}  |S| \|N^{(\alpha)}\|_{0}\,,
	\end{aligned}
	$$
	which gives \eqref{q.comm.con.N} for $q=1$. Now let us suppose that $\textnormal{Ad}^{q}_{N^{(\alpha)}} A_S$ is a local operator acting within $S$ and that \eqref{q.comm.con.N} holds for some $q$; then one has
	$$
	\begin{aligned}
	\textnormal{Ad}^{q+1}_{N^{(\alpha)}} A_S = \left[	\textnormal{Ad}^{q}_{N^{(\alpha)}} A_S, N^{(\alpha)}\right] =\sum_{S' \subseteq S} \left[	\textnormal{Ad}^{q}_{N^{(\alpha)}} A_S, N^{(\alpha)}_{S'}\right] \,,
	\end{aligned}
	$$
	which shows that $\textnormal{Ad}^{q+1}_{N^{(\alpha)}} A_S$ is a local operator acting on $S$; furthermore, arguing as above one obtains
	$$
	\begin{aligned}
	\|\textnormal{Ad}^{q+1}_{N^{(\alpha)}} A_S\|_{\mathrm{op}}
	&  \leq  2 \left\|\textnormal{Ad}^{q}_{N^{(\alpha)}} A_S \right\|_{\mathrm{op}} |S| \|N^{(\alpha)}\|_{0}\\
	&\leq 2^{q+1}  |S|^{q+1} \|N^{(\alpha)}\|^{q+1}_{0}  \|A_S\|_{\mathrm{op}}\,,
	\end{aligned}
	$$
	which gives \eqref{q.comm.con.N} for $q+1$. The fact that $\mathrm{Ad}_{N^{(\alpha)}}^q A_S$ is strongly local follows as in the proof of Lemma \ref{cor.lignano.pineta}, using the fact that the operators $N^{(\alpha)}$ are strongly {local}.
\end{proof}

As a second step, we have the following.
	\begin{lemma}\label{cor.converge.male}
	Suppose that $A$ is a quasi-periodic family of strongly {local} operators, with
	$$
	A(\varphi) = \sum_{S \in \mathcal{P}_c(\Lambda)} A_S(\varphi)\,, 
	$$
 then $\forall \theta \in \R^r$
	\begin{equation}\label{decomposizione.1}
	\mathcal{A}(\theta, \varphi):= e^{\ii \theta \cdot N} A(\varphi) e^{-\ii \theta \cdot N} = \sum_{S \in \mathcal{P}_c(\Lambda)}	\left(e^{\ii \theta\cdot N} A(\varphi) e^{-\ii \theta \cdot N}\right)_S\,
	\end{equation}
	is a strongly {local} operator and $\forall S$
	\begin{equation}\label{decomposizione.2}
	\left(e^{\ii \theta \cdot N} A(\varphi) e^{-\ii \theta \cdot N}\right)_S = e^{\ii \theta \cdot N} \left( A(\varphi)\right)_S e^{-\ii \theta \cdot N} \, .
	\end{equation}
	\end{lemma}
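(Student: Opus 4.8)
The plan is to reduce the statement to a single conjugation by one number operator and then iterate, exploiting that $\Lambda$ is finite and that the $N^{(\alpha)}$'s mutually commute. First I would record the elementary structural facts. Since $\Lambda=\Z^d\cap[-L,L]^d$ is finite, $\mathcal{P}_c(\Lambda)$ is a finite set, so $A(\varphi)=\sum_{S\in\mathcal{P}_c(\Lambda)}A_S(\varphi)$ is a \emph{finite} sum of bounded local operators; in particular conjugation distributes over it and no convergence issue arises at this level. Moreover, by assumption (N.iii) the operators $N^{(1)},\dots,N^{(r)}$ commute, hence $e^{\ii\theta\cdot N}=e^{\ii\theta_1 N^{(1)}}\cdots e^{\ii\theta_r N^{(r)}}$ (in any order). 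Consequently it suffices to prove the following: if $B=B_S$ is strongly local and acts within some $S\in\mathcal{P}_c(\Lambda)$, then for each $\alpha$ the operator $e^{\ii\theta_\alpha N^{(\alpha)}}B_S\,e^{-\ii\theta_\alpha N^{(\alpha)}}$ is again strongly local and acts within $S$; applying this $r$ times, once per factor $e^{\ii\theta_\alpha N^{(\alpha)}}$, yields that $e^{\ii\theta\cdot N}A_S(\varphi)e^{-\ii\theta\cdot N}$ is strongly local and supported within $S$.

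For the single conjugation step I would use the operator-norm convergent expansion
\[
  e^{\ii\theta_\alpha N^{(\alpha)}}B_S\,e^{-\ii\theta_\alpha N^{(\alpha)}} \;=\; \sum_{q=0}^\infty \frac{(-\ii\theta_\alpha)^q}{q!}\,\mathrm{Ad}^q_{N^{(\alpha)}}B_S\,,
\]
whose convergence (for fixed $\Lambda$, uniformly for $\theta_\alpha$ in compact sets) is guaranteed by the estimate $\|\mathrm{Ad}^q_{N^{(\alpha)}}B_S\|_{\mathrm{op}}\leq (2|S|\,\|N^{(\alpha)}\|_0)^q\|B_S\|_{\mathrm{op}}$ of Lemma \ref{lem.ad.q.lambda}. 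By the same lemma every partial sum is a strongly local operator acting within $S$. Since being a local operator acting within $S$ is a condition that is closed in operator norm, the limit is a local operator acting within $S$; and since the map $X\mapsto[X,N^{(\beta)}_{S'}]$ is operator-norm continuous (the $N^{(\beta)}_{S'}$ being bounded), the identities $[\,\cdot\,,N^{(\beta)}_{S'}]=0$, valid on each partial sum whenever $S'\nsubseteq S$, pass to the limit. Hence the limit is strongly local. Iterating over $\alpha=1,\dots,r$ — at each step the input is, by the previous step, strongly local and supported within $S$, so Lemma \ref{lem.ad.q.lambda} applies again — gives the claimed property of $e^{\ii\theta\cdot N}A_S(\varphi)e^{-\ii\theta\cdot N}$.

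It remains to assemble the pieces: define $\big(e^{\ii\theta\cdot N}A(\varphi)e^{-\ii\theta\cdot N}\big)_S:=e^{\ii\theta\cdot N}A_S(\varphi)e^{-\ii\theta\cdot N}$, which by the above is a strongly local operator acting within the connected set $S$; this is precisely \eqref{decomposizione.2}. Summing over the finitely many $S\in\mathcal{P}_c(\Lambda)$ and using $A(\varphi)=\sum_S A_S(\varphi)$ yields \eqref{decomposizione.1} and shows that $\mathcal{A}(\theta,\varphi)$ is strongly local. The argument is clearly uniform in $\varphi\in\mathbb{T}^m$, so no regularity is lost.

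\textbf{Main obstacle.} The only real (and mild) technical point is justifying the interchange of the operator-norm limit with the commutator used to propagate strong locality to the limit, and noting that the volume-dependent constant $2|S|\,\|N^{(\alpha)}\|_0$ in Lemma \ref{lem.ad.q.lambda}, although it deteriorates as $|\Lambda|$ grows, is harmless here: the exponential series converges for every value of its argument, and the statement being proved is purely qualitative (the bad dependence on $|\Lambda|$ is precisely the ``converges badly in the volume'' phenomenon already flagged before Lemma \ref{lem.ad.q.lambda}, and is only a concern in the later quantitative estimates, not here). A secondary point to keep in mind is that conjugation by $e^{\ii\theta\cdot N}$ does not alter the support set $S$ at all — it only changes the operator acting within it — so connectedness of the index sets is automatically preserved.
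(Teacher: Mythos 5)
Your proof is correct and follows essentially the same route as the paper's: conjugate by one $N^{(\alpha)}$ at a time using commutativity, expand each single conjugation as the $\mathrm{Ad}$ series, invoke Lemma~\ref{lem.ad.q.lambda} both for convergence and for the strong-locality of each term, and pass these properties to the operator-norm limit before summing over the finitely many $S$. The only stylistic differences are that you treat one local block $B_S$ at a time (so the finiteness of $\mathcal{P}_c(\Lambda)$ sidesteps the sum-over-$S$/sum-over-$q$ exchange that the paper justifies in \eqref{per.carita}), and that you spell out the continuity/closedness arguments needed to propagate locality and strong locality to the series' limit, which the paper leaves implicit.
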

	\begin{proof}
	Since the operators $\{N^{(\alpha)}\}_{\alpha = 1}^r$ mutually commute, we observe that
	$$
	e^{\ii \theta \cdot N} A(\varphi) e^{-\ii \theta \cdot N} = e^{\ii \theta_r N^{(r)}} \cdots e^{\ii \theta_1 N^{(1)}} A(\varphi) e^{-\ii \theta_1 N^{(1)}} \cdots e^{-\ii \theta_r N^{(r)}}\,.
	$$
	Define for any $\alpha = 1, \dots, r$ $\mathcal{A}_{\alpha}(\theta, \varphi):= 	e^{\ii \theta_\alpha N^{(\alpha)}} \mathcal{A}_{\alpha -1}(\varphi) e^{-\ii \theta_\alpha N^{(\alpha)}}$, $\mathcal{A}_{0}(\theta, \varphi) = A(\varphi)$, so that $\mathcal{A}_r(\theta, \varphi) = \mathcal{A}(\varphi)$; 
	we now use Lemma \ref{lem.ad.q.lambda} to show inductively on $\alpha$ that
	\begin{equation}\label{uno.per.volta}
	\begin{split}
	\mathcal{A}_{\alpha}(\theta, \varphi) &= \sum_{S \in \mathcal{P}_c(\Lambda)} \left( \mathcal{A}_{\alpha}(\theta, \varphi)\right)_S \,,\\
	 \left( \mathcal{A}_{\alpha}(\theta, \varphi)\right)_S &= e^{\ii \theta_\alpha N^{(\alpha)}} \left( \mathcal{A}_{\alpha-1}(\theta,\varphi)\right)_S e^{-\ii \theta_\alpha N^{(\alpha)}} \in \mathscr{B}(\cH_\Lambda) \quad \forall S \in \mathcal{P}_c(\Lambda)\,,
	\end{split}
	\end{equation}
	and that $\forall S \in \mathcal{P}_c(\Lambda)$, $ \left( \mathcal{A}_{\alpha}(\theta, \varphi)\right)_S $ is a strongly local operator acting within $S$.
	Once \eqref{uno.per.volta} has been proven, one recursively deduces that
	$$
 	 e^{\ii \theta \cdot N} A(\theta, \varphi) e^{-\ii \theta \cdot N}= \mathcal{A}_{r}(\theta, \varphi) = \sum_{S \in \mathcal{P}_c(\Lambda)} \left(\mathcal{A}_{r}(\theta, \varphi)\right)_S = \sum_{S \in \mathcal{P}_c(\Lambda)} e^{\ii \theta \cdot N}  \left(A(\varphi)\right)_S e^{-\ii \theta \cdot N}\,,
	$$
	with the operators $e^{\ii \theta \cdot N}  \left(A(\varphi)\right)_S e^{-\ii \theta \cdot N} \in \mathscr{B}(\cH_\Lambda)$ being strongly local operators acting within $S$,
	which gives the thesis. 
	
	We now prove \eqref{uno.per.volta}.	First of all, we observe that
	\begin{equation}\label{per.carita}
	\begin{aligned}
	\mathcal{A}_{\alpha}(\theta, \varphi) &= e^{\ii \theta_\alpha N^{(\alpha)}}  \mathcal{A}_{\alpha-1}(\theta, \varphi) e^{-\ii \theta_\alpha N^{(\alpha)}}\\
	&= \sum_{q \geq 0} \frac{(-\ii \theta_\alpha)^q}{q!} \textnormal{Ad}^q_{N^{(\alpha)}}  \mathcal{A}_{\alpha-1}(\theta, \varphi)
	= \sum_{q \geq 0} \frac{(-\ii\theta_\alpha)^q}{q!} \sum_{S \in \mathcal{P}_c(\Lambda)} \textnormal{Ad}^q_{N^{(\alpha)}}  \left(\mathcal{A}_{\alpha-1}(\theta, \varphi)\right)_S\,.
	\end{aligned}
	\end{equation}
	Since by inductive hypothesis  $\left(\mathcal{A}_{\alpha-1}(\theta, \varphi)\right)_S \in \mathscr{B}(\cH_\Lambda)$ is a strongly {local} operator acting within $S$, by Lemma \ref{lem.ad.q.lambda} with $A$ replaced by $\left(\mathcal{A}_{\alpha-1}(\theta, \varphi)\right)_S$ one has, $\forall S$
	\begin{equation}\label{bombie.the.zombie}
	e^{\ii \theta_\alpha N^{(\alpha)}} \left(\mathcal{A}_{\alpha-1}(\theta, \varphi)\right)_S e^{-\ii \theta_\alpha N^{(\alpha)}} =\sum_{q \geq 0} \frac{(-\ii\theta_\alpha)^q}{q!} \textnormal{Ad}^q_{N^{(\alpha)}}  \left(\mathcal{A}_{\alpha-1}(\theta, \varphi)\right)_S \in \mathscr{B}(\cH_\Lambda)\,,
	\end{equation}
	and
	$e^{\ii \theta_\alpha N^{(\alpha)}} \left(\mathcal{A}_{\alpha-1}(\theta, \varphi)\right)_S e^{-\ii \theta_\alpha N^{(\alpha)}}$ is a strongly local operator acting within $S$, since by Lemma~\ref{lem.ad.q.lambda} the operators $\textnormal{Ad}^q_{N^{(\alpha)}}  \left(\mathcal{A}_{\alpha-1}(\theta, \varphi)\right)_S$ are strongly local and act within $S$, and the series in \eqref{bombie.the.zombie} is convergent, again due to Lemma \ref{lem.ad.q.lambda}. We are allowed to exchange the sum over $S$ and over $q$ in \eqref{per.carita} and then \eqref{uno.per.volta} follows immediately.
	\end{proof}

As a consequence, the following is the main technical result we are using in the analysis of the homological equation.	
	
	\begin{lemma}\label{lem:op.tau}
	Given $\kappa, \sigma, \rho >0$ and $A \in \mathcal{O}^{\strong}_{\kappa + \sigma, \rho}$, let
	\begin{equation}\label{A.tau}
	\mathcal{A}(\theta, \varphi) := e^{\ii \theta \cdot N} A(\varphi) e^{-\ii \theta \cdot N}\,, \quad \theta \in \mathbb{R}^r\,.
	\end{equation}
	Then
	\begin{itemize}
	\item[(i)] $\forall \theta \in \R^r$, $\mathcal{A}(\theta, \cdot) \in \mathcal{O}^\strong_{\kappa + \sigma, \rho}$, with $\|\mathcal{A}(\theta, \cdot)\|_{\kappa + \sigma, \rho} = \|A\|_{\kappa + \sigma, \rho}$.
	\item[(ii)]$\forall S \in \mathcal{P}_c(\Lambda)$, $\mathcal{A}_S$ is periodic w.r.t. $\theta,$ namely $\mathcal{A}_S(\theta + 2\pi e_j, \varphi) = \mathcal{A}_S(\theta, \varphi)$ $\forall j =1, \dots, r$. In particular, $\mathcal{A}$ admits the representation
	\begin{equation}\label{A.tau.fou}
	\begin{split}
	\mathcal{A}(\theta, \varphi) &= \sum_{S \in \mathcal{P}_c(\Lambda)} \sum_{k \in \Z^r} \sum_{l \in \Z^m} (\widehat{\mathcal{A}_S})_{l,k} \, e^{\ii k \cdot \theta} e^{\ii l \cdot \varphi}\,, \\
	(\widehat{\mathcal{A}_S})_{l,k} &:= \frac{1}{(2\pi)^{r+m}} \int_{\mathbb{T}^m}\int_{\mathbb{T}^r} \big(\mathcal{A}(\theta, \varphi)\big)_{S} e^{-\ii \varphi \cdot l} e^{-\ii \theta \cdot k} \, \ud \theta \ud \varphi \quad \forall S \in \mathcal{P}_c(\Lambda),l \in \mathbb{Z}^r,k \in \mathbb{Z}^m\,.
	\end{split}
	\end{equation}
	\item[(iii)] Defining
	\begin{equation}\label{def:norma.tau}
	\| \mathcal{A}\|_{\kappa, \rho, \zeta} := \sup_{x \in \Lambda} \sum_{S \in \mathcal{P}_c(\Lambda)} \sum_{k \in \Z^r} \sum_{l \in \Z^m} \big\| (\widehat{\mathcal{A}_S})_{l,k}\big \|_{\mathrm{op}} e^{\zeta |k|} e^{\rho |l|} e^{\kappa |S|}\,,
	\end{equation}
	$\forall \zeta>0$ such that
	\begin{equation}\label{zitina}
	\frac{4 \zeta \max_{\alpha} \|N^{(\alpha)}\|_{0}}{\sigma} \leq \frac 1 2\,,
	\end{equation}
	one has
	\begin{equation}\label{eq:norma.A.norma.A.tau}
	\| \mathcal{A}\|_{\kappa, \rho, \zeta} \leq  2^{2r}\left({ \frac{1+\zeta}{\zeta}}\right)^r\|A\|_{\kappa + r\sigma, \rho}\,,
	\end{equation}
	and
	\begin{equation}\label{mandarino}
	\|A\|_{\kappa, \rho} \leq  \| \mathcal{A}\|_{\kappa, \rho, 0} \leq  \| \mathcal{A}\|_{\kappa, \rho, \zeta}\,. 
	\end{equation}
	\end{itemize} 
	\end{lemma}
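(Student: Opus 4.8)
The plan is to establish the three items in order; (i) and (ii) are essentially formal consequences of results already proved, while (iii) contains the quantitative content.

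\textbf{Items (i) and (ii).} By Lemma~\ref{cor.converge.male}, $\mathcal{A}(\theta,\cdot)$ is a strongly local family with local parts $\mathcal{A}_S(\theta,\varphi)=e^{\ii\theta\cdot N}A_S(\varphi)e^{-\ii\theta\cdot N}$, so that $(\widehat{\mathcal{A}_S})_l(\theta)=e^{\ii\theta\cdot N}(\widehat{A_S})_l\,e^{-\ii\theta\cdot N}$. Since each $N^{(\alpha)}$ is self-adjoint, $e^{\ii\theta\cdot N}$ is unitary and conjugation by it preserves the operator norm; hence $\|(\widehat{\mathcal{A}_S})_l(\theta)\|_{\mathrm{op}}=\|(\widehat{A_S})_l\|_{\mathrm{op}}$ for all $S,l,\theta$, and summing against $e^{(\kappa+\sigma)|S|}e^{\rho|l|}$ gives $\|\mathcal{A}(\theta,\cdot)\|_{\kappa+\sigma,\rho}=\|A\|_{\kappa+\sigma,\rho}$, which is (i). For (ii), assumption \textnormal{(N.iv)} gives $e^{2\pi\ii N^{(\alpha)}}=\mathbbm{1}$, so $\mathcal{A}_S(\theta+2\pi e_j,\varphi)=e^{\ii\theta\cdot N}e^{2\pi\ii N^{(j)}}A_S(\varphi)e^{-2\pi\ii N^{(j)}}e^{-\ii\theta\cdot N}=\mathcal{A}_S(\theta,\varphi)$, i.e.\ $\mathcal{A}_S$ is $2\pi$-periodic in each $\theta_j$; by Lemma~\ref{lem.ad.q.lambda} the series $\sum_q\frac{(-\ii)^q}{q!}\theta_\alpha^q\,\mathrm{Ad}^q_{N^{(\alpha)}}(\cdot)$ converges for strongly local operators supported on $S$, so $\theta\mapsto\mathcal{A}_S(\theta,\varphi)$ is (jointly) analytic, which together with periodicity yields the absolutely convergent expansion \eqref{A.tau.fou}; the coefficients $(\widehat{\mathcal{A}_S})_{l,k}$ inherit strong locality and support in $S$ from $\mathcal{A}_S$.

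\textbf{Item (iii), easy inequalities and set-up.} The chain \eqref{mandarino} is immediate: $e^{\zeta|k|}\ge1$ gives $\|\mathcal{A}\|_{\kappa,\rho,0}\le\|\mathcal{A}\|_{\kappa,\rho,\zeta}$, while evaluating \eqref{A.tau.fou} at $\theta=0$ gives $(\widehat{A_S})_l=\sum_{k\in\Z^r}(\widehat{\mathcal{A}_S})_{l,k}$, so by the triangle inequality and summation against the weights $\|A\|_{\kappa,\rho}\le\|\mathcal{A}\|_{\kappa,\rho,0}$. For \eqref{eq:norma.A.norma.A.tau} I conjugate one number operator at a time: set $\mathcal{A}^{(0)}:=A$ and $\mathcal{A}^{(j)}(\theta_1,\dots,\theta_j,\varphi):=e^{\ii\theta_j N^{(j)}}\mathcal{A}^{(j-1)}(\theta_1,\dots,\theta_{j-1},\varphi)e^{-\ii\theta_j N^{(j)}}$, so $\mathcal{A}^{(r)}=\mathcal{A}$, and each $\mathcal{A}^{(j)}$ is strongly local with local parts supported on $S$. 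Since taking the $\varphi$- and $\theta_1,\dots,\theta_{j-1}$-Fourier coefficients commutes with conjugation by $e^{\ii\theta_j N^{(j)}}$, the coefficient $(\widehat{\mathcal{A}^{(j)}_S})_{l,k_1,\dots,k_j}$ is the $k_j$-th $\theta_j$-Fourier coefficient of $e^{\ii\theta_j N^{(j)}}B\,e^{-\ii\theta_j N^{(j)}}$ with $B:=(\widehat{\mathcal{A}^{(j-1)}_S})_{l,k_1,\dots,k_{j-1}}$, itself strongly local supported on $S$.

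\textbf{The recursion.} Because $B$ commutes with $N^{(j)}_{S'}$ for $S'\nsubseteq S$, one may replace $N^{(j)}$ by its truncation $N^{(j)}|_S:=\sum_{S'\subseteq S}N^{(j)}_{S'}$, which is bounded with $\|N^{(j)}|_S\|_{\mathrm{op}}\le|S|\,\|N^{(j)}\|_0$ (as in Lemma~\ref{lem.op.loc.norm}). Then $\theta_j\mapsto e^{\ii\theta_j N^{(j)}|_S}B\,e^{-\ii\theta_j N^{(j)}|_S}$ extends to an entire, $2\pi$-periodic operator-valued function with $\|\,\cdot\,\|_{\mathrm{op}}\le e^{2|\mathrm{Im}\,\theta_j|\,|S|\,\|N^{(j)}\|_0}\|B\|_{\mathrm{op}}$, so shifting the contour of its $\theta_j$-Fourier integral to $\mathrm{Im}\,\theta_j=-2\zeta\,\mathrm{sgn}(k_j)$ yields $\|(\widehat{\mathcal{A}^{(j)}_S})_{l,k_1,\dots,k_j}\|_{\mathrm{op}}\le e^{-2\zeta|k_j|}e^{4\zeta|S|\,\|N^{(j)}\|_0}\|B\|_{\mathrm{op}}$. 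Under \eqref{zitina} we have $4\zeta\|N^{(j)}\|_0\le\sigma$, so $e^{4\zeta|S|\,\|N^{(j)}\|_0}\le e^{\sigma|S|}$; multiplying by $e^{\zeta|k_j|}$ and using $\sum_{k_j\in\Z}e^{-\zeta|k_j|}=\coth(\zeta/2)\le4\frac{1+\zeta}{\zeta}$ gives, after summing against the remaining weights, the recursion $\|\mathcal{A}^{(j)}\|_{(j)}\le4\frac{1+\zeta}{\zeta}\,\|\mathcal{A}^{(j-1)}\|_{(j-1)}$, where $\|\cdot\|_{(j)}$ is the norm built from the $\theta$-weight $e^{\zeta|k|}$ on the first $j$ angles and the spatial weight $e^{(\kappa+(r-j)\sigma)|S|}$ (note $e^{\sigma|S|}e^{(\kappa+(r-j)\sigma)|S|}=e^{(\kappa+(r-j+1)\sigma)|S|}$). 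Since $\|\mathcal{A}^{(0)}\|_{(0)}=\|A\|_{\kappa+r\sigma,\rho}$ and $\|\mathcal{A}^{(r)}\|_{(r)}=\|\mathcal{A}\|_{\kappa,\rho,\zeta}$, iterating for $j=r,\dots,1$ gives $\|\mathcal{A}\|_{\kappa,\rho,\zeta}\le\big(4\tfrac{1+\zeta}{\zeta}\big)^r\|A\|_{\kappa+r\sigma,\rho}=2^{2r}\big(\tfrac{1+\zeta}{\zeta}\big)^r\|A\|_{\kappa+r\sigma,\rho}$, which is \eqref{eq:norma.A.norma.A.tau} (when $\|A\|_{\kappa+r\sigma,\rho}=\infty$ the bound is vacuous).

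\textbf{Main obstacle.} The delicate point is the step just used: strong locality lets one trade the extensive operator $N^{(\alpha)}$ — whose operator norm grows with $|\Lambda|$ — for the $|\Lambda|$-independent bounded truncation $N^{(\alpha)}|_S$, and it is precisely this reduction that makes all the estimates uniform in the volume; relatedly, one must justify the contour shift in the $\theta_j$-Fourier integral, which relies on $\theta_j\mapsto e^{\ii\theta_j N^{(j)}|_S}B\,e^{-\ii\theta_j N^{(j)}|_S}$ being entire with the stated exponential growth in $|\mathrm{Im}\,\theta_j|$. The remaining ingredients — unitarity, $2\pi$-periodicity from integrality of the spectrum, and the geometric summation — are routine.
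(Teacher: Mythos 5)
Items (i)--(ii) and the easy inequalities \eqref{mandarino} in (iii) are handled as in the paper. Your derivation of the quantitative estimate \eqref{eq:norma.A.norma.A.tau} is correct but takes a genuinely different route. The paper first shifts the contour in all $r$ angular variables simultaneously, producing the decay $e^{-2\zeta|k|_1}$ together with the conjugated factor $e^{2\zeta\upsilon\cdot N}(\widehat{A_S})_l e^{-2\zeta\upsilon\cdot N}$, and then controls this factor by a one-at-a-time recursion $\mathscr{A}_\alpha$ estimated via the Lie series, Lemma~\ref{lem.ad.q.lambda}, the Cauchy bound $|S|^q\le(\tfrac{q}{e\sigma})^q e^{\sigma|S|}$, Stirling, and a geometric series summed using \eqref{zitina}. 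You instead interleave the contour shift with the peeling-off recursion: at step $j$ you shift $\theta_j$ alone, and the crucial observation --- that strong locality lets you replace the extensive $N^{(j)}$ by the $|\Lambda|$-independent truncation $N^{(j)}|_S=\sum_{S'\subseteq S}N^{(j)}_{S'}$, because $B$ supported on $S$ commutes with $e^{\ii\theta_j \sum_{S'\nsubseteq S}N^{(j)}_{S'}}$ --- gives the direct operator bound $\|e^{\ii\theta_j N^{(j)}|_S}Be^{-\ii\theta_j N^{(j)}|_S}\|_{\mathrm{op}}\le e^{2|\mathrm{Im}\,\theta_j||S|\|N^{(j)}\|_0}\|B\|_{\mathrm{op}}$, with no Lie series, Cauchy estimate, Stirling, or geometric summation. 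This is leaner and isolates the locality mechanism in a single identity; the paper's route instead reuses its Lemma~\ref{cor.lignano.pineta}-style machinery. Both arrive at the same constant (and your $\coth(\zeta/2)\le 4\tfrac{1+\zeta}{\zeta}$ is slightly wasteful: $\coth(\zeta/2)\le 2\tfrac{1+\zeta}{\zeta}$ would give $2^r$ in place of $2^{2r}$).

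One detail deserves to be spelled out. You invoke $2\pi$-periodicity of the complex extension $\theta_j\mapsto e^{\ii\theta_j N^{(j)}|_S}Be^{-\ii\theta_j N^{(j)}|_S}$ to close the contour, but this does not follow from the spectrum of $N^{(j)}|_S$, which need not be integer (assumption (N.iv) concerns $N^{(j)}$, not its local truncations). The correct justification is that the function agrees on $\R$ with $e^{\ii\theta_j N^{(j)}}Be^{-\ii\theta_j N^{(j)}}$, which is $2\pi$-periodic because $N^{(j)}$ has integer spectrum, and periodicity then extends from $\R$ to the strip by uniqueness of analytic continuation. The claim is true, but the reason runs through the full number operator, not the truncated one.
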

	In the following, given $\kappa, \rho, \zeta \geq 0$, we will denote
	\begin{equation}\label{ostrong.tau}
	\mathcal{O}^\strong_{\kappa, \rho, \zeta} :=\big\{ \mathcal{A} : \mathbb{T}^r\times \mathbb{T}^m \rightarrow \mathscr{B}(\cH_\Lambda)\ |\ \forall (\theta, \varphi)\quad \mathcal{A}(\theta, \varphi) \in \mathcal{O}^\strong_\kappa \ \textrm{ and }\|\mathcal{A}\|_{\kappa, \rho, \zeta} < \infty\big\}\,,
	\end{equation}
	where $\| \cdot \|_{\kappa, \rho, \zeta}$ is defined as in \eqref{def:norma.tau}.
	
	\begin{proof}[Proof of Lemma \ref{lem:op.tau}]
	By Lemma \ref{cor.converge.male}, one has $	\mathcal{A}(\theta, \varphi) = \sum_{S \in \mathcal{P}_c(\Lambda) } \left(\mathcal{A}(\theta, \varphi)\right)_S$,
	\begin{equation}
	\begin{split}
	\left(\mathcal{A}(\theta, \varphi)\right)_S &=  e^{\ii \theta \cdot N} \left( A(\varphi)\right)_S  e^{-\ii \theta \cdot N} \quad \forall S \in \mathcal{P}_c(\Lambda)\,,
	\end{split}
	\end{equation}
	with $\left(\mathcal{A}(\theta, \varphi)\right)_S$ strongly local operators acting within $S$. Since the operators $N^{(\alpha)}$ are independent of $\varphi$, one has that the operators $\left(\mathcal{A}(\theta, \cdot )\right)_S$ are still periodic w.r.t $\varphi$, and
	\begin{equation}\label{sanza.tempo}
	(\widehat{(\mathcal{A}(\theta, \cdot ))_S})_{l} = e^{\ii \theta \cdot N} (\widehat{A_S})_{l}  e^{-\ii \theta \cdot N} \quad \forall l \in \Z^m\,.
	\end{equation}
	Thus, one has $\forall \theta \in \R^r$
	\begin{align*}
	\left\| \mathcal{A}(\theta, \cdot )\right\|_{\kappa, \rho} &= \sup_{x \in \Lambda} \sum_{S \in \mathcal{P}_c(\Lambda) \atop \text{s.t.} x \in S} \sum_{l \in \Z^m} \big\|\big(\widehat{\big(\mathcal{A}(\theta, \cdot )\big)_S}\big)_{l}\big\|_{\mathrm{op}} e^{\rho|l|} e^{\kappa|S|} \\
	&\hspace{-5pt}\stackrel{\eqref{sanza.tempo}}{=} \sup_{x \in \Lambda} \sum_{S \in \mathcal{P}_c(\Lambda) \atop \text{s.t.} x \in S} \sum_{l \in \Z^m} \| (\widehat{A_S})_{l}   \|_{\mathrm{op}} e^{\rho|l|} e^{\kappa|S|}= \|A\|_{\kappa, \rho}\,,
	\end{align*}
	and Item (i) is proven.
	
	Item (ii) follows from spectral theorem and from the fact that the spectrum of the operators $N^{(\alpha)}$ is integer.
	
	It remains to prove Item (iii). First note that, since $A(\,\cdot\,) = \mathcal{A}(0,\,\cdot\,)$ and $\sup_{\theta \in \T^r} \|\mathcal{A}(\theta, \cdot) \|_{\kappa, \rho} \leq \|\mathcal{A}\|_{\kappa, \rho, 0}$, one has $\|A\|_{\kappa, \rho} = \|\mathcal{A}(0, \cdot)\|_{\kappa, \rho} \leq \|\mathcal{A}\|_{\kappa, \rho, 0}$. Thus, using the monotonicity of $\| \cdot\|_{\kappa, \rho, \zeta}$ in $\zeta$, \eqref{mandarino} is proven. We now prove \eqref{eq:norma.A.norma.A.tau}.
	
	We observe that, by \eqref{A.tau.fou}, $\forall k \in \mathbb{Z}^r$ one has
	$$
	(\widehat{\mathcal{A}_S})_{l,k} = \frac{1}{(2\pi)^r}\int_{\mathbb{T}^r} e^{\ii \theta \cdot N} (\widehat{A_S})_{l}\,  e^{-\ii \theta \cdot N}  e^{-\ii \theta \cdot k} \, \ud\theta\,.
	$$
We first claim that, $\forall k \in \mathbb{Z}^r$, $\forall l \in \mathbb{Z}^m$ and $\forall S \in \mathcal{P}_c(\Lambda)$ and $\forall \zeta > 0$ one has
\begin{eqnarray}
	\Vert (\mathcal{A}_S)_{l,k} \Vert_{\mathrm{op}} &\leq& e^{-2 \zeta |k|_1} \sup_{\upsilon \in \Z^r\,, |\upsilon|_\infty {\leq 1}} \left\{\left\| e^{2\zeta \upsilon \cdot N} (\widehat{{A}_S})_{l}  e^{-2\zeta \upsilon \cdot N}\right\|_{\mathrm{op}} \right\} \label{decay}\\
	&\leq & e^{-2 \zeta |k|_1} 2^r \Vert (\widehat{A_S})_l \Vert_{\mathrm{op}} e^{r \sigma |S|} \, . \label{u.bound}
\end{eqnarray}
Then, one concludes as follows:
\begin{equation*}
	\begin{aligned}
	\left\| \mathcal{A}\right\|_{\kappa, \rho, \zeta} &= \sup_{x \in \Lambda} \sum_{\substack{S \in \mathcal{P}_c(\Lambda) \\ x \in S}} \sum_{k \in \Z^r} \sum_{l \in \Z^m} \| (\widehat{\mathcal{A}_S})_{l,k} \|_{\mathrm{op}} e^{\zeta |k|} e^{\rho |l|} e^{\kappa |S|}\\
	&\leq \sup_{x \in \Lambda} \sum_{\substack{S \in \mathcal{P}_c(\Lambda) \\ x \in S}} \sum_{k \in \Z^r} \sum_{l \in \Z^m}  2^r \|(\widehat{{A}_S})_{l}\|_{\mathrm{op}} e^{-2\zeta |k|_1 + \zeta |k|} e^{\rho |l|} e^{(r \sigma + \kappa) |S|}\\
	&\leq \sup_{x \in \Lambda} \sum_{\substack{S \in \mathcal{P}_c(\Lambda) \\ x \in S}} \sum_{k \in \Z^r} \sum_{l \in \Z^m}  2^r \|(\widehat{{A}_S})_{l}\|_{\mathrm{op}} e^{-\zeta|k|_1} e^{\rho |l|} e^{(r \sigma + \kappa) |S|}\\
	&= \frac{2^{2 r}}{(1-e^{-\zeta})^r}\|A\|_{\kappa + r \sigma, \rho} \leq 2^{2r} \left(\frac{\zeta+1}{\zeta}\right)^r \Vert A \Vert_{\kappa+r\sigma,\rho}\,.
	\end{aligned}
\end{equation*}
It remains to prove \eqref{decay} and \eqref{u.bound}. 

To prove \eqref{decay}, one has to estimate in operator norm
\[
	 (\widehat{\mathcal{A}_S})_{l,k} = \frac{1}{(2\pi)^r} \int_0^{2\pi} \dots \int_0^{2\pi} e^{\ii \theta \cdot N} (\widehat{A_S})_{l} e^{-\ii \theta \cdot N} e^{-\ii \theta \cdot k} \, \ud \theta_1 \dots \ud \theta_r \, .
\]
This is done using the fact that $(\widehat{\mathcal{A}_S})_{l,k}$ is analytic on a strip of width $2 \zeta$ around the real segment $[0,2\pi]$ (with periodic boundary conditions) and using Cauchy Theorem on integration of analytic functions. For simplicity let us consider the case $r=1$ and suppose $k_1>0$, one has
\begin{align*}
	 (\widehat{\mathcal{A}_S})_{l,k} &=  \frac{1}{2\pi} \int_{0}^{2\pi} e^{\ii \theta_1  N_1} (\widehat{A_S})_{l} e^{-\ii \theta_1  N_1} e^{-\ii \theta_1 k_1} \,\ud \theta_1 =
	  \frac{1}{2\pi} \int_{-2\ii \zeta}^{2\pi -2\ii \zeta}  e^{\ii \theta_1  N_1} (\widehat{A_S})_{l} e^{-\ii \theta_1N_1} e^{-\ii \theta_1 k_1}  \,\ud \theta_1\\
	 &=  \frac{1}{2\pi} \int_{0}^{2\pi} e^{\ii (\theta_1 - 2\ii \zeta) N_1} (\widehat{A_S})_{l} e^{-\ii (\theta_1 - 2\ii \zeta ) N_1} e^{-\ii (\theta_1 - 2\ii \zeta) k_1} \,\ud \theta_1 \, .
\end{align*}
Taking $\| \cdot \|_{\mathrm{op}}$ norms on both sides, one gets
	\begin{align*}
	\left\|  (\widehat{\mathcal{A}_S})_{l,k} \right\|_{\mathrm{op}} &\leq 
	 \frac{1}{2\pi} e^{- 2\zeta k_1} \int_{\mathbb{T}} \left\| e^{\ii \theta_1 N_1} e^{ 2 \zeta N_1} (\widehat{A_S})_{l}   e^{-2  \zeta  N_1} e^{-\ii \theta_1 N_1}\right\|_{\mathrm{op}} \,\ud \theta_1\\
	 &= \frac{1}{2\pi} e^{- 2\zeta k_1} \int_{\mathbb{T}^r} \left\|e^{ 2 \zeta N_1} (\widehat{A_S})_{l}   e^{ -2 \zeta  N_1} \right\|_{\mathrm{op}} \,\ud \theta_1\\
	 &\leq  e^{-2\zeta k_1} \left\|e^{ 2\zeta N_1} (\widehat{A_S})_{l}   e^{ -2 \zeta N_1} \right\|_{\mathrm{op}}\,.
	\end{align*}
Repeating the argument in the general case yields
\[
	\left\|  (\widehat{\mathcal{A}_S})_{l,k} \right\|_{\mathrm{op}} 
\leq  e^{-2\zeta |k|_1} \left\|e^{ 2\textrm{sign}(k) \zeta \cdot N} (\widehat{A_S})_{l}   e^{- 2\textrm{sign}(k) \zeta \cdot N} \right\|_{\mathrm{op}}\,,
\]
where, for $a \in \R^r$, $\R^r \ni \textrm{sign}(a) := (\textrm{sign}(a_1), \dots, \textrm{sign}(a_r))$. Noting that $\upsilon:=\mathrm{sign}(a)$ has $|\upsilon|_\infty\leq1$, this proves \eqref{decay}.

		To prove \eqref{u.bound}, we argue as in the proof of Lemma \ref{lem.ad.q.lambda}. For any $\upsilon, S$ and $l$ and any $\alpha = 0, \dots, r$ we define $\mathscr{A}_{\alpha} := e^{2\zeta \upsilon_\alpha N^{(\alpha)}} \mathscr{A}_{\alpha-1} e^{-2\zeta \upsilon_\alpha N^{(\alpha)}}$, $\mathscr{A}_{0} := (\widehat{{A}_S})_{l}$ (note that, as a consequence of Lemma~\ref{lem.ad.q.lambda}, the $\mathscr{A}_{\alpha}$'s are strongly local operators acting within a certain $S \in \mathcal{P}_c(\Lambda)$ and they are not the operators $\mathcal{A}$ defined in the proof of Lemma \ref{cor.converge.male}) and we prove that
	\begin{equation}\label{u.bound.recursive}
	\left\| \mathscr{A}_{\alpha}\right\|_{\mathrm{op}} \leq 2 e^{\sigma |S|} 	\left\| \mathscr{A}_{\alpha-1}\right\|_{\mathrm{op}} \,,
	\end{equation}
	and that $\mathscr{A}_\alpha$ is a local operator acting within $S$. Then, \eqref{u.bound} will follow observing that $ e^{2\zeta \upsilon \cdot N} (\widehat{{A}_S})_{l}  e^{-2\zeta \upsilon \cdot N} = \mathscr{A}_{r}$, applying recursively \eqref{u.bound.recursive} and taking the supremum over $\upsilon$.\\
	To prove that \eqref{u.bound.recursive} holds, for any $\alpha = 1, \dots, r$ we observe that
	\begin{equation}\label{giornata}
	\begin{aligned}
	\left\| \mathscr{A}_{\alpha} \right\|_{\mathrm{op}} &= \left\|  e^{2\zeta \upsilon_\alpha N^{(\alpha)}} \mathscr{A}_{\alpha-1} e^{-2\zeta \upsilon_\alpha N^{(\alpha)}} \right\|_{\mathrm{op}} = \Big\| \sum_{q \geq 0} \frac{(2\zeta \upsilon_\alpha)^q}{q!} \textnormal{Ad}_{N^{(\alpha)}}^q \mathscr{A}_{\alpha-1} \Big\|_{\mathrm{op}}
	\end{aligned}
	\end{equation}
	and, by Lemma \ref{lem.ad.q.lambda}, since $\mathscr{A}_{\alpha-1}$ is a local operator acting only within $S$, one has
	\begin{equation}\label{fine}
	\begin{aligned}
	\left\| \mathscr{A}_{\alpha} \right\|_{\mathrm{op}} &\leq \sum_{q \geq 0} \frac{(2\zeta \upsilon_\alpha)^q}{q!} (2 |S| \|N^{(\alpha)}\|_{0})^q \left\|  \mathscr{A}_{\alpha-1}\right\|_{\mathrm{op}}\\
	&\leq \sum_{q \geq 0} \frac{(4\zeta \|N^{(\alpha)}\|_{0})^q}{q!} \left(\frac{q}{e\sigma}\right)^q \left\|  \mathscr{A}_{\alpha-1}\right\|_{\mathrm{op}} e^{\sigma |S|}\\
	&\leq \sum_{q \geq 0} \left(\frac{4\zeta \|N^{(\alpha)}\|_{0}}{\sigma}\right)^q \left\|  \mathscr{A}_{\alpha-1}\right\|_{\mathrm{op}} e^{\sigma |S|}\leq 2 \left\|  \mathscr{A}_{\alpha-1}\right\|_{\mathrm{op}} e^{\sigma |S|}\,,
	\end{aligned}
	\end{equation}
	where we have used Cauchy estimate, Stirling bound $q! \geq \left(\frac{q}{e}\right)^q \sqrt{2\pi q}$ and the smallness assumption \eqref{zitina} on $\zeta$. Furthermore, again by Lemma \ref{lem.ad.q.lambda} one has that $\textnormal{Ad}_{N^{(\alpha)}}^q \mathscr{A}_{\alpha-1}$ is a strongly local operator acting within $S$, and since by \eqref{fine} the series in \eqref{giornata} is convergent, also $\mathscr{A}_{\alpha}$ is a strongly local operator acting within $S$.
	\end{proof}
	
\begin{lemma}\label{besciamella.alla.cannella}
	An operator $\mathcal{A}(\theta,\varphi)=\sum_{S \in \mathcal{P}_c(\Lambda)} \sum_{(k,l) \in \mathbb{Z}^{r+m}} (\widehat{\mathcal{A}_S})_{l,k} e^{\ii k \cdot \theta} e^{\ii l \cdot \varphi}$ is of the form $\mathcal{A}(\theta,\varphi)=e^{\ii \theta \cdot N} A(\varphi) e^{-\ii \theta \cdot N}$ if and only if for any $S \in \mathcal{P}_c(\Lambda)$, $l \in \mathbb{Z}^m$, $k \in \mathbb{Z}^r$ and $\alpha = 1, \dots, r$ one has
	\begin{equation}
		[N^{(\alpha)},(\widehat{\mathcal{A}_S})_{l,k}]=k_\alpha (\widehat{\mathcal{A}_S})_{l,k} \, .
	\end{equation}
\end{lemma}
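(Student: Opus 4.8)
The plan is to identify the stated condition with a ``charge grading'': for a fixed bounded operator $X \in \mathscr{B}(\cH_\Lambda)$ and a fixed $k \in \Z^r$, the relations $[N^{(\alpha)}, X] = k_\alpha X$ hold for all $\alpha = 1, \dots, r$ if and only if $e^{\ii \theta \cdot N} X e^{-\ii \theta \cdot N} = e^{\ii \theta \cdot k}\, X$ for every $\theta \in \R^r$. To see this, set $f(\theta) := e^{\ii \theta \cdot N} X e^{-\ii \theta \cdot N}$; since the $N^{(\alpha)}$ mutually commute, $f$ is smooth in $\theta$ with $\partial_{\theta_\alpha} f(\theta) = \ii\, e^{\ii \theta \cdot N}[N^{(\alpha)}, X] e^{-\ii \theta \cdot N}$. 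If the charge relations hold this equals $\ii k_\alpha f(\theta)$ for each $\alpha$, and since $f(0) = X$ one gets $f(\theta) = e^{\ii \theta \cdot k} X$ (solution of the linear ODE; convergence of the exponential series is automatic because $\cH_\Lambda$ is finite dimensional). Conversely, differentiating $f(\theta) = e^{\ii \theta \cdot k} X$ at $\theta = 0$ returns $\ii[N^{(\alpha)}, X] = \ii k_\alpha X$.

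For the ``if'' direction of the lemma, suppose $[N^{(\alpha)}, (\widehat{\mathcal{A}_S})_{l,k}] = k_\alpha (\widehat{\mathcal{A}_S})_{l,k}$ for all $S, l, k, \alpha$, and define $A(\varphi) := \mathcal{A}(0, \varphi) = \sum_{S} \sum_{(k,l)} (\widehat{\mathcal{A}_S})_{l,k}\, e^{\ii l \cdot \varphi}$; this converges absolutely in $\mathscr{B}(\cH_\Lambda)$, uniformly in $\varphi$, since $\mathcal{A}$ lies in some $\mathcal{O}^\strong_{\kappa, \rho, \zeta}$. Conjugating term by term — legitimate since $X \mapsto e^{\ii \theta \cdot N} X e^{-\ii \theta \cdot N}$ is an isometry of $\mathscr{B}(\cH_\Lambda)$ and the series converges absolutely — and applying the charge grading above to each $(\widehat{\mathcal{A}_S})_{l,k}$, one obtains $e^{\ii \theta \cdot N} A(\varphi) e^{-\ii \theta \cdot N} = \sum_{S} \sum_{(k,l)} e^{\ii \theta \cdot k} (\widehat{\mathcal{A}_S})_{l,k}\, e^{\ii l \cdot \varphi} = \mathcal{A}(\theta, \varphi)$, which is the desired form.

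For the ``only if'' direction, suppose $\mathcal{A}(\theta, \varphi) = e^{\ii \theta \cdot N} A(\varphi) e^{-\ii \theta \cdot N}$. Using \eqref{decomposizione.2} of Lemma \ref{cor.converge.male} and the commutativity of the $e^{\ii \theta_\alpha N^{(\alpha)}}$, one has, for every $S \in \mathcal{P}_c(\Lambda)$ and every $\theta, \theta' \in \R^r$, the cocycle identity $(\mathcal{A}(\theta + \theta', \varphi))_S = e^{\ii \theta' \cdot N} (\mathcal{A}(\theta, \varphi))_S\, e^{-\ii \theta' \cdot N}$. Taking on both sides the Fourier coefficient $\frac{1}{(2\pi)^{r+m}} \int_{\T^m} \int_{\T^r} (\,\cdot\,)\, e^{-\ii \varphi \cdot l} e^{-\ii \theta \cdot k}\, \ud \theta\, \ud \varphi$, and using the $2\pi$-periodicity of $\theta \mapsto (\mathcal{A}(\theta, \varphi))_S$ (Lemma \ref{lem:op.tau}(ii)) together with the substitution $\theta \mapsto \theta - \theta'$, the left-hand side produces $e^{\ii \theta' \cdot k} (\widehat{\mathcal{A}_S})_{l,k}$, while the right-hand side produces $e^{\ii \theta' \cdot N} (\widehat{\mathcal{A}_S})_{l,k}\, e^{-\ii \theta' \cdot N}$ (the conjugation commutes with the integral because it is bounded and continuous). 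Hence $e^{\ii \theta' \cdot N} (\widehat{\mathcal{A}_S})_{l,k}\, e^{-\ii \theta' \cdot N} = e^{\ii \theta' \cdot k} (\widehat{\mathcal{A}_S})_{l,k}$ for all $\theta' \in \R^r$, and differentiating at $\theta' = 0$ yields $[N^{(\alpha)}, (\widehat{\mathcal{A}_S})_{l,k}] = k_\alpha (\widehat{\mathcal{A}_S})_{l,k}$.

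I do not expect a genuine obstacle here: the whole content is the elementary ODE computation $\partial_{\theta_\alpha} f = \ii k_\alpha f$, and the only points requiring a word of care are the interchange of the conjugation $e^{\ii \theta \cdot N}(\cdot) e^{-\ii \theta \cdot N}$ with the absolutely convergent Fourier sums and integrals, and the systematic use of Lemma \ref{cor.converge.male} so that everything is carried out on the local pieces $(\widehat{\mathcal{A}_S})_{l,k}$ rather than on the full operator — which is what makes the matching of the local decomposition of $A(\varphi)$ work.
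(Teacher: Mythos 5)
Your proof is correct and follows essentially the same route as the paper: the paper's one-line argument (that $\mathcal{A}(\theta,\varphi)=e^{\ii\theta\cdot N}A(\varphi)e^{-\ii\theta\cdot N}$ iff $\partial_{\theta_\alpha}\bigl(e^{-\ii\theta\cdot N}\mathcal{A}(\theta,\varphi)e^{\ii\theta\cdot N}\bigr)=0$ for all $\alpha$) is precisely the differentiation identity $\ii[N^{(\alpha)},\mathcal{A}]=\partial_{\theta_\alpha}\mathcal{A}$, which on Fourier coefficients is your charge-grading relation $[N^{(\alpha)},(\widehat{\mathcal{A}_S})_{l,k}]=k_\alpha(\widehat{\mathcal{A}_S})_{l,k}$. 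What you have done is spell out, at the level of the individual coefficients $(\widehat{\mathcal{A}_S})_{l,k}$ and with explicit attention to the interchange of conjugation with the absolutely convergent sums, the implementation that the paper leaves implicit.
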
	
\begin{proof}
	This lemma is proved by observing that $\mathcal{A}(\theta,\varphi)$ is of the form $\mathcal{A}(\theta,\varphi)=e^{\ii \theta \cdot N} A(\varphi) e^{-\ii \theta \cdot N}$ if and only if for any $\alpha=1,\dots,r$, one has $\frac{\ud}{\ud \theta_\alpha} (e^{-\ii \theta \cdot N} \mathcal{A}(\theta,\varphi) e^{\ii \theta \cdot N})=0$.
\end{proof}

Since we are going to solve the homological equation using the Fourier representation of the operators, we need the guarantee that strong {locality} can be reconstructed back with anti-Fourier transform. 
	\begin{lemma}\label{lem.fourier.e.bello}
	Let $\mathcal{A}:\mathbb{T}^r \times \mathbb{T}^m \rightarrow \mathcal{A}(\theta, \varphi)$, with $\mathcal{A}(\theta, \varphi) \in \mathcal{O}_{0,0,0}$ as in \eqref{A.tau.fou}. Then $\mathcal{A}(\theta, \varphi)$ is strongly {local} $\forall (\theta, \varphi) \in \mathbb{T}^{r+m}$ if and only if $(\widehat{\mathcal{A}_S})_{l,k}$ is strongly local $\forall (k, l) \in \Z^{r+m}$ $\forall S \in \mathcal{P}_c(\Lambda)$.
	\end{lemma}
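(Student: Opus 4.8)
The plan is to reduce the statement to the elementary fact that an operator-valued Fourier series which converges absolutely in operator norm vanishes identically on $\mathbb{T}^{r+m}$ if and only if all its coefficients vanish, applied coefficientwise in the local decomposition. The starting point is that, by the hypothesis $\mathcal{A}(\theta,\varphi)\in\mathcal{O}_{0,0,0}$ and representation \eqref{A.tau.fou}, for each fixed $S\in\mathcal{P}_c(\Lambda)$ the local piece
$\mathcal{A}_S(\theta,\varphi)=\sum_{k\in\Z^r}\sum_{l\in\Z^m}(\widehat{\mathcal{A}_S})_{l,k}\,e^{\ii k\cdot\theta}e^{\ii l\cdot\varphi}$
is a series that converges absolutely in $\|\cdot\|_{\mathrm{op}}$, uniformly in $(\theta,\varphi)$, since $\sum_{k,l}\|(\widehat{\mathcal{A}_S})_{l,k}\|_{\mathrm{op}}\le\|\mathcal{A}\|_{0,0,0}<\infty$; in particular it may be multiplied by a fixed bounded operator under the sum and integrated term by term. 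Recall also that, by \eqref{strong.loc}, ``$\mathcal{A}(\theta,\varphi)$ is strongly local'' means exactly that $[\mathcal{A}_S(\theta,\varphi),N^{(\alpha)}_{S'}]=0$ for every $\alpha=1,\dots,r$ and every $S'\in\mathcal{P}_c(\Lambda)$ with $S'\nsubseteq S$, and likewise ``$(\widehat{\mathcal{A}_S})_{l,k}$ is strongly local'' means $[(\widehat{\mathcal{A}_S})_{l,k},N^{(\alpha)}_{S'}]=0$ in the same range of indices.

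\textbf{The implication ($\Leftarrow$).} I would argue directly. Fix $S$, $\alpha$ and $S'\nsubseteq S$; since $N^{(\alpha)}_{S'}$ is a fixed bounded operator independent of $(\theta,\varphi)$, the absolute convergence above gives
\[
[\mathcal{A}_S(\theta,\varphi),N^{(\alpha)}_{S'}]
\;=\;\sum_{k\in\Z^r}\sum_{l\in\Z^m}\big[(\widehat{\mathcal{A}_S})_{l,k},N^{(\alpha)}_{S'}\big]\,e^{\ii k\cdot\theta}e^{\ii l\cdot\varphi}\;=\;0,
\]
each summand vanishing by the assumed strong locality of the Fourier coefficients. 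Since $S,\alpha,S'$ were arbitrary, $\mathcal{A}(\theta,\varphi)$ is strongly local for every $(\theta,\varphi)\in\mathbb{T}^{r+m}$.

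\textbf{The converse ($\Rightarrow$).} The same identity is read backwards. Assuming $\mathcal{A}(\theta,\varphi)$ strongly local for all $(\theta,\varphi)$, for every $\alpha$ and every $S'\nsubseteq S$ the operator-valued trigonometric series $\sum_{k,l}[(\widehat{\mathcal{A}_S})_{l,k},N^{(\alpha)}_{S'}]\,e^{\ii k\cdot\theta}e^{\ii l\cdot\varphi}$ equals $[\mathcal{A}_S(\theta,\varphi),N^{(\alpha)}_{S'}]=0$ identically, and its coefficients are absolutely summable (bounded by $2\|N^{(\alpha)}_{S'}\|_{\mathrm{op}}\sum_{k,l}\|(\widehat{\mathcal{A}_S})_{l,k}\|_{\mathrm{op}}<\infty$). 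Integrating against $e^{-\ii k'\cdot\theta}e^{-\ii l'\cdot\varphi}$ over $\mathbb{T}^{r+m}$ and using orthonormality of the characters yields $[(\widehat{\mathcal{A}_S})_{l',k'},N^{(\alpha)}_{S'}]=0$ for every $(k',l')\in\Z^{r+m}$, i.e.\ each coefficient $(\widehat{\mathcal{A}_S})_{l',k'}$ is strongly local.

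\textbf{Main point of care.} There is no genuine obstacle; the only thing needing (minor) attention is the legitimacy of commuting the fixed operator $N^{(\alpha)}_{S'}$ past the infinite sum and of recovering the Fourier coefficients from pointwise vanishing of the series, both immediate from the uniform absolute convergence guaranteed by $\mathcal{A}\in\mathcal{O}_{0,0,0}$. One should also observe that throughout, the decomposition into local pieces is the canonical one read off from \eqref{A.tau.fou}, so that strong locality is tested on the same collection $\{\mathcal{A}_S\}_{S\in\mathcal{P}_c(\Lambda)}$ on both sides of the equivalence: the assignment $\mathcal{A}\mapsto\{\mathcal{A}_S\}$ and the Fourier transform commute by construction, so no ambiguity arises.
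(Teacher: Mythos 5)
Your proof is correct and follows essentially the same route as the paper: pass the fixed bounded operator $N^{(\alpha)}_{S'}$ through the absolutely convergent Fourier series for the forward direction, and integrate the vanishing commutator against characters to recover the coefficients for the converse. The only cosmetic difference is that the paper extracts Fourier coefficients in two stages (first in $\varphi$, then in $\theta$) whereas you integrate over $\mathbb{T}^{r+m}$ in a single step, which is the same thing by Fubini.
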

	\begin{proof}
	If $\forall (l,k) \in \Z^{r+m}$ $(\mathcal{\widehat{A}_S})_{l,k}$ is strongly local, then $\forall (\theta, \varphi)$ $\mathcal{A}$ is a convergent series of strongly local operators, and $\forall S' \nsubseteq S$
	$$
	\left[\mathcal{A}_S(\theta, \varphi), N^{(\alpha)}_{S'}\right] = 	\left[\sum_{k \in \Z^{r}, l \in \Z^m} (\widehat{\mathcal{A}_S})_{l,k} e^{\ii k \cdot \theta} e^{\ii l \cdot \varphi}, N^{(\alpha)}_{S'}\right] = \sum_{k \in \Z^{r}, l \in \Z^m}\left[ (\widehat{\mathcal{A}_S})_{l,k}, N^{(\alpha)}_{S'}\right]  e^{\ii k \cdot \theta} e^{\ii l \cdot \varphi} = 0\,.
	$$
	On the contrary, suppose that $\mathcal{A}(\theta, \varphi)$ is strongly {local} $\forall (\theta, \varphi)$ and let $S' \nsubseteq S$. One has
	\begin{align*}
	\left[(\widehat{\mathcal{A}(\theta, \cdot)})_{S,l}, N^{(\alpha)}_{S'} \right] &= \frac{1}{(2\pi)^m}\int_{\mathbb{T}^m} [\mathcal{A}_S(\theta, \varphi),  N^{(\alpha)}_{S'}] e^{-\ii l \cdot \varphi}\,\ud\varphi = 0\,,
	\end{align*}
	and
	$$
	\begin{aligned}
	\left[(\widehat{\mathcal{A}_S})_{l,k}, N^{(\alpha)}_{S'} \right] &= \frac{1}{(2\pi)^r}\int_{\mathbb{T}^r} \left[(\widehat{(\mathcal{A}(\theta, \cdot))_S})_{l}, N^{(\alpha)}_{S'}\right] e^{-\ii \theta \cdot k}\,\ud \theta = 0\,,
	\end{aligned}
	$$
	which proves that $(\widehat{\mathcal{A}_S})_{l,k}$ is strongly local $\forall k, l$. 
\end{proof}

Last, let us see that  the average of the operator $\mathcal{A}(\theta, \varphi)$ defined in \eqref{decomposizione.1} over $\varphi$ and $\theta$ commutes with all the number operators.

	\begin{lemma}\label{lem.zeta.commutes}
	Given $A \in \mathcal{O}_{\kappa, \rho}^{\strong}$ for some $\kappa, \rho>0$, let
	\begin{equation}\label{zeta.esplicita}
	\langle A \rangle := \frac{1}{(2\pi)^{r+m}} \int_{\mathbb{T}^r} \int_{\mathbb{T}^m} e^{\ii\theta \cdot N} A(\varphi) e^{-\ii \theta \cdot N}\,\ud \varphi \ud \theta\,.
	\end{equation}
	Then, $\forall \alpha = 1, \dots, r$ one has $[\langle A \rangle, N^{(\alpha)}] = 0$ and 
	\begin{equation}\label{eq:estimate.z.std}
		\|\langle A \rangle\|_{\kappa} \leq \|A\|_{\kappa, \rho}\,.
	\end{equation}
	\end{lemma}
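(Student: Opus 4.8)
The plan is to read off both assertions from the fact that $\langle\,\cdot\,\rangle$ is, by construction, the average over $\mathbb{T}^r\times\mathbb{T}^m$ of the conjugated operator $\mathcal{A}(\theta,\varphi):=e^{\ii\theta\cdot N}A(\varphi)e^{-\ii\theta\cdot N}$ already analyzed in Lemmas \ref{cor.converge.male}--\ref{besciamella.alla.cannella}. Since $\Lambda$ is finite, $\mathcal{H}_\Lambda$ is finite-dimensional, so there are no domain issues: every operator involved is bounded, the series $A(\varphi)=\sum_S A_S(\varphi)$ and $\langle A\rangle=\sum_S\langle A\rangle_S$ converge in operator norm (their $\|\cdot\|_\kappa$-type norms being finite), and Bochner integrals of norm-continuous operator-valued functions on the torus are genuine bounded operators.

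First I would prove the commutation $[\langle A\rangle,N^{(\alpha)}]=0$. The key point is that $N^{(\alpha)}$ generates translations in $\theta_\alpha$ and, by assumption (N.iv), $e^{\ii 2\pi N^{(\alpha)}}=\mathbbm{1}$, so $\theta\mapsto\mathcal{A}(\theta,\varphi)$ is genuinely $2\pi$-periodic in each $\theta_\alpha$ (this is also Item (ii) of Lemma \ref{lem:op.tau}). Conjugating $\langle A\rangle$ by $e^{\ii s N^{(\alpha)}}$, moving the fixed bounded conjugation inside the integral, and using the mutual commutativity (N.iii) of the $N^{(\beta)}$'s to absorb it as the shift $\theta_\alpha\mapsto\theta_\alpha+s$, periodicity lets me re-center the $\theta$-integral over $\mathbb{T}^r$, giving $e^{\ii s N^{(\alpha)}}\langle A\rangle e^{-\ii s N^{(\alpha)}}=\langle A\rangle$ for every $s\in\R$; differentiating at $s=0$ (legitimate in finite dimension) yields $\ii[N^{(\alpha)},\langle A\rangle]=0$. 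Alternatively, one may just invoke Lemma \ref{besciamella.alla.cannella}: $\langle A\rangle=\sum_S(\widehat{\mathcal{A}_S})_{0,0}$ is the sum over $S$ of the $(l,k)=(0,0)$ Fourier modes of $\mathcal{A}$, and Lemma \ref{besciamella.alla.cannella} with $k=0$ gives $[N^{(\alpha)},(\widehat{\mathcal{A}_S})_{0,0}]=0$ for every $S$, after which continuity of $X\mapsto[N^{(\alpha)},X]$ lets me commute the bracket past the norm-convergent sum over $S$.

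Then I would prove the norm bound. By Lemma \ref{cor.converge.male} the local decomposition passes through the integral, so $\langle A\rangle_S=\frac{1}{(2\pi)^{r+m}}\int_{\mathbb{T}^r}\int_{\mathbb{T}^m}e^{\ii\theta\cdot N}A_S(\varphi)e^{-\ii\theta\cdot N}\,\ud\varphi\,\ud\theta$; performing the $\varphi$-average first identifies the inner integral as $e^{\ii\theta\cdot N}(\widehat{A_S})_0\,e^{-\ii\theta\cdot N}$. Since $N^{(\alpha)}$ is self-adjoint (assumption (N.i)), $e^{\ii\theta\cdot N}$ is unitary, hence $\|\langle A\rangle_S\|_{\mathrm{op}}\leq\frac{1}{(2\pi)^r}\int_{\mathbb{T}^r}\|(\widehat{A_S})_0\|_{\mathrm{op}}\,\ud\theta=\|(\widehat{A_S})_0\|_{\mathrm{op}}\leq\sum_{l\in\Z^m}\|(\widehat{A_S})_l\|_{\mathrm{op}}e^{\rho|l|}$. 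Plugging this into the definition of $\|\cdot\|_\kappa$ gives $\|\langle A\rangle\|_\kappa=\sup_x\sum_{S\ni x}\|\langle A\rangle_S\|_{\mathrm{op}}e^{\kappa|S|}\leq\sup_x\sum_{S\ni x}\sum_{l\in\Z^m}\|(\widehat{A_S})_l\|_{\mathrm{op}}e^{\rho|l|}e^{\kappa|S|}=\|A\|_{\kappa,\rho}$, which is \eqref{eq:estimate.z.std}.

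I do not expect a real obstacle: the statement is essentially a repackaging of facts already established about $\mathcal{A}(\theta,\varphi)$. The only points deserving an explicit line are bookkeeping ones — that the local decomposition commutes with the Bochner integral (exactly Lemma \ref{cor.converge.male}), and the interchange of the fixed conjugation and of the $s$-derivative with the integral in the commutation argument — both routine because $\mathcal{H}_\Lambda$ is finite-dimensional. (If one also wanted $\langle A\rangle\in\mathcal{O}^\strong_\kappa$, which the lemma does not claim, it would follow by combining Lemma \ref{lem.fourier.e.bello} with the strong locality of each $(\widehat{A_S})_0$.)
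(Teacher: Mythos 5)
Your proof is correct and takes essentially the same route as the paper: the commutation is obtained by the same reparametrization $\theta_\alpha\mapsto\theta_\alpha+s_\alpha$ (using $2\pi$-periodicity of $e^{\ii\theta\cdot N}$ from (N.iv)), and the norm bound comes from the same two observations — that the $\varphi$-average isolates $(\widehat{A_S})_0$ and that $e^{\ii\theta\cdot N}$ is unitary. The extra remarks on Bochner integrals and the alternative via Lemma \ref{besciamella.alla.cannella} are harmless additions; the paper's displayed estimate reads $\Vert S\Vert_{\kappa,0}$ which is a typo for $\Vert A\Vert_{\kappa,0}$, and your version avoids it.
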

	\begin{proof}
	Given $\alpha = 1, \dots, r$, consider the function $\R \ni s_\alpha \mapsto e^{\ii s_\alpha N^{(\alpha)}} \langle A \rangle e^{-\ii s_\alpha N^{(\alpha)}}$. By reparametrizing $\theta_\alpha \mapsto \theta_\alpha+s_\alpha$ in \eqref{zeta.esplicita}, one has $e^{\ii s_\alpha N^{(\alpha)}} \langle A \rangle e^{-\ii s_\alpha N^{(\alpha)}}=\langle A \rangle$, which is the first part of the thesis.
	
Finally, to prove equation \eqref{eq:estimate.z.std}, one observes
	\begin{equation*}
		\begin{split}
			\Vert \langle A \rangle \Vert_{\kappa} & \leq \sup_{x \in \Lambda} \sum_{\substack{S \in \mathcal{P}_c(\Lambda) \\ x \in S}} \sup_{\theta \in \mathbb{R}^r} \Vert e^{\ii \theta \cdot N} (\widehat{A_S})_{0} e^{-\ii \theta \cdot N} \Vert_{\text{op}} e^{\kappa|S|} \\
			&=\sup_{x \in \Lambda} \sum_{\substack{S \in \mathcal{P}_c(\Lambda) \\ x \in S}} \Vert (\widehat{A_S})_{0} \Vert_{\text{op}} e^{\kappa|S|} \leq \Vert S \Vert_{\kappa,0}\,.
		\end{split}
	\end{equation*}
	\vspace{-25pt}
	
	\end{proof}

\subsection{Homological Equations for Small Perturbations}
In this Subsection we study the solution to the homological equation \eqref{eq:Hom.Introduttiva} in the case of small perturbations. That is, given an operator $P \in \mathcal{O}^\strong_{\kappa, \rho}$ for some $\kappa, \rho >0$ with $\| P\|_{\kappa, \rho} \ll 1$, we solve
 \begin{equation}\label{hom.original}
 	\ii [J \cdot N, G(\varphi)] + \omega \cdot \partial_\varphi G(\varphi) + P(\varphi) = Z\,,
 \end{equation}
for suitable operators $G \in \mathcal{O}^\strong_{\kappa', \rho'}$ and $Z \in \mathcal{O}^\strong_{\kappa'}$, with $\kappa' \leq \kappa$, $\rho' \leq \rho$.  Following the strategy presented in Subsection \ref{sec:Ideas}, as a first step we define
 \begin{equation}\label{g.tau.p.tau}
 \mathcal{G}(\theta, \varphi) := e^{\ii \theta \cdot N} G(\varphi) e^{-\ii \theta \cdot N}\,, \quad \mathcal{P}(\theta, \varphi) = e^{\ii \theta \cdot N} P(\varphi) e^{-\ii \theta \cdot N}\,, \quad \theta \in \mathbb{T}^r\,,
 \end{equation}
 and we solve $\forall \theta \in \mathbb{T}^r$ an equation of the following form:
 \begin{equation}\label{hom.tau}
 	\ii [J \cdot N, \mathcal{G}(\theta, \varphi)] + \omega \cdot \partial_\varphi \mathcal{G}(\theta, \varphi) + \mathcal{P}(\theta, \varphi) = \langle P\rangle\,,
 \end{equation}
 with $\langle P \rangle $ defined in \eqref{zeta.esplicita}.
 
 \begin{lemma}[$\theta$-dependent homological equation]\label{lem.hom.tau}
 	Let $\kappa, \rho, \zeta >0$ and  $\mathcal{P} \in \mathcal{O}^\strong_{\kappa, \rho, \zeta}$. There exists $\mathcal{G} \in \mathcal{O}^{\strong}_{\kappa, \rho-\delta, \zeta-\delta}$ $\forall 0< \delta \leq \min\{\rho, \zeta\}$ which solves \eqref{hom.tau} and
the following estimate holds:
 	\begin{equation}
 	\| \mathcal{G}\|_{\kappa, \rho-\delta, \zeta-\delta} \leq \frac{\tau^\tau}{e^\tau \delta^\tau \gamma} \|\mathcal{P}\|_{\kappa, \rho, \zeta}\,.
 	\end{equation}
	\end{lemma}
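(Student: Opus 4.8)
The plan is to solve \eqref{hom.tau} explicitly, mode by mode, in the joint Fourier representation in $\theta$ and $\varphi$ supplied by Lemma \ref{lem:op.tau}(ii). The starting observation is that, since $\mathcal{G}(\theta,\varphi)=e^{\ii\theta\cdot N}G(\varphi)e^{-\ii\theta\cdot N}$ and $\mathcal{P}(\theta,\varphi)=e^{\ii\theta\cdot N}P(\varphi)e^{-\ii\theta\cdot N}$ as in \eqref{g.tau.p.tau}, Lemma \ref{besciamella.alla.cannella} gives $[N^{(\alpha)},(\widehat{\mathcal{G}_S})_{l,k}]=k_\alpha(\widehat{\mathcal{G}_S})_{l,k}$, whence $\ii[J\cdot N,\mathcal{G}]=J\cdot\partial_\theta\mathcal{G}$, and likewise for $\mathcal{P}$. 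Thus \eqref{hom.tau} reads as the constant coefficient transport equation $J\cdot\partial_\theta\mathcal{G}+\omega\cdot\partial_\varphi\mathcal{G}+\mathcal{P}=\langle P\rangle$. Recalling that $\langle P\rangle=\sum_{S}(\widehat{\mathcal{P}_S})_{0,0}$ by \eqref{zeta.esplicita} and Lemma \ref{cor.converge.male}, the Fourier coefficient of the equation at $(k,l)=(0,0)$ holds identically, while for $(k,l)\neq(0,0)$ one is forced to take, for each $S\in\mathcal{P}_c(\Lambda)$,
\[
(\widehat{\mathcal{G}_S})_{l,k}:=-\frac{(\widehat{\mathcal{P}_S})_{l,k}}{\ii\,(J\cdot k+\omega\cdot l)}\,,\qquad (\widehat{\mathcal{G}_S})_{0,0}:=0\,,
\]
which is well defined because $(J,\omega)\in\mathcal{D}_{\gamma,\tau}$ forces $J\cdot k+\omega\cdot l\neq0$ for $(k,l)\neq(0,0)$.

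Next I would check that reassembling these coefficients yields an operator in the required class. Each $(\widehat{\mathcal{G}_S})_{l,k}$ is a scalar multiple of $(\widehat{\mathcal{P}_S})_{l,k}$, hence strongly local (since $\mathcal{P}\in\mathcal{O}^\strong_{\kappa,\rho,\zeta}$), so by Lemma \ref{lem.fourier.e.bello} the operator $\mathcal{G}(\theta,\varphi)$ obtained by anti-Fourier transform is strongly local for every $(\theta,\varphi)$; moreover the proportionality preserves the relation $[N^{(\alpha)},(\widehat{\mathcal{G}_S})_{l,k}]=k_\alpha(\widehat{\mathcal{G}_S})_{l,k}$, so by the converse direction of Lemma \ref{besciamella.alla.cannella} the reconstructed $\mathcal{G}$ is genuinely of the form $e^{\ii\theta\cdot N}G(\varphi)e^{-\ii\theta\cdot N}$ for a well-defined $G$ (which will later be used to descend to a solution of the original homological equation \eqref{hom.original}). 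It then only remains to prove $\mathcal{G}\in\mathcal{O}^\strong_{\kappa,\rho-\delta,\zeta-\delta}$, which is the content of the quantitative bound below and which, a posteriori, also justifies the termwise manipulations by giving convergence of the relevant series in the $\|\cdot\|_{\kappa,\rho-\delta,\zeta-\delta}$ norm.

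For the estimate I would use the Diophantine lower bound in \eqref{eq:siamo.dei.cani}, $|J\cdot k+\omega\cdot l|>\gamma/(|k|+|l|)^\tau$, to get $\|(\widehat{\mathcal{G}_S})_{l,k}\|_{\mathrm{op}}\leq\gamma^{-1}(|k|+|l|)^\tau\|(\widehat{\mathcal{P}_S})_{l,k}\|_{\mathrm{op}}$. Inserting this into the definition \eqref{def:norma.tau}, writing $e^{(\zeta-\delta)|k|}e^{(\rho-\delta)|l|}=e^{-\delta(|k|+|l|)}e^{\zeta|k|}e^{\rho|l|}$, and applying the elementary maximum $\sup_{x\geq0}x^\tau e^{-\delta x}=(\tau/(e\delta))^\tau$ with $x=|k|+|l|$, one gets
\[
\|\mathcal{G}\|_{\kappa,\rho-\delta,\zeta-\delta}\leq\frac{\tau^\tau}{e^\tau\delta^\tau\gamma}\,\sup_{x\in\Lambda}\sum_{S\ni x}\sum_{k\in\Z^r}\sum_{l\in\Z^m}\|(\widehat{\mathcal{P}_S})_{l,k}\|_{\mathrm{op}}\,e^{\zeta|k|}e^{\rho|l|}e^{\kappa|S|}=\frac{\tau^\tau}{e^\tau\delta^\tau\gamma}\|\mathcal{P}\|_{\kappa,\rho,\zeta}\,,
\]
which is exactly the claimed inequality; restricting to $0<\delta\leq\min\{\rho,\zeta\}$ keeps $\rho-\delta,\zeta-\delta\geq0$ so the target norm makes sense.

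The argument is a textbook small-divisor computation, so the genuinely delicate points are bookkeeping rather than hard analysis: checking that $\ii[J\cdot N,\cdot]$ acts as $J\cdot\partial_\theta$ on the conjugated operators (this is precisely where the integrality of $\sigma(N^{(\alpha)})$, via Lemma \ref{besciamella.alla.cannella}, enters), verifying that the $(0,0)$-mode of \eqref{hom.tau} is automatically solved — which is exactly why the average $\langle P\rangle$, and not some other operator, must sit on the right-hand side — and confirming that the formal solution stays strongly local with the $e^{\ii\theta\cdot N}(\cdot)e^{-\ii\theta\cdot N}$ structure, handled by Lemmas \ref{lem.fourier.e.bello}, \ref{besciamella.alla.cannella} and \ref{cor.converge.male}. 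I expect the main obstacle to be purely organizational: making sure every exchange of summation (over $S$, $k$, $l$) and every application of the anti-Fourier transform is justified by the finiteness of $\|\mathcal{P}\|_{\kappa,\rho,\zeta}$ before the final bound is in hand.
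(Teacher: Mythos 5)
Your proof is correct and follows essentially the same route as the paper: pass to the joint $(\theta,\varphi)$-Fourier coefficients, solve algebraically, estimate the small divisors with the Diophantine condition and a Cauchy bound, and transfer strong locality via Lemma~\ref{lem.fourier.e.bello}. The only point you add beyond the paper's argument is the explicit invocation of Lemma~\ref{besciamella.alla.cannella} to verify that the reassembled $\mathcal{G}$ is indeed of the conjugated form $e^{\ii\theta\cdot N}G(\varphi)e^{-\ii\theta\cdot N}$; this is left implicit in the paper but is a harmless and valid clarification.
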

	\begin{proof}
	We observe that
	$$
	\ii [J \cdot N, \mathcal{G}(\theta, \varphi)] = (J \cdot \partial_{\theta}) \mathcal{G}(\theta, \varphi) = \sum_{S \in \mathcal{P}_c(\Lambda)} \sum_{l \in \Z^m} \sum_{k \in \Z^r} \ii J \cdot k (\widehat{\mathcal{G}_S})_{l,k} e^{\ii k \cdot \theta} e^{\ii \l \cdot \varphi}\,.
	$$
	Passing to Fourier coefficients, \eqref{hom.tau} is equivalent to
	$$
	\ii \left(J \cdot k + \omega \cdot l\right) (\widehat{\mathcal{G}_{S}})_{l,k} + (\widehat{\mathcal{P}_S})_{l,k} = 0 \quad \forall (l,k) \neq (0,0)\,,
	$$
	observing that $\langle P \rangle = \sum_{S \in \mathcal{P}_c(\Lambda)} (\widehat{\mathcal{P}_S})_{0,0} $.
	Then $\forall S \in \mathcal{P}_c(\Lambda)$ one sets
	\begin{equation}\label{ghat.phat}
	(\widehat{\mathcal{G}_{S}})_{l,k} = -\frac{(\widehat{\mathcal{P}_S})_{l,k}}{\ii \left(J \cdot k + \omega \cdot l\right)} \quad \forall (l,k) \neq0\,, \quad (\widehat{\mathcal{G}_{S}})_{0,0} = 0\,,
	\end{equation}
	and $\forall 0<\delta\leq \min\{\zeta, \rho\}$ has
	\begin{align*}
	\| \mathcal{G}\|_{\kappa, \rho-\delta, \zeta-\delta} &\leq \gamma^{-1} \sup_{x \in \Lambda} \sum_{S\in \mathcal{P}_c(\Lambda)} \sum_{l \in \Z^m} \sum_{k \in \Z^r} \| (\widehat{\mathcal{P}_S})_{l,k}\|_{\mathrm{op}} (|k| + |l|)^{\tau} e^{-(|k|+ |l|)\delta}e^{|k|\zeta} e^{\rho|l|} e^{|S|\kappa}\\
	&\leq \frac{\tau^\tau}{e^\tau \delta^\tau \gamma} \sup_{x \in \Lambda} \sum_{S\in \mathcal{P}_c(\Lambda)} \sum_{l \in \Z^m} \sum_{k \in \Z^r} \| (\widehat{\mathcal{P}_S})_{l,k}\|_{\mathrm{op}} e^{|k|\zeta} e^{\rho|l|} e^{|S|\kappa} =\frac{\tau^\tau}{e^\tau \delta^\tau \gamma} \|\mathcal{P}\|_{\kappa, \rho , \zeta }\,,
	\end{align*}
	where we have used the fact that $(J, \omega)$ satisfies the Diophantine condition \eqref{eq:siamo.dei.cani}. It remains to prove that $\mathcal{G}(\theta, \varphi)$ is strongly {local} $\forall (\theta, \varphi)$. This follows from the fact that $\mathcal{P}(\theta, \varphi)$ is strongly {local}, from Lemma \ref{lem.fourier.e.bello} and from the fact that, by \eqref{ghat.phat}, the Fourier coefficients of $\mathcal{G}_S(\theta, \varphi)$ are strongly local if and only if the Fourier coefficients of $\mathcal{P}_S(\theta, \varphi)$ are strongly local.
	\end{proof}
	From Lemma \ref{lem.hom.tau} we deduce the two following results:
	\begin{lemma}[Homological equation for $H_\inv$]\label{cor.trova.g}
	Let $\kappa, \rho>0$ and $P \in \mathcal{O}^\strong_{\kappa, \rho}$ and let
	\begin{equation}\label{costantina}
 \ccccc := 8 r \max_{\alpha} \|N^{(\alpha)}\|_0\,.
	\end{equation}
	There exist  $G:= G_\inv \in \mathcal{O}^\strong_{\kappa - \ccccc \delta, \rho - \delta}$ and  $Z:=Z_\inv \in \mathcal{O}^\strong_{\kappa- \ccccc \delta, \rho }$ for any $0<\delta< \min\{\rho, \ccccc^{-1} \kappa\}$ such that \eqref{hom.original} holds, with
	\begin{equation}\label{commuta.con.tutti}
	\left[Z_\inv, N^{(\alpha)}\right] = 0 \quad \forall \alpha = 1, \dots, r\,.
	\end{equation}
	Moreover, $Z_\inv=\langle P \rangle$ and $\forall \delta>0$ one has
	\begin{equation}\label{stime.g.z}
	\| G_\inv\|_{\kappa- \ccccc \delta, \rho-\delta} \leq \frac{4^r \tau^{\tau}}{e^{\tau} \gamma \delta^{\tau+r}}(1+\delta)^r \| P\|_{\kappa, \rho}\,.
	\end{equation}
	\end{lemma}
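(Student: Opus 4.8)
The plan is to reduce \eqref{hom.original} to the $\theta$-parametrized homological equation \eqref{hom.tau}, which is solved in Lemma \ref{lem.hom.tau}, and to transfer all estimates through the conjugation bounds of Lemma \ref{lem:op.tau}. Throughout I fix $0<\delta<\min\{\rho,\ccccc^{-1}\kappa\}$ (for larger $\delta$ the statement is vacuous, since the exponential weights degenerate). First I set $Z_\inv:=\langle P\rangle$, with $\langle P\rangle$ as in \eqref{zeta.esplicita}. By Lemma \ref{lem.zeta.commutes} this immediately gives \eqref{commuta.con.tutti}, i.e.\ $[Z_\inv,N^{(\alpha)}]=0$ for every $\alpha$, together with $\|Z_\inv\|_\kappa\leq\|P\|_{\kappa,\rho}$; strong locality of $Z_\inv$ follows because it is a uniform average of the strongly local operators $e^{\ii\theta\cdot N}P(\varphi)e^{-\ii\theta\cdot N}$ (Lemma \ref{cor.converge.male}), and since the $\|\cdot\|_\kappa$-norm is monotone in the first index and $Z_\inv$ is time-independent, $Z_\inv\in\mathcal{O}^\strong_{\kappa-\ccccc\delta,\rho}$ as required.

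Next I pass to the $\theta$-representation. Applying Lemma \ref{lem:op.tau} to $P$ with the choices $\sigma:=8\delta\max_\alpha\|N^{(\alpha)}\|_0$ and $\zeta:=\delta$, one has $r\sigma=\ccccc\delta$ by \eqref{costantina}, and the smallness condition \eqref{zitina} reads $4\delta\max_\alpha\|N^{(\alpha)}\|_0/\sigma=\tfrac12\leq\tfrac12$, hence holds. Setting $\mathcal{P}(\theta,\varphi):=e^{\ii\theta\cdot N}P(\varphi)e^{-\ii\theta\cdot N}$ as in \eqref{g.tau.p.tau}, item (iii) of Lemma \ref{lem:op.tau} then yields $\mathcal{P}\in\mathcal{O}^\strong_{\kappa-\ccccc\delta,\rho,\delta}$ with
\[
\|\mathcal{P}\|_{\kappa-\ccccc\delta,\rho,\delta}\;\leq\;2^{2r}\Big(\frac{1+\delta}{\delta}\Big)^{r}\|P\|_{\kappa,\rho}\,.
\]

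Now I invoke Lemma \ref{lem.hom.tau} with $(\kappa,\rho,\zeta)$ replaced by $(\kappa-\ccccc\delta,\rho,\delta)$ and with reduction parameter equal to the full $\delta$ (admissible since $\delta\leq\min\{\rho,\delta\}$), obtaining $\mathcal{G}\in\mathcal{O}^\strong_{\kappa-\ccccc\delta,\rho-\delta,0}$ solving \eqref{hom.tau} with
\[
\|\mathcal{G}\|_{\kappa-\ccccc\delta,\rho-\delta,0}\;\leq\;\frac{\tau^{\tau}}{e^{\tau}\delta^{\tau}\gamma}\,\|\mathcal{P}\|_{\kappa-\ccccc\delta,\rho,\delta}\;\leq\;\frac{4^{r}\tau^{\tau}(1+\delta)^{r}}{e^{\tau}\gamma\,\delta^{\tau+r}}\,\|P\|_{\kappa,\rho}\,.
\]
From the explicit formula \eqref{ghat.phat} for the Fourier coefficients of $\mathcal{G}$ and Lemma \ref{besciamella.alla.cannella} (applied first to $\mathcal{P}$, whose coefficients satisfy $[N^{(\alpha)},(\widehat{\mathcal{P}_S})_{l,k}]=k_\alpha(\widehat{\mathcal{P}_S})_{l,k}$, and then noting that dividing by the scalar $\ii(J\cdot k+\omega\cdot l)$ preserves this eigenrelation), $\mathcal{G}$ is itself of the conjugated form, i.e.\ $\mathcal{G}(\theta,\varphi)=e^{\ii\theta\cdot N}G_\inv(\varphi)e^{-\ii\theta\cdot N}$ with $G_\inv:=\mathcal{G}(0,\cdot)$. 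Evaluating \eqref{hom.tau} at $\theta=0$ and using $\mathcal{P}(0,\varphi)=P(\varphi)$ shows that $G_\inv$ solves \eqref{hom.original} with $Z=\langle P\rangle=Z_\inv$; its strong locality is inherited from that of $\mathcal{G}$ (guaranteed by Lemma \ref{lem.hom.tau}), and \eqref{mandarino} gives $\|G_\inv\|_{\kappa-\ccccc\delta,\rho-\delta}=\|\mathcal{G}(0,\cdot)\|_{\kappa-\ccccc\delta,\rho-\delta}\leq\|\mathcal{G}\|_{\kappa-\ccccc\delta,\rho-\delta,0}$, which is exactly the bound \eqref{stime.g.z}.

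The only genuinely delicate point is the bookkeeping of the exponential weights: one must tie $\sigma$ linearly to $\delta$ so that the localization loss $r\sigma$ incurred in passing to the $\theta$-representation is precisely the prescribed $\ccccc\delta$, while keeping $\zeta=\delta$ large enough for the series defining $\mathcal{P}$ to converge (i.e.\ \eqref{zitina}) yet accepting the resulting $\delta^{-(\tau+r)}$ blow-up in the small-divisor estimate. Once the parameters are matched in this way, the argument is a direct concatenation of Lemmas \ref{lem:op.tau}, \ref{lem.hom.tau}, \ref{lem.zeta.commutes} and \ref{besciamella.alla.cannella}.
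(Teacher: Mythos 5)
Your argument is correct and follows the paper's own route: pass to the $\theta$-representation with $\sigma=8\delta\max_\alpha\|N^{(\alpha)}\|_0$ and $\zeta=\delta$ so that $r\sigma=\ccccc\delta$, apply Lemma \ref{lem.hom.tau}, and return to $\theta=0$ via \eqref{mandarino}, with $Z_\inv=\langle P\rangle$ handled by Lemma \ref{lem.zeta.commutes}. The one small addition you make — invoking Lemma \ref{besciamella.alla.cannella} to verify that $\mathcal{G}$ is itself of conjugated form — is not strictly needed for the estimate (inequality \eqref{mandarino} only uses $G_\inv=\mathcal{G}(0,\cdot)$ and the Fourier representation of $\mathcal{G}$), but it is a correct and clarifying observation.
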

	\begin{proof}
	Let $\zeta :=\delta$ and $\sigma := 8 \max_{\alpha} \|N^{(\alpha)}\|_0 \delta$; by Item (iii) of Lemma \ref{lem:op.tau},  one has that, defining for any $\theta \in \mathbb{T}^r$ and for any $\varphi \in \mathbb{T}^m$ $\mathcal{P}(\theta, \varphi)$ as in \eqref{g.tau.p.tau}, $\mathcal{P} \in \mathcal{O}^\strong_{\kappa - r \sigma, \rho, \zeta} $, with
	\begin{equation}\label{miruna}
	\|\mathcal{P}\|_{\kappa-r \sigma, \rho, \zeta } \leq  2^{2r} \left(\frac{\zeta+1}{\zeta}\right)^r \|P\|_{\kappa, \rho}\,.
	\end{equation}
	Then by Lemma \ref{lem.hom.tau} there exists $\mathcal{G} \in \mathcal{O}^\strong_{\kappa, \rho-\delta, 0}$ solving \eqref{hom.tau} and
	\begin{equation}\label{stefana}
	\|\mathcal{G}\|_{\kappa-r \sigma, \rho-\delta, 0} \leq \frac{\tau^{\tau}}{e^{\tau} \gamma \delta^{\tau}} \| \mathcal{P}\|_{\kappa - r \sigma, \rho , \delta}\,.
	\end{equation}
	Comparing \eqref{hom.tau}, \eqref{hom.original} and \eqref{zeta.esplicita} one gets $Z_{\mathrm{inv}}=\langle P \rangle$.
	Furthermore we have that $G_\inv(\varphi):= \mathcal{G}(0, \varphi)$ solves \eqref{hom.tau} for $\theta = 0$. Moreover, again by Item (iii) of Lemma \ref{lem:op.tau}, one has
	\begin{equation}\label{giuliani}
 \|G_\inv \|_{\kappa-r \sigma, \rho - \delta} \leq \|\mathcal{G}\|_{\kappa - r \sigma, \rho - \delta, 0}\,.
	\end{equation}
	Combining \eqref{costantina}, \eqref{miruna}, \eqref{stefana} and \eqref{giuliani}, one gets
	$$
	\|G_\inv\|_{\kappa -r \sigma, \rho - \delta} = \|G_\inv\|_{\kappa -\ccccc \delta, \rho - \delta} \leq \frac{2^{2r}\tau^{\tau}}{e^{\tau} \gamma \delta^{\tau+r}}(1+\delta)^r \|P\|_{\kappa, \rho}\,,
	$$
	which gives \eqref{stime.g.z}. 
 \end{proof}

\begin{lemma}[Homological equation for $H_\obs$]\label{lem.Sailor.Moon.is.back}
	Let $\kappa,\rho>0$ and $P \in \mathcal{O}^\strong_{\kappa,\rho}$, there exist $G:= G_\obs \in \mathcal{O}^\strong_{\kappa-\ccccc \delta, \rho-\delta}$ and $Z:= Z_\obs \in \mathcal{O}^\strong_{\kappa-\ccccc\delta}$ for any $0<\delta<\min\{\ccccc^{-1}\kappa,\rho\}$ such that \eqref{hom.original} holds with $G_\obs(0) = 0$. One has
	\begin{equation}\label{bagno.maria}
	\begin{split}
	\|G_\obs\|_{\kappa -\ccccc\delta, \rho-\delta} &\leq \frac{2^{2r+1} \tau^{\tau}}{e^{\tau} \gamma \delta^{\tau+r}}(1+\delta)^r \| P\|_{\kappa, \rho}\,, \qquad	\\
	\| Z_\obs\|_{\kappa - \ccccc \delta} &\leq \frac{\mathcal{J} 2 ^{2r}(1 +\delta)^r \tau^\tau}{\gamma e^\tau \delta^{\tau + r + 1}} \|P\|_{\kappa, \rho} + \|P\|_{\kappa, \rho}\,.
	\end{split}
	\end{equation}
	In particular,
	\begin{equation}\label{cacio.e.pepe}
	Z_\obs = \llan P \rran \,, \qquad \llan P \rran := \langle P \rangle + \sum_{S \in \mathcal{P}_c(\Lambda)} \sum_{(k, l) \neq (0,0) \atop k \in \Z^{r}, l \in \Z^m} \frac{(J \cdot k) (\widehat{\mathcal{P}_S})_{l,k}}{\omega \cdot l + J \cdot k}\,.
	\end{equation}
\end{lemma}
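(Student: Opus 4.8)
The plan is to obtain $G_\obs$ and $Z_\obs$ by a rigid shift of the pair $(G_\inv,Z_\inv)=(G_\inv,\langle P\rangle)$ produced by Lemma \ref{cor.trova.g}. Since $G_\inv(0)$ is a time independent operator, one has $\omega\cdot\partial_\varphi G_\inv(0)=0$, so subtracting the constant operator $\ii[J\cdot N,G_\inv(0)]$ from both sides of the homological equation \eqref{hom.original} solved by $G_\inv$ shows that
\[
G_\obs(\varphi):=G_\inv(\varphi)-G_\inv(0)\,,\qquad Z_\obs:=\langle P\rangle-\ii[J\cdot N,G_\inv(0)]
\]
also solve \eqref{hom.original}, and by construction $G_{\obs,S}(0)=0$ for every $S\in\mathcal{P}_c(\Lambda)$, in particular $G_\obs(0)=0$. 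Strong locality of $G_\obs$ is inherited from that of $G_\inv$ (evaluating a strongly local family at $\varphi=0$ preserves strong locality), and strong locality of $Z_\obs$ follows because $\langle P\rangle\in\mathcal{O}^\strong$ by Lemmas \ref{cor.converge.male} and \ref{lem.zeta.commutes}, $N^{(\alpha)}\in\mathcal{O}^\strong_\kappa$ by (N.iii), and a commutator of two strongly local operators is again strongly local (a one line Jacobi identity argument: if $S'''\nsubseteq S$ while $S',S''\subseteq S$, then $S'''\nsubseteq S'$ and $S'''\nsubseteq S''$, hence $[A_{S'},N^{(\alpha)}_{S'''}]=[B_{S''},N^{(\alpha)}_{S'''}]=0$).

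The next step is to identify $Z_\obs$ with $\llan P\rran$. Write $\mathcal{P}(\theta,\varphi):=e^{\ii\theta\cdot N}P(\varphi)e^{-\ii\theta\cdot N}$ and let $\mathcal{G}$ be the solution of the $\theta$-dependent homological equation constructed in the proof of Lemma \ref{cor.trova.g}, so that $(\widehat{\mathcal{G}_S})_{l,k}=-(\ii(J\cdot k+\omega\cdot l))^{-1}(\widehat{\mathcal{P}_S})_{l,k}$ for $(l,k)\neq(0,0)$, $(\widehat{\mathcal{G}_S})_{0,0}=0$, and $G_\inv=\mathcal{G}(0,\cdot)$; in particular $G_\inv(0)=\sum_{S\in\mathcal{P}_c(\Lambda)}\sum_{(l,k)\neq(0,0)}(\widehat{\mathcal{G}_S})_{l,k}$. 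By Lemma \ref{besciamella.alla.cannella}, $[N^{(\alpha)},(\widehat{\mathcal{P}_S})_{l,k}]=k_\alpha(\widehat{\mathcal{P}_S})_{l,k}$, hence $[J\cdot N,(\widehat{\mathcal{P}_S})_{l,k}]=(J\cdot k)(\widehat{\mathcal{P}_S})_{l,k}$ and therefore
\[
-\ii[J\cdot N,G_\inv(0)]\;=\;\sum_{S\in\mathcal{P}_c(\Lambda)}\sum_{(k,l)\neq(0,0)}\frac{(J\cdot k)\,(\widehat{\mathcal{P}_S})_{l,k}}{\omega\cdot l+J\cdot k}\;=\;\llan P\rran-\langle P\rangle\,,
\]
which is exactly \eqref{cacio.e.pepe}.

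It remains to prove the quantitative estimates \eqref{bagno.maria}. For $G_\obs$: for $l\neq0$ the Fourier coefficients $(\widehat{G_{\obs,S}})_l$ coincide with those of $G_{\inv,S}$, whereas $(\widehat{G_{\obs,S}})_0=-\sum_{l\neq0}(\widehat{G_{\inv,S}})_l$; using $\rho-\delta\geq0$ one gets $\|G_\obs\|_{\kappa-\ccccc\delta,\rho-\delta}\leq 2\,\|G_\inv\|_{\kappa-\ccccc\delta,\rho-\delta}$, and \eqref{stime.g.z} yields the first line of \eqref{bagno.maria} (the factor $2^{2r}$ becoming $2^{2r+1}$). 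For $Z_\obs=\llan P\rran$: the summand $\langle P\rangle$ contributes $\|\langle P\rangle\|_{\kappa-\ccccc\delta}\leq\|\langle P\rangle\|_\kappa\leq\|P\|_{\kappa,\rho}$ by Lemma \ref{lem.zeta.commutes}, accounting for the additive term $\|P\|_{\kappa,\rho}$. For the remaining series one bounds $|J\cdot k|\leq\mathcal{J}|k|$, uses the Diophantine condition \eqref{eq:siamo.dei.cani} in the form $|\omega\cdot l+J\cdot k|^{-1}\leq\gamma^{-1}(|k|+|l|)^\tau$, absorbs the polynomial factor $|k|(|k|+|l|)^\tau$ into the weight $e^{\delta(|k|+|l|)}$ at the cost of a constant of order $\tau^\tau e^{-\tau}\delta^{-\tau-1}$, and finally applies Item (iii) of Lemma \ref{lem:op.tau} with $\zeta=\delta$ and $\sigma=8\max_\alpha\|N^{(\alpha)}\|_0\,\delta$ (so that $(\kappa-\ccccc\delta)+r\sigma=\kappa$) to replace the ensuing weighted sum of the $\|(\widehat{\mathcal{P}_S})_{l,k}\|_{\mathrm{op}}$ by $2^{2r}((1+\delta)/\delta)^r\|P\|_{\kappa,\rho}$. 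Collecting the constants produces the second line of \eqref{bagno.maria}, with the extra power $\delta^{-1}$ (relative to the bound for $G_\obs$) originating precisely from the factor $|J\cdot k|\lesssim|k|$. The only mildly delicate point of the whole argument is this last estimate — keeping track of strong locality through the $\theta$-extension and of the precise powers of $\delta$ and of the combinatorial constants — while everything else is an immediate consequence of Lemma \ref{cor.trova.g}.
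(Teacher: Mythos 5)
Your proposal is correct and follows essentially the same route as the paper: shift $(G_\inv,\langle P\rangle)$ by the constant $G_\inv(0)$ to obtain $(G_\obs,Z_\obs)$, identify $Z_\obs=\llan P\rran$ via Lemma \ref{besciamella.alla.cannella}, bound $\|G_\obs\|$ by twice the bound \eqref{stime.g.z} on $G_\inv$, and bound $\|Z_\obs\|$ by the triangle inequality together with the Diophantine condition, a Cauchy estimate, Lemma \ref{lem:op.tau}(iii), and Lemma \ref{lem.zeta.commutes}. You merely make a few steps more explicit (the factor of two for $G_\obs$, the strong-locality of the commutator) which the paper leaves implicit.
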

\begin{proof}
	Since by Lemma \ref{cor.trova.g} $\mathcal{G}(0, \varphi)$ solves \eqref{hom.original} with $Z = \langle P \rangle$, then $G_\obs(\varphi) := \mathcal{G}(0, \varphi) - \mathcal{G}(0,0)$ solves \eqref{hom.original} with $Z = \langle P \rangle - \ii [J \cdot N, \mathcal{G}(0,0)]$ and by construction it satisfies $G_\obs(0) = 0$. Using Lemma \ref{besciamella.alla.cannella}, one obtains that $Z_{\mathrm{obs}}$ is given by the expression in \eqref{cacio.e.pepe}. The first estimate in \eqref{bagno.maria} follows from the estimate \eqref{stime.g.z} on $\mathcal{G}(0, \varphi)$, while to obtain the second estimate in \eqref{bagno.maria} we observe that $\|\llan P \rran\|_{\kappa - \ccccc \delta} \leq \|\langle P \rangle\|_{\kappa - \ccccc\delta} + \| [J \cdot N, \mathcal{G}(0,0)]\|_{\kappa - \ccccc\delta}$, with
	$$
	\| [J \cdot N,\ \mathcal{G}(0,0)]\|_{\kappa - \ccccc\delta} \leq \sup_{x \in \Lambda} \sum_{S \ni x \atop S \in \mathcal{P}_c(\Lambda)} \sum_{(k, l) \neq (0,0)} \frac{|J \cdot k|}{|\omega \cdot l + J \cdot k|} \| (\widehat{\mathcal{P}_S})_{l,k} \|_{\mathrm{op}} e^{(\kappa - \ccccc\delta )|S|}\,.
	$$
	Then the second estimate in \eqref{bagno.maria} follows using Diophantine condition \eqref{eq:siamo.dei.cani}, Cauchy inequality, equation \eqref{eq:norma.A.norma.A.tau}, and estimate \eqref{eq:estimate.z.std} on $\|\langle P \rangle\|_{\kappa - \ccccc\delta}$.
	
\end{proof}
\subsection{Homological Equation for Fast-Forcing Perturbations} In this Subsection, we study the solution to the homological equation \eqref{eq:Hom.Introduttiva} in the fast-forcing case, that is, given $\lambda \gg 1$ and $P \in \mathcal{O}^\strong_{\kappa, \rho}$ for some $\kappa, \rho>0$, we want to solve the equation
 \begin{equation}\label{hom.original.ff}
 \ii [J \cdot N, G(\varphi)] + \lambda \omega \cdot \partial_\varphi G(\varphi) + P(\varphi) = Z\,,
 \end{equation}
 for some operators $G \in \mathcal{O}^\strong_{\kappa', \rho'}$ and $Z \in \mathcal{O}^\strong_{\kappa'}$ with $\kappa' \leq \kappa$, $\rho' \leq \rho$. 
 {In particular, we set $Z := \langle P \rangle$, with $\langle P \rangle$ defined in \eqref{zeta.esplicita} and,} after having defined $\mathcal{G}(\theta,\varphi)$ and $\mathcal{P}(\theta,\varphi)$ as in \eqref{g.tau.p.tau}, we first solve $\forall \theta \in \mathbb{T}^r$ the following equation:
 \begin{equation}\label{hom.tau.ff}
 \ii [J \cdot N, \mathcal{G}(\theta, \varphi)] + \lambda\omega \cdot \partial_\varphi \mathcal{G}(\theta, \varphi) + \mathcal{P}(\theta, \varphi) = \langle P \rangle\,.
 \end{equation}
 The strategy {we use} to solve this equation (up to higher order terms in $\lambda^{-1}$) makes use of the splitting of Fourier modes into $\uv$/$\ir$ regions. That is, given
 \begin{equation}\label{m1.m2}
 K := \frac{\gamma_\omega}{2|J|}\lambda^{\frac 12}\,, \quad L := \lambda^{\frac{1}{2\tau_\omega}}\,,
 \end{equation}
 we define the following sets:
 \begin{gather}
	 \mathcal{K}_{\uv} := \left\{ k \in \Z^r\ \left|\ |k| > K\right.\right\}\,, \quad
	 \mathcal{K}_{\ir} := \left\{ k \in \Z^r\ \left|\ |k| \leq K\right.\right\}\,,\\
	 \mathcal{L}_{\uv} := \left\{ l \in \Z^m\ \left|\ |l| > L\right.\right\}\,, \quad
	 \mathcal{L}_{\ir} := \left\{ l \in \Z^m\ \left|\ |l| \leq L\right.\right\}\,,
 \end{gather}
 and for $\mathtt{i}, \mathtt{j} \in \{\uv, \ir\}$,
 \begin{equation}\label{p.a.b}
 	\mathcal{P}^{\mathtt{i}, \mathtt{j}} (\theta, \varphi) := \sum_{S \in \mathcal{P}_c(\Lambda) } \sum_{k \in \mathcal{K}_{\mathtt{i}}} \sum_{l \in \mathcal{L}_{\mathtt{j}}} (\widehat{\mathcal{P}_S})_{l, k} e^{\ii k \cdot \theta} e^{\ii l \cdot \varphi}\,.
 \end{equation}
 Consequently, we also define
 \begin{equation}\label{p.normale.a.b}
 {P}^{\mathtt{i}, \mathtt{j}} (\varphi) := \mathcal{P}^{\mathtt{i}, \mathtt{j}} (0, \varphi)\,.
 \end{equation}
 Also, for simplicity, we define
 \begin{equation}\label{p.uv.p.ir}
 \begin{gathered}
 \mathcal{P}^{\uv} := \mathcal{P}^{\uv, \ir}  + \mathcal{P}^{\uv, \uv}  + \mathcal{P}^{\ir, \uv}\,, \quad
 \mathcal{P}^{\ir} := \mathcal{P}^{\ir , \ir}\,,\\
 P^{\uv} := \mathcal{P}^{\uv}(0 ,\cdot)\,, \quad P^{\ir} := \mathcal{P}^{\ir}(0 ,\cdot)\,. 
 \end{gathered}
 \end{equation}
 First, we prove that the $\uv$ terms can be treated as remainders in the solution of the homological equation:
 \begin{lemma}\label{lem.uv.small}
  Let $\kappa, \rho, \zeta, \sigma, \eta >0$ and  $\mathcal{P} \in \mathcal{O}^\strong_{\kappa, \rho + \sigma, \zeta + \eta}$. One has
  \begin{gather}
  \label{uv.small}
  \| \mathcal{P}^{\uv}\|_{\kappa, \rho, \zeta} \leq \frac{2}{e} \max\left\{\frac{1}{ \eta K}, \frac{1}{\sigma L}\right\} \| \mathcal{P}\|_{\kappa, \rho + \sigma, \zeta + \eta}\,,\\
  \label{uv.ir.trivial}
  \| \mathcal{P}^{\uv}\|_{\kappa, \rho, \zeta} \leq \| \mathcal{P}\|_{\kappa, \rho, \zeta}\,, \quad \| \mathcal{P}^{\ir}\|_{\kappa, \rho, \zeta} \leq \| \mathcal{P}\|_{\kappa, \rho, \zeta}\,.
  \end{gather}
 \end{lemma}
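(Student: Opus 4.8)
The plan is purely to read the estimates off the Fourier representation \eqref{A.tau.fou} of $\mathcal{P}$. By \eqref{p.a.b}--\eqref{p.uv.p.ir}, $\mathcal{P}^{\uv}$ is obtained from $\mathcal{P}$ by retaining exactly the Fourier modes $(k,l)$ with $|k|>K$ or $|l|>L$ (the three pieces $\mathcal{P}^{\uv,\ir},\mathcal{P}^{\uv,\uv},\mathcal{P}^{\ir,\uv}$ being disjointly supported in Fourier space), while $\mathcal{P}^{\ir}=\mathcal{P}^{\ir,\ir}$ retains those with $|k|\le K$ and $|l|\le L$. In particular both are sub-sums of the Fourier series of $\mathcal{P}$ with the same coefficients $(\widehat{\mathcal{P}_S})_{l,k}$, which are strongly local because $\mathcal{P}$ is; hence by Lemma \ref{lem.fourier.e.bello} they are again strongly local families of operators, so the norms appearing in the statement are well defined. (The same reasoning via Lemma \ref{besciamella.alla.cannella} shows $\mathcal{P}^{\mathtt{i},\mathtt{j}}(\theta,\varphi)=e^{\ii\theta\cdot N}P^{\mathtt{i},\mathtt{j}}(\varphi)e^{-\ii\theta\cdot N}$, consistent with \eqref{p.normale.a.b}, although that is not needed for this lemma.)

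For \eqref{uv.ir.trivial} there is nothing to prove: $\|\cdot\|_{\kappa,\rho,\zeta}$ is defined in \eqref{def:norma.tau} as a supremum over $x\in\Lambda$ of sums of nonnegative terms indexed by $(S,k,l)$, and passing from $\mathcal{P}$ to $\mathcal{P}^{\uv}$ or $\mathcal{P}^{\ir}$ merely discards some of these terms, which cannot increase the norm.

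For \eqref{uv.small} I would first use the set identity $\{(k,l):|k|>K\text{ or }|l|>L\}=\{|k|>K\}\cup\{|l|>L\}$ and a union bound to bound $\|\mathcal{P}^{\uv}\|_{\kappa,\rho,\zeta}$ by the sum of the two partial norms obtained from \eqref{def:norma.tau} by restricting the $(k,l)$-sum to $\{|k|>K\}$ and to $\{|l|>L\}$ respectively. On the first piece I trade the extra $\zeta$-regularity for decay in $|k|$: writing $e^{\zeta|k|}=e^{(\zeta+\eta)|k|}e^{-\eta|k|}$ and using the elementary convexity bound $x e^{-x}\le e^{-1}$, i.e. $e^{-\eta|k|}\le\frac{1}{e\,\eta|k|}\le\frac{1}{e\,\eta K}$ for $|k|>K$, together with $e^{-\sigma|l|}\le1$, one gets $e^{\zeta|k|}e^{\rho|l|}\le\frac{1}{e\,\eta K}\,e^{(\zeta+\eta)|k|}e^{(\rho+\sigma)|l|}$; summing over $(S,k,l)$ bounds the first partial norm by $\frac{1}{e\,\eta K}\|\mathcal{P}\|_{\kappa,\rho+\sigma,\zeta+\eta}$. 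Symmetrically, on the second piece I trade the extra $\rho$-regularity against $|l|$ and obtain $\frac{1}{e\,\sigma L}\|\mathcal{P}\|_{\kappa,\rho+\sigma,\zeta+\eta}$. Adding the two contributions and using $\frac{1}{e\,\eta K}+\frac{1}{e\,\sigma L}\le\frac{2}{e}\max\{\frac{1}{\eta K},\frac{1}{\sigma L}\}$ yields \eqref{uv.small}.

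The argument is entirely elementary and I expect no real obstacle; the only thing to get right is the direction of the trade-off — one must spend the $\eta$ of regularity against $|k|$ precisely on the modes with $|k|$ large, and the $\sigma$ against $|l|$ on the modes with $|l|$ large — and the use of the sharp bound $e^{-x}\le1/(ex)$ rather than the cruder $e^{-x}\le e^{-\eta K}$, so that the gain is the polynomial factor $1/(\eta K)$, $1/(\sigma L)$ that is later matched against the choices $K\sim\lambda^{1/2}$, $L\sim\lambda^{1/(2\tau_\omega)}$ of \eqref{m1.m2}.
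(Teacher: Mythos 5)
Your proposal is correct and follows essentially the same route as the paper: the trivial bound \eqref{uv.ir.trivial} by dropping Fourier modes, and for \eqref{uv.small} a split into the region $|k|>K$ (trading the extra $\eta$ of $\theta$-regularity via the Cauchy bound $e^{-\eta|k|}\le 1/(e\eta K)$) and the region $|l|>L$ (trading $\sigma$ symmetrically), then summing and using $\tfrac{1}{e\eta K}+\tfrac{1}{e\sigma L}\le\tfrac{2}{e}\max\{\tfrac{1}{\eta K},\tfrac{1}{\sigma L}\}$. The only cosmetic difference is that the paper groups $\mathcal{P}^{\uv,\ir}+\mathcal{P}^{\uv,\uv}$ versus $\mathcal{P}^{\ir,\uv}$ as an exact disjoint decomposition, while you over-count the corner $\{|k|>K,\ |l|>L\}$ via a union bound, which is equally valid since all terms are nonnegative.
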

 \begin{proof}
	Estimates \eqref{uv.ir.trivial} immediately follow from the definition of $\mathcal{P}^{\uv}, \mathcal{P}^\ir$. Concerning \eqref{uv.small}, one has
 	\begin{align*}
 	\left\| \mathcal{P}^{\uv}\right\|_{\kappa, \rho, \zeta} &\leq \left\| \mathcal{P}^{\uv, \ir} + \mathcal{P}^{\uv, \uv}\right\|_{\kappa, \rho, \zeta} + \left\|  \mathcal{P}^{\ir, \uv} \right\|_{\kappa, \rho, \zeta}\,.
 	\end{align*}
 	Using Cauchy estimate, the first summand is estimated by
 	\begin{align*}
	\left\| \mathcal{P}^{\uv, \ir} + \mathcal{P}^{\uv, \uv}\right\|_{\kappa, \rho, \zeta} &\leq \frac{1}{K} \sup_{x \in \Lambda} \sum_{S \ni x} \sum_{|k| > K} \sum_{l \in \Z^m} \Big\| (\widehat{\mathcal{P_S}})_{l,k} \Big\|_{\mathrm{op}} |k|  e^{\zeta|k|} e^{\rho|l|} e^{\kappa |S|}\\
	& \leq \frac{1}{e \eta K}  \| \mathcal{P}\|_{\kappa, \rho, \zeta + \eta} \,,
 	\end{align*}
 	and an analogous estimate entails
 	$
 	\left\| \mathcal{P}^{\ir, \uv} \right\|_{\kappa, \rho, \zeta} \leq \frac{1}{e \sigma L} \| \mathcal{P}\|_{\kappa, \rho + \sigma, \zeta}
 	$.
 \end{proof}
 \begin{corollary}\label{cor.p.small}
 Let $\kappa, \rho>0$ and $P \in \mathcal{O}^\strong_{\kappa, \rho}$. Defining $\ccccc$ as in \eqref{costantina}, $\forall 0<\delta < \min\{\rho, {\tt c}^{-1} \kappa\}$  one has $ P^{\uv} \in \mathcal{O}^\strong_{\kappa - \ccccc\delta, \rho - \delta}$, with
 \begin{gather}
 \label{p.uv.small}
 	\| P^{\uv}\|_{\kappa - \ccccc\delta, \rho - \delta} \leq \frac{2^{2r+1}}{e\delta^{r+1}} (1+\delta)^r\max\left\{\frac{1}{K}, \frac{1}{L}\right\} \|P\|_{\kappa, \rho}\,.
 \end{gather}
 \end{corollary}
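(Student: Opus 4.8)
The plan is to prove the bound first in the $\theta$-extended picture, where truncation to the infrared box is transparent, and then to specialize to $\theta=0$. Concretely, set $\zeta:=\delta$ and $\sigma:=8\max_{\alpha}\|N^{(\alpha)}\|_{0}\,\delta$, so that $r\sigma=\ccccc\delta$ and the smallness requirement \eqref{zitina} is met (with equality). Introducing $\mathcal{P}(\theta,\varphi):=e^{\ii\theta\cdot N}P(\varphi)e^{-\ii\theta\cdot N}$ as in \eqref{g.tau.p.tau}, Item (iii) of Lemma \ref{lem:op.tau} gives $\mathcal{P}\in\mathcal{O}^\strong_{\kappa-\ccccc\delta,\rho,\delta}$ together with
\begin{equation*}
\|\mathcal{P}\|_{\kappa-\ccccc\delta,\rho,\delta}\;\leq\;2^{2r}\left(\frac{1+\delta}{\delta}\right)^r\|P\|_{\kappa,\rho}\,.
\end{equation*}

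Next, writing $\mathcal{P}=\mathcal{P}^\uv+\mathcal{P}^\ir$, I apply Lemma \ref{lem.uv.small} to the ultraviolet part, tuning its parameters so that $\delta$ units of $\varphi$-analyticity and $\delta$ units of $\theta$-analyticity are spent (i.e.\ $\sigma=\eta=\delta$ in the notation of that lemma). The limiting output $\theta$-radius $\zeta=0$ is reached either directly, since the proof of Lemma \ref{lem.uv.small} works verbatim for $\zeta=0$, or by applying it with output $\theta$-radius $\zeta_0\in(0,\delta)$ and $\eta=\delta-\zeta_0$, using the monotonicity of $\|\cdot\|_{\kappa,\rho,\zeta}$ in $\zeta$ (cf.\ \eqref{mandarino}), and letting $\zeta_0\to0^+$. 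This yields
\begin{equation*}
\|\mathcal{P}^\uv\|_{\kappa-\ccccc\delta,\rho-\delta,0}\;\leq\;\frac{2}{e\delta}\max\left\{\frac1K,\frac1L\right\}\|\mathcal{P}\|_{\kappa-\ccccc\delta,\rho,\delta}\,,
\end{equation*}
and composing the two displays reproduces the numerical factor $\frac{2^{2r+1}}{e\delta^{r+1}}(1+\delta)^r\max\{1/K,1/L\}$ of \eqref{p.uv.small}.

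It remains to transfer the estimate from $\mathcal{P}^\uv$ to $P^\uv$. Since $\mathcal{P}$ is literally of the form $e^{\ii\theta\cdot N}(\cdot)e^{-\ii\theta\cdot N}$, Lemma \ref{besciamella.alla.cannella} gives $[N^{(\alpha)},(\widehat{\mathcal{P}_S})_{l,k}]=k_\alpha(\widehat{\mathcal{P}_S})_{l,k}$; keeping only the ultraviolet coefficients preserves this relation, so Lemma \ref{besciamella.alla.cannella} applies again to $\mathcal{P}^\uv$ and shows $\mathcal{P}^\uv(\theta,\varphi)=e^{\ii\theta\cdot N}P^\uv(\varphi)e^{-\ii\theta\cdot N}$ with $P^\uv=\mathcal{P}^\uv(0,\cdot)$ as in \eqref{p.uv.p.ir}. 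The left inequality in \eqref{mandarino} then gives $\|P^\uv\|_{\kappa-\ccccc\delta,\rho-\delta}\leq\|\mathcal{P}^\uv\|_{\kappa-\ccccc\delta,\rho-\delta,0}$, which is the claimed bound. Finally, strong locality of $P^\uv$ follows from Lemma \ref{lem.fourier.e.bello}: the coefficients $(\widehat{\mathcal{P}_S})_{l,k}$ are strongly local because $\mathcal{P}$ is, and discarding a subfamily of them does not affect this, so by the converse implication in Lemma \ref{lem.fourier.e.bello} the anti-Fourier transform $\mathcal{P}^\uv$, and hence $P^\uv$, is strongly local for every $(\theta,\varphi)$.

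The argument is entirely routine; the only point demanding a little care is the bookkeeping of parameters, namely arranging that the single parameter $\delta$ plays simultaneously the role of the $\theta$-radius $\zeta$ in Lemma \ref{lem:op.tau} and of both losses $\eta$ (in $\theta$) and $\sigma$ (in $\varphi$) in Lemma \ref{lem.uv.small}, so that the induced losses $r\sigma=\ccccc\delta$ in $\kappa$ and $\delta$ in $\rho$ land one precisely in the spaces $\mathcal{O}^\strong_{\kappa-\ccccc\delta,\rho-\delta}$ of the statement. I do not foresee any genuine obstacle.
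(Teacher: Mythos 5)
Your proof is correct and takes essentially the same route as the paper: convert to the $\theta$-extended operator $\mathcal{P}$ via Item (iii) of Lemma \ref{lem:op.tau} with $\zeta=\delta$ and $\sigma=8\max_\alpha\|N^{(\alpha)}\|_0\,\delta$, apply Lemma \ref{lem.uv.small} to isolate the ultraviolet part, and return to $P^\uv$ via \eqref{mandarino}. Your handling of the output $\theta$-radius $\zeta=0$ and of strong locality via Lemma \ref{lem.fourier.e.bello} is a more careful spelling-out of what the paper's two-line proof leaves implicit, and the bookkeeping reproduces the constant in \eqref{p.uv.small} exactly.
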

 \begin{proof}
 	Define $\zeta:= \delta$ and $\sigma := 8 \max_\alpha \|N^{(\alpha)}\|_{0}$; then estimate \eqref{zitina} is fulfilled and by Item (iii) of Lemma \ref{lem:op.tau} one has
 	$$
 	\|\mathcal{P}\|_{\kappa - r \sigma, \rho, \zeta } = \| \mathcal{P}\|_{\kappa - \ccccc \delta, \rho, \delta} \leq 2^{2r}\left(\frac{1+\delta}{\delta}\right)^r \|P\|_{\kappa, \rho}\,.
 	$$
 	Then one applies Lemma \ref{lem.uv.small} with $\zeta =\eta = \sigma = \delta$ to obtain
 	$$
 	\|\mathcal{P}^{\uv}\|_{\kappa - \ccccc \delta, \rho - \delta, 0} \leq \frac{2}{e \delta}  \|\mathcal{P}\|_{\kappa - \ccccc \delta, \rho, \delta}\,,
 	$$
 	and \eqref{p.uv.small} follow again by Item (iii) of Lemma \ref{lem:op.tau}.
 \end{proof}
 
Given $\kappa,\zeta,\rho > 0$, $\mathcal{P} \in \mathcal{O}^\strong_{\kappa,\rho,\zeta}$ we define
\begin{equation}
	\langle \mathcal{P} \rangle_{\mathbb{T}^m}(\theta) \;:=\; \frac{1}{(2 \pi)^m} \int_{\mathbb{T}^m} \mathcal{P}(\theta,\varphi) \, \ud \varphi \, .
\end{equation} 
 We are now ready to solve an equation of the form of \eqref{hom.tau.ff}, with $\mathcal{P} = \mathcal{P}^\ir$:
 \begin{lemma}[$\theta$-dependent homological equation]\label{lem.Senza.Nome}
 	Let $\kappa,\rho,\zeta>0$ and $\mathcal{P} \in \mathcal{O}^\strong_{\kappa,\rho,\zeta}$, there exists $\mathcal{G} \in \mathcal{O}^{\strong}_{\kappa,\rho,\zeta-\delta}$ for any $0< \delta \leq \zeta$ which solves
 	\begin{equation}\label{eq:HomologicalWithLambda}
 		\ii [J\cdot N, \mathcal{G}]+\mathcal{P}^\ir + \lambda \omega \cdot \partial_\varphi \mathcal{G} = \langle P \rangle\,,
 	\end{equation}
 	with $\langle P \rangle$ as in \eqref{zeta.esplicita}
 	and
 	\begin{equation}
 		\Vert \mathcal{G} \Vert_{\kappa, \rho,\zeta-\delta} \; \leq \; \frac{4}{\gamma_\omega} \frac{1}{\lambda^{1/2}} \Vert \mathcal{P}-\langle \mathcal{P} \rangle_{\mathbb{T}^m} \Vert_{\kappa,\rho,\zeta} + \frac{2 \tau_J^{\tau_J}}{\delta^{\tau_J} e^{\tau_J} \gamma_J} \Vert \langle \mathcal{P} \rangle_{\mathbb{T}^m} \Vert_{\kappa,0,\zeta} \, .
 	\end{equation}
 \end{lemma}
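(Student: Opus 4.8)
The plan is to solve \eqref{eq:HomologicalWithLambda} by passing to Fourier coefficients in both $\theta$ and $\varphi$, splitting the treatment of the zero-mode-in-$\varphi$ part from the rest. Write $\mathcal{P}^\ir(\theta,\varphi) = \sum_{S} \sum_{k \in \mathcal{K}_\ir} \sum_{l \in \mathcal{L}_\ir} (\widehat{\mathcal{P}_S})_{l,k} e^{\ii k\cdot\theta} e^{\ii l\cdot\varphi}$ and seek $\mathcal{G}$ with the same mode structure. Using $\ii[J\cdot N,\mathcal{G}] = (J\cdot\partial_\theta)\mathcal{G}$, equation \eqref{eq:HomologicalWithLambda} becomes, mode by mode,
\begin{equation}
\ii(J\cdot k + \lambda\,\omega\cdot l)\,(\widehat{\mathcal{G}_S})_{l,k} + (\widehat{\mathcal{P}_S})_{l,k} = 0 \qquad \text{for } (k,l)\neq(0,0),\ k\in\mathcal{K}_\ir,\ l\in\mathcal{L}_\ir,
\end{equation}
and $(\widehat{\mathcal{G}_S})_{0,0}=0$, while $\langle P\rangle = \sum_S (\widehat{\mathcal{P}_S})_{0,0}$ matches the right-hand side. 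The key point is to bound the denominator $|J\cdot k + \lambda\,\omega\cdot l|$ from below, distinguishing two regimes: (1) $l=0$, $k\neq 0$, where $|J\cdot k|\geq \gamma_J/|k|^{\tau_J}$ by the Diophantine condition \eqref{eq:siamo.dei.cani.ff.1}; (2) $l\neq 0$ (with $|l|\leq L$), where $|\lambda\,\omega\cdot l|\geq \lambda\gamma_\omega/|l|^{\tau_\omega}\geq \lambda\gamma_\omega L^{-\tau_\omega} = \gamma_\omega\lambda^{1/2}$ by \eqref{eq:siamo.dei.cani.ff.2} and the choice $L=\lambda^{1/(2\tau_\omega)}$ in \eqref{m1.m2}, while $|J\cdot k|\leq |J|\,|k|\leq |J| K = \tfrac{\gamma_\omega}{2}\lambda^{1/2}$ since $k\in\mathcal{K}_\ir$ means $|k|\leq K=\tfrac{\gamma_\omega}{2|J|}\lambda^{1/2}$; hence $|J\cdot k + \lambda\,\omega\cdot l|\geq \tfrac{\gamma_\omega}{2}\lambda^{1/2}$. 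So the zero-$\varphi$-mode part of $\mathcal{G}$ (corresponding to $\langle\mathcal{P}\rangle_{\T^m}$, minus its $(0,0)$ mode) loses a factor $\gamma_J^{-1}|k|^{\tau_J}$, absorbed by an $e^{-\delta|k|}$ from lowering $\zeta$ to $\zeta-\delta$, producing the $\tfrac{2\tau_J^{\tau_J}}{\delta^{\tau_J}e^{\tau_J}\gamma_J}$ factor via the standard estimate $\sup_{x\geq 0} x^{\tau_J}e^{-\delta x} = (\tau_J/(e\delta))^{\tau_J}$; the remaining part $\mathcal{P}-\langle\mathcal{P}\rangle_{\T^m}$ only loses the harmless factor $4\gamma_\omega^{-1}\lambda^{-1/2}$ (the $4$ covering the $\tfrac{\gamma_\omega}{2}$ and a factor $2$ slack). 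Summing over $S$, $k$, $l$ then gives the stated norm bound directly.

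After the Fourier estimates, I would verify the two structural claims: that $\mathcal{G}\in\mathcal{O}^\strong_{\kappa,\rho,\zeta-\delta}$ and that $\mathcal{G}$ actually has the form $e^{\ii\theta\cdot N}G(\varphi)e^{-\ii\theta\cdot N}$ for some genuine operator $G$. Strong locality of each $(\widehat{\mathcal{G}_S})_{l,k}$ follows because dividing by the scalar $\ii(J\cdot k+\lambda\,\omega\cdot l)$ preserves the commutation relation with the $N^{(\alpha)}_{S'}$ for $S'\nsubseteq S$, so by Lemma \ref{lem.fourier.e.bello} the reconstructed $\mathcal{G}(\theta,\varphi)$ is strongly local everywhere; the quantitative norm control is exactly what we just established. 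The representation as a conjugate of a $\theta$-independent operator follows from Lemma \ref{besciamella.alla.cannella}: dividing by the scalar does not affect the identity $[N^{(\alpha)},(\widehat{\mathcal{P}_S})_{l,k}]=k_\alpha(\widehat{\mathcal{P}_S})_{l,k}$, which $\mathcal{P}$ inherits from being of the form $e^{\ii\theta\cdot N}P(\varphi)e^{-\ii\theta\cdot N}$, so $(\widehat{\mathcal{G}_S})_{l,k}$ satisfies the same relation and $\mathcal{G}$ is again such a conjugate; then $G(\varphi):=\mathcal{G}(0,\varphi)$ is the desired operator, and one checks it solves \eqref{hom.original.ff} at $\theta=0$ by just evaluating the $\theta$-equation there.

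The main obstacle is the small-denominator bookkeeping in the mixed regime $k\neq 0$, $l\neq 0$: one must be careful that the choices $K=\tfrac{\gamma_\omega}{2|J|}\lambda^{1/2}$ and $L=\lambda^{1/(2\tau_\omega)}$ in \eqref{m1.m2} are precisely calibrated so that the unperturbed spacing $\lambda|\omega\cdot l|$ dominates $|J\cdot k|$ uniformly over the $\ir$ block, i.e.\ that separating the two energy scales works with the constant $\tfrac{1}{2}$ and leaves a usable lower bound $\sim\gamma_\omega\lambda^{1/2}$ rather than something degenerating in $\lambda$. Everything else — the $x^{\tau}e^{-\delta x}$ optimization, the summation over $S$ and the Fourier indices, and the reconstruction of strong locality and of the conjugated form — is routine given the lemmas already proved in this section. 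Note this lemma only handles the $\ir$ part of the perturbation; the $\uv$ part has already been shown to be a small remainder in Corollary \ref{cor.p.small}, so the split in \eqref{p.uv.p.ir} is what makes the global iteration work, but that is outside the scope of this particular statement.
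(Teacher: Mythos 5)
Your proposal matches the paper's proof essentially line for line: pass to Fourier modes, note that $\ii[J\cdot N,\mathcal{G}]=(J\cdot\partial_\theta)\mathcal{G}$, split the small-denominator estimate into $l=0$ (Diophantine bound on $J$, absorbed by $e^{-\delta|k|}$ and a Cauchy estimate) and $l\neq 0$ (where $\lambda|\omega\cdot l|\geq\gamma_\omega\lambda^{1/2}$ dominates $|J\cdot k|\leq|J|K=\tfrac{\gamma_\omega}{2}\lambda^{1/2}$ by the calibration of $K,L$ in \eqref{m1.m2}), and invoke Lemma~\ref{lem.fourier.e.bello} for strong locality. The only extra item you include — verifying via Lemma~\ref{besciamella.alla.cannella} that $\mathcal{G}$ has the conjugated form $e^{\ii\theta\cdot N}G(\varphi)e^{-\ii\theta\cdot N}$ — is not actually required by the class $\mathcal{O}^\strong_{\kappa,\rho,\zeta}$ defined in \eqref{ostrong.tau}, but it is harmless and becomes relevant only in Corollary~\ref{cor.trova.g.ff}.
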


\begin{proof}
 	We observe that
 $$
 \ii [J \cdot N, \mathcal{G}(\theta, \varphi)] = (J \cdot \partial_{\theta}) \mathcal{G}(\theta, \varphi) = \sum_{S \in \mathcal{P}_c(\Lambda)} \sum_{l \in \Z^m} \sum_{k \in \Z^r} \ii J \cdot k \, (\widehat{\mathcal{G}_S})_{l,k} e^{\ii k \cdot \theta} e^{\ii \l \cdot \varphi}\,.
 $$
The first equation in \eqref{eq:HomologicalWithLambda} is equivalent to
 \begin{equation}\label{derivatives.only}
 (J \cdot \partial_\theta) \mathcal{G} + \lambda \omega \cdot \partial_\varphi \mathcal{G} = -\mathcal{P}^\ir + \langle P \rangle\,,
 \end{equation}
 and, passing to Fourier coefficients,
 $$
 \ii \left(J \cdot k + \omega \cdot l\right) (\widehat{\mathcal{G}_{S}})_{l,k} + (\widehat{\mathcal{P}_S})_{l,k} = 0 \quad \forall (l,k) \neq (0,0)\,, \quad (l,k) \in \mathcal{K}_{\ir} \times \mathcal{L}_{\ir}\,.
 $$
 Then $\forall S \in \mathcal{P}_c(\Lambda)$ one sets
 \begin{equation}\label{ghat.phat.lambda}
 (\widehat{\mathcal{G}_{S}})_{l,k} =
 \begin{cases}
 -\dfrac{(\widehat{\mathcal{P}_S})_{l,k}}{\ii \left(J \cdot k + \omega \cdot l\right)} &\forall (l,k) \in \mathcal{K}_\ir \times \mathcal{L}_\ir \setminus\{(0,0)\}\,,\\
  0 & \textrm{otherwise}\,.
 \end{cases}
 \end{equation}
 We now provide an estimate on the small divisors appearing in \eqref{ghat.phat.lambda}. We estimate them differently according if $l = 0$ or not. Due to the choice of $K, L$ as in \eqref{m1.m2} and to the fact that $\omega$ is Diophantine \eqref{eq:siamo.dei.cani}, if $l \neq 0$ one has
 \begin{equation}\label{delicato.equilibrio}
 |J \cdot k + \lambda \omega \cdot l| \geq \lambda |\omega \cdot l| - |J| |k| \geq \lambda \frac{\gamma_\omega}{ |l|^{\tau_\omega}} - |J||k| \geq \lambda \frac{\gamma_\omega}{L^{\tau_\omega}} - |J| K \geq \frac{\gamma_\omega}{2} \lambda^{\frac{1}{2}}\,.
 \end{equation}
 On the other hand, if $l=0$ due to the fact that $J$ is Diophantine \eqref{eq:siamo.dei.cani} we have
 $
 |J \cdot k | \geq \frac{\gamma_J}{|k|^{\tau_J}}.
 $
 Then $\forall 0<\delta \leq \zeta$ one has $\| \mathcal{G}\|_{\kappa, \rho, \zeta-\delta} \leq T_1 + T_2$, with
  \begin{align*}
 T_1 &:=  \sup_{x \in \Lambda} \sum_{S\in \mathcal{P}_c(\Lambda)} \sum_{0 \neq k \in \mathcal{K}_\ir} \| (\widehat{\mathcal{G}_S})_{k,0}\|_{\mathrm{op}}e^{(\zeta-\delta)|k|} e^{|S|\kappa}\\
 & \stackrel{\eqref{eq:siamo.dei.cani.ff.1}}{\leq} \sup_{x \in \Lambda} \sum_{S\in \mathcal{P}_c(\Lambda)} \sum_{0 \neq k \in \mathcal{K}_\ir} \Vert (\widehat{\mathcal{P}_S})_{l,k}\|_{\mathrm{op}} \frac{|k|^{\tau_J}}{\gamma_J} e^{-\delta|k|}e^{\zeta |k|}  e^{\kappa |S|}\\
 &\leq \sup_{x \in \Lambda} \sum_{S\in \mathcal{P}_c(\Lambda)}  \sum_{0 \neq k \in \mathcal{K}_\ir} \Vert (\widehat{\mathcal{P}_S})_{l,k}\|_{\mathrm{op}} \frac{\tau_J^{\tau_J}}{e^{\tau_J} \delta^{\tau_J}\gamma_J} e^{\zeta |k|} e^{\kappa |S|} \leq \frac{\tau_J^{\tau_J}}{e^{\tau_J } \delta^{\tau_J} \gamma_J} \|\langle \mathcal{P}\rangle_{\mathbb{T}^m}\|_{\kappa, \rho , \zeta }
 \end{align*}
 and
 \begin{align*}
 T_2 &:=  \sup_{x \in \Lambda} \sum_{S\in \mathcal{P}_c(\Lambda)} \sum_{0 \neq l \in \mathcal{L}_\ir} \sum_{k \in \mathcal{K}_\ir} \| (\widehat{\mathcal{G}_S})_{l,k}\|_{\mathrm{op}} e^{\rho|l|}e^{(\zeta-\delta)|k|}e^{|S|\kappa}\\
 & \stackrel{\eqref{delicato.equilibrio}}{\leq} \sup_{x \in \Lambda} \sum_{S\in \mathcal{P}_c(\Lambda)} \sum_{0 \neq l \in \mathcal{L}_\ir} \sum_{k \in \mathcal{K}_\ir} \| (\widehat{\mathcal{P}_S})_{l,k}\|_{\mathrm{op}} \frac{2}{\gamma_{\omega} \lambda^{\frac 1 2}} e^{\rho|l|}e^{\zeta |k|} e^{\kappa |S|} \leq  \frac{2}{\gamma_{\omega} \lambda^{\frac 1 2}}  \|\mathcal{P} - \langle\mathcal{P}\rangle_{\mathbb{T}^m} \|_{\kappa, \rho , \zeta}\,,
 \end{align*}
 where in the last step we have used the fact that
 $$
	\mathcal{P} - \langle\mathcal{P}\rangle_{\mathbb{T}^m} = \sum_{S \in \mathcal{P}_c(\Lambda)} \sum_{k \in \Z^r, l \in \Z^m} (\widehat{\mathcal{P}_S})_{l,k}- \sum_{S\in \mathcal{P}_c(\Lambda)} \sum_{k \in\Z^r} (\widehat{\mathcal{P}_S})_{k,0} = \sum_{S \in \mathcal{P}_c(\Lambda)} \sum_{k \in \Z^r, 0 \neq l \in \Z^m} (\widehat{\mathcal{P}_S})_{l,k}\,.
 $$
 It remains to prove that $\mathcal{G}(\theta, \varphi)$ is strongly {local} $\forall (\theta, \varphi)$. This follows from the fact that $\mathcal{P}(\theta, \varphi)$ is strongly {local}, from Lemma \ref{lem.fourier.e.bello} and from the fact that, by \eqref{ghat.phat}, the Fourier coefficients of $\mathcal{G}_S(\theta, \varphi)$ are strongly local if and only if the Fourier coefficients of $\mathcal{P}_S(\theta, \varphi)$ are strongly local.
 \end{proof}

 \begin{corollary}\label{cor.trova.g.ff}
 	Let $\kappa, \rho>0$ and $P \in \mathcal{O}^\strong_{\kappa, \rho}.$ Let $\ccccc$ be as in \eqref{costantina}, 
 	there exists  $G \in \mathcal{O}^\strong_{\kappa - \ccccc \delta, \rho}$  for any $0<\delta< \min\{\rho, \ccccc^{-1} \kappa\}$ such that 
\begin{equation}\label{hom.tau.lambda}
 \ii [J \cdot N, G] + \lambda\omega \cdot \partial_\varphi G + P^{\ir} = \langle P \rangle\,.
\end{equation} 	
 Moreover, $\forall \delta>0$ one has
 	\begin{equation}\label{stime.g.z.lambda}
 	\| G\|_{\kappa- \ccccc \delta, \rho} \leq 2^{2r} \left(\frac{\delta+1}{\delta}\right)^r \left(\frac{4}{\gamma_\omega \lambda^{1/2}} \Vert P - \langle P \rangle_{\mathbb{T}^m} \Vert_{\kappa,\rho} + \frac{2 \tau_J^{\tau_J}}{\gamma_J e^{\tau_J} \delta^{\tau_J}} \Vert \langle P \rangle_{\mathbb{T}^m} \Vert_{\kappa,0} \right)\,.
 		\end{equation}
 \end{corollary}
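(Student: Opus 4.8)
The plan is to follow verbatim the structure of the proof of Lemma~\ref{cor.trova.g}, with the fast-forcing $\theta$-dependent homological equation of Lemma~\ref{lem.Senza.Nome} playing the role that Lemma~\ref{lem.hom.tau} plays there. Fix $0<\delta<\min\{\rho,\ccccc^{-1}\kappa\}$ and set $\zeta:=\delta$ and $\sigma:=8\max_\alpha\|N^{(\alpha)}\|_0\,\delta$, so that $r\sigma=\ccccc\delta$ by \eqref{costantina} and the smallness condition \eqref{zitina} holds (with equality). With $\mathcal{P}(\theta,\varphi):=e^{\ii\theta\cdot N}P(\varphi)e^{-\ii\theta\cdot N}$ as in \eqref{g.tau.p.tau}, Item~(iii) of Lemma~\ref{lem:op.tau} gives $\mathcal{P}\in\mathcal{O}^\strong_{\kappa-\ccccc\delta,\rho,\delta}$. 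Since averaging over $\varphi\in\mathbb{T}^m$ commutes with conjugation by $e^{\ii\theta\cdot N}$ (the $N^{(\alpha)}$ being $\varphi$-independent), we have $\langle\mathcal{P}\rangle_{\mathbb{T}^m}(\theta)=e^{\ii\theta\cdot N}\langle P\rangle_{\mathbb{T}^m}e^{-\ii\theta\cdot N}$ and $\mathcal{P}-\langle\mathcal{P}\rangle_{\mathbb{T}^m}=e^{\ii\theta\cdot N}\big(P-\langle P\rangle_{\mathbb{T}^m}\big)(\varphi)e^{-\ii\theta\cdot N}$; applying Item~(iii) of Lemma~\ref{lem:op.tau} to the strongly local operators $P-\langle P\rangle_{\mathbb{T}^m}$ and $\langle P\rangle_{\mathbb{T}^m}$ separately (for the latter with any auxiliary $\rho>0$, since it carries no $\varphi$-dependence) yields $\|\mathcal{P}-\langle\mathcal{P}\rangle_{\mathbb{T}^m}\|_{\kappa-\ccccc\delta,\rho,\delta}\le 2^{2r}\big(\tfrac{1+\delta}{\delta}\big)^r\|P-\langle P\rangle_{\mathbb{T}^m}\|_{\kappa,\rho}$ and $\|\langle\mathcal{P}\rangle_{\mathbb{T}^m}\|_{\kappa-\ccccc\delta,0,\delta}\le 2^{2r}\big(\tfrac{1+\delta}{\delta}\big)^r\|\langle P\rangle_{\mathbb{T}^m}\|_{\kappa,0}$.

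Next I would apply Lemma~\ref{lem.Senza.Nome} with base regularities $(\kappa-\ccccc\delta,\rho,\delta)$ and with its free parameter taken equal to $\delta$: it produces $\mathcal{G}\in\mathcal{O}^\strong_{\kappa-\ccccc\delta,\rho,0}$ solving $\ii[J\cdot N,\mathcal{G}]+\mathcal{P}^{\ir}+\lambda\omega\cdot\partial_\varphi\mathcal{G}=\langle P\rangle$ with the bound $\|\mathcal{G}\|_{\kappa-\ccccc\delta,\rho,0}\le\frac{4}{\gamma_\omega\lambda^{1/2}}\|\mathcal{P}-\langle\mathcal{P}\rangle_{\mathbb{T}^m}\|_{\kappa-\ccccc\delta,\rho,\delta}+\frac{2\tau_J^{\tau_J}}{\delta^{\tau_J}e^{\tau_J}\gamma_J}\|\langle\mathcal{P}\rangle_{\mathbb{T}^m}\|_{\kappa-\ccccc\delta,0,\delta}$. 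From the explicit formula \eqref{ghat.phat.lambda} the Fourier coefficients $(\widehat{\mathcal{G}_S})_{l,k}$ are scalar multiples of $(\widehat{\mathcal{P}_S})_{l,k}$, hence inherit the eigenrelation $[N^{(\alpha)},(\widehat{\mathcal{P}_S})_{l,k}]=k_\alpha(\widehat{\mathcal{P}_S})_{l,k}$ valid for $\mathcal{P}$; by Lemma~\ref{besciamella.alla.cannella} this means $\mathcal{G}(\theta,\varphi)=e^{\ii\theta\cdot N}G(\varphi)e^{-\ii\theta\cdot N}$ with $G(\varphi):=\mathcal{G}(0,\varphi)$, so $\ii[J\cdot N,G]=(J\cdot\partial_\theta)\mathcal{G}\big|_{\theta=0}$, and evaluating the $\theta$-equation at $\theta=0$ — recalling $P^{\ir}(\varphi)=\mathcal{P}^{\ir}(0,\varphi)$ from \eqref{p.uv.p.ir} — gives \eqref{hom.tau.lambda}. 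Strong locality of $G$ follows from that of $\mathcal{G}(0,\cdot)$, itself a consequence of Lemma~\ref{lem.fourier.e.bello} as in the proof of Lemma~\ref{lem.Senza.Nome}.

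Finally, to obtain \eqref{stime.g.z.lambda} one chains the estimates: $\|G\|_{\kappa-\ccccc\delta,\rho}\le\|\mathcal{G}\|_{\kappa-\ccccc\delta,\rho,0}$ by \eqref{mandarino} of Lemma~\ref{lem:op.tau}, then one inserts the two bounds of the first paragraph for $\|\mathcal{P}-\langle\mathcal{P}\rangle_{\mathbb{T}^m}\|_{\kappa-\ccccc\delta,\rho,\delta}$ and $\|\langle\mathcal{P}\rangle_{\mathbb{T}^m}\|_{\kappa-\ccccc\delta,0,\delta}$ into the Lemma~\ref{lem.Senza.Nome} estimate and factors out the common $2^{2r}\big(\tfrac{1+\delta}{\delta}\big)^r$. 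I do not expect a genuine obstacle here: this is the fast-forcing transcription of Lemma~\ref{cor.trova.g}. The only points requiring care are (a) tracking the three-index norm $\|\cdot\|_{\kappa,\rho,\zeta}$ through the conjugation step — where the constant $2^{2r}((1+\delta)/\delta)^r$ and the shift $\kappa\mapsto\kappa-\ccccc\delta=\kappa-r\sigma$ enter via Lemma~\ref{lem:op.tau}(iii); (b) the commutation of the $\mathbb{T}^m$-average with conjugation by $e^{\ii\theta\cdot N}$, which lets the two summands of Lemma~\ref{lem.Senza.Nome} be estimated in terms of $P-\langle P\rangle_{\mathbb{T}^m}$ and $\langle P\rangle_{\mathbb{T}^m}$ respectively; and (c) checking that the $\mathcal{G}$ returned by Lemma~\ref{lem.Senza.Nome} really is of the conjugated form $e^{\ii\theta\cdot N}G(\varphi)e^{-\ii\theta\cdot N}$, so that the reduction to $\theta=0$ does recover the original equation \eqref{hom.tau.lambda}.
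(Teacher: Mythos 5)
Your proof is correct and follows the same route as the paper's: fix $\zeta=\delta$, $\sigma=8\max_\alpha\|N^{(\alpha)}\|_0\,\delta$, pass to the $\theta$-dependent picture via Lemma~\ref{lem:op.tau}(iii), invoke Lemma~\ref{lem.Senza.Nome}, and restrict to $\theta=0$. You actually make explicit two points the paper leaves implicit — that $\mathbb{T}^m$-averaging commutes with conjugation by $e^{\ii\theta\cdot N}$ (so Lemma~\ref{lem:op.tau}(iii) can be applied separately to $P-\langle P\rangle_{\mathbb{T}^m}$ and $\langle P\rangle_{\mathbb{T}^m}$), and that $\mathcal{G}$ is of the conjugated form via Lemma~\ref{besciamella.alla.cannella} so that evaluation at $\theta=0$ really recovers \eqref{hom.tau.lambda} — which is a small improvement in rigor.
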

 \begin{proof}
 	Let $\zeta :=\delta$ and $\sigma :=  8 \max_{\alpha} \|N^{\alpha}\|_0 \delta$; by Item (iii) of Lemma \ref{lem:op.tau},  one has that, defining $\forall(\theta, \varphi)$ $\mathcal{P}(\theta, \varphi)$ as in \eqref{g.tau.p.tau}, $\mathcal{P} \in \mathcal{O}^\strong_{\kappa - r \sigma, \rho, \zeta} $, with
 	\begin{equation}\label{miruna.tullio}
 	\|\mathcal{P}\|_{\kappa-r \sigma, \rho, \zeta } \leq 2^{2r}\left(\frac{\zeta+1}{\zeta}\right)^r \|P\|_{\kappa, \rho}\,.
 	\end{equation}
 	Then, by Lemma \ref{lem.Senza.Nome} there exists $\mathcal{G} \in \mathcal{O}^\strong_{\kappa-r \sigma, \rho, 0}$ solving \eqref{eq:HomologicalWithLambda} with $\langle P \rangle$ as in \eqref{zeta.esplicita} and
 	\begin{equation}\label{stefana.tullio}
 	\|\mathcal{G}\|_{\kappa-r \sigma, \rho, 0} \leq \frac{4}{\gamma_\omega \lambda^{1/2}} \Vert \mathcal{P} - \langle \mathcal{P}\rangle_{\mathbb{T}^m} \Vert_{\kappa,\rho,\delta} + \frac{2\tau_J^{\tau_J}}{\gamma_J e^{\tau_J} \delta^{\tau_J}} \Vert \langle \mathcal{P} \rangle_{\mathbb{T}^m} \Vert_{\kappa,0,\delta}\,.
 	\end{equation}
 	Furthermore, we have that $G(\varphi):= \mathcal{G}(0, \varphi)$ solves \eqref{eq:HomologicalWithLambda} for $\theta = 0$, namely \eqref{hom.tau.lambda}. Moreover, again by Item (iii) of Lemma \ref{lem:op.tau}, one has
 	\begin{equation}\label{giuliani.tullio}
 	\|G \|_{\kappa-r \sigma, \rho } \leq \|\mathcal{G}\|_{\kappa - r \sigma, \rho, 0}\,.
 	\end{equation}
 	Combining \eqref{miruna.tullio}, \eqref{stefana.tullio} and \eqref{giuliani.tullio}, one gets
 	$$
 	\|{G}\|_{\kappa -r \sigma, \rho } \leq 2^{2r}\left(\frac{\delta+1}{\delta}\right)^r \left(\frac{4}{\gamma_\omega \lambda^{1/2}} \Vert {P} - \langle {P}\rangle_{\mathbb{T}^m} \Vert_{\kappa,\rho} + \frac{2\tau_J^{\tau_J}}{\gamma_J e^{\tau_J} \delta^{\tau_J}} \Vert \langle {P} \rangle_{\mathbb{T}^m} \Vert_{\kappa,0}\right)\,,
 	$$
 	which gives \eqref{stime.g.z}. 
 \end{proof}

\section{Proof of Normal Form Results}\label{sec:norma.normale}

In this Section we use the estimates of the previous sections to prove the rigorous normal form results, as stated in Propositions \ref{teo:main} and \ref{teo:main.ff}.

\subsection{Small Perturbations} 
Item (a) of Proposition \ref{teo:main} immediately follows from the following lemma.

\begin{lemma}\label{nf.thm}
	Under the assumptions of Proposition \ref{teo:main}, define
	\begin{equation}\label{ho.ho.ho}
	n_* := \left\lfloor \varepsilon^{-2 \mathtt{b}}\right\rfloor\,, \quad  \kappa_0 := \min\{\kappa, \rho\}\,, \quad \rho_0 := 2(\ccccc + 1)^{-1} \kappa_0\,,
	\end{equation}
	with $\ccccc$ as in \eqref{costantina} and $\mathtt{b}$ as in \eqref{parametri.vitali},
	and define $\forall n = 1, \dots, n_*$
	\begin{equation}
	\kappa_n := {\kappa_0}\sqrt{1 - b n}\,, \quad \rho_n := {\rho_0}\sqrt{1 - b n}\,,\quad b := \frac{1}{2 n_*}\,,
	\end{equation}
 	and $\forall n = 0, \dots, n_*$
 	\begin{equation}\label{granchietti}
 	\varepsilon_n := \left(\frac{1}{e}\right)^n \varepsilon \| V\|_{\kappa_0, \rho_0}\,.
 	\end{equation}
	Then $\forall n = 0, \dots, n_*$ there exist a time quasi-periodic family of unitary maps $Y^{(n)}_{\inv}(\omega t)$, a self-adjoint operator $Z^{(n)}_{\inv}$ and a time quasi-periodic family of self-adjoint operators $V^{(n)}_{\inv}(\omega t)$ such that, defining
	\begin{equation}
	H^{(n)}_{\inv}(\omega t):= J \cdot N + Z^{(n)}_{\inv} + V^{(n)}_{\inv}( \omega t) \,,
	\end{equation}
	one has
	\begin{equation}
	U_{H^{(n)}_{\inv}}(t) = Y^{(n)}_{\inv}( \omega t) U_{H}(t) (Y^{(n)}_{\inv})^*(0)\,,
	\end{equation}
	and the following are satisfied:
	\begin{enumerate}
		\item $Z^{(n)}_{\inv} \in \mathcal{O}^\strong_{\kappa_n}$ is time independent, with $Z^{(0)}=0$, and for $n>0$
		\begin{equation}\label{z.small}
		[Z^{(n)}_{\inv}, N^{(\alpha)}] = 0 \quad \forall \alpha = 1, \dots, r\,, \quad \|Z^{(n)}_{\inv}\|_{\kappa_n} \leq \sum_{n'=0}^{{n-1}} \varepsilon_{{n'}}\,;
		\end{equation}
		\item $V^{(n)}_{\inv} \in \mathcal{O}^\strong_{\kappa_n, \rho_n}$, and 
		\begin{equation}\label{v.small}
		\| V^{(n)}_{\inv}\|_{\kappa_n, \rho_n} \leq \varepsilon_n\,;
		\end{equation}
		\item The maps $Y^{(n)}_{\inv}(\omega t)$ are close to the identity, in the sense that $Y_\inv^{(0)}=\mathbbm{1}$, and for $n>0$
		\begin{equation}\label{y.diff}
		\| Y^{(n)}_{\inv} P (Y^{(n)}_{\inv})^*- P\|_{\kappa_n, \rho_n} \leq 2 \varepsilon^{\frac 12} \sum_{n'=0}^{{n-1}} \left(\frac 1 e\right)^{{n'}}\|V\|_{\kappa, \rho} \|P\|_{\kappa, \rho} \quad \forall P \in \mathcal{O}_{\kappa, \rho}\,.
		\end{equation}
	\end{enumerate}
\end{lemma}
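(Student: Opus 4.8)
The plan is to prove the lemma by induction on $n$, obtaining the level-$(n+1)$ data from the level-$n$ data by a single conjugation that shrinks the time-dependent part by the factor $1/e$ at the price of a small, controlled loss of analyticity in both $\kappa$ and $\rho$. The base case $n=0$ is immediate: take $Y^{(0)}_\inv=\mathbbm 1$, $Z^{(0)}_\inv=0$, $V^{(0)}_\inv=\varepsilon V$; since $\kappa_0=\min\{\kappa,\rho\}\le\kappa$ and $\rho_0\le\rho$ (immediate from \eqref{ho.ho.ho} and \eqref{costantina}), one has $\varepsilon V\in\mathcal O^\strong_{\kappa_0,\rho_0}$ with $\|\varepsilon V\|_{\kappa_0,\rho_0}\le\varepsilon_0$, so \eqref{z.small}--\eqref{y.diff} hold trivially.

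\textbf{The inductive step.} Assume the three assertions at level $n<n_*$. Split the available budget as $\kappa_n-\kappa_{n+1}=(\ccccc+1)\delta_n$, reserving $\ccccc\delta_n$ for the homological equation and $\sigma_n:=\delta_n$ for the subsequent conjugation; since $1-bn\in[\tfrac12,1]$ for $0\le n\le n_*$, the identity $\kappa_n-\kappa_{n+1}=\kappa_0 b\big/(\sqrt{1-bn}+\sqrt{1-b(n+1)})$ shows that $\delta_n$ is comparable, uniformly in $n$, to $\kappa_0 b=\kappa_0/(4n_*)$, hence to $\varepsilon^{2\tb}$ (recall $n_*=\lfloor\varepsilon^{-2\tb}\rfloor$); one also checks $\rho_n-\rho_{n+1}=2\delta_n$, leaving $\delta_n$ of $\rho$-room for the homological equation and $\delta_n$ to spare for a Cauchy estimate. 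Apply Lemma \ref{cor.trova.g} with $P=V^{(n)}_\inv$: it yields a strongly local, quasi-periodic generator $G^{(n)}\in\mathcal O^\strong_{\kappa_n-\ccccc\delta_n,\,\rho_n-\delta_n}$ and $Z=\langle V^{(n)}_\inv\rangle$, which commutes with all $N^{(\alpha)}$ by Lemma \ref{lem.zeta.commutes}, solving $\ii[J\cdot N,G^{(n)}]+\omega\cdot\partial_\varphi G^{(n)}+V^{(n)}_\inv=\langle V^{(n)}_\inv\rangle$ with $\|G^{(n)}\|_{\kappa_n-\ccccc\delta_n,\rho_n-\delta_n}\le\frac{4^r\tau^\tau(1+\delta_n)^r}{e^\tau\gamma\,\delta_n^{\tau+r}}\varepsilon_n$. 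Then set $Y^{(n+1)}_\inv:=e^{-\ii G^{(n)}(\omega t)}Y^{(n)}_\inv$ and $Z^{(n+1)}_\inv:=Z^{(n)}_\inv+\langle V^{(n)}_\inv\rangle$, and use the push-forward Lemma \ref{lem.pushfwd} to get $U_{H^{(n+1)}_\inv}(t)=Y^{(n+1)}_\inv(\omega t)U_H(t)(Y^{(n+1)}_\inv)^*(0)$ and
\[
H^{(n+1)}_\inv=J\cdot N+Z^{(n+1)}_\inv+V^{(n+1)}_\inv,
\]
where, after the linear term $\ii[J\cdot N,G^{(n)}]+\omega\cdot\partial_\varphi G^{(n)}+V^{(n)}_\inv$ is absorbed into $\langle V^{(n)}_\inv\rangle$ via the homological equation, the new time-dependent part is $V^{(n+1)}_\inv=\ii[Z^{(n)}_\inv,G^{(n)}]+\ii[V^{(n)}_\inv,G^{(n)}]+R_1+R_2$ with $R_1=\sum_{q\ge2}\frac{(-\ii)^q}{q!}\mathrm{Ad}^q_{G^{(n)}}H^{(n)}_\inv$ the super-quadratic conjugation remainder and $R_2=\int_0^1\!\big(e^{-\ii sG^{(n)}}(\omega\cdot\partial_\varphi G^{(n)})e^{\ii sG^{(n)}}-\omega\cdot\partial_\varphi G^{(n)}\big)\ud s$ the correction coming from the time dependence of $Y^{(n+1)}_\inv$. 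Strong locality propagates through all these expressions — commutators of strongly local operators are strongly local by the Jacobi identity (using $[N^{(\alpha)}_S,N^{(\alpha')}_{S'}]=0$ for all $S,S'$), and conjugations by $e^{-\ii G^{(n)}}$ behave as in Lemma \ref{cor.lignano.pineta} — and $Z^{(n+1)}_\inv$ commutes with every $N^{(\alpha)}$ since both its summands do; with \eqref{eq:estimate.z.std} this already gives \eqref{z.small} at level $n+1$ with $\|Z^{(n+1)}_\inv\|_{\kappa_{n+1}}\le\sum_{n'=0}^{n}\varepsilon_{n'}$.

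\textbf{Closing the quantitative estimates.} Since $\delta_n^{-1}\asymp\varepsilon^{-2\tb}$, the bound on $\|G^{(n)}\|$ reads $\|G^{(n)}\|\lesssim\gamma^{-1}\varepsilon^{-2\tb(\tau+r)}\varepsilon_n\le C\varepsilon^{1/2}\varepsilon_n$ because $2\tb(\tau+r)=\tfrac{\tau+r}{2(\tau+r+2)}<\tfrac12$; in particular $\sigma_n^{-1}\|G^{(n)}\|\to0$ uniformly in $n$ as $\varepsilon\to0$, so Lemma \ref{cor.lignano.pineta} and the commutator estimate it entails apply at every step. Bounding the four pieces of $V^{(n+1)}_\inv$ by \eqref{anna}--\eqref{ha.ragione}, e.g. $\|\ii[Z^{(n)}_\inv,G^{(n)}]\|\lesssim\sigma_n^{-1}\|Z^{(n)}_\inv\|\|G^{(n)}\|$ and $\|R_1\|\lesssim\sigma_n^{-2}\|G^{(n)}\|^2\|H^{(n)}_\inv\|_{\kappa_n,\rho_n}$ — where $\|H^{(n)}_\inv\|_{\kappa_n,\rho_n}$ is bounded by the parameters of the system, its $J\cdot N$ part having finite \emph{local} norm even though its operator norm grows with $|\Lambda|$, which is exactly the reason for working with these norms — one gets in each case a bound of the form $C(\,\cdot\,)\,\varepsilon^{\eta}\varepsilon_n$ with a fixed $\eta\in(0,1)$; the worst term, the super-quadratic $R_1$, carries $\varepsilon^{\,1-4\tb(\tau+r+1)}=\varepsilon^{1/(\tau+r+2)}$. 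The point is that $\tb=\tfrac1{4(\tau+r+2)}$ makes every such exponent strictly positive and all the constants independent of $|\Lambda|$ and of $n$, so that after choosing $\varepsilon_0=\varepsilon_0(r,\tau,\gamma,\kappa,\rho,\Omega,\|V\|_{\kappa,\rho},\{\|N^{(\alpha)}\|_0\})$ small enough one obtains $\|V^{(n+1)}_\inv\|_{\kappa_{n+1},\rho_{n+1}}\le e^{-1}\varepsilon_n=\varepsilon_{n+1}$, i.e. \eqref{v.small}. For \eqref{y.diff} one telescopes $Y^{(n)}_\inv P(Y^{(n)}_\inv)^*-P=\sum_{j=0}^{n-1}\big(e^{-\ii G^{(j)}}Q_j\,e^{\ii G^{(j)}}-Q_j\big)$, with $Q_j$ the $j$-th partial conjugate of $P$, applies \eqref{anna} to each summand (cost $\lesssim\sigma_j^{-1}\|G^{(j)}\|\,\|Q_j\|\lesssim\gamma^{-1}\varepsilon^{\,1-2\tb(\tau+r+1)}e^{-j}\|V\|\,\|P\|$), and uses $1-2\tb(\tau+r+1)=\tfrac{\tau+r+3}{2(\tau+r+2)}\ge\tfrac12$ together with $\|Q_j\|\le(1+C\varepsilon^{1/2})^j\|P\|\le2\|P\|$; possibly shrinking $\varepsilon_0$ turns this into $2\varepsilon^{1/2}\sum_{j<n}e^{-j}\|V\|\|P\|$.

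\textbf{The main obstacle.} The genuinely hard analytic facts — solving the homological equation with estimates uniform in the volume, and conjugating by $e^{-\ii G}$ while controlling local norms — are already packaged in Lemmas \ref{cor.trova.g} and \ref{cor.lignano.pineta}. What remains, and is the real content of this lemma, is the quantitative bookkeeping of the iteration: choosing the square-root profile $\kappa_n=\kappa_0\sqrt{1-bn}$, $\rho_n=\rho_0\sqrt{1-bn}$ so that the per-step losses telescope \emph{exactly} to the budget $\kappa_0-\kappa_*=\kappa_0(1-1/\sqrt2)$ (and likewise for $\rho$) while each $\delta_n$ stays bounded below by a fixed power of $\varepsilon$, and then verifying that with the value $\tb=\tfrac1{4(\tau+r+2)}$ every remainder term — which can carry as bad a divisor loss as $\delta_n^{-2(\tau+r+1)}$ — is still a genuinely small multiple of $\varepsilon_n$, uniformly in $n$ and in $|\Lambda|$. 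This balancing is precisely what pins down the exponent $\tb$, and it is where the bulk of the (routine but delicate) computations lies.
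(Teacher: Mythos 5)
Your proposal is correct and follows essentially the same route as the paper: the same push-forward Lemma \ref{lem.pushfwd}, the same homological equation Lemma \ref{cor.trova.g} (with $Z=\langle V^{(n)}_\inv\rangle$ and Lemma \ref{lem.zeta.commutes} for commutation with the $N^{(\alpha)}$'s), the same square-root analyticity profile for $(\kappa_n,\rho_n)$ with the lower bound $\delta_n\gtrsim\rho_0/n_*\asymp\varepsilon^{2\tb}$, and the same exponent bookkeeping that makes the super-quadratic remainder cost $\varepsilon^{1-4\tb(\tau+r+1)}=\varepsilon^{1/(\tau+r+2)}$, which pins down $\tb$. The one small departure is that you bound the $q\ge2$ Ad-remainder coming from $J\cdot N$ via $\|J\cdot N\|_{\kappa_n,\rho_n}$ inside $\|H^{(n)}_\inv\|_{\kappa_n,\rho_n}$, whereas the paper first rewrites $\ii[J\cdot N,G^{(n)}_\inv]=-\big(\omega\cdot\partial_\varphi G^{(n)}_\inv+V^{(n)}_\inv-\langle V^{(n)}_\inv\rangle\big)$ and then applies Lemma \ref{cor.lignano.pineta}; this is why the paper's $\varepsilon_0$ is listed as depending on $\{\|N^{(\alpha)}\|_0\}_\alpha$ and $\Omega$ but not on $\{\|N^{(\alpha)}\|_\kappa\}_\alpha$ or $\mathcal J$, a refinement your version would not quite achieve (though both parameter sets are bounded and $|\Lambda|$-independent, so the conclusion is unaffected).
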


To prove the lemma, we recall the following well-known result:
\begin{lemma}[Lemma 3.1 of \cite{BGMR_growth}]\label{lem.pushfwd}
	Given $G(t)$ and $H(t)$ two families of self-adjoint operators with smooth dependence on time, define $\widetilde{U}(t) := e^{-\ii G(t)} U_{H}(t) e^{\ii G(0)}$. Then $\forall t$ $\widetilde{U}(t) = U_{H^\prime}(t)$, with
	\begin{equation}\label{solita.roba.ma.la.copio}
	H^\prime(t) =e^{-\ii G(t)} H(t) e^{\ii G(t)} + \int_0^1 e^{-\ii G(t)s} \partial_t{G}(t) e^{\ii G(t)s}\,ds\,.
	\end{equation}
\end{lemma}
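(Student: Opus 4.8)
The plan is to verify directly that $\widetilde U(t)$ satisfies the same Cauchy problem that defines $U_{H'}(t)$ and then invoke uniqueness. Since $\Lambda$ is finite, $\cH_\Lambda \simeq (\C^q)^{\otimes|\Lambda|}$ is finite-dimensional, so all the operators in sight are bounded, $t\mapsto G(t), H(t)$ are smooth matrix-valued maps, and differentiation under the integral sign and term-by-term differentiation of the exponential series are unproblematic. First I would check the initial datum: $\widetilde U(0) = e^{-\ii G(0)} U_H(0) e^{\ii G(0)} = e^{-\ii G(0)} e^{\ii G(0)} = \mathbbm{1}$, using $U_H(0)=\mathbbm{1}$.

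Next I would differentiate $\widetilde U(t) = e^{-\ii G(t)} U_H(t) e^{\ii G(0)}$ by the product rule. The contribution of $\partial_t U_H(t) = -\ii H(t) U_H(t)$ is, after inserting $\mathbbm 1 = e^{\ii G(t)}e^{-\ii G(t)}$,
\begin{equation}
e^{-\ii G(t)}\,(-\ii H(t))\,U_H(t)\, e^{\ii G(0)} \;=\; -\ii\, e^{-\ii G(t)} H(t) e^{\ii G(t)}\,\widetilde U(t)\,.
\end{equation}
For the contribution of $\partial_t e^{-\ii G(t)}$ I would use the Duhamel formula for the derivative of an operator exponential,
\begin{equation}\label{duhamel.exp}
\frac{\ud}{\ud t}\, e^{-\ii G(t)} \;=\; -\ii \int_0^1 e^{-\ii s G(t)}\, \partial_t G(t)\, e^{-\ii (1-s) G(t)}\,\ud s\,,
\end{equation}
which follows, e.g., by differentiating $s \mapsto e^{-\ii s G(t+h)}\, e^{\ii s G(t)}$, integrating in $s$ over $[0,1]$, dividing by $h$ and letting $h\to 0$. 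Multiplying \eqref{duhamel.exp} on the right by $U_H(t) e^{\ii G(0)} = e^{\ii G(t)}\widetilde U(t)$ and using $e^{-\ii(1-s)G(t)} e^{\ii G(t)} = e^{\ii s G(t)}$, this contribution equals $-\ii \big(\int_0^1 e^{-\ii s G(t)} \partial_t G(t) e^{\ii s G(t)}\,\ud s\big)\widetilde U(t)$. Adding the two terms yields $\dot{\widetilde U}(t) = -\ii H'(t)\widetilde U(t)$ with $H'$ exactly as in the statement (noting $e^{-\ii G(t)s}=e^{-\ii s G(t)}$).

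Finally I would remark that $H'(t)$ is self-adjoint: the term $e^{-\ii G(t)} H(t) e^{\ii G(t)}$ is self-adjoint because conjugation by a unitary preserves self-adjointness, while the integrand $e^{-\ii s G(t)} \partial_t G(t) e^{\ii s G(t)}$ is self-adjoint since $G(t)$, hence $\partial_t G(t)$, is self-adjoint, and the integral of a norm-continuous family of self-adjoint operators is self-adjoint. Thus both $\widetilde U$ and $U_{H'}$ solve the linear ODE $\dot X = -\ii H'(t) X$ with $X(0)=\mathbbm 1$, and by uniqueness of solutions of linear ODEs with continuous coefficients on a finite-dimensional space they coincide for all $t$. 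The only mildly delicate point is justifying \eqref{duhamel.exp} for a $t$-dependent generator, but this is a standard fact and the short argument indicated above suffices; everything else is a routine computation.
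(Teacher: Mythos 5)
Your proof is correct: the computation of $\partial_t\widetilde U$ via the product rule, the Duhamel formula for $\partial_t e^{-\ii G(t)}$, the identity $e^{-\ii(1-s)G(t)}e^{\ii G(t)}=e^{\ii sG(t)}$, and the appeal to uniqueness for linear ODEs with the correct initial datum $\widetilde U(0)=\mathbbm{1}$ all check out. The paper does not prove this lemma but simply quotes it from the reference, and your direct verification is precisely the standard argument behind that citation, so there is nothing to add.
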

\begin{proof}[Proof of Lemma \ref{nf.thm}]
	If $n=0$, the result holds true with $Y^{(0)}_\inv = \mathbbm{1}$, $Z^{(0)}_\inv = 0$, and $V^{(0)}_\inv = \varepsilon V$. Let us assume the thesis has been proven for $n' = 0, \dots, n$; we are now going to prove that it holds also for $n' = n+1$. We look for a change of variables of the form
	$$
	\widetilde{U}^{(n)}_{\inv}(t) = e^{-\ii G^{(n)}_{\inv}(\omega t)} U_{H^{(n)}_{\inv}}(t) e^{\ii G^{(n)}_{\inv}(0)}\,,
	$$
	where $G^{(n)}_{\inv}$ solves
	\begin{equation}\label{hom.again}
	\ii \left[J \cdot N, G^{(n)}_{\inv}(\varphi)\right]  + \omega \cdot \partial_\varphi{G}^{(n)}_{\inv}(\varphi) + V^{(n)}_{\inv}( \varphi) - \langle V^{(n)}_{\inv} \rangle = 0\,,
	\end{equation}
	and $\langle \cdot \rangle$ is defined in \eqref{zeta.esplicita}. 	Recalling that $H^{(n)}_{\inv} = J \cdot N + Z^{(n)}_{\inv} + V^{(n)}_{\inv}$, by Lemma \ref{lem.pushfwd} one has $\widetilde{U}^{(n)}_{\inv}(t) = U_{H^{(n+1)}_{\inv}}(t)$, with $H^{(n+1)}_{\inv}(\varphi)$ given by \eqref{solita.roba.ma.la.copio}: 
	\begin{subequations}
		\begin{align}
		\label{is.the.boss}
		H^{(n+1)}_{\inv}(\varphi) &= J \cdot N + Z^{(n)}_{\inv} + \langle V^{(n)}_{\inv} \rangle\\
		\label{vanishes}
		&+ \ii \left[J \cdot N, G^{(n)}_{\inv}(\varphi)\right]  + \omega\cdot \partial_\varphi {G}^{(n)}_{\inv}(\varphi) + V^{(n)}_{\inv}( \varphi) - \langle V^{(n)}_{\inv} \rangle\\
		\label{rn}
		& + e^{-\ii G^{(n)}_{\inv}(\varphi)} (J \cdot N) e^{\ii G^{(n)}_{\inv}(\varphi)} - J \cdot N - \ii [J \cdot N, G^{(n)}_{\inv}(\varphi)] \\
		\label{rv}
		& +  e^{-\ii G^{(n)}_{\inv}(\varphi)} V^{(n)}_{\inv}( \varphi) e^{\ii G^{(n)}_{\inv}(\varphi)} - V^{(n)}_{\inv}( \varphi)\\
		\label{rz}
		& +  e^{-\ii G^{(n)}_{\inv}( \varphi)} Z^{(n)}_{\inv} e^{\ii G^{(n)}_{\inv}(\varphi)} - Z^{(n)}_{\inv}\\
		\label{rg}
		& + \int_0^1 e^{-\ii G^{(n)}_{\inv}(\varphi)s} \omega \cdot \partial_\varphi G^{(n)}_{\inv}( \varphi) e^{\ii G^{(n)}_{\inv}(\varphi)s }\,\ud s - \omega \cdot \partial_\varphi {G}^{(n)}_{\inv}( \varphi)\,.
		\end{align}
	\end{subequations}
	We set
	\begin{equation}\label{eq:AiutoIlCapoMiRapisce}
	Z^{(n+1)}_{\inv} := Z^{(n)}_{\inv} + \langle V^{(n)}_{\inv} \rangle\,, \quad V^{(n+1)}_{\inv} := \eqref{rn} + \eqref{rv} + \eqref{rz} + \eqref{rg}\,.
	\end{equation}	
	We observe that, due to our choice of the parameters $\{\kappa_n\}_{n=1}^{n_*}$, $\{\rho_n\}_{n=1}^{n_*}$, one has $\kappa_n = {\frac{(\ccccc + 1)}{2}} \rho_n$ $\forall n$, thus defining $\delta_n:=\frac{\rho_n-\rho_{n+1}}{2}$, also
	$$
	\kappa_{n} - \kappa_{n+1} = {\frac{( \ccccc + 1)}{2}} (\rho_n -\rho_{n+1}) = ( \ccccc + 1) \delta_n
	$$ 
	and $\kappa_{n} - \ccccc \delta_n = \kappa_{n+1} + \delta_n$.
	Using Lemma \ref{cor.trova.g}, we find $G^{(n)}_{\inv}$  such that \eqref{hom.again} holds and, using the inductive hypothesis on $V^{(n)}_{\inv}$,
	\begin{equation}\label{one.more.time}
	\begin{split}
	\| G^{(n)}_{\inv}\|_{\kappa_{n+1} + \delta_{n}, \rho_{n+1}+\delta_n} &\leq \frac{2^{2r}\tau^{\tau}(1+\delta_n)^r}{e^{\tau} \gamma \delta_{n}^{\tau+r}} \| V^{(n)}_{\inv}\|_{\kappa_n, \rho_n} \leq \frac{2^{2r}\tau^{\tau}(\delta_n+1)^r}{e^{\tau} \gamma \delta_{n}^{\tau+r}} \varepsilon_n\,, \\
	\|\langle V^{(n)}_{\inv} \rangle\|_{\kappa_{n+1} + \delta_n} &\leq  \|V^{(n)}_{\inv}\|_{\kappa_{n}, \rho_n} \leq \varepsilon_n\,.
	\end{split}
\end{equation}
In particular, the second estimate in \eqref{one.more.time} implies \eqref{z.small} with $n$ replaced by $n+1$. Furthermore, since by Lemma \ref{lem.zeta.commutes} $\langle V^{(n+1)}_{\inv}\rangle$ commutes with all the $N^{(\alpha)}$'s, also the first condition in \eqref{z.small} is still satisfied also at $n+1$.
	We now estimate $V^{(n+1)}_{\inv}$. By Lemma \ref{cor.lignano.pineta}, if
\begin{equation}\label{prove.me}
	\frac{4 e^{-\kappa_{n+1}}}{\delta_{n}}\| G^{(n)}_{\inv}\|_{\kappa_{n+1} + \delta_n, \rho_{n+1}+\delta_n} < \frac{1}{2}\,,	
	\end{equation}
we have 
\begin{equation} \label{grappino}
	\begin{split}
	\Vert \, \text{\eqref{rv}} \, \Vert_{\kappa_{n+1},\rho_{n+1}} &\leq \frac{8 e^{-\kappa_{n+1}}}{\delta_n} \Vert G^{(n)}_{\inv} \Vert_{\kappa_{n+1}+\delta_n,\rho_{n+1}+\delta_n} \Vert V^{(n)}_{\inv} \Vert_{\kappa_{n+1}+\delta_n,\rho_{n+1}+\delta_n} \, , \\ 
	\Vert \, \text{\eqref{rz}} \, \Vert_{\kappa_{n+1},\rho_{n+1}} &\leq \frac{8 e^{-\kappa_{n+1}}}{\delta_n} \Vert G^{(n)}_{\inv} \Vert_{\kappa_{n+1}+\delta_n,\rho_{n+1}+\delta_n} \Vert Z^{(n)}_{\inv} \Vert_{\kappa_{n+1}+\delta_n} \, ,
	\\ 
	\Vert \, \text{\eqref{rg}} \, \Vert_{\kappa_{n+1},\rho_{n+1}} &\leq \frac{8 |\omega|e^{-\kappa_{n+1}}}{\delta_n^2 } \Vert G^{(n)}_{\inv} \Vert_{\kappa_{n+1}+\delta_n,\rho_{n+1}+\delta_n}^2  \, , \\ 
	\Vert \, \text{\eqref{rn}}\, \Vert_{\kappa_{n+1}, \rho_{n+1}} &\leq \frac{8 e^{-\kappa_{n+1}}}{\delta_n} \Vert G^{(n)}_{\inv} \Vert_{\kappa_{n+1}+\delta_n,\rho_{n+1}+\delta_n} \Vert \omega\cdot \partial_\varphi G^{(n)}_{\inv}+V^{(n)}_{\inv}- \langle V^{(n)}_{\inv} \rangle \Vert_{\kappa_{n+1}+\delta_n,\rho_{n+1}} \, ,
	\end{split}
\end{equation}
where to obtain the last inequality we have used the fact that $-\ii [J \cdot N, G^{(n)}_{\inv}(\varphi)] = \omega \cdot \partial_\varphi G^{(n)}_{\inv} + V^{(n)}_{\inv} - \langle V^{(n)}_{\inv} \rangle$, since $G^{(n)}_{\inv}$ solves \eqref{hom.again}.
Then the inductive estimate on $V^{(n)}_n$ follows if all the terms in \eqref{grappino} can be bounded by $\frac{\varepsilon_{n+1}}{4}$. Observing that
\begin{equation}\label{delta.below}
	\begin{split}
	\delta_{n} =& \frac{\rho_{n} - \rho_{n+1}}{2} = \frac{1}{2}\rho_0 \sqrt{ 1 - b n}\left(1 - \sqrt{1-\frac{b}{1-bn}}\right) \\
	\geq&  \frac{\rho_0}{2\sqrt{ 2}} \left(1-\sqrt{1-b}\right) \geq  \frac{\rho_0}{\sqrt{ 2}} \frac{b}{4} = \frac{\rho_0}{  8 \sqrt{2} n_*}\,,
	\end{split}
	\end{equation}
both the $\frac{\varepsilon_{n+1}}{4}$ bounds and condition \eqref{prove.me} are satisfied, choosing $n_*$ as in \eqref{ho.ho.ho} and $\varepsilon < \varepsilon_0$ for some $\varepsilon_0$ depending only on $r, \tau, \gamma, \kappa, \rho, \Omega, \|V\|_{\kappa, \rho}, \{ \|N^{(\alpha)}\|_{0}\}_{\alpha}$.

Last, we show that
	$$
	Y^{(n+1)}_{\inv}(\varphi) := e^{-\ii G^{(n)}_{\inv}( \varphi)} Y^{(n)}_{\inv}( \varphi)
	$$
	satisfies \eqref{y.diff}. {To this aim we observe that \eqref{one.more.time}, \eqref{delta.below} and \eqref{ho.ho.ho} imply}
	\begin{equation}\label{prove.me.too}
	\frac{8 e^{-\kappa_{n+1}}}{\delta_{n}}\| G^{(n)}_{\inv}\|_{\kappa_{n+1} + \delta_n, \rho_{n+1}} < \varepsilon^{\frac 1 2} \| V\|_{\kappa_0, \rho_0} \left({\frac{1}{e}}\right)^n\,.
	\end{equation}	
	{Thus, due to \eqref{prove.me.too}, by Lemma \ref{cor.lignano.pineta} one has}
	$\| e^{\ii G^{(n)}_{\inv}} P e^{-\ii G^{(n)}_{\inv}} - P\|_{\kappa_{n+1}, \rho_{n+1}} \leq \varepsilon^{\frac 1 2} \left({\frac{1}{e}}\right)^n \| V\|_{\kappa_0, \rho_0} \|P\|_{\kappa_{0}, \rho_0}$.
	Then \eqref{y.diff} follows estimating
	\begin{align*}
		\|Y_\inv^{(n+1)} P (Y_\inv^{(n+1)})^* \|_{\kappa_{n+1},\rho_{n+1}} &=\|e^{-\ii G_\inv^{(n)}} Y_\inv^{(n)} P (Y_\inv^{(n)})^* e^{\ii G_\inv^{(n)}}-P \|_{\kappa_{n+1},\rho_{n+1}} \\
		&\leq \| e^{-\ii G_\inv^{(n)}} (Y^{(n)}_\inv P (Y_\inv^{(n)})^*-P) e^{\ii G^{(n)}_\inv} - (Y_\inv^{(n)} P (Y_\inv^{(n)})^*-P) \|_{\kappa_{n+1},\rho_{n+1}} \\
		&\quad +\Vert Y^{(n)}_\inv P (Y^{(n)}_\inv)^*-P \|_{\kappa_{n+1},\rho_{n+1}} + \| e^{-\ii G^{(n)}_\inv} P e^{\ii G^{(n)}_\inv} - P \|_{\kappa_{n+1},\rho_{n+1}} \\
		&\leq \varepsilon^{\frac12} \|V\|_{\kappa_0, \rho_0} e^{-n} \left(2 \varepsilon^{\frac12} \| V\|_{\kappa_0,\rho_0} \sum_{n'=0}^{n-1} e^{-n'} \| P \|_{\kappa_0,\rho_0} \right)  \\
		&\quad + 2 \varepsilon^{\frac12} \| V \|_{\kappa_0,\rho_0} \sum_{n'=0}^{n-1} e^{-n'} \| P \|_{\kappa_0,\rho_0} + \varepsilon^{\frac12} e^{-n}  \| V \|_{\kappa_0,\rho_0}  \| P \|_{\kappa_0,\rho_0} \\
		& \leq 2 \varepsilon^{\frac12} \| V \|_{\kappa_0,\rho_0} \sum_{n'=0}^{n} e^{-n'} \| P \|_{\kappa_0,\rho_0}
	\end{align*}	
%
%
	where in the last step we have assumed
	${2 \varepsilon^{\frac 1 2}\| V\|_{\kappa_0, \rho_0}} \sum_{n'=0}^{{n-1}} e^{{-n'}} \leq 1$,
	which is satisfied provided
	$\varepsilon \leq \left( {\frac{2e\| V\|_{\kappa_0, \rho_0}}{e-1}}\right)^{-2}$.
\end{proof}

Analogously, Item (b) of Proposition \ref{teo:main} immediately follows from the following lemma.

\begin{lemma}\label{nf.thm.Hobbes}
	Under the assumptions of Proposition \ref{teo:main}, define
	\begin{equation}\label{ho.ho.ho.Hobbes}
	n^* := \left\lfloor \varepsilon^{-\mathtt{b}}\right\rfloor\,, 
	\end{equation}
	with $\mathtt{b}$ as in \eqref{parametri.vitali}, and $\forall n = 1, \dots, n^*$ define
	\begin{equation}
	\kappa_n := {\kappa_0}\sqrt{1 - b n}\,, \quad \rho_n := {\rho_0}\sqrt{1 - b n}\,,\quad b := \frac{1}{2 n^*}\,,
	\end{equation}
	with $\kappa_0, \rho_0$ as in \eqref{ho.ho.ho} and $\ccccc$ as in \eqref{costantina}.
	Let $\{\varepsilon_n\}_{n= 0}^{n^*}$ be defined with $\varepsilon_n$ as in \eqref{granchietti}, then $\forall n = 0, \dots, n^*$ there exist a time quasi-periodic family of unitary maps $Y^{(n)}_{\obs}(\omega t)$, a self-adjoint operator $Z^{(n)}_{\obs}$ and a time quasi-periodic family of self-adjoint operators $V^{(n)}_{\obs}(\omega t)$ such that, defining
	\begin{equation}
	H^{(n)}_{\obs}(\omega t):= J \cdot N + Z^{(n)}_{\obs} + V^{(n)}_{\obs}( \omega t) \,,
	\end{equation}
	one has
	\begin{equation}
	U_{H^{(n)}_{\obs}}(t) = Y^{(n)}_{\obs}( \omega t) U_{H}(t)\,,
	\end{equation}
	and the following are satisfied:
	\begin{enumerate}
		\item $Z^{(n)}_{\obs} \in \mathcal{O}^\strong_{\kappa_n}$ is time independent, with $Z^{(0)}_\obs=0$ and for $n>0$
		\begin{equation}\label{z.small.Hobbes}
		\|Z^{(n)}_{\obs}\|_{\kappa_n} \leq \varepsilon^{-\frac 1 2}\sum_{n'=0}^{{n-1}} \varepsilon_{{n'}}\,;
		\end{equation}
		\item $V^{(n)}_{\obs} \in \mathcal{O}^\strong_{\kappa_n, \rho_n}$, and 
		\begin{equation}\label{v.small.Hobbes}
		\| V^{(n)}_{\obs}\|_{\kappa_n, \rho_n} \leq \varepsilon_n\,;
		\end{equation}
		\item The maps $Y^{(n)}_{\obs}(\omega t)$ are such that $Y^{(n)}_{\obs}(0) = \mathbbm{1}$ and they are close to the identity, in the sense that $Y_\obs^{(0)}(\omega t)\equiv\mathbbm{1}$ and for $n>0$
		\begin{equation}\label{y.diff.Hobbes}
		\| Y^{(n)}_{\obs} P (Y^{(n)}_{\obs})^*- P\|_{\kappa_n, \rho_n} \leq 2 \varepsilon^{\frac 12} \sum_{n'=0}^{{n-1}} e^{{-n'}}\|V\|_{\kappa, \rho} \|P\|_{\kappa, \rho} \quad \forall P \in \mathcal{O}_{\kappa, \rho}\,.
		\end{equation}
		Moreover, for $n>0$ they are of the form {$Y^{(n)}_\obs(\omega t) = e^{-\ii G^{({n-1})}_\obs(\omega t)} \cdots e^{-\ii G^{(0)}_\obs(\omega t)}$}, with
		\begin{equation}\label{Gn.per.n}
		\| G^{(n)}_\obs\|_{\kappa_{n+1}, \rho_{n+1}} \leq \varepsilon^{\frac 12} e^{-n} \|V\|_{\kappa_0, \rho_0}\,.
		\end{equation}
	\end{enumerate}
\end{lemma}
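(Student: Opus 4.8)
The plan is to run essentially the same finite iteration as in the proof of Lemma \ref{nf.thm}, with two changes: at each step one solves the homological equation \eqref{hom.original} by Lemma \ref{lem.Sailor.Moon.is.back} (which returns a generator vanishing at $\varphi=0$) instead of Lemma \ref{cor.trova.g}, and one shortens the scheme to $n^*=\lfloor\varepsilon^{-\tb}\rfloor$ steps. I would argue by induction on $n$. For $n=0$ take $Y^{(0)}_\obs=\mathbbm 1$, $Z^{(0)}_\obs=0$, $V^{(0)}_\obs=\varepsilon V$, and all of \eqref{z.small.Hobbes}--\eqref{Gn.per.n} hold trivially. Assuming the statement at step $n$, apply Lemma \ref{lem.Sailor.Moon.is.back} with $P=V^{(n)}_\obs$, $\kappa=\kappa_n$, $\rho=\rho_n$, $\delta=\delta_n:=(\rho_n-\rho_{n+1})/2$; since the geometric relation $\kappa_0=\frac{\ccccc+1}{2}\rho_0$ propagates to $\kappa_n=\frac{\ccccc+1}{2}\rho_n$ for all $n$ (exactly as in Lemma \ref{nf.thm}), one has $\kappa_n-\ccccc\delta_n=\kappa_{n+1}+\delta_n$ and $\rho_n-\delta_n=\rho_{n+1}+\delta_n$, so the output $G^{(n)}_\obs\in\mathcal{O}^\strong_{\kappa_{n+1}+\delta_n,\rho_{n+1}+\delta_n}$ with $G^{(n)}_\obs(0)=0$ and $Z_\obs=\llan V^{(n)}_\obs\rran$ given by \eqref{cacio.e.pepe}. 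Setting $\widetilde U^{(n)}_\obs(t):=e^{-\ii G^{(n)}_\obs(\omega t)}U_{H^{(n)}_\obs}(t)$, Lemma \ref{lem.pushfwd} (with $G(0)=0$, hence no boundary correction) gives $\widetilde U^{(n)}_\obs=U_{H^{(n+1)}_\obs}$ with $H^{(n+1)}_\obs$ decomposed as in \eqref{is.the.boss}--\eqref{rg}, the term \eqref{vanishes} cancelling by the homological equation. One then sets $Z^{(n+1)}_\obs:=Z^{(n)}_\obs+\llan V^{(n)}_\obs\rran$, $V^{(n+1)}_\obs:=\eqref{rn}+\eqref{rv}+\eqref{rz}+\eqref{rg}$, and $Y^{(n+1)}_\obs(\omega t):=e^{-\ii G^{(n)}_\obs(\omega t)}Y^{(n)}_\obs(\omega t)$; because $G^{(n)}_\obs(0)=0$, one keeps $Y^{(n+1)}_\obs(0)=\mathbbm 1$ together with the telescoped form $Y^{(n+1)}_\obs=e^{-\ii G^{(n)}_\obs}\cdots e^{-\ii G^{(0)}_\obs}$.

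For the quantitative part I would reuse verbatim the machinery of Lemma \ref{nf.thm}. First, $\delta_n\geq\rho_0/(8\sqrt2\,n^*)$ exactly as in \eqref{delta.below} (with $n^*$ in place of $n_*$). Next, the first estimate in \eqref{bagno.maria} applied to $P=V^{(n)}_\obs$ together with the inductive bound \eqref{v.small.Hobbes} gives $\|G^{(n)}_\obs\|_{\kappa_{n+1}+\delta_n,\rho_{n+1}+\delta_n}\lesssim\delta_n^{-(\tau+r)}\varepsilon_n\lesssim (n^*)^{\tau+r}\varepsilon_n\leq\varepsilon^{-\tb(\tau+r)}\varepsilon_n$; since $\tb(\tau+r)<\tb(\tau+r+2)=1/4$, this is a fixed positive power of $\varepsilon$, so choosing $\varepsilon_0$ small in terms of $r,\tau,\gamma,\kappa,\rho,\Omega,\|V\|_{\kappa,\rho},\{\|N^{(\alpha)}\|_0\}_\alpha$ makes the smallness hypothesis of Lemma \ref{cor.lignano.pineta} hold with $\eta=1/2$ uniformly in $n\leq n^*$. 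Then the four bounds in \eqref{grappino} (with subscript $\obs$ throughout, and with $\|Z^{(n)}_\obs\|$ replacing $\|Z^{(n)}_\inv\|$ in the estimate of \eqref{rz}) bound each of \eqref{rn}--\eqref{rg} by $\varepsilon_{n+1}/4$, proving \eqref{v.small.Hobbes} at step $n+1$; estimate \eqref{Gn.per.n} follows just as \eqref{prove.me.too} does in Lemma \ref{nf.thm}, and then \eqref{y.diff.Hobbes} follows from \eqref{Gn.per.n}, Lemma \ref{cor.lignano.pineta} and the telescoping $Y^{(n+1)}_\obs=e^{-\ii G^{(n)}_\obs}Y^{(n)}_\obs$, exactly as in the final display of the proof of Lemma \ref{nf.thm}.

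The one step that genuinely differs from the $\inv$ case, and the main obstacle, is the control of $Z^{(n)}_\obs$. In Lemma \ref{nf.thm} the increment $\langle V^{(n)}_\inv\rangle$ carries no small divisors (Lemma \ref{lem.zeta.commutes}), whereas here $Z^{(n+1)}_\obs-Z^{(n)}_\obs=\llan V^{(n)}_\obs\rran$ and the second estimate in \eqref{bagno.maria} costs an extra power of $\delta_n^{-1}$: $\|\llan V^{(n)}_\obs\rran\|_{\kappa_n}\lesssim\delta_n^{-(\tau+r+1)}\varepsilon_n\lesssim(n^*)^{\tau+r+1}\varepsilon_n\leq\varepsilon^{-\tb(\tau+r+1)}\varepsilon_n\leq\varepsilon^{-1/2}\varepsilon_n$, the last inequality using $\tb(\tau+r+1)=\frac{\tau+r+1}{4(\tau+r+2)}<\frac12$. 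Summing over the steps yields $\|Z^{(n)}_\obs\|_{\kappa_n}\leq\varepsilon^{-1/2}\sum_{n'=0}^{n-1}\varepsilon_{n'}$, which is \eqref{z.small.Hobbes}. This extra factor $\delta_n^{-1}\sim n^*$ is precisely why one cannot afford $n_*=\lfloor\varepsilon^{-2\tb}\rfloor$ steps here but only $n^*=\lfloor\varepsilon^{-\tb}\rfloor$: the iteration length must be tuned so that $(n^*)^{\tau+r+1}\sum_n\varepsilon_n$ stays of order $\varepsilon^{1/2}$, which is the balance encoded in the choice of $\tb$ in \eqref{parametri.vitali}. With $n^*$ so chosen every threshold above holds uniformly in $n\leq n^*$, the time-dependent remainder shrinks like $\varepsilon_n=e^{-n}\varepsilon\|V\|_{\kappa_0,\rho_0}$, and the induction closes at $n=n^*$, where $\kappa_{n^*}=\kappa_0/\sqrt2$ and $\rho_{n^*}=\rho_0/\sqrt2$.
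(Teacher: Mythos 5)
Your proof follows the paper's own argument essentially verbatim: you iterate as in Lemma \ref{nf.thm}, solve the homological equation via Lemma \ref{lem.Sailor.Moon.is.back} so that $G^{(n)}_\obs(0)=0$ and $Y^{(n)}_\obs(0)=\mathbbm{1}$, absorb the extra $\delta_n^{-1}$ loss in the $\llan\cdot\rran$-estimate into the $\varepsilon^{-1/2}$ factor of \eqref{z.small.Hobbes}, and shorten the iteration to $n^*=\lfloor\varepsilon^{-\tb}\rfloor$. The only small imprecision is the closing heuristic: one could in principle still run $\lfloor\varepsilon^{-2\tb}\rfloor$ steps here since $2\tb(\tau+r+1)<1/2$ always holds, so the shorter iteration is a convenient choice giving a uniform smallness margin rather than a strict necessity; this does not affect the correctness of the proof.
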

\begin{proof}
	The proof follows the same scheme of the proof of Lemma \ref{nf.thm}, thus we only sketch the procedure highlighting the differences. At any step $n$ one looks for a change of coordinates of the form
	$$
	\widetilde{U}^{(n)}_{\obs}(t) = e^{-\ii G^{(n)}_{\obs}(\omega t)} U_{H^{(n)}_{\obs}}(t)\,,
	$$
	where $G^{(n)}_{\obs}$ solves
	\begin{equation}\label{hom.again.Hobbes}
	\ii \left[J \cdot N, G^{(n)}_{\obs}(\varphi)\right]  + \omega \cdot \partial_\varphi{G}^{(n)}_{\obs}(\varphi) + V^{(n)}_{\obs}( \varphi) - \llan V^{(n)}_{\obs} \rran = 0\,,
	\end{equation}
	with $\llan V^{(n)}_\obs \rran$ as in \eqref{cacio.e.pepe}. Then by Lemma \ref{lem.pushfwd} one has
	\begin{subequations}
		\begin{align}		
\nonumber
		H^{(n+1)}_{\obs}(\varphi) &= J \cdot N + Z^{(n)}_{\obs} + \llan V^{(n)}_{\obs} \rran\\
		\label{vanishes.A}
		&+ \ii \left[J \cdot N, G^{(n)}_{\obs}(\varphi)\right]  + \omega\cdot \partial_\varphi {G}^{(n)}_{\obs}(\varphi) + V^{(n)}_{\obs}( \varphi) - \llan V^{(n)}_{\obs} \rran\\
		\label{rn.T}
		& + e^{-\ii G^{(n)}_{\obs}(\varphi)} (J \cdot N) e^{\ii G^{(n)}_{\obs}(\varphi)} - J \cdot N - \ii [J \cdot N, G^{(n)}_{\obs}(\varphi)] \\
		\label{rv.T}
		& +  e^{-\ii G^{(n)}_{\obs}(\varphi)} V^{(n)}_{\obs}( \varphi) e^{\ii G^{(n)}_{\obs}(\varphi)} - V^{(n)}_{\obs}( \varphi)\\
		\label{rz.E}
		& +  e^{-\ii G^{(n)}_{\obs}( \varphi)} Z^{(n)}_{\obs} e^{\ii G^{(n)}_{\obs}(\varphi)} - Z^{(n)}_{\obs}\\
		\label{rg.O}
		& + \int_0^1 e^{-\ii G^{(n)}_{\obs}(\varphi)s} \omega \cdot \partial_\varphi G^{(n)}_{\obs}( \varphi) e^{\ii G^{(n)}_{\obs}(\varphi)s }\,\ud s - \omega \cdot \partial_\varphi {G}^{(n)}_{\obs}( \varphi)\,.
		\end{align}
	\end{subequations} 
	Then one defines
	\begin{equation}\label{new.objects}
	Z_{\obs}^{(n+1)} := Z_{\obs}^{(n)} + \llan V^{(n)}_{\obs} \rran\,, \quad V_{\obs}^{(n+1)} = \eqref{rn.T} + \eqref{rv.T} + \eqref{rz.E} + \eqref{rg.O}\,,
	\end{equation}
	and defining $\delta_n := \frac{\rho_n - \rho_{n+1}}{2}$, Lemma \ref{lem.Sailor.Moon.is.back} implies
	\begin{equation} \label{sketchy}
	\begin{gathered}
	\|G^{(n)}_\obs\|_{\kappa_n -\ccccc\delta_n, \rho_n-\delta_n} \leq \frac{2^{2r+1} \tau^{\tau}}{e^{\tau} \gamma \delta_n^{\tau+r}}(1+\delta_{{n}})^r \| V^{(n)}_\obs\|_{\kappa, \rho}\,, \\
	\| \llan V^{(n)}_\obs \rran \|_{\kappa_n - \ccccc \delta_n} \leq \left( \frac{\mathcal{J} 2 ^{2r}(1 +\delta_n)^r \tau^\tau}{\gamma e^\tau \delta_n^{\tau + r + 1}} + 1 \right) \|V^{(n)}_\obs\|_{\kappa_n, \rho_n}\,.
	\end{gathered}
	\end{equation}
	Then one argues as in the proof of Lemma \ref{nf.thm}, with the only difference that for any $n = 0, \dots, n^*$ one has
	\begin{equation}\label{not.so.normal.form}
	\| \llan V^{(n)}_\obs \rran \|_{\kappa_n - \ccccc \delta_n} \leq \left( \frac{\mathcal{J} 2 ^{2r}(1 +\delta_n)^r \tau^\tau}{\gamma e^\tau \delta_n^{\tau + r + 1}} + 1 \right) \varepsilon_n \leq  \frac{\mathcal{J} 2 ^{4r + 1}\tau^\tau}{\gamma e^\tau} \left(\frac{8 \sqrt{2} n^*}{\rho_0}\right)^{\tau + r + 1} \varepsilon_n \leq \varepsilon^{-\frac 1 2} \varepsilon_n\,,
	\end{equation}
	provided $\varepsilon \leq \varepsilon_0$ with $\varepsilon_0$ small enough and depending on $\mathcal{J}, r, \gamma, \tau, \rho_0$. Summing over $n$ and recalling the definition of $Z^{(n+1)}_\obs$ as in \eqref{new.objects}, by estimate \eqref{not.so.normal.form} one gets \eqref{z.small.Hobbes}. Equation \eqref{Gn.per.n} is verified using \eqref{sketchy} together with the inductive assumption \eqref{v.small.Hobbes} on $V^{(n)}_\obs$, and \eqref{y.diff.Hobbes} is proven defining recursively $Y_\obs^{(n+1)}(\varphi) := e^{-\ii G^{(n)}_\obs(\varphi)} Y^{(n)}_{\obs}(\varphi)$ and arguing as in the proof of Lemma \ref{nf.thm}.  
\end{proof}

\begin{proof}[Proof of Proposition \ref{teo:main}]
	In order to obtain Item (a), it is sufficient to take $Z_\inv := Z^{(n_*)}_\inv$, $V_\inv := V^{(n_*)}_\inv$, $H_\inv := H^{(n_*)}_\inv$ and $Y_\inv := Y^{(n_*)}_\inv$, with $Z^{(n_*)}_\inv,\ V^{(n_*)}_\inv\, H^{(n_*)}_\inv,\ Y^{(n_*)}_\inv$ and $n_*$ as in Lemma \ref{nf.thm}. Analogously, Item (b) is obtained taking $Z_\obs := Z^{(n^*)}_\obs$, $V_\obs := V^{(n^*)}_\obs$, $H_\obs := H^{(n^*)}_\obs$ and $Y_\obs := Y^{(n^*)}_\obs$, with $Z^{(n^*)}_\obs,\ V^{(n^*)}_\obs\, H^{(n^*)}_\obs,\ Y^{(n^*)}_\obs$ and $n^*$ as in Lemma \ref{nf.thm.Hobbes}.
\end{proof}

\subsection{Fast-Forcing Perturbations}

\begin{lemma}\label{nf.thm.ff}
	Under the assumptions of Proposition \ref{teo:main.ff}, define
	\begin{equation}\label{ho.ho.ho.ff}
	n_\star := \left\lfloor \lambda^{2\beta}\right\rfloor\,, \quad \kappa_0:= \min\{\kappa, \rho\}\,, \quad \rho_0:=2 (\ccccc + 1)^{-1} \kappa_0\,,
	\end{equation}
	with $\ccccc$ as in \eqref{costantina} and $\beta$ as in \eqref{parametri.vitali.ff}, and define $\forall n = 1, \dots, n_\star$
	\begin{equation}
	\kappa_n := {\kappa_0}\sqrt{1 - b n}\,, \quad \rho_n := {\rho_0}\sqrt{1 - b n}\,,\quad b = \frac{1}{2 n_\star}\,,
	\end{equation}
	and $\forall n = 0, \dots, n_\star$
	\begin{equation}\label{granchietti.ff}
	\varepsilon_n := \left(\frac 1 e\right)^{n-1} \lambda^{-\frac{1}{4\tau_\omega}} \|V\|_{\kappa_0, \rho_0}\,.
	\end{equation}
	Then, for any $n = 0, \dots, n_\star$ there exist a time quasi-periodic family of unitary maps $Y^{(n)}_\inv(\lambda \omega t)$, a self-adjoint operator $Z^{(n)}_\inv$ and a time quasi-periodic family of self-adjoint operators $V_{\inv}^{(n)}(\lambda \omega t)$ such that, defining
	\begin{equation}
	H^{(n)}_\inv(t):= J \cdot N + Z^{(n)}_\inv + V^{(n)}_\inv( \lambda \omega t) \,,
	\end{equation}
	one has
	\begin{equation}
	U_{H^{(n)}_\inv}(t) = Y^{(n)}_{\inv}(\lambda \omega t) U_{H}(t) (Y^{(n)}_\inv)^*(0)\,,
	\end{equation}	and the following are satisfied:
	\begin{enumerate}
		\item $Z^{(n)}_\inv \in \mathcal{O}^\strong_{\kappa_n}$ is time independent, {$Z^{(0)}_\inv=0$, and for any $n>0$}
		\begin{equation}\label{d.new.ff}
		 [Z^{(n)}_\inv, N^{(\alpha)}] = 0 \quad \forall \alpha = 1, \dots, r\,, \quad \|Z^{(n)}_\inv\|_{\kappa_n} \leq \sum_{n'=0}^{{n-1}} \varepsilon_{{n'}}\,;
		\end{equation}
		\item $V^{(n)}_\inv \in \mathcal{O}^\strong_{\kappa_n, \rho_n}$, and
		\begin{equation}\label{v.new.ff}
		\| V^{(n)}_\inv\|_{\kappa_n, \rho_n} \leq\left\{ \begin{array}{lll} \varepsilon_{n}\,, & \qquad & n \geq 1 \\
		& & \\
		\Vert V \Vert_{\kappa_0,\rho_0} \, , & &n=0
		\end{array} \right. ;
		\end{equation}
		\item The maps $Y^{(n)}_\inv(\lambda \omega t)$ are close to the identity, in the sense that {$Y^{(0)}_\inv(\lambda \omega t)\equiv  \mathbbm{1}$, and for any $n>0$}
		\begin{equation}\label{y.not.ff}
		\| Y^{(n)}_\inv P (Y^{(n)}_\inv)^*- P\|_{\kappa_n, \rho_n} \leq 2 \lambda^{\frac{1}{8 \tau_\omega}} \sum_{n'=0}^{n-1}\varepsilon_{n'} \|P\|_{\kappa_n, \rho_n} \quad \forall P \in \mathcal{O}_{\kappa_n, \rho_n}\,.
		\end{equation}
	\end{enumerate}
\end{lemma}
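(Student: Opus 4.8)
The plan is to mimic the iterative normal-form scheme of Lemma \ref{nf.thm}, replacing the small-perturbation homological estimates (Lemma \ref{cor.trova.g}) with the fast-forcing ones (Corollary \ref{cor.trova.g.ff}), and keeping track of the fact that in the fast-forcing case the ultraviolet part of the perturbation must be discarded separately via Corollary \ref{cor.p.small}. First I would set up the induction on $n$: for $n=0$ take $Y_\inv^{(0)}=\mathbbm 1$, $Z_\inv^{(0)}=0$, $V_\inv^{(0)}=V$, so that the base case of \eqref{d.new.ff}, \eqref{v.new.ff}, \eqref{y.not.ff} holds trivially. Assume the statement at step $n$; I would then conjugate $U_{H_\inv^{(n)}}$ by $e^{-\ii G_\inv^{(n)}(\lambda\omega t)}$, where $G_\inv^{(n)}$ solves the homological equation
\begin{equation}
\ii[J\cdot N, G_\inv^{(n)}(\varphi)] + \lambda\omega\cdot\partial_\varphi G_\inv^{(n)}(\varphi) + (V_\inv^{(n)})^{\ir}(\varphi) - \langle V_\inv^{(n)}\rangle = 0\,,
\end{equation}
using Corollary \ref{cor.trova.g.ff} with $P = V_\inv^{(n)}$, $\delta = \delta_n := \tfrac{\rho_n-\rho_{n+1}}{2}$. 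By Lemma \ref{lem.pushfwd} the transformed Hamiltonian $H_\inv^{(n+1)}$ splits into $J\cdot N + Z_\inv^{(n)} + \langle V_\inv^{(n)}\rangle$, the vanishing homological combination, the ultraviolet leftover $(V_\inv^{(n)})^{\uv}$, and the four conjugation remainders (the analogues of \eqref{rn}--\eqref{rg}). One sets $Z_\inv^{(n+1)} := Z_\inv^{(n)} + \langle V_\inv^{(n)}\rangle$ and lets $V_\inv^{(n+1)}$ be the sum of $(V_\inv^{(n)})^{\uv}$ and those four remainders.

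The commutation property $[Z_\inv^{(n+1)},N^{(\alpha)}]=0$ follows at once from Lemma \ref{lem.zeta.commutes}, since $\langle V_\inv^{(n)}\rangle$ commutes with every $N^{(\alpha)}$; the bound on $\|Z_\inv^{(n+1)}\|_{\kappa_{n+1}}$ in \eqref{d.new.ff} follows by summing the estimate $\|\langle V_\inv^{(n)}\rangle\|_{\kappa_{n+1}} \leq \|V_\inv^{(n)}\|_{\kappa_n,\rho_n} \leq \varepsilon_n$ (here one uses $\|\langle P\rangle\|_\kappa \leq \|P\|_{\kappa,\rho}$ from Lemma \ref{lem.zeta.commutes}), together with the telescoping over $n$, exactly as in \eqref{z.small}. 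Strong locality of all operators produced is preserved by Lemmas \ref{cor.converge.male}, \ref{lem.fourier.e.bello} and Corollary \ref{cor.trova.g.ff}. The parameter choice $\kappa_n = \tfrac{\ccccc+1}{2}\rho_n$ (forced by \eqref{ho.ho.ho.ff}) gives $\kappa_n - \ccccc\delta_n = \kappa_{n+1}+\delta_n$, so all the norms land in the right spaces, just as in Lemma \ref{nf.thm}.

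The quantitative heart is the estimate $\|V_\inv^{(n+1)}\|_{\kappa_{n+1},\rho_{n+1}} \leq \varepsilon_{n+1}$. For the four conjugation remainders I would apply Lemma \ref{cor.lignano.pineta} (with $\sigma = \delta_n$, after checking the smallness hypothesis via the bound on $\|G_\inv^{(n)}\|$ coming from Corollary \ref{cor.trova.g.ff}), obtaining bounds of the form $\lesssim \delta_n^{-1}e^{-\kappa_{n+1}}\|G_\inv^{(n)}\|\,(\|V_\inv^{(n)}\| + \|Z_\inv^{(n)}\| + \delta_n^{-1}\lambda|\omega|\,\|G_\inv^{(n)}\|)$ — the $\lambda|\omega|$ factor in the $\partial_\varphi$-remainder \eqref{rg}-analogue is the new feature relative to the small case, but it is compensated because from Corollary \ref{cor.trova.g.ff} one has $\|G_\inv^{(n)}\| \lesssim \lambda^{-1/2}\delta_n^{-r}\|V_\inv^{(n)}\| + \delta_n^{-\tau_J-r}\lambda^{-1/4\tau_\omega}\|\langle V_\inv^{(n)}\rangle_{\mathbb T^m}\|$, so the product $\lambda\,\|G_\inv^{(n)}\|^2$ still carries a net negative power of $\lambda$. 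For the ultraviolet term, Corollary \ref{cor.p.small} gives $\|(V_\inv^{(n)})^{\uv}\|_{\kappa_{n+1},\rho_{n+1}} \lesssim \delta_n^{-r-1}\max\{K^{-1},L^{-1}\}\|V_\inv^{(n)}\|$, and recalling $K = \tfrac{\gamma_\omega}{2|J|}\lambda^{1/2}$, $L = \lambda^{1/2\tau_\omega}$ from \eqref{m1.m2}, this is again a negative power of $\lambda$ times $\varepsilon_n$. Using $\delta_n \geq \tfrac{\rho_0}{8\sqrt2\,n_\star}$ (the analogue of \eqref{delta.below}) and $n_\star = \lfloor\lambda^{2\beta}\rfloor$ with $\beta$ as in \eqref{parametri.vitali.ff}, one checks that each of these contributions is $\leq \tfrac15\varepsilon_{n+1}$ for $\lambda>\lambda_0$, where $\lambda_0$ depends only on the system parameters; here the precise value $\beta = \tfrac{1}{16\tau_\omega(\tau_J+r+2)}$ is dictated by requiring that $\delta_n^{-(\tau_J+r+1)}n_\star^{-?}$-type losses, raised over $n_\star$ steps, are beaten by $\lambda^{-1/4\tau_\omega}$. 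This is the step I expect to be the main obstacle: balancing the three competing losses (the Diophantine loss $\delta_n^{-\tau_J-r}$, the ergodization loss from $\lambda^{-1/2}$ and $\lambda^{-1/2\tau_\omega}$, and the cumulative $n_\star$-fold iteration) so that the geometric decay $\varepsilon_n = e^{-(n-1)}\lambda^{-1/4\tau_\omega}\|V\|_{\kappa_0,\rho_0}$ closes. Finally \eqref{y.not.ff} is proved, exactly as in Lemma \ref{nf.thm}, by defining $Y_\inv^{(n+1)} := e^{-\ii G_\inv^{(n)}(\lambda\omega t)}Y_\inv^{(n)}(\lambda\omega t)$, applying Lemma \ref{cor.lignano.pineta} to the conjugation of an arbitrary $P\in\mathcal O_{\kappa_n,\rho_n}$, and summing the resulting geometric series under the smallness condition on $\lambda$.
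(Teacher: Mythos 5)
Your overall scheme (inductive conjugation by $e^{-\ii G^{(n)}_\inv}$, homological equation from Corollary \ref{cor.trova.g.ff} for the infrared part, discard of $(V^{(n)}_\inv)^{\uv}$ via Corollary \ref{cor.p.small}, $Z^{(n+1)}_\inv := Z^{(n)}_\inv + \langle V^{(n)}_\inv\rangle$, $Y^{(n+1)}_\inv := e^{-\ii G^{(n)}_\inv}Y^{(n)}_\inv$) matches the paper's. However, your treatment of the $\partial_\varphi$-conjugation remainder, i.e.\ the analogue of \eqref{rg.ff}, is flawed. You bound it directly by $\delta_n^{-2}\,\lambda|\omega|\,\|G^{(n)}_\inv\|^2$ and claim this carries a ``net negative power of $\lambda$''. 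But the estimate in Corollary \ref{cor.trova.g.ff} reads
\[
\|G\|_{\kappa-\ccccc\delta,\rho}\lesssim \delta^{-r}\left(\frac{1}{\gamma_\omega\lambda^{1/2}}\|P-\langle P\rangle_{\mathbb{T}^m}\|_{\kappa,\rho}+\frac{1}{\gamma_J\delta^{\tau_J}}\|\langle P\rangle_{\mathbb{T}^m}\|_{\kappa,0}\right),
\]
and the second (Diophantine-in-$J$) contribution has \emph{no} $\lambda^{-1/2}$ prefactor; you have inserted a spurious $\lambda^{-1/4\tau_\omega}$ in front of $\|\langle V^{(n)}_\inv\rangle_{\mathbb{T}^m}\|$ that is not there. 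Consequently $\lambda\|G^{(n)}_\inv\|^2$ has a term $\sim \lambda\,\delta_n^{-2(\tau_J+r)}\,\gamma_J^{-2}\,\|\langle V^{(n)}_\inv\rangle_{\mathbb{T}^m}\|^2\lesssim \lambda^{1-1/2\tau_\omega}\,n_\star^{2(\tau_J+r)}$, whose exponent of $\lambda$ is positive whenever $\tau_\omega>1/2$ (in particular for all $m\geq 2$, since $\tau_\omega>m-1$); the geometric decay in $\varepsilon_n$ and the choice of $\beta$ cannot overcome a uniformly positive power of $\lambda$ appearing at every step, so the induction does not close.

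The paper avoids this by never estimating $\lambda\omega\cdot\partial_\varphi G^{(n)}_\inv$ through $\lambda\|G^{(n)}_\inv\|/\delta_n$: instead it substitutes the homological identity $\lambda\omega\cdot\partial_\varphi G^{(n)}_\inv=-\ii[J\cdot N,G^{(n)}_\inv]-(V^{(n)}_\inv)^\ir+\langle V^{(n)}_\inv\rangle$ inside the remainder integral (see \eqref{g.gdot.mini.ff}), and then observes that $\|[J\cdot N,G^{(n)}_\inv]\|$ is controlled by $\|V^{(n)}_\inv\|$ directly from the Fourier representation, since for all $(k,l)\in\mathcal{K}_\ir\times\mathcal{L}_\ir$ one has $|J\cdot k|\leq |J\cdot k+\lambda\omega\cdot l|$ by the choice of $K=\tfrac{\gamma_\omega}{2|J|}\lambda^{1/2}$ and \eqref{delicato.equilibrio}. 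This yields $\|\eqref{rg.ff}\|\lesssim \delta_n^{-1}\|G^{(n)}_\inv\|\,\|V^{(n)}_\inv\|$ with no factor of $\lambda$, and the induction then closes exactly as you outline. You should replace your direct $\lambda|\omega|\|G\|^2$ bound with this substitution argument; aside from this point (and minor imprecisions such as not singling out the $n=0$ step, for which the paper treats $\langle V\rangle_{\mathbb{T}^m}$ separately), your plan coincides with the paper's.
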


	\begin{proof} The proof is analogous to the proof of Lemma \ref{nf.thm} with some differences that we highlight here. Instead of \eqref{hom.again}, $G^{(n)}_\inv$ solves 
\begin{equation}\label{hom.again.ff}
	\ii \left[J \cdot N, G^{(n)}_\inv(\varphi)\right]  + \lambda \omega \cdot \partial_\varphi{G}^{(n)}_\inv(\varphi) + (V^{(n)}_{\inv})^\ir( \varphi) - \langle V^{(n)}_\inv \rangle = 0\,,
\end{equation}
	with $\langle V^{(n)}_\inv \rangle$ defined in \eqref{zeta.esplicita}. In the transformed Hamiltonian, we also split the $\uv$/$\ir$ parts and we have to control the following terms
\begin{subequations}
	\begin{align}
	\nonumber
	H^{(n+1)}_\inv &= J \cdot N + Z^{(n)}_\inv + \langle V^{(n)}_\inv \rangle\\
	\label{vanishes.ff}
	&+ \ii \left[J \cdot N, G^{(n)}_\inv\right]  + \lambda \omega \cdot \partial_\varphi{G}^{(n)}_\inv + (V_\inv^{(n)})^\ir - \langle V^{(n)}_\inv \rangle\\
	\label{vuv.ff}
	&+ V_\inv^{(n)} - (V^{(n)}_\inv)^\ir\tag{\theequation$^\prime$}\\
	\label{rn.ff}
	& + e^{-\ii G^{(n)}_\inv} (J \cdot N) e^{\ii G^{(n)}_\inv} - J \cdot N - [J \cdot N, G^{(n)}_\inv] \\
	\label{rv.ff}
	& +  e^{-\ii G^{(n)}_\inv} V^{(n)}_\inv e^{\ii G^{(n)}_\inv} - V^{(n)}_\inv\\
	\label{rz.ff}
	& +  e^{-\ii G^{(n)}_\inv} Z^{(n)}_\inv e^{\ii G_n} - Z^{(n)}_\inv\\
	\label{rg.ff}
	& + \int_0^1 e^{-\ii G^{(n)}_\inv s}\lambda (\omega \cdot \partial_\varphi) (G^{(n)}_\inv) e^{\ii G^{(n)}_\inv s}\,\ud s - \lambda \omega \cdot \partial_\varphi G^{(n)}_\inv\,.
	\end{align}
\end{subequations}
One then sets $Z_\inv^{(n+1)}:=Z_\inv^{(n)}+\langle V^{(n)}_\inv \rangle$ and, differently from \eqref{eq:AiutoIlCapoMiRapisce}, one has to take into account the term \eqref{vuv.ff}:
\begin{equation}
	V^{(n+1)}_\inv := \eqref{vuv.ff}+\eqref{rn.ff} + \eqref{rv.ff} + \eqref{rz.ff} + \eqref{rg.ff}\,.
\end{equation}
	We observe that, due to our choice of the parameters $\{\kappa_n\}_{n=1}^{n_\star}$, $\{\rho_n\}_{n=1}^{n_\star}$, one has $\kappa_n = (\ccccc + 1) \rho_n$ $\forall n$, thus also
	$$
	\kappa_{n} - \kappa_{n+1} = ( \ccccc + 2) (\rho_n -\rho_{n+1}) = ( \ccccc + 2) \delta_n
	$$
and  $\kappa_{n} - \ccccc \delta_n = \kappa_{n+1} + 2 \delta_n$. 
		Using Corollary \ref{cor.trova.g.ff} we find $G^{(n)}_\inv$ solving \eqref{hom.again.ff}, and the operators $G^{(n)}_\inv, \langle V^{(n)}_\inv \rangle$ satisfy
\begin{equation}\label{one.more.time.ff}
	\begin{gathered}
	\begin{aligned}
	\| G^{(n)}_\inv\|_{\kappa_{n+1} + 2 \delta_n, \rho_n} &\leq \frac{2^{2r+2}}{\gamma_\omega \lambda^{1/2}} \left(\frac{1+\delta_n}{\delta_n}\right)^r \Vert V^{(n)}_\inv - \langle V^{(n)}_\inv \rangle_{\mathbb{T}^m} \Vert_{\kappa_n,\rho_n} + \frac{2^{2r+1}\tau_J^{\tau_J}}{\gamma_J e^{\tau_J} \delta_n^{\tau_J}}\left(\frac{1+\delta_n}{\delta_n}\right)^r \Vert \langle V^{(n)}_\inv \rangle_{\mathbb{T}^m} \Vert_{\kappa_n ,0}\\
	& \leq  \frac{2^{2r+2}}{\gamma_\omega \lambda^{1/2}}\left(\frac{1+\delta_n}{\delta_n}\right)^r \varepsilon_n + \frac{2^{2r+1}\tau_J^{\tau_J}}{\gamma_J e^{\tau_J} \delta_n^{\tau_J}} \left(\frac{1+\delta_n}{\delta_n}\right)^r \varepsilon_n \quad \textrm{if} \quad n \geq 1\,,
	\end{aligned}\\
	\| G^{(0)}_\inv\|_{\kappa_{1} + 2\delta_0, \rho_0} \leq \frac{2^{2r+2}}{\gamma_\omega \lambda^{1/2}}\left(\frac{1+\delta_0}{\delta_0}\right)^r \Vert V^{(0)}_\inv \Vert_{\kappa_0,\rho_0} \leq  \frac{2^{2r+3}}{\gamma_\omega \lambda^{1/2}} \left(\frac{1+\delta_0}{\delta_0}\right)^r\|V\|_{\kappa_0, \rho_0}\,,\\
	\|\langle V^{(n)}_\inv \rangle\|_{\kappa_{n+1} +2 \delta_n} \leq \|V^{(n)}_\inv\|_{\kappa_n, \rho_n} \leq \varepsilon_n \quad \textrm{if} \quad n \geq 1\,,\\
	\|\langle V^{(0)}_\inv \rangle\|_{\kappa_{1} + 2\delta_0}  = 0\,,
	\end{gathered}
	\end{equation}
	where the case $n=0$ is treated differently from all other cases $n\geq 1$ since $\langle V \rangle_{\mathbb{T}^m} =0$. The terms
\eqref{rn.ff}, \eqref{rv.ff}, \eqref{rz.ff} are treated analogously as in the proof of Proposition \ref{nf.thm} (with obvious adaptation of the norms), while for \eqref{vuv.ff} we use Corollary \ref{cor.p.small} and for  \eqref{rg.ff} we use the fact that $\lambda \omega \cdot \partial_\varphi G^{(n)}_\inv$ solves the homological equation \eqref{hom.again.ff}, thus obtaining
\begin{eqnarray}
	\!\!\!\!\!\!\!\!\!\Vert \text{\eqref{vuv.ff}} \Vert_{\kappa_{n+1},\rho_{n+1}}\!\!\!\!\!\! &\leq& \!\!\!\!\!\!\frac{2^{2r+1}}{e \delta_n^{r+1}}(1+\delta_n)^r \lambda^{-\frac{1}{2\tau_\omega}} \Vert V^{(n)}_\inv \Vert_{\kappa_n,\rho_n} \\
	\!\!\!\!\!\!\!\!\!\Vert \text{\eqref{rg.ff}} \Vert_{\kappa_{n+1},\rho_{n+1}}\!\!\!\!\!\! &\leq&\!\!\!\!\!\!	\label{g.gdot.mini.ff}
	 \frac{8 e^{-\kappa_{n+1}}}{\delta_{n}}\| G^{(n)}_\inv\|_{\kappa_{n+1} + 2\delta_n, \rho_{n+1}} \|\ii[J \cdot N, G^{(n)}_\inv ] + V^{(n)}_\inv - \langle V^{(n)}_\inv \rangle\|_{\kappa_{n+1} + \delta_n, \rho_{n+1}}.
\end{eqnarray}
Then, the inductive hypothesis follows if each of the terms \eqref{vuv.ff}-\eqref{rg.ff} can be bounded by $\frac{\varepsilon_{n+1}}{5}$. Then, repeating with straightforward adaptations the proof of Lemma \ref{nf.thm}, all the bounds are satisfied by choosing $n_\star$ as in \eqref{ho.ho.ho.ff} with $\lambda > \lambda_0$ depending on $r$, $\tau_\omega$, $\tau_J$, $\gamma_\omega$, $\gamma_J$, $\kappa$, $\rho$, $\Vert H_0\Vert_\kappa$, $\Vert V \Vert_{\kappa,\rho}$.

To prove \eqref{y.not.ff}, one uses the same argument of Lemma \ref{nf.thm} substituting $\varepsilon^{\frac{1}{2}} \mapsto \lambda^{-\frac{1}{8\tau_\omega}}$.
\end{proof}

\begin{lemma}\label{nf.thm.Hobbes.ff}
	Under the assumptions of Proposition \ref{teo:main.ff}, define
	\begin{equation}\label{ho.ho.ho.Hobbes.ff}
	n^{\star} := \left\lfloor \lambda^{\beta}\right\rfloor\,, 
	\end{equation}
	with $\beta$ as in \eqref{parametri.vitali.ff}, and $\forall n = 1, \dots, n^{\star}$ define
	\begin{equation}\label{kaippirini_n.ff}
	\kappa_n := {\kappa_0}\sqrt{1 - b n}\,, \quad \rho_n := {\rho_0}\sqrt{1 - b n}\,,\quad b := \frac{1}{2 n^{\star}}\,,
	\end{equation}
	with $\kappa_0, \rho_0$ as in \eqref{ho.ho.ho.ff} and $\ccccc$ as in \eqref{costantina}.
	Let $\{\varepsilon_n\}_{n= 0}^{n^{\star}}$ be defined with $\varepsilon_n$ as in \eqref{granchietti.ff}, then $\forall n = 0, \dots, n^{\star}$ there exist a time quasi-periodic family of unitary maps $Y^{(n)}_{\obs}(\lambda\omega t)$, a self-adjoint operator $Z^{(n)}_{\obs}$ and a time quasi-periodic family of self-adjoint operators $V^{(n)}_{\obs}(\lambda\omega t)$ such that, defining
	\begin{equation}
	H^{(n)}_{\obs}(\lambda\omega t):= J \cdot N + Z^{(n)}_{\obs} + V^{(n)}_{\obs}( \lambda\omega t) \,,
	\end{equation}
	one has
	\begin{equation}
	U_{H^{(n)}_{\obs}}(t) = Y^{(n)}_{\obs}(\lambda \omega t) U_{H}(t)\,,
	\end{equation}
	and the following are satisfied:
	\begin{enumerate}
		\item $Z^{(n)}_{\obs} \in \mathcal{O}^\strong_{\kappa_n}$ is time independent, $Z^{(0)}_\obs = 0$, and for any $n >0$
		\begin{equation}\label{z.small.Hobbes.obs.ff}
		\|Z^{(n)}_{\obs}\|_{\kappa_n} \leq \lambda^{\frac{1}{8 \tau_\omega}}\sum_{n'=0}^{{n-1}} \varepsilon_{n'}\,;
		\end{equation}
		\item $V^{(n)}_{\obs} \in \mathcal{O}^\strong_{\kappa_n, \rho_n}$, and 
		\begin{equation}\label{v.small.Hobbes.ff}
		\| V^{(n)}_{\obs}\|_{\kappa_n, \rho_n} \leq \varepsilon_n\,;
		\end{equation}
		\item The maps $Y^{(n)}_{\obs}(\lambda \omega t)$ are such that $Y^{(n)}_{\obs}(0) = \mathbbm{1}$ and they are close to the identity, in the sense that $Y^{(0)}_\obs(\lambda \omega t) \equiv \mathbbm{1}$, and $\forall n >0$
		\begin{equation}\label{y.diff.Hobbes.ff}
		\| Y^{(n)}_{\obs} P (Y^{(n)}_{\obs})^*- P\|_{\kappa_n, \rho_n} \leq 2 \lambda^{-\frac{1}{8\tau_\omega}} \sum_{n'=0}^{{n-1}} e^{{-n'}}\|V\|_{\kappa, \rho} \|P\|_{\kappa, \rho} \quad \forall P \in \mathcal{O}_{\kappa, \rho}\,.
		\end{equation}
		Moreover, they are of the form {$Y^{(n)}_\obs(\omega t) = e^{-\ii G^{({n-1})}_\obs(\omega t)} \cdots e^{-\ii G^{(0)}_\obs(\omega t)}$ $\forall n >0$}, with
		\begin{equation}\label{Gn.per.n.ff}
		\| G^{(n)}_\obs\|_{\kappa_{n+1}, \rho_{n+1}} \leq \lambda^{-\frac {1}{8\tau_\omega}} e^{-n} \|V\|_{\kappa_0, \rho_0}\,.
		\end{equation}
	\end{enumerate}
\end{lemma}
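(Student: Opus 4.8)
The proof proceeds by induction on $n=0,\dots,n^\star$, with $n^\star=\lfloor\lambda^\beta\rfloor$, along the iterative scheme of Lemma~\ref{nf.thm.ff}, but incorporating the two modifications that already distinguish the proof of Lemma~\ref{nf.thm.Hobbes} from that of Lemma~\ref{nf.thm}: the conjugation at step $n$ carries \emph{no} compensating unitary $e^{\ii G^{(n)}_\obs(0)}$, and the generator is chosen with $G^{(n)}_\obs(0)=0$. Thus at the inductive step one sets
$$
\widetilde U^{(n)}_\obs(t)\;=\;e^{-\ii G^{(n)}_\obs(\lambda\omega t)}\,U_{H^{(n)}_\obs}(t)\,,
$$
which by Lemma~\ref{lem.pushfwd} equals $U_{H^{(n+1)}_\obs}(t)$ with $H^{(n+1)}_\obs$ given by \eqref{solita.roba.ma.la.copio}; since $G^{(n)}_\obs(0)=0$, the map $Y^{(n+1)}_\obs:=e^{-\ii G^{(n)}_\obs}Y^{(n)}_\obs$ is again of the stated product form and still equals $\mathbbm{1}$ at $t=0$. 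The base case is as in Lemma~\ref{nf.thm.ff}, with the first step ($n=0$) treated separately since $\|V\|_{\kappa_0,\rho_0}$ is not yet of size $\varepsilon_1$.

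The core step is the choice of $G^{(n)}_\obs$. One first solves, as in Lemma~\ref{lem.Senza.Nome} (equivalently Corollary~\ref{cor.trova.g.ff}), the $\theta$-dependent fast-forcing homological equation for the infrared part, obtaining $\mathcal{G}^{(n)}$ with constant term $\langle V^{(n)}_\obs\rangle$, and then --- exactly as in Lemma~\ref{lem.Sailor.Moon.is.back} --- one sets $G^{(n)}_\obs(\varphi):=\mathcal{G}^{(n)}(0,\varphi)-\mathcal{G}^{(n)}(0,0)$. Then $G^{(n)}_\obs(0)=0$, it solves the analogue of \eqref{hom.again.ff} with constant term $\langle V^{(n)}_\obs\rangle-\ii[J\cdot N,\mathcal{G}^{(n)}(0,0)]$ (the fast-forcing counterpart of the operator $\llan\cdot\rran$ of \eqref{cacio.e.pepe}, characterised through Lemma~\ref{besciamella.alla.cannella}), and one defines $Z^{(n+1)}_\obs:=Z^{(n)}_\obs+\langle V^{(n)}_\obs\rangle-\ii[J\cdot N,\mathcal{G}^{(n)}(0,0)]$ and $V^{(n+1)}_\obs$ as the sum of the ultraviolet remainder $V^{(n)}_\obs-(V^{(n)}_\obs)^\ir$ and the conjugation remainders of the type \eqref{rn.ff}, \eqref{rv.ff}, \eqref{rz.ff}, \eqref{rg.ff}.

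The quantitative estimates are then imported from the proof of Lemma~\ref{nf.thm.ff}, with the systematic substitution $\varepsilon^{1/2}\mapsto\lambda^{-1/(8\tau_\omega)}$ already used there: Corollary~\ref{cor.p.small} controls the ultraviolet term, Corollary~\ref{cor.trova.g.ff} controls $G^{(n)}_\obs$ and $\langle V^{(n)}_\obs\rangle$, Lemma~\ref{cor.lignano.pineta} controls the conjugation remainders in terms of $\delta_n^{-1}\|G^{(n)}_\obs\|$ with $\delta_n:=\tfrac12(\rho_n-\rho_{n+1})\gtrsim(n^\star)^{-1}$, and the extra term is estimated by $\|[J\cdot N,\mathcal{G}^{(n)}(0,0)]\|_{\kappa_n-\ccccc\delta_n}\lesssim\delta_n^{-1}|J|\max_\alpha\|N^{(\alpha)}\|_0\,\|\mathcal{G}^{(n)}(0,0)\|$. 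Since the $l=0$ divisor in $\mathcal{G}^{(n)}(0,0)$ produces a factor $\delta_n^{-\tau_J}\lesssim(n^\star)^{\tau_J}\lesssim\lambda^{\beta\tau_J}$ and $\beta\tau_J<\tfrac{1}{8\tau_\omega}$ by \eqref{parametri.vitali.ff}, the new contribution to $Z^{(n+1)}_\obs$ is $\lesssim\lambda^{1/(8\tau_\omega)}\varepsilon_n$, whose summation over $n$ yields \eqref{z.small.Hobbes.obs.ff}; the remaining contributions to $V^{(n+1)}_\obs$ are each bounded by $\varepsilon_{n+1}/5$, giving \eqref{v.small.Hobbes.ff}, while \eqref{Gn.per.n.ff} and \eqref{y.diff.Hobbes.ff} follow as in Lemma~\ref{nf.thm}. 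All smallness thresholds in Corollary~\ref{cor.trova.g.ff} and Lemma~\ref{cor.lignano.pineta} are met for the whole range of $n$ by taking $\lambda\ge\lambda_0$ with $\lambda_0$ depending only on the parameters of the system associated to \eqref{eq:our.H.lambda}. The reason one can afford only $n^\star=\lfloor\lambda^\beta\rfloor$ steps (against $n_\star=\lfloor\lambda^{2\beta}\rfloor$ for $H_\inv$) is precisely the larger bound on $\|Z^{(n)}_\obs\|$, forcing a correspondingly weaker final remainder $\|V_\obs\|\le e^{-\lambda^\beta}\|V\|$.

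The step I expect to be the main obstacle is, as throughout Section~\ref{sec:norma.normale}, running all $\sim\lambda^\beta$ iterations while keeping $\lambda_0$ \emph{independent of} $|\Lambda|$: this forces one to work exclusively with the local norms $\|\cdot\|_{\kappa,\rho}$ and the $\theta$-conjugation calculus of Section~\ref{sec:Preliminaries} (Lemmas~\ref{lem:op.tau}, \ref{lem.fourier.e.bello}, \ref{lem.zeta.commutes}) rather than with operator norms, and to verify that the geometric gain $1/e$ per step absorbs both the loss of analyticity width $\delta_n$ in $\varphi$ and the $\delta_n^{-r}$-type losses incurred in passing between $\mathcal{A}$ and $A$ in Lemma~\ref{lem:op.tau}. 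The genuinely new bookkeeping, absent from \cite{Else2020} (where the unperturbed part is never touched), is the control of the accumulated corrections $\sum_n\ii[J\cdot N,\mathcal{G}^{(n)}(0,0)]$ inside $Z^{(n)}_\obs$, which is responsible both for the factor $\lambda^{1/(8\tau_\omega)}$ in \eqref{z.small.Hobbes.obs.ff} and for the restriction to $n^\star=\lfloor\lambda^\beta\rfloor$ steps.
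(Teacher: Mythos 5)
Your proof follows essentially the same route as the paper's (which is only a one-line citation of Lemmas~\ref{nf.thm.Hobbes} and~\ref{nf.thm.ff} and of Appendix~H of \cite{Else2020}): iterate the Else2020-style scheme with $Y^{(n+1)}_\obs = e^{-\ii G^{(n)}_\obs} Y^{(n)}_\obs$, shift the generator by $-\mathcal{G}^{(n)}(0,0)$ so that $G^{(n)}_\obs(0)=0$, put the ultraviolet piece and the conjugation remainders into $V^{(n+1)}_\obs$, and absorb the compensating term $-\ii[J\cdot N,\mathcal{G}^{(n)}(0,0)]$ (the fast-forcing analogue of $\llan\cdot\rran$ from \eqref{cacio.e.pepe}) into $Z^{(n+1)}_\obs$. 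Your structural description is correct and fills in the steps the paper leaves implicit.

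Two remarks on the quantitative step. First, your bound on $[J\cdot N,\mathcal{G}^{(n)}(0,0)]$ via the generic commutator estimate, $\|[J\cdot N,\mathcal{G}^{(n)}(0,0)]\|\lesssim\delta_n^{-1}|J|\max_\alpha\|N^{(\alpha)}\|_0\|\mathcal{G}^{(n)}(0,0)\|$, together with $\|\mathcal{G}^{(n)}(0,0)\|\lesssim\delta_n^{-\tau_J-r}\varepsilon_n$, actually produces a factor $\delta_n^{-(\tau_J+r+1)}$, not the $\delta_n^{-\tau_J}$ you state; one then needs $\beta(\tau_J+r+1)\le\tfrac{1}{8\tau_\omega}$, which indeed holds by \eqref{parametri.vitali.ff}, so the conclusion survives. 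Second, the route used in Lemma~\ref{lem.Sailor.Moon.is.back} is sharper and cleaner: since $[J\cdot N,\mathcal{G}^{(n)}(0,0)] = -\sum_S\sum_{(k,l)\in\mathcal{K}_\ir\times\mathcal{L}_\ir\setminus\{(0,0)\}}\tfrac{(J\cdot k)(\widehat{\mathcal{P}_S})_{l,k}}{J\cdot k + \lambda\omega\cdot l}$ by Lemma~\ref{besciamella.alla.cannella}, and the choice \eqref{m1.m2} of $K,L$ makes the ratio $|J\cdot k|/|J\cdot k+\lambda\omega\cdot l|\le 1$ on all infrared modes (for $l=0$ the $J\cdot k$ cancels, for $l\neq 0$ it is dominated by $\lambda|\omega\cdot l|$), the Fourier route gives the estimate directly with only the $\delta_n^{-r}$ loss from Lemma~\ref{lem:op.tau} and no commutator loss $\delta_n^{-1}$ nor divisor loss $\delta_n^{-\tau_J}$. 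This matches the factor $\lambda^{1/(8\tau_\omega)}$ in \eqref{z.small.Hobbes.obs.ff} with room to spare and is the intended analogue of \eqref{not.so.normal.form}; your commutator bound works here but would be the wrong instinct in tighter situations.
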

\begin{proof}The algebraic scheme of the proof is contained in Appendix H of \cite{Else2020}. For the quantitative estimates, one argues as in the proof of Lemma \ref{nf.thm.Hobbes} and \ref{nf.thm.ff}.
\end{proof}

\begin{proof}[Proof of Proposition \ref{teo:main.ff}]
	In order to obtain Item (a), it is sufficient to take $Z_\obs := Z^{(n_\star)}_\inv$, $V_\inv := V^{(n_\star)}_\inv$, $H_\inv := H^{(n_\star)}_\inv$ and $Y_\inv := Y^{(n_\star)}_\inv$, with $Z^{(n_\star)}_\inv,\ V^{(n_\star)}_\inv,\ H^{(n_\star)}_\inv,\ Y^{(n_\star)}_\inv$ and $n_\star$ as in Lemma \ref{nf.thm.ff}. Analogously, Item (b) is obtained taking $Z_\obs := Z^{(n^\star)}_\obs$, $V_\obs := V^{(n^\star)}_\obs$, $H_\obs := H^{(n^\star)}_\obs$ and $Y_\obs := Y^{(n^\star)}_\obs$, with $Z^{(n^\star)}_\obs,\ V^{(n^\star)}_\obs,\ H^{(n^\star)}_\obs,\ Y^{(n^\star)}_\obs$ and $n^\star$ as in Lemma \ref{nf.thm.Hobbes.ff}.
\end{proof}

\section{Proof of Dynamical Consequences}\label{sec:physical.cons}

\subsection{Small Perturbations}
\begin{proof}[Proof of Theorem \ref{teo:SlowHeating}]
{Let $H_\inv, Z_\inv, V_\inv$ and $Y_\inv$ be defined as in Proposition \ref{teo:main}. Using \eqref{uH.uHstar}, we get
\[
	\begin{split}
		\| U_H^*(t) N^{(\alpha)} U_H(t)-N^{(\alpha)}\|_{\mathrm{op}} & \leq \|U_H^*(t) N^{(\alpha)} U_H(t)-Y_\inv^*(0) N^{(\alpha)} Y_\inv(0)\|_{\mathrm{op}} + \|Y_\inv^*(0) N^{(\alpha)} Y_\inv(0)-N^{(\alpha)}\|_{\mathrm{op}} \\
		& \leq \|U_{H_\inv}^*(t)  Y_\inv(\omega t) N^{(\alpha)} Y_\inv^*(\omega t) U_{H_\inv}(t)- N^{(\alpha)} \|_{\mathrm{op}}\\
		& \quad + \|Y_\inv^*(0) N^{(\alpha)} Y_\inv(0)-N^{(\alpha)}\|_{\mathrm{op}} \\
		&\leq \|U_{H_\inv}^*(t) Y_\inv^*(\omega t) N^{(\alpha)} Y_\inv(\omega t) U_{H_\inv}(t)-U_{H_\inv}^*(t) N^{(\alpha)}U_{H_\inv}(t) \|_{\mathrm{op}} \\
		&\quad + \|Y^*_\inv(0) N^{(\alpha)} Y_\inv(0)-N^{(\alpha)}\|_{\mathrm{op}}+\|U_{H_\inv}^*(t) N^{(\alpha)} U_{H_\inv}(t)-N^{(\alpha)} \|_{\mathrm{op}} \, .
	\end{split}
\]
 Now we estimate the three terms separately. For the first and the second term, we use Lemma \ref{lem.op.loc.norm} and Item (a.iv) of Proposition \ref{teo:main}.} To estimate the third one, we use Duhamel together with the fact that, by Item (a.i) of Proposition \ref{teo:main}, $[H_{\inv},N^{(\alpha)}]=0$, as follows:
	\begin{equation}\label{duhamel.N}
	\begin{split}
	U_{H_\inv}(t) N^{(\alpha)} U_{H_\inv}^*(t)-N^{(\alpha)} &= \ii \int_0^t U_{H_\inv}(s) [H_\inv(\omega s),N^{(\alpha)}] U_{H_\inv}^*(s) \, \ud s  \\
	&=\ii \int_0^t U_{H_\inv}(s) [V_\inv(\omega s),N^{(\alpha)}] U_{H_\inv}^*(s) \, \ud s
	\end{split}
	\end{equation}
	and we use again Lemma \ref{lem.op.loc.norm} and Item (a.iii) of Proposition \ref{teo:main}.
\end{proof}
In the remaining part of the subsection, we prove Theorem \ref{thm:EvLocObs}. The heart of the proof goes as follows: let
\begin{equation}\label{H.eff}
	H_{\textrm{eff}} :=  J \cdot N + Z_{\obs}\,,
\end{equation}
with $Z_\obs$ defined in Item (b) Theorem \ref{teo:main}. For any local observable $O$, one bounds
$$
\left\|U_H^*(t) O U_H(t) - e^{-\ii H_{\mathrm{eff}} t} O e^{\ii H_{\mathrm{eff}}t} \right\|_{\mathrm{op}}\,
$$
by using triangular inequality and estimates separately
\begin{equation}\label{puntopunto}
\left\|U_{H_{\obs}}^*(t) O U_{H_\obs}(t) - e^{-\ii H_{\mathrm{eff}} t} O e^{\ii H_{\mathrm{eff}}t} \right\|_{\mathrm{op}}\,, \quad \left\|U_H^*(t) O U_H(t) - U^*_{H_{\obs}}(t) O U_{H_{\obs}}(t) \right\|_{\mathrm{op}}\,.
\end{equation}
\begin{lemma}[Lieb-Robinson bound \cite{Lieb1972}]\label{lieb.rob}
	Let $Z=\sum_{S \in \mathcal{P}_c(\Lambda)} Z_S$ be a self-adjoint operator with $\Vert Z \Vert_{2\kappa} < +\infty$ for some $\kappa > 0$. Let the operators $A,B$ act within $S_A,S_B \subset \Lambda$ respectively. Then,
	\begin{equation}
	\Vert [A, e^{\ii t Z} B e^{-\ii t Z} ] \Vert_{\mathrm{op}} \; \leq \; \Vert A \Vert_{\mathrm{op}} \Vert B \Vert_{\mathrm{op}} e^{-\kappa(d(S_A,S_B)-vt)} \min(|S_A|,|S_B|)
	\end{equation}
	with Lieb-Robinson speed $v=v(Z,\kappa):=C(d)(\kappa^{-(d+2)} e^\kappa) \Vert Z \Vert_{2 \kappa}$ and $C(d)$ only depending on the spatial dimension $d$.
\end{lemma}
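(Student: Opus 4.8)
\noindent\textbf{Plan for the proof of Lemma~\ref{lieb.rob}.}
This is the classical Lieb--Robinson bound of \cite{Lieb1972}, and the only thing to check is that the standard iteration runs smoothly with the local norm $\|\cdot\|_{2\kappa}$ in place of the usual exponential-decay norms. First I would reduce to the regime $vt < d(S_A,S_B)$: when $vt \ge d(S_A,S_B)$ one has $e^{-\kappa(d(S_A,S_B)-vt)}\ge 1$ and the inequality follows (after possibly enlarging $C(d)$) from the trivial bound $\|[A,e^{\ii tZ}Be^{-\ii tZ}]\|_{\mathrm{op}}\le 2\|A\|_{\mathrm{op}}\|B\|_{\mathrm{op}}$; in particular we may assume $S_A\cap S_B=\varnothing$. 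Writing $B(t):=e^{\ii tZ}Be^{-\ii tZ}$ and $f(t):=[A,B(t)]$, and using $\dot B(t)=\ii[Z,B(t)]$ together with the Jacobi identity, one gets $\dot f(t)=\ii[Z,f(t)]+\ii[[A,Z],B(t)]$. Conjugating $f$ by $e^{-\ii tZ}$ (an integrating-factor trick) removes the first term, and integrating and taking operator norms yields $\|f(t)\|_{\mathrm{op}}\le\|f(0)\|_{\mathrm{op}}+\int_0^t\|[[A,Z],B(s)]\|_{\mathrm{op}}\,\ud s$, with $\|f(0)\|_{\mathrm{op}}=0$ since $S_A\cap S_B=\varnothing$.

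Next I would set up the Lieb--Robinson recursion. Decomposing $[A,Z]=\sum_{S\in\mathcal{P}_c(\Lambda),\,S\cap S_A\neq\varnothing}[A,Z_S]$, using $\|[A,Z_S]\|_{\mathrm{op}}\le 2\|A\|_{\mathrm{op}}\|Z_S\|_{\mathrm{op}}$ and the fact that $[A,Z_S]$ acts within $S_A\cup S$, one obtains, for the quantity $C_B(X,t):=\sup\{\|[A',B(t)]\|_{\mathrm{op}}:A'\in\mathscr{B}(\mathcal{H}_X),\ \|A'\|_{\mathrm{op}}\le 1\}$,
\[
C_B(X,t)\;\le\;C_B(X,0)+2\int_0^t\sum_{S\cap X\neq\varnothing}\|Z_S\|_{\mathrm{op}}\,C_B(X\cup S,s)\,\ud s\,,
\]
where $C_B(X,0)\le 2\|B\|_{\mathrm{op}}$ if $X\cap S_B\neq\varnothing$ and $C_B(X,0)=0$ otherwise. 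Iterating this inequality produces a sum over "chains" $S_A=:X_0\subset X_1\subset\cdots\subset X_n$ with $X_j=X_{j-1}\cup S^{(j)}$ and $S^{(j)}\cap X_{j-1}\neq\varnothing$, weighted by $\frac{(2t)^n}{n!}\prod_{j=1}^n\|Z_{S^{(j)}}\|_{\mathrm{op}}\cdot C_B(X_n,0)$; only chains with $X_n\cap S_B\neq\varnothing$ contribute, and such a chain must "travel" at least distance $d(S_A,S_B)$, so $\sum_{j}(\operatorname{diam}S^{(j)}+1)\gtrsim d(S_A,S_B)$.

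Finally I would carry out the usual bookkeeping on the chain sum. Summing over $S^{(j)}$ meeting the current support $X_{j-1}$ contributes a factor $\le|X_{j-1}|\,\|Z\|_{\kappa'}$ if one carries the weight $e^{\kappa'|S^{(j)}|}$ with $\kappa'\le 2\kappa$; splitting the available weight $e^{2\kappa|S^{(j)}|}$ into a factor $e^{\kappa|S^{(j)}|}$ used to absorb the lattice-volume counting $|X_{j-1}|\le C(d)(\mathrm{radius})^d$ together with the combinatorial polynomial factors (via $p!$-type estimates, which is exactly what manufactures the powers of $\kappa^{-1}$), and a second factor $e^{\kappa|S^{(j)}|}$ together with the $+1$ per step used to build $e^{\kappa\sum_j(\operatorname{diam}S^{(j)}+1)}\ge e^{\kappa\,d(S_A,S_B)}$, one bounds the restricted $n$-sum by $C(d)\,|S_A|\,e^{-\kappa d(S_A,S_B)}\sum_{n\ge 0}\frac{(C(d)\kappa^{-(d+2)}e^{\kappa}\|Z\|_{2\kappa}\,t)^n}{n!}=C(d)\,|S_A|\,e^{-\kappa(d(S_A,S_B)-vt)}$ with $v=C(d)\kappa^{-(d+2)}e^{\kappa}\|Z\|_{2\kappa}$. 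Multiplying back by $\|A\|_{\mathrm{op}}$ gives the stated bound with $|S_A|$ in place of $\min(|S_A|,|S_B|)$; running the identical argument with the roles of $A$ and $B$ exchanged — using $\|[A,B(t)]\|_{\mathrm{op}}=\|[e^{-\ii tZ}Ae^{\ii tZ},B]\|_{\mathrm{op}}$ — gives the bound with $|S_B|$, and taking the smaller of the two yields $\min(|S_A|,|S_B|)$. The one genuinely delicate point is precisely this last bookkeeping: tracking how the doubled weight $2\kappa$ splits between controlling the growth of the supports $X_{j-1}$ (the source of $\kappa^{-(d+2)}$ and of $C(d)$) and producing the spatial decay $e^{-\kappa d(S_A,S_B)}$, while keeping the series in $n$ summable; everything else is routine.
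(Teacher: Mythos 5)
The paper does not prove this lemma: it is stated as a classical result and cited to \cite{Lieb1972} (the specific version with the $\Vert Z\Vert_{2\kappa}$ hypothesis and the explicit speed $v=C(d)\kappa^{-(d+2)}e^\kappa\Vert Z\Vert_{2\kappa}$ is essentially Lemma 4.2 of \cite{Abanin2017}, which in turn relies on the Lieb--Robinson / Hastings / Nachtergaele--Sims line of work). So there is no paper proof to compare against; what follows is a critique of your sketch on its own terms.

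Your first two paragraphs are correct and standard: Duhamel with the integrating factor $e^{-\ii tZ}$, the decomposition $[A,Z]=\sum_{S\cap S_A\neq\varnothing}[A,Z_S]$, the definition of $C_B(X,t)$, and the iteration all match the usual route. The serious issue is in the bookkeeping of the third paragraph, and it is larger than you acknowledge. The iteration you set up is
\[
C_B(X,t)\;\le\;C_B(X,0)+2\int_0^t\sum_{S\cap X\neq\varnothing}\|Z_S\|_{\mathrm{op}}\,C_B(X\cup S,s)\,\ud s,
\]
with the accumulating supports $X_j = X_{j-1}\cup S^{(j)}$. Expanding to order $n$, the chain weight carries a factor $\prod_{j=1}^n |X_{j-1}|$ with $|X_{j-1}| \le |S_A|+\sum_{j'<j}|S^{(j')}|$. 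You cannot absorb each $|X_{j-1}|$ by the local weight $e^{\kappa|S^{(j)}|}$ alone, because $|X_{j-1}|$ remembers $|S_A|$ and all previous sets. If you control the product via $x^n\le n!\,\kappa^{-n}e^{\kappa x}$ with $x=|S_A|+\sum_j|S^{(j)}|$, you pick up a stray $e^{\kappa|S_A|}$ rather than the polynomial prefactor $\min(|S_A|,|S_B|)$ in the statement; tracking it more naively gives an $e^{C|S_A|t}$-type growth instead. The standard way this is avoided in the literature is to replace the growing-set iteration by a path-of-points bookkeeping: one chooses at each step a lattice point $y_j\in S^{(j-1)}\cap S^{(j)}$ (with $y_1\in S_A$, $y_n$ forced near $S_B$), so the starting set $S_A$ enters only through the sum over $y_1\in S_A$, producing the linear factor $|S_A|$ rather than anything exponential; this is the content of the $F$-function formalism of Nachtergaele--Sims, and it is precisely what your paragraph glosses over. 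Without this change, your sketch as written does not close to the stated estimate.

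Two smaller remarks. (1) The reduction to $vt<d(S_A,S_B)$ by enlarging $C(d)$ is slightly imprecise: when $\min(|S_A|,|S_B|)=1$ the trivial bound $2\|A\|_{\mathrm{op}}\|B\|_{\mathrm{op}}$ only becomes dominated once $\kappa(vt-d(S_A,S_B))\ge\log 2$, so the case split should be at $vt\ge d(S_A,S_B)+\log 2/\kappa$, or alternatively you should keep the iterative argument valid in that range (it is, since the spatial-decay factor is merely $\ge1$ there). (2) The exact exponents $\kappa^{-(d+2)}$ and $e^\kappa$ in the speed are artefacts of how the $e^{2\kappa|S|}$ weight is split between the volume count $|S^{(j)}|\lesssim\kappa^{-d}e^{\kappa|S^{(j)}|}$ and the residual decay; any correct bookkeeping will produce some such constant, but you should not assume the particular split you describe reproduces exactly $\kappa^{-(d+2)}e^\kappa$ without working it out, especially as different references in the LR literature state different constants and sometimes impose a time cutoff $|t|\le t^*$.
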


The following result is proven in \cite{Abanin2017} for time-periodic operators (see Section 5.1 therein). For completeness, we present a proof of it in Appendix \ref{append:ProofTec}.
\begin{lemma}[See also \cite{Abanin2017}]\label{lem:LemmaSempreDiverso}
	Let $O$ be a local observable acting within $S_O$, $A \in \mathcal{O}_{\kappa,0}$ and $Z \in \mathcal{O}_{2 \kappa}$.  Then there exists a positive constant ${C}(|S_O|,d,\kappa)$ such that
	\begin{equation}
	\hspace{-0.09cm}\int_0^t \ud \tau \Vert[A(\omega \tau),e^{-\ii \tau Z} O e^{\ii \tau Z} ] \Vert_{\mathrm{op}} \leq C(|S_O|,d,\kappa) \langle\Vert  Z \Vert_{2 \kappa} \rangle^{d+1} \langle t \rangle^{d+2} \Vert O \Vert_{\mathrm{op}} \Vert A \Vert_{\kappa,0} \, .
	\end{equation}
\end{lemma}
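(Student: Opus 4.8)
The plan is to reduce the commutator with the full operator $A(\omega\tau)$ to a sum of commutators with its local pieces, estimate each piece by the minimum of a trivial bound and the Lieb--Robinson bound of Lemma~\ref{lieb.rob}, and then carry out the spatial sum by splitting it into a ``light-cone'' part and an exterior part that is controlled by the exponential weight built into the norm $\|\cdot\|_{\kappa,0}$.

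First I would write $A(\omega\tau)=\sum_{S\in\mathcal{P}_c(\Lambda)}A_S(\omega\tau)$ and use the triangle inequality together with the elementary bound $\|A_S(\omega\tau)\|_{\mathrm{op}}\le\sum_{l\in\Z^m}\|(\widehat{A_S})_l\|_{\mathrm{op}}$ to get, for every connected $S$, two estimates for $\|[A_S(\omega\tau),e^{-\ii\tau Z}Oe^{\ii\tau Z}]\|_{\mathrm{op}}$: the trivial one $2\|A_S(\omega\tau)\|_{\mathrm{op}}\|O\|_{\mathrm{op}}$, and, applying Lemma~\ref{lieb.rob} with $A_S(\omega\tau)$ in place of $A$, with $O$ in place of $B$, and at time $-\tau$,
\[
\|[A_S(\omega\tau),e^{-\ii\tau Z}Oe^{\ii\tau Z}]\|_{\mathrm{op}}\le \|A_S(\omega\tau)\|_{\mathrm{op}}\|O\|_{\mathrm{op}}\,|S_O|\,e^{-\kappa(d(S,S_O)-v\tau)},
\]
where $v=v(Z,\kappa)=C(d)\kappa^{-(d+2)}e^{\kappa}\|Z\|_{2\kappa}$. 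Taking the minimum and using $\min\{2,e^{-\kappa(d(S,S_O)-v\tau)}\}\le 4\,e^{-\kappa(d(S,S_O)-v\tau)_+}$ (with $(\cdot)_+$ the positive part) yields, for each $S$, a single bound proportional to $\|A_S(\omega\tau)\|_{\mathrm{op}}\|O\|_{\mathrm{op}}|S_O|\,e^{-\kappa(d(S,S_O)-v\tau)_+}$.

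Next I would sum over $S$ by organising the sum according to $n:=d(S,S_O)\in\N$. Grouping the connected sets through a site of $S$ realising this distance and using that, for any fixed $x\in\Lambda$, $\sum_{S\ni x}\|A_S(\omega\tau)\|_{\mathrm{op}}\le\sum_{S\ni x}\sum_{l}\|(\widehat{A_S})_l\|_{\mathrm{op}}e^{\kappa|S|}\le\|A\|_{\kappa,0}$ uniformly in $\tau$, together with the purely geometric count $\#\{x\in\Lambda:\ d(x,S_O)=n\}\le C_d|S_O|(1+n)^{d-1}$ (valid already in $\Z^d$, hence independent of $|\Lambda|$), I obtain
\[
\sum_{S}\|A_S(\omega\tau)\|_{\mathrm{op}}\,e^{-\kappa(d(S,S_O)-v\tau)_+}\le C_d|S_O|\,\|A\|_{\kappa,0}\Big(\sum_{n\le v\tau}(1+n)^{d-1}+\sum_{n>v\tau}(1+n)^{d-1}e^{-\kappa(n-v\tau)}\Big).
\]
The first sum is $\lesssim(1+v\tau)^d$ and the second, after writing $n=\lceil v\tau\rceil+m$, is $\lesssim(1+v\tau)^{d-1}\sum_{m\ge0}(1+m)^{d-1}e^{-\kappa m}=C(d,\kappa)(1+v\tau)^{d-1}$, so pointwise in $\tau$ one has $\|[A(\omega\tau),e^{-\ii\tau Z}Oe^{\ii\tau Z}]\|_{\mathrm{op}}\le C(|S_O|,d,\kappa)\|A\|_{\kappa,0}\|O\|_{\mathrm{op}}(1+v\tau)^d$. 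Finally I would bound $v\le C(d,\kappa)\langle\|Z\|_{2\kappa}\rangle$, so that $(1+v\tau)^d\le C(d,\kappa)\langle\|Z\|_{2\kappa}\rangle^d\langle\tau\rangle^d$, and integrate, using $\int_0^t\langle\tau\rangle^d\,\ud\tau\le\langle t\rangle^{d+1}\le\langle t\rangle^{d+2}$ and $\langle\|Z\|_{2\kappa}\rangle^d\le\langle\|Z\|_{2\kappa}\rangle^{d+1}$, to reach the claimed inequality.

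The computation itself is routine; the delicate point is keeping every constant independent of $|\Lambda|$. This holds because Lemma~\ref{lieb.rob} is volume-free, because the counting of lattice points (and therefore of connected subsets) at a prescribed distance from $S_O$ can be carried out in all of $\Z^d$, and because finiteness of $\|A\|_{\kappa,0}$ bounds the local masses attached to any fixed base site uniformly in the lattice size and in time.
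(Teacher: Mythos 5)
Your argument is correct, and it organizes the spatial sum differently from the paper's proof. The paper inflates $S_O$ to a ball $B_{S_O}$ of radius $r_Q + v\tau$ and splits the sum over base sites $x$ into $x \in B_{S_O}$ (trivial bound) and $x \notin B_{S_O}$ (Lieb--Robinson); for the exterior it uses $d(S,S_O) - v\tau \ge d(x,B_{S_O}) - |S|$ and pays the resulting factor $e^{\kappa|S|}$ by the $\kappa$-weight in the norm, which is the reason the final constant acquires the dependence $|S_O|^{d+2}$. You instead take the minimum of the trivial and Lieb--Robinson estimates once and for all, which gives a single exponential $e^{-\kappa(d(S,S_O)-v\tau)_+}$, and then organize the sum over $S$ directly by $n=d(S,S_O)$, choosing for each $S$ a site realizing this distance and counting such sites by $\# \{x : d(x,S_O)=n\} \le C_d |S_O| (1+n)^{d-1}$. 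This avoids the $e^{\kappa|S|}$ correction entirely, so that in fact $\|A\|_{0,0}$ would suffice in place of $\|A\|_{\kappa,0}$, and yields a slightly better power of $|S_O|$ in the constant. Both routes give the stated estimate; yours is a bit leaner here, while the paper's follows Abanin et al. more closely and is the one that naturally extends to the other light-cone arguments used elsewhere in that reference.
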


In the following, we will define
\begin{equation}
\kappa_{\min} \;:=\; \frac{\kappa_{*}}{2} \;=\; \frac{\min\{\kappa, \rho\}}{2 \sqrt{2}}\,,
\end{equation}
with $\kappa_*$ defined as in \eqref{def:oggetti.finali}, so that the operators $Z_{\obs}$, $V_{\obs}$ and $\{G^{(n)}_{\obs}\}_{n=1}^{n^*}$ defined in Proposition \ref{nf.thm} satisfy
\begin{equation}\label{eq:delle.stime}
Z_{\obs} \in \mathcal{O}_{2\kappa_{\min}}\,, \quad V_{\obs} \in \mathcal{O}_{2 \kappa_{\min}, 0}\,, \quad G^{(n)}_{\obs} \in \mathcal{O}_{\kappa_n, \rho_n} \subset \mathcal{O}_{2\kappa_{\min}, \rho_*}\,.
\end{equation}
\begin{corollary}\label{lem:TaroccoBound}
	Let $O$ be a local observable supported on $S_O$ and $G \in \mathcal{O}_{2\kappa_{\min},0}$, with $\Vert G \Vert_{2\kappa_{\min},0} \leq 1$, then $\exists C(|S_O|,d,\kappa, \rho)>0$ such that $\forall t>0$
	\begin{equation}
	\Vert e^{-\ii G(\omega t)} O e^{\ii G(\omega t)} - O \Vert_{\mathrm{op}} \leq C(|S_O|,d, \kappa, \rho) \Vert O \Vert_{\mathrm{op}} {\Vert G(\omega t) \Vert_{\kappa}}\,.
	\end{equation}
\end{corollary}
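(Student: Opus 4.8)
The plan is to reduce the bound to a single commutator estimate via the fundamental theorem of calculus in the conjugation parameter. Fix $t$ and, for $s \in [0,1]$, set $f(s) := e^{-\ii s G(\omega t)} O e^{\ii s G(\omega t)}$. Since $G(\omega t)$ is a bounded self-adjoint operator (this is where I would invoke Lemma~\ref{lem.op.loc.norm} together with $G \in \mathcal{O}_{2\kappa_{\min},0}$), the maps $e^{\mp\ii s G(\omega t)}$ are unitary, $f$ is differentiable with $f'(s) = -\ii\, e^{-\ii s G(\omega t)}[G(\omega t),O]e^{\ii s G(\omega t)}$, and hence
\begin{equation}
e^{-\ii G(\omega t)} O e^{\ii G(\omega t)} - O \;=\; -\ii \int_0^1 e^{-\ii s G(\omega t)} [G(\omega t), O]\, e^{\ii s G(\omega t)}\, \ud s\,.
\end{equation}
Taking $\Vert\cdot\Vert_{\mathrm{op}}$ and using that conjugation by a unitary preserves the operator norm gives $\Vert e^{-\ii G(\omega t)} O e^{\ii G(\omega t)} - O \Vert_{\mathrm{op}} \leq \Vert [G(\omega t), O] \Vert_{\mathrm{op}}$, so the whole statement comes down to estimating the commutator $[G(\omega t), O]$ in operator norm.

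For that estimate I would use only the elementary locality of $O$: writing $G(\omega t) = \sum_{S \in \mathcal{P}_c(\Lambda)} G_S(\omega t)$, one has $[G_S(\omega t), O] = 0$ whenever $S \cap S_O = \varnothing$, so
\begin{equation}
\Vert [G(\omega t), O] \Vert_{\mathrm{op}} \;\leq\; 2 \Vert O \Vert_{\mathrm{op}} \sum_{S \,:\, S \cap S_O \neq \varnothing} \Vert G_S(\omega t) \Vert_{\mathrm{op}} \;\leq\; 2 \Vert O \Vert_{\mathrm{op}} \sum_{x \in S_O}\ \sum_{S \ni x} \Vert G_S(\omega t) \Vert_{\mathrm{op}}\,,
\end{equation}
and the inner double sum is bounded by $|S_O|\, \Vert G(\omega t) \Vert_0 \leq |S_O|\, \Vert G(\omega t) \Vert_\kappa$. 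Combining the two displays yields the claim with $C(|S_O|,d,\kappa,\rho) = 2|S_O|$; the broader dependence allowed in the statement is harmless but not needed here.

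I do not expect any genuine obstacle. The only points deserving a line of care are that the Duhamel identity and the norm-preservation step are legitimate (guaranteed by self-adjointness and boundedness of $G(\omega t)$), and that $[G(\omega t), O]$ is literally a norm-convergent sum of the finitely-localised commutators $[G_S(\omega t), O]$. One may also note that if $\Vert G(\omega t)\Vert_\kappa = +\infty$ the inequality is vacuous, so there is no loss in assuming it finite. As an aside, the same bound can be extracted from Lemma~\ref{lem:LemmaSempreDiverso} specialised to $Z = 0$ and a short time interval — which is presumably why the result is recorded as a corollary — but the direct argument above is shorter and furnishes an explicit constant.
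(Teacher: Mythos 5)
Your proof is correct, and it takes a genuinely simpler route than the paper's. The paper also opens with the Duhamel identity $\|e^{-\ii G}Oe^{\ii G}-O\|_{\mathrm{op}}\leq\int_0^1\|[G,e^{-\ii\tau G}Oe^{\ii\tau G}]\|_{\mathrm{op}}\,\ud\tau$, but then treats the inner commutator via Lemma~\ref{lem:LemmaSempreDiverso} with $Z=G(\omega t_0)$ and $t=1$ (not $Z=0$ as you speculate in your aside), which invokes the Lieb--Robinson bound to control the spreading of $e^{-\ii\tau G}Oe^{\ii\tau G}$; this yields the stated bound at the cost of an unspecified constant and a $\langle\|G\|_{2\kappa_{\min}}\rangle^{d+1}$ factor that is only harmless because of the hypothesis $\|G\|_{2\kappa_{\min},0}\leq1$. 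You instead exploit that $G$ commutes with its own propagators, so $[G,e^{-\ii s G}Oe^{\ii s G}]=e^{-\ii s G}[G,O]e^{\ii s G}$, and unitary invariance collapses the whole thing to $\|[G(\omega t),O]\|_{\mathrm{op}}$ — a single commutator against the \emph{fixed} local $O$, no propagation in sight. That makes Lieb--Robinson unnecessary: only sets $S$ meeting $S_O$ contribute, and the sum is crudely dominated by $|S_O|\,\|G(\omega t)\|_0$. The payoff is a shorter proof, an explicit constant $C=2|S_O|$, and no need for the smallness hypothesis on $\|G\|_{2\kappa_{\min},0}$ (which in the paper's proof is needed to tame the Japanese bracket). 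You are also right that the case $\|G(\omega t)\|_\kappa=+\infty$ is vacuous, and that boundedness and self-adjointness of $G(\omega t)$ (via Lemma~\ref{lem.op.loc.norm}) justify the differentiability and the norm-convergence of $\sum_S[G_S,O]$.
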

\begin{proof} For any fixed $t_0:=t$, we write
	\[
	\begin{split}
	\Vert e^{-\ii G(\omega t_0)} O e^{\ii G(\omega t_0)} - O \Vert_{\mathrm{op}} &\leq \int_0^1 \ud \tau \Vert [ G(\omega t_0),e^{-\ii \tau G(\omega t_0)} O e^{\ii \tau G(\omega t_0)}] \Vert_{\mathrm{op}}\,
	\end{split}
	\]
	and we apply Lemma \ref{lem:LemmaSempreDiverso} with $Z=G(\omega t_0)$ and $t=1$. This yields
	\[
	\begin{split}
	\Vert e^{-\ii G(\omega t_0)} O e^{\ii G(\omega t_0)} - O \Vert_{\mathrm{op}} &\;\leq\; C(|S_O|,d,\kappa_{\min}) \langle \Vert G(\omega t) \Vert_{2 \kappa_{\min}} \rangle^{d+1} \|O\|_{\mathrm{op}} \|G(\omega t_0)\|_{\kappa_{\min}} \, ,
	\end{split}
	\]
	which is the thesis.
\end{proof}


\begin{lemma}\label{lem.spagnolette}
	Let $H_{\obs}(\omega t) = H_0 + Z_{\obs} + V_{\obs}( \omega t)$ and $\varepsilon$ as in Proposition \ref{teo:main}, and $H_{\mathrm{eff}}$ as in \eqref{H.eff}. For any local observable $O$, there exists a constant $C = C(O, d, \kappa, \rho, \|H_0\|_{\kappa}, \|V\|_{\kappa, \rho})$ such that, for any $t \in \mathbb{R}$ we have
	\begin{equation}\label{bagigio}
	\Vert {U}^*_{H_{\obs}}(t) O {U}_{H_{\obs}}(t) - e^{-\ii H_{\mathrm{eff}} t} O e^{\ii H_{\mathrm{eff}} t} \Vert_{\mathrm{op}} \; \leq \; C {e^{-{ \varepsilon^{-\tb}}} \langle t\rangle^{d+2}}\,.
	\end{equation}
\end{lemma}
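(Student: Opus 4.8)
The plan is to conjugate the dynamics of $H_{\obs}$ into that of $H_{\mathrm{eff}}$ by means of the unitary $Y_{\obs}(\omega t)$ supplied by Item (b) of Proposition \ref{teo:main}, and then to control the resulting error by Duhamel's formula, Lieb--Robinson bounds and Lemma \ref{lem:LemmaSempreDiverso}. Concretely, since $U_{H_{\obs}}(t) = Y_{\obs}(\omega t) U_H(t)$ and $Y_{\obs}(0) = \mathbbm 1$, one checks (using Lemma \ref{lem.pushfwd} in reverse, or directly) that $Y_{\obs}(\omega t) U_{H}(t)$ solves the equation generated by $H_{\obs}$, with $H_{\obs}(\omega t) = J\cdot N + Z_{\obs} + V_{\obs}(\omega t)$. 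Writing $H_{\mathrm{eff}} = J\cdot N + Z_{\obs}$, the difference between $H_{\obs}(\omega t)$ and $H_{\mathrm{eff}}$ is exactly the remainder $V_{\obs}(\omega t)$, which by Item (b.iii) of Proposition \ref{teo:main} satisfies $\|V_{\obs}\|_{\kappa_*,\rho_*} \leq e^{-\varepsilon^{-\tb}}\|V\|_{\kappa,\rho}$.

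The main step is the Duhamel estimate. I would write
\begin{equation}
U^*_{H_{\obs}}(t) O U_{H_{\obs}}(t) - e^{-\ii H_{\mathrm{eff}} t} O e^{\ii H_{\mathrm{eff}} t} = \ii \int_0^t \frac{\ud}{\ud s}\Big( U^*_{H_{\obs}}(s) e^{-\ii H_{\mathrm{eff}}(t-s)} O e^{\ii H_{\mathrm{eff}}(t-s)} U_{H_{\obs}}(s)\Big)\,\ud s\,,
\end{equation}
so that the integrand becomes $U^*_{H_{\obs}}(s)\,[V_{\obs}(\omega s), e^{-\ii H_{\mathrm{eff}}(t-s)} O e^{\ii H_{\mathrm{eff}}(t-s)}]\, U_{H_{\obs}}(s)$ (the $J\cdot N + Z_{\obs}$ part of $H_{\obs}$ cancels against $H_{\mathrm{eff}}$ inside the commutator). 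Taking operator norms and using unitarity reduces the bound to estimating $\int_0^t \|[V_{\obs}(\omega s), e^{-\ii H_{\mathrm{eff}}(t-s)} O e^{\ii H_{\mathrm{eff}}(t-s)}]\|_{\mathrm{op}}\,\ud s$. This is precisely of the form controlled by Lemma \ref{lem:LemmaSempreDiverso} with $A = V_{\obs}$, $Z = H_{\mathrm{eff}}$: note $V_{\obs} \in \mathcal{O}_{\kappa_*,0} \subset \mathcal{O}_{2\kappa_{\min},0}$ and $H_{\mathrm{eff}} = J\cdot N + Z_{\obs} \in \mathcal{O}_{2\kappa_{\min}}$ by \eqref{eq:delle.stime} and assumption (N.ii) on the $N^{(\alpha)}$. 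A harmless change of variables $s \mapsto t-s$ in the time integral puts it in exactly the shape of Lemma \ref{lem:LemmaSempreDiverso}. This yields a bound $C(|S_O|, d, \kappa, \rho)\,\langle \|H_{\mathrm{eff}}\|_{2\kappa_{\min}}\rangle^{d+1}\, \langle t\rangle^{d+2}\, \|O\|_{\mathrm{op}}\, \|V_{\obs}\|_{\kappa_*,0}$, and since $\|H_{\mathrm{eff}}\|_{2\kappa_{\min}} \leq \|J\cdot N\|_{2\kappa_{\min}} + \|Z_{\obs}\|_{2\kappa_{\min}} \leq \mathcal{J}\max_\alpha\|N^{(\alpha)}\|_{2\kappa_{\min}} + \tfrac{e}{e-1}\varepsilon^{1/2}\|V\|_{\kappa,\rho}$ is bounded uniformly in $\varepsilon$ and $|\Lambda|$ by a constant depending only on the parameters of the system, and $\|V_{\obs}\|_{\kappa_*,0} \leq e^{-\varepsilon^{-\tb}}\|V\|_{\kappa,\rho}$, we obtain \eqref{bagigio} with the constant $C = C(O, d, \kappa, \rho, \|H_0\|_\kappa, \|V\|_{\kappa,\rho})$ absorbing all these factors.

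The mildly delicate point — and the place I expect to have to be careful — is the interplay of the two conjugations: $H_{\mathrm{eff}}$ is time independent so $e^{-\ii H_{\mathrm{eff}} t}$ is a genuine one-parameter group, whereas $U_{H_{\obs}}(t)$ is the time-ordered propagator of a time-dependent Hamiltonian, so one must take the derivative $\tfrac{\ud}{\ud s}$ in the interaction-picture combination correctly, keeping track of which factor carries $H_{\obs}(\omega s)$ and which carries $H_{\mathrm{eff}}$; this is where the cancellation of $J\cdot N + Z_{\obs}$ takes place and must be verified explicitly. Once the commutator $[V_{\obs}(\omega s), \cdot]$ has been isolated, the rest is a direct application of Lemma \ref{lem:LemmaSempreDiverso} together with the exponentially small bound on $\|V_{\obs}\|$ from the normal form, and uniformity in $|\Lambda|$ is automatic since all the norms $\|\cdot\|_\kappa$, $\|\cdot\|_{\kappa,\rho}$ appearing are volume independent by hypothesis.
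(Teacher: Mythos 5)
Your proposal follows the same route as the paper: conjugate out $Y_\obs$, apply Duhamel in an interaction picture with respect to $H_\eff$ to reduce the difference to a time integral of $\|[V_\obs, e^{\mp\ii s H_\eff}Oe^{\pm\ii s H_\eff}]\|_{\mathrm{op}}$, then invoke Lemma~\ref{lem:LemmaSempreDiverso} together with the bounds $\|H_\eff\|_{2\kappa_{\min}}\lesssim \|H_0\|_\kappa$ and $\|V_\obs\|_{\kappa_*,0}\leq e^{-\varepsilon^{-\tb}}\|V\|_{\kappa,\rho}$. The paper packages the Duhamel identity via $W(\tau)=U^*_{H_\obs}(\tau)U_{H_\obs}(t)$, which immediately puts the commutator $[V_\obs(\omega\tau), e^{-\ii\tau H_\eff}Oe^{\ii\tau H_\eff}]$ in the form of Lemma~\ref{lem:LemmaSempreDiverso} with no change of variables; your parametrization instead yields $[V_\obs(\omega s), e^{\pm\ii(t-s)H_\eff}Oe^{\mp\ii(t-s)H_\eff}]$ and needs the substitution $s\mapsto t-s$, which is harmless (the lemma's proof only uses $\sup_\tau\|A(\tau)\|_\kappa$). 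Two minor slips, both cosmetic: the fundamental theorem of calculus produces no extra factor of $\ii$ in front of the integral, and with your convention the exponentials inside the integrand should carry the opposite sign ($e^{\ii H_\eff(t-s)}\,O\,e^{-\ii H_\eff(t-s)}$) for the $J\cdot N+Z_\obs$ contributions of $H_\obs$ and $H_\eff$ to \emph{cancel} rather than add; you flag this sign bookkeeping as the delicate step, and indeed once it is fixed the argument closes as you describe.
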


\begin{proof}
	By Proposition \ref{teo:main}, provided $\varepsilon$ is small enough, one has
	\begin{equation}\label{dall.alto.e.dal.basso}
	\frac{1}{2} \|H_0\|_{\kappa} \leq \| H_{\mathrm{eff}}\|_{2 \kappa_{\min}} \leq 2 \| H_0\|_{\kappa}\,.
	\end{equation}
	Let furthermore $W(\tau)={U}_{H_{\obs}}(\tau)^{-1} {U}_{H_{\obs}}(t)$, then by Duhamel formula we have
	\[
	{U}_{H_{\obs}}^*(t) O {U}_{H_{\obs}}(t)-e^{-\ii H_{\mathrm{eff}} t} O e^{\ii H_{\mathrm{eff}} t} = \int_0^t W^*(\tau) [{V}_{\obs}(\omega \tau), e^{-\ii \tau H_{\mathrm{eff}}} O e^{\ii \tau H_{\mathrm{eff}}}] W(\tau) \ud\tau\,.
	\]
	Taking the operator norm on both sides, one can use Lemma \ref{lem:LemmaSempreDiverso} to get
	\begin{equation}\label{eq:non.diverge.piu}
	\begin{split}
	\left\Vert\int_0^t W^*(\tau) [{V}_{\obs}(\tau), e^{-\ii \tau H_{\mathrm{eff}}} O e^{\ii \tau H_{\mathrm{eff}}}] W(\tau) \ud\tau \right\Vert_{\mathrm{op}}
	\leq C(|S_O|,d,\kappa), \rho)\cdot\\ \cdot \langle \Vert H_{\mathrm{eff}} \Vert_{2 \kappa_{\min}} \rangle^{d+1} \langle t \rangle^{d+2} \Vert V_{\obs} \Vert_{\kappa_{\min},0} \Vert O \Vert_{\mathrm{op}}\,.
	\end{split}
	\end{equation}
	Then the thesis follows combining \eqref{eq:non.diverge.piu} with $\sup_{t \in \R}\|V_{\obs}(\omega t)\|_{\kappa_{\min}} \leq e^{-{ \varepsilon^{-\tb}}} \|V\|_{\kappa, \rho}$, due to Item (b.iii) of Theorem \ref{teo:main}.
\end{proof}

\begin{lemma}\label{lemma:nemo}
	For any local observable $O$, there exist two constants $C_1 = C_1(O, d, \kappa, \rho, \| V\|_{\kappa, \rho})$ and $C_2 =( O, d, r, \kappa, \rho, \{\|N^{(\alpha)}\|_{0}\}_{\alpha = 1}^r, \| V\|_{\kappa, \rho})$ such that
	\begin{itemize}
		\item[(i)]
		\begin{equation}\label{ragazza}
		\Vert U_H^*(t) O U_H(t) - {U}_{H_\obs}^*(t) O {U}_{H_\obs}(t) \Vert_{\mathrm{op}} \leq {C_1} \varepsilon^{\frac 1 2} \, \qquad \forall t \in \mathbb{R} 
		\end{equation}
		\item[(ii)]
		\begin{equation}\label{zombie}
		\Vert U_H^*(t) O U_H(t) - {U}^*_{H_\obs}(t) O {U}_{H_\obs}(t) \Vert_{\mathrm{op}} \leq C_2 \varepsilon^{\frac 1 2} |\omega t|_{\mathbb{T}^m} \, \qquad \forall t \in \mathbb{R} \, .
		\end{equation}
	\end{itemize}
\end{lemma}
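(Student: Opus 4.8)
The plan is to use the relation \eqref{uH.uHstar.Hobbes}, namely $U_{H_\obs}(t) = Y_\obs(\omega t) U_H(t)$ with $Y_\obs(0) = \mathbbm{1}$, so that
$$
U_{H_\obs}^*(t) O U_{H_\obs}(t) = U_H^*(t)\, Y_\obs^*(\omega t)\, O\, Y_\obs(\omega t)\, U_H(t)\,.
$$
Hence the difference we must estimate equals $\Vert U_H^*(t)\big(O - Y_\obs^*(\omega t) O Y_\obs(\omega t)\big)U_H(t)\Vert_{\mathrm{op}}$, and since $U_H(t)$ is unitary this is just $\Vert O - Y_\obs^*(\omega t) O Y_\obs(\omega t)\Vert_{\mathrm{op}}$. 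So everything reduces to bounding how much the unitary conjugation by $Y_\obs(\omega t)$ moves the local observable $O$. First I would write $Y_\obs(\omega t) = e^{-\ii G^{(n^*-1)}_\obs(\omega t)} \cdots e^{-\ii G^{(0)}_\obs(\omega t)}$ as in Item (b.iv) of Proposition \ref{teo:main}, with $\|G^{(n)}_\obs\|_{\kappa_*,\rho_*} \leq \varepsilon^{1/2} e^{-n}\|V\|_{\kappa,\rho}$, and peel off one factor at a time, using a telescoping sum over $n = 0, \dots, n^*-1$.

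For part (i), the idea is to apply Corollary \ref{lem:TaroccoBound} to each factor: conjugating $O$ (or an operator obtained from $O$ by previous conjugations, which is still a local-type observable acting within a slightly enlarged support, with controlled operator norm) by $e^{-\ii G^{(n)}_\obs(\omega t)}$ changes it, in operator norm, by at most $C(|S_O|,d,\kappa,\rho)\|O\|_{\mathrm{op}}\|G^{(n)}_\obs(\omega t)\|_{\kappa_{\min}}$, which is $\lesssim \varepsilon^{1/2} e^{-n}$ uniformly in $t$. Summing the geometric series over $n$ gives a total bound $C_1(O)\varepsilon^{1/2}$, uniformly in $t \in \R$, as claimed. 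One technical point is that after conjugating by one factor the observable is no longer strictly local, but it remains in $\mathcal{O}_{\kappa,0}$ with norm comparable to $\|O\|_{\mathrm{op}}$ (this is exactly the content of Lemma \ref{cor.lignano.pineta} type estimates), so Corollary \ref{lem:TaroccoBound} continues to apply at each step with constants depending only on $O$ and the system parameters.

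For part (ii), the point is to gain an extra factor $|\omega t|_{\T^m}$. Here I would use that $Y_\obs(\omega t)$ is a quasi-periodic family with $Y_\obs(0) = \mathbbm{1}$, so that $Y_\obs^*(\omega t) O Y_\obs(\omega t) - O = Y_\obs^*(\omega t) O Y_\obs(\omega t) - Y_\obs^*(0) O Y_\obs(0)$; writing this as an integral of the $\varphi$-derivative along the segment from $0$ to $\omega t$ on $\T^m$ (using the flow $s \mapsto \omega t \cdot s$) produces a factor $|\omega t|_{\T^m}$ times a bound on $\partial_\varphi(Y_\obs^* O Y_\obs)$, which in turn is controlled by $\|\partial_\varphi G^{(n)}_\obs\|$; by a Cauchy estimate in the analyticity strip of width $\rho_*$ (or $\rho_{\min}$) this is again $\lesssim \varepsilon^{1/2} e^{-n}$ up to a constant depending on $\rho$ and $r \max_\alpha\|N^{(\alpha)}\|_0$, and summing over $n$ gives $C_2(O)\varepsilon^{1/2}|\omega t|_{\T^m}$. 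The main obstacle, and the step requiring the most care, is the bookkeeping in part (ii): controlling the $\varphi$-derivative of the full conjugation $Y_\obs^*(\varphi) O Y_\obs(\varphi)$ factor by factor, making sure the Cauchy estimates on the $\{G^{(n)}_\obs\}$ combine with a geometric (rather than merely convergent) tail so that the constant $C_2$ stays independent of $\varepsilon$ and of $|\Lambda|$, and checking that the enlargement of supports under successive conjugations does not spoil the Lieb--Robinson-type bounds from Lemma \ref{lem:LemmaSempreDiverso} used inside Corollary \ref{lem:TaroccoBound}.
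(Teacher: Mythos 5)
Your overall structure (unitarity of $U_H$ to reduce to $\Vert O - Y_\obs^*(\omega t) O Y_\obs(\omega t)\Vert_{\mathrm{op}}$, telescoping through the factorization of $Y_\obs$, then Corollary~\ref{lem:TaroccoBound} plus the normalization $Y_\obs(0)=\mathbbm{1}$) is the right skeleton. However, there is a genuine gap in how you telescope, and it is exactly the difficulty you flag at the end as ``the step requiring the most care'': the enlargement of supports under successive conjugations. In your version, you peel off one exponential factor at a time and then try to apply Corollary~\ref{lem:TaroccoBound} to ``an operator obtained from $O$ by previous conjugations.'' But Corollary~\ref{lem:TaroccoBound} and Lemma~\ref{lem:LemmaSempreDiverso} are stated for a strictly local observable $O$ acting within a fixed $S_O$; a conjugated observable $e^{\ii G}Oe^{-\ii G}$ is only quasi-local, and the claim that the corollary ``continues to apply at each step with constants depending only on $O$ and the system parameters'' is not justified — it would require reproving the Lieb--Robinson-type bound for quasi-local observables with decaying tails, which is not a small matter and is not what the paper does.

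The paper's telescoping avoids the issue entirely. Writing $Y^{(n)}_\obs = e^{-\ii G^{(n-1)}_\obs}Y^{(n-1)}_\obs$, the $n$-th telescoping increment is
\begin{equation*}
(Y^{(n)}_\obs)^* O Y^{(n)}_\obs - (Y^{(n-1)}_\obs)^* O Y^{(n-1)}_\obs
= (Y^{(n-1)}_\obs)^*\bigl(e^{\ii G^{(n-1)}_\obs} O e^{-\ii G^{(n-1)}_\obs} - O\bigr) Y^{(n-1)}_\obs ,
\end{equation*}
and since $Y^{(n-1)}_\obs$ is unitary its conjugation disappears from the operator norm. Thus at every step Corollary~\ref{lem:TaroccoBound} is applied to the \emph{original} local $O$ — never to a smeared observable — and the constant $C(|S_O|,d,\kappa,\rho)$ is uniform in $n$; the geometric decay of $\|G^{(n)}_\obs\|_{\kappa_n,\rho_n}$ then sums to give $C_1\varepsilon^{1/2}$. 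You should redo the telescoping this way.

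For part (ii), your route (integrate $\partial_\varphi(Y_\obs^* O Y_\obs)$ along the segment from $0$ to $\omega t$) is a different and heavier decomposition; it can in principle be made to work, but it requires controlling the $\varphi$-derivative of a whole product of exponentials, which again interleaves the intermediate conjugations and reintroduces the support-growth bookkeeping you are trying to avoid. The paper instead applies the extra factor $|\omega t|_{\T^m}$ term by term: after the correct telescoping, one uses $G^{(n)}_\obs(0)=0$ (from Lemma~\ref{lem.Sailor.Moon.is.back}) together with a mean-value/Cauchy estimate in $\varphi$ to get $\|G^{(n)}_\obs(\omega t)\|_{\kappa_{\min}}\leq \rho_*^{-1}\|G^{(n)}_\obs\|_{\kappa_n,\rho_n}\,|\omega t|_{\T^m}$, and then sums the geometric series exactly as in part (i). This is both cleaner and stays entirely within the estimates already available in Proposition~\ref{teo:main}.
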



\begin{proof}
	By Item (b.iv) of Proposition \ref{teo:main}, one has
	\[
	\begin{split}
	\Vert U_H^*(t) O U_H(t) - {U}_{H_\obs}^*(t) O {U}_{H_\obs}(t) \Vert_{\mathrm{op}} &= \Vert U_H^*(t) (O-Y_\obs^*(\omega t)OY_\obs(\omega t)) U_H(t) \Vert_{\mathrm{op}} \\ 
	&= \Vert O - Y_\obs^*(\omega t) O Y_\obs(\omega t) \Vert_{\mathrm{op}} \\
	& \leq \sum_{n={1}}^{n^*} \Vert (Y_\obs^{(n)})^*(\omega t) O Y^{(n)}_\obs(\omega t) - (Y_\obs^{(n-1)})^*(\omega t) O Y_\obs^{(n-1)}(\omega t) \Vert_{\mathrm{op}} \\
	&=\sum_{n={1}}^{n^*} \Vert (Y_\obs^{(n-1)})^*(\omega t) (e^{-\ii G^{(n)}_\obs(\omega t)} O e^{\ii G^{(n)}_\obs(\omega t)}-O) Y_\obs^{(n-1)}(\omega t) \Vert_{\mathrm{op}} \\
	&=\sum_{n={1}}^{n^*} \Vert e^{-\ii G^{(n)}_\obs(\omega t)} O e^{\ii G^{(n)}_\obs(\omega t)} - O \Vert_{\mathrm{op}}\,.
	\end{split}
	\]
	We are thus left with finding a good bound for $\Vert e^{-\ii G^{(n)}_\obs(\omega t)} O e^{\ii G^{(n)}_\obs(\omega t)} - O \Vert_{\mathrm{op}}$. To this end, we use Corollary \ref{lem:TaroccoBound} to get
	\begin{equation}\label{tullio}
	\Vert e^{-\ii G^{(n)}_\obs(\omega t)} O e^{\ii G^{(n)}_\obs(\omega t)} - O  \Vert_{\mathrm{op}} \leq C(|S_O|,d, \kappa, \rho) \Vert G^{(n)}_\obs(\omega t) \Vert_{\kappa_{\min}} \Vert O \Vert_{\mathrm{op}} \, .
	\end{equation}
	Now, using \eqref{Gn.per.n}  and recalling $\kappa_{\min} \leq \kappa_n \leq \kappa$, one has 
	\begin{equation}\label{san.martino}
	\| G^{(n)}_\obs(\omega t)\|_{\kappa_{\min}} \leq \Vert G_\obs^{(n)} \Vert_{\kappa_n, \rho_n} \leq \Vert V \Vert_{\kappa, \rho} \varepsilon^{\frac 1 2} e^{-n}\,,
	\end{equation}
	whence 
	\[
	\begin{split}
	\sum_{n=0}^{n^*-1} \Vert e^{-\ii G^{(n)}_\obs(\omega t)} O e^{\ii G^{(n)}_\obs(\omega t)}-O \Vert_{\mathrm{op}} &\leq C(|S_O|,d,\kappa, \rho) \Vert O \Vert_{\mathrm{op}} \Vert V \Vert_{\kappa, \rho} \varepsilon^{\frac 1 2} \sum_{n=0}^{n^*-1} e^{-n} \\
	&\leq  C(|S_O|,d,\kappa, \rho) \Vert O \Vert_{\mathrm{op}} \frac{{e}}{e-1} \Vert V \Vert_{\kappa, \rho} \varepsilon^{\frac 1 2} \, .
	\end{split}
	\]
	This proves \eqref{ragazza}.
	Concerning \eqref{zombie}, for any $t$, one has
	\[
		\begin{split}
	\Vert G^{(n)}_\obs(\omega t) - G^{(n)}_\obs(0) \Vert_{\kappa_n} &\leq  \sup_{\varphi \in \mathbb{T}^m}\max_{j=1,\dots,m} \| \partial_{\varphi_j} G^{(n)}_\obs(\varphi) \|_{\kappa_n} |\varphi|_{\mathbb{T}^m} \\
	&\leq \frac{1}{\rho_n} \|G^{(n)}_\obs \|_{\kappa_n, \rho_n} |\varphi|_{\mathbb{T}^m} \leq \frac{1}{\rho_*} \|G^{(n)}_\obs \|_{\kappa_n, \rho_n} |\varphi|_{\mathbb{T}^m}\,.
		\end{split}
	\]
	Thus, recalling that by \eqref{def:oggetti.finali} $\rho_*= \rho_*(\kappa, \rho, r, \{\|N^{(\alpha)}\|_0\}_\alpha)$, by \eqref{san.martino} one has, for a positive constant $C(r, \kappa,\rho,\Vert V \Vert_{\kappa,\rho},\Vert N^{(\alpha)} \Vert_{0})$,
	\[
	\|G^{(n)}_\obs(\omega t) \|_{\kappa_n}=\Vert G^{(n)}_\obs(\omega t) - G^{(n)}_\obs(0) \Vert_{\kappa_n}  \leq C(r, \kappa,\rho, \Vert V \Vert_{\kappa,\rho},\Vert N^{(\alpha)} \Vert_0) \varepsilon^{\frac12}e^{-n} |\omega t|_{\mathbb{T}^m} \, .
	\]
	which, combined with \eqref{tullio}, yields \eqref{zombie}.
\end{proof}

Given $\omega \in \R^m$, $\delta >0$ and $\theta \in \mathbb{T}^m$, we define the $\delta-$ergodization time as
\begin{equation}\label{def:ergo.time}
T_{\omega, \delta, \theta} := \inf\;\left \lbrace t \in \R_+\ |\ \forall \varphi \in \mathbb{T}^m \quad \Vert \varphi - (\theta + [0, t]\omega)\Vert_{\mathbb{T}^m} \leq \delta \right \rbrace\,.
\end{equation}
Note that $T_{\omega, \delta, \theta}$ is clearly independent of $\theta$, thus from now on we will omit the subscript $\theta$ and we will simply write $T_{\omega, \delta}$. The following result is due to \cite{Berti2003, bourgain.e.altri.amici}:
\begin{theorem}[Theorem 4.1 of \cite{Berti2003}]\label{teo:berti.biasco.bolle}
	$\forall m \in \N$ $\exists a_m>0$ such that $\forall \omega \in \R^m$ satisfying \eqref{eq:siamo.dei.cani}, $\forall \delta >0$ one has
	\begin{equation}
	T_{\omega,\delta} \leq \frac{1}{\gamma} \left(\frac{a_m}{\delta}\right)^{\tau} =: \widetilde{T}_{\omega,\delta}\,.
	\end{equation}
\end{theorem}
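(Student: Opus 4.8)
The plan is to reproduce the standard Fourier‑theoretic proof of the ergodization estimate, which is exactly the content of \cite{Berti2003}; here the driving frequency $\omega\in\R^m$ is $(\gamma,\tau)$‑Diophantine in the sense that $|\omega\cdot l|>\gamma/|l|^{\tau}$ for all $l\in\Z^m\setminus\{0\}$ (this is what \eqref{eq:siamo.dei.cani} gives when the $J$‑component is zero). Fix $\theta\in\T^m$ and argue by contradiction: suppose the orbit segment $\{\theta+s\omega:s\in[0,t]\}$ fails to be $\delta$‑dense, so there is $\varphi_0\in\T^m$ at torus‑distance $>\delta$ from every point of the segment. The first step is to pick, once and for all, a nonnegative bump $\chi\in C^\infty_c(\R^m)$ supported in the unit ball with $\int_{\R^m}\chi=1$, and to let $f$ be the $2\pi\Z^m$‑periodization of $\psi\mapsto (\delta/2)^{-m}\chi\big(2(\psi-\varphi_0)/\delta\big)$. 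Then $f\ge 0$, $f$ is supported within torus‑distance $\delta/2$ of $\varphi_0$, $\widehat f_0$ is a fixed positive constant (depending only on $m$), and for every $N\in\N$ one has $|\widehat f_l|\le C_{N,m}(1+\delta|l|)^{-N}$.

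Since the orbit never meets $\mathrm{supp}\,f$, we have $\tfrac1t\int_0^t f(\theta+s\omega)\,ds=0$. On the other hand, expanding $f$ in Fourier series and integrating termwise,
\[
\frac1t\int_0^t f(\theta+s\omega)\,ds=\widehat f_0+\sum_{l\in\Z^m\setminus\{0\}}\widehat f_l\,e^{\ii l\cdot\theta}\,\frac{e^{\ii (l\cdot\omega)t}-1}{\ii(l\cdot\omega)\,t}\,,
\]
and bounding the $l$‑th term by $2|\widehat f_l|/(|l\cdot\omega|\,t)\le 2|\widehat f_l|\,|l|^{\tau}/(\gamma t)$ via the Diophantine condition gives
\[
\Big|\widehat f_0\Big|=\Big|\widehat f_0-\frac1t\int_0^t f(\theta+s\omega)\,ds\Big|\le\frac{2}{\gamma t}\sum_{l\neq 0}|\widehat f_l|\,|l|^{\tau}\le\frac{C_{m,\tau}}{\gamma\,t\,\delta^{\tau+m}}\,,
\]
where the last inequality follows by splitting the sum over $|l|\le 1/\delta$ and $|l|>1/\delta$ and taking $N=\tau+m+1$ in the decay estimate for $\widehat f_l$. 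This is impossible as soon as $t>C'_{m,\tau}\,\gamma^{-1}\delta^{-(\tau+m)}$, so $T_{\omega,\delta,\theta}\le C'_{m,\tau}\,\gamma^{-1}\delta^{-(\tau+m)}$, uniformly in $\theta$, which already gives the polynomial‑in‑$1/\delta$ bound with the $\gamma^{-1}$ prefactor that is used in Theorems \ref{thm:EvLocObs} and \ref{thm:EvLocObs.ff}.

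The remaining point — and the genuinely delicate one — is to sharpen the exponent of $1/\delta$ from $\tau+m$ to the stated $\tau$: the naive bump‑function argument loses a factor $\delta^{-m}$ because normalizing a test function supported on a $\delta$‑ball costs $\|f\|_{L^\infty}\sim\delta^{-m}$. The optimal exponent is obtained, as in \cite{Berti2003}, by a more careful essentially one‑dimensional analysis: one decomposes the straight orbit segment into $\sim t|\omega|$ pieces of unit length running in the direction $\omega/|\omega|$ and estimates how their $\delta$‑tubes tile $\T^m$, paying only the Diophantine exponent $\tau$. I would invoke \cite{Berti2003, bourgain.e.altri.amici} for this last refinement; for the purposes of the present paper the above elementary version, with the constant $a_m$ redefined and the exponent read as $\tau$ after the usual bookkeeping (using $\tau>m-1$), is all that is needed. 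The main obstacle is therefore not the contradiction scheme, which is routine, but the sharp geometric counting that yields the clean constant $\widetilde T_{\omega,\delta}=\gamma^{-1}(a_m/\delta)^{\tau}$.
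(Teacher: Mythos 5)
The paper does not prove Theorem~\ref{teo:berti.biasco.bolle}: it is stated as a black-box citation of Theorem~4.1 of \cite{Berti2003} (see also \cite{bourgain.e.altri.amici}), so there is no internal proof to compare your argument against. Your Fourier/bump-function contradiction scheme is carried out correctly and yields the self-contained bound $T_{\omega,\delta}\lesssim \gamma^{-1}\delta^{-(\tau+m)}$, and you honestly flag that this falls short of the stated exponent~$\tau$. The step that fails is the closing remark that the exponent ``can be read as $\tau$ after the usual bookkeeping'': there is no redefinition of the constant $a_m$ that turns a $\delta^{-(\tau+m)}$ bound into a $\delta^{-\tau}$ bound, since these are genuinely different powers of $\delta$, and the lost factor $\delta^{-m}$ is intrinsic to the bump-function method (an $L^1$-normalized bump supported on a $\delta$-ball has $\|f\|_{L^\infty}\sim\delta^{-m}$, which you then pay for in the sum over the $\sim\delta^{-m}$ modes with $|l|\lesssim 1/\delta$).

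What can correctly be said in the spirit of your last paragraph is that the weaker exponent $\tau+m$ would still give the qualitative content of Theorems~\ref{thm:EvLocObs} and~\ref{thm:EvLocObs.ff}: one would still obtain recurrence times $t_j$ at which the two evolutions agree up to a stretched-exponentially small error, but with $\mathsf{f}$ and $\mathsf{g}$ in \eqref{eq:EquazioneCarina} and \eqref{abbandono.di.dal.maso} replaced by the smaller values $1/\big((\tau+m)(d+2)+1\big)$ and $1/\big((\tau_\omega+m)(d+2)+1\big)$, and with $T(\varepsilon)$, $T(\lambda)$ in \eqref{baucco}, \eqref{tempi.tj.ff} correspondingly enlarged. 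To recover the exponent $\tau$ exactly as stated one genuinely needs the covering/geometry-of-numbers argument of \cite{Berti2003,bourgain.e.altri.amici}, which your proposal invokes but does not reproduce: that argument is not a bookkeeping refinement of the Fourier contradiction scheme but a different mechanism, exploiting that the orbit is a long one-dimensional segment rather than a point set, and it never passes through an $L^\infty$-normalization of a test function supported at scale~$\delta$.
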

This means that, starting from any point $\theta \in \mathbb{T}^m$, after a time $\widetilde{T}_{\omega, \delta}$ the trajectory of the dynamical system $\dot \varphi = \omega$, $\varphi(0) = \theta$, has visited a neighborhood of size $\delta$ of any point $\varphi \in \mathbb{T}^m$. In particular, we will be interested in neighborhoods of $\varphi = 0$. 

Thus, Theorem \ref{teo:berti.biasco.bolle} guarantees that there exists at least a time $t_0 \in [0, \widetilde{T}_{\omega, \delta}]$ such that $\|\omega t_0  -0\|_{\mathbb{T}^m} < \delta$. Furthermore, choosing $\theta = \omega \widetilde{T}_{\omega, \delta}$ and $\varphi = 0$, there exists a time $t_1 \in [\widetilde{T}_{\omega, \delta}, 2 \widetilde{T}_{\omega, \delta}]$ such that $\|\omega t_1\|_{\mathbb{T}^m} < \delta$. Iterating this construction, one obtains that $\forall j \in \N$ there exists a time
\begin{equation}\label{eq:sono.proprio.loro}
t_j \in [j \widetilde{T}_{\omega, \delta}, (j+1) \widetilde{T}_{\omega, \delta}] \quad \text{s.t.} \quad \|\omega t_j \|_{\mathbb{T}^m} < \delta \quad \forall j\,.
\end{equation}

\begin{proof}[Proof of Theorem \ref{thm:EvLocObs}] We define $H_{\textrm{eff}}=J\cdot N + Z_{\mathrm{eff}}$ as in \eqref{H.eff} and we observe that, by Item (b.ii) of Proposition \ref{teo:main}, \eqref{vaccataprossimavoltastozitto} is satisfied. We now prove that Items (i) and (ii) hold.
	
	\noindent(i) 
	By  \eqref{bagigio} and \eqref{ragazza}, $\forall t \in \R$ one has
	\[
	\begin{split}
	\hspace{-7pt}\Vert U_H^*(t) O U_H(t) - e^{-\ii H_{\mathrm{eff}} t} O e^{\ii H_{\mathrm{eff}} t} \Vert_{\mathrm{op}} &\leq \Vert {U}_{H_{\obs}}^*(t) O {U}_{H_\obs}(t) -e^{-\ii H_{\mathrm{eff}} t} O e^{\ii H_{\mathrm{eff}} t} \Vert_{\mathrm{op}}\\
	&+\Vert U_H^*(t) O U_H(t)- {U}_{H_{\obs}}^*(t) O {U}_{H_{\obs}}(t)\Vert_{\mathrm{op}} \\
	& \leq C_1 t^{d+2} e^{-{ \varepsilon^{-\tb}}} + C_2 \varepsilon^{\frac 1 2}\,,
	\end{split}
	\]
	where $C_1 := C_1(O, d, \kappa, \rho, \|H_0\|_{\kappa}, \|V\|_{\kappa, \rho})>0$ and $C_2 := C_2(O, d, \kappa, \rho, \|V\|_{\kappa, \rho})>0$.\\
	(ii) Given $\delta >0$, let $\{t_j\}_{j \in \N}$ be defined as in \eqref{eq:sono.proprio.loro}. By  \eqref{bagigio} and \eqref{zombie}, one has
	\[
	\begin{split}
	\hspace{-7pt}\Vert U_H^*(t_j) O U_H(t_j) - e^{-\ii H_{\mathrm{eff}} t_j} O e^{\ii H_{\mathrm{eff}} t_j} \Vert_{\mathrm{op}} &\leq \Vert {U}_{H_\obs}^*(t_j) O {U}_{H_\obs}(t_j) -e^{-\ii H_{\mathrm{eff}} t_j} O e^{\ii H_{\mathrm{eff}} t_j} \Vert_{\mathrm{op}}\\
	&+\Vert U_H^*(t_j) O U_H(t_j)- {U}_{H_\obs}^*(t_j) O {U}_{H_\obs}(t_j)\Vert_{\mathrm{op}} \\
	& \leq C_1 t_j^{d+2} e^{-{ \varepsilon^{-\mathsf{b}}}} + C_3 \varepsilon^{\frac 1 2} \delta\,
	\end{split}
	\]
		with $C_3:=C_3(O, d, r, \kappa, \rho, \|V\|_{\kappa, \rho}, \{\|N^{(\alpha)}\|_0\})>0$. 	Then in particular ${j \widetilde{T}_{\omega, \delta}} \leq t_j \leq {(j+1)\widetilde{T}_{\omega, \delta}}$ with $\widetilde{T}_{\omega, \delta}= \gamma a_{m}^\tau \delta^{-\tau}$. Thus one has 
	\[
	\begin{split}
	\Vert U_{H}^*(t_j) O U_H(t_j) - e^{-\ii H_{\mathrm{eff}} t_j} O e^{\ii H_{\mathrm{eff}} t_j} \Vert_{\mathrm{op}} &\leq C_1 \left(\gamma a_m^\tau (j+1)\right)^{d+2} \delta^{-\tau(d+2)} e^{-{\varepsilon^{-\tb}}} + C_3 \varepsilon^{\frac 12}\delta\,.
	\end{split}
	\]
	Now take
	\[
	\delta= e^{-{\mathsf{f}} \varepsilon^{-\mathsf{b}}} \varepsilon^{-\frac{\mathsf{f}}{2}}\, 
	\]
	with $\mathsf{f}$ defined in \eqref{eq:EquazioneCarina}.
	One has
	$$
	\Vert U_{H}^*(t_j) O U_H(t_j) - e^{-\ii H_{\mathrm{eff}} t_j} O e^{\ii H_{\mathrm{eff}} t_j} \Vert_{\mathrm{op}} \leq \left(C_1 \left(\gamma a_m^\tau(j+1)\right)^{d+1} + C_3\right) \varepsilon^{\frac{1-\mathsf{f}}{2}} e^{-{\mathsf{f}} \varepsilon^{-\mathsf{b}}}\,,
	$$
	which gives the thesis.
\end{proof}

\subsection{Fast-Forcing Perturbations}

The proof of Theorem \ref{thm:SlowHeatingFF} goes exactly as in the case of small perturbations (see the proof of Theorem \ref{teo:SlowHeating}). Concerning the proof of Theorem \ref{thm:EvLocObs.ff}, one uses the same arguments that are used to prove Theorem \ref{thm:EvLocObs} in the case of small perturbations, except for the fact that Lemmas \ref{lem.spagnolette}, \ref{lemma:nemo} are substituted by the following:

\begin{lemma}\label{lem.nome.castissimo}
	Let $H_\obs(\lambda \omega t) = J \cdot N + Z_\obs + V_\obs(\lambda \omega t)$ be defined as in Proposition \ref{teo:main}. There exists $\lambda_0 = \lambda_0(\|H_0\|_{\kappa}, \|V\|_{\kappa, \rho}, r)>0$ such that $\forall \lambda < \lambda_0$ and for any local observable $O$, there exists a constant $C = C(O, d, \kappa, \rho, \|H_0\|_{\kappa}, \|V\|_{\kappa, \rho})$ such that, for any $t \in \mathbb{R}$ we have
	\begin{equation}\label{bagigio.ff}
	\Vert {U}^*_{H_\obs}(t) O {U}_{H_\obs}(t) - e^{-\ii H_{\mathrm{eff}} t} O e^{\ii H_{\mathrm{eff}} t} \Vert_{\mathrm{op}} \; \leq \; C {e^{-{ \lambda^\beta}} \langle t\rangle^{d+2}}\,.
	\end{equation}
\end{lemma}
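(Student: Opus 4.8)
The strategy mirrors the proof of Lemma \ref{lem.spagnolette}, with the key input now coming from Item (b.iii) of Proposition \ref{teo:main.ff} rather than its small-perturbation analogue. First I would fix the effective Hamiltonian $H_{\mathrm{eff}} = J\cdot N + Z_\obs$, where $Z_\obs$ is the operator produced by Item (b) of Proposition \ref{teo:main.ff}, and observe that, for $\lambda$ large enough, Item (b.ii) of that proposition gives a two-sided bound $\frac12 \|H_0\|_\kappa \leq \|H_{\mathrm{eff}}\|_{2\kappa_{\min}} \leq 2\|H_0\|_\kappa$, exactly as in \eqref{dall.alto.e.dal.basso}. Here $\kappa_{\min} := \kappa_*/2$ and one checks, using the bounds in Lemma \ref{nf.thm.Hobbes.ff} and the definition \eqref{def:oggetti.finali} of $\kappa_*, \rho_*$, that $Z_\obs \in \mathcal{O}_{2\kappa_{\min}}$, $V_\obs \in \mathcal{O}_{2\kappa_{\min},0}$, so that Lemmas \ref{lieb.rob} and \ref{lem:LemmaSempreDiverso} are applicable with these operators.

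The core computation is a Duhamel-type interpolation between the flow of $H_\obs(\lambda\omega t)$ and the autonomous flow of $H_{\mathrm{eff}}$. Setting $W(\tau) := U_{H_\obs}(\tau)^{-1} U_{H_\obs}(t)$ and using that $H_\obs(\lambda\omega\tau) - H_{\mathrm{eff}} = V_\obs(\lambda\omega\tau)$, one obtains
\begin{equation}\label{duh.ff.plan}
U_{H_\obs}^*(t) O U_{H_\obs}(t) - e^{-\ii H_{\mathrm{eff}}t} O e^{\ii H_{\mathrm{eff}}t} = \int_0^t W^*(\tau)\,[V_\obs(\lambda\omega\tau),\, e^{-\ii\tau H_{\mathrm{eff}}} O e^{\ii\tau H_{\mathrm{eff}}}]\, W(\tau)\,\ud\tau\,.
\end{equation}
Taking operator norms, using unitarity of $W(\tau)$ to discard it, and applying Lemma \ref{lem:LemmaSempreDiverso} with $A = V_\obs$, $Z = H_{\mathrm{eff}}$ (so that the role of the evolution time in the lemma is played by $\tau \leq t$), I would bound the right-hand side by
\[
C(|S_O|, d, \kappa, \rho)\, \langle \|H_{\mathrm{eff}}\|_{2\kappa_{\min}}\rangle^{d+1}\, \langle t\rangle^{d+2}\, \|V_\obs\|_{\kappa_{\min},0}\, \|O\|_{\mathrm{op}}\,.
\]
The factor $\langle \|H_{\mathrm{eff}}\|_{2\kappa_{\min}}\rangle^{d+1}$ is controlled by a constant depending only on $\|H_0\|_\kappa$ via the two-sided bound above, and $\sup_{t\in\R}\|V_\obs(\lambda\omega t)\|_{\kappa_{\min}} \leq \|V_\obs\|_{\kappa_{\min},0} \leq \|V_\obs\|_{\kappa_*,\rho_*} \leq e^{-\lambda^\beta}\|V\|_{\kappa,\rho}$ by Item (b.iii) of Proposition \ref{teo:main.ff}. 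Collecting these estimates into a single constant $C = C(O,d,\kappa,\rho,\|H_0\|_\kappa,\|V\|_{\kappa,\rho})$ yields \eqref{bagigio.ff}.

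The only mild subtlety — and the step I expect to require the most care — is the same one that appears in the small-perturbation case: the bound must be uniform in $|\Lambda|$, which is exactly why one works throughout with the local norms $\|\cdot\|_{\kappa,\rho}$ and invokes Lemma \ref{lem:LemmaSempreDiverso} (whose constant is $|\Lambda|$-independent) rather than naive operator-norm estimates. One must also be slightly attentive that in the fast-forcing setting the time argument of $V_\obs$ is $\lambda\omega\tau$, not $\omega\tau$; since Lemma \ref{lem:LemmaSempreDiverso} is stated for a generic quasi-periodic family $A(\omega\tau)$ with $A\in\mathcal{O}_{\kappa,0}$, it applies verbatim with $\omega$ replaced by $\lambda\omega$ (the Lieb--Robinson speed and all constants depend only on $\|Z\|_{2\kappa}$ and on $d$, $\kappa$, not on the driving frequency), so no extra work is needed there. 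Everything else is a routine repetition of the argument in Lemma \ref{lem.spagnolette}.
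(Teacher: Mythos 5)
Your proof is correct and matches exactly what the paper intends: the authors do not spell out a proof of this lemma but state that it is obtained by repeating the argument of Lemma~\ref{lem.spagnolette} with the fast-forcing normal form inputs, and your Duhamel identity \eqref{duh.ff.plan}, the application of Lemma~\ref{lem:LemmaSempreDiverso} with $Z = H_{\mathrm{eff}}$ and $A = V_\obs$, and the use of Item~(b.iii) of Proposition~\ref{teo:main.ff} to bound $\|V_\obs\|_{\kappa_{\min},0}$ by $e^{-\lambda^\beta}\|V\|_{\kappa,\rho}$ are precisely that argument. Your side remark that the driving frequency $\lambda\omega$ plays no role in Lemma~\ref{lem:LemmaSempreDiverso} is the correct (and only) point of care in the transposition; note also that the Lemma statement in the paper contains two typos (it should reference Proposition~\ref{teo:main.ff}, and the condition should read $\lambda > \lambda_0$), which you implicitly and correctly corrected.
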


\begin{lemma}
	For any local observable $O$, there exist two constants $C_1 := C_1(O, d, \kappa, \rho, \| V\|_{\kappa, \rho})>0$ and $C_2 :=(O, d, r, \kappa, \rho, \{\|N^{(\alpha)}\|_{0}\}_{\alpha = 1}^r, \| V\|_{\kappa, \rho})>0$ such that
	\begin{itemize}
		\item[(i)]
		\begin{equation}\label{ragazza.ff}
		\Vert U_H^*(t) O U_H(t) - {U}_{H_\obs}^*(t) O {U}_{H_\obs}(t) \Vert_{\mathrm{op}} \leq {C_1} \lambda^{-\frac{1}{8\tau_\omega}} \, \qquad \forall t \in \mathbb{R} 
		\end{equation}
		\item[(ii)]
		\begin{equation}\label{zombie.ff}
		\Vert U_H^*(t) O U_H(t) - {U}^*_{H_\obs}(t) O {U}_{H_\obs}(t) \Vert_{\mathrm{op}} \leq C_2 \lambda^{-\frac{1}{8 \tau_\omega}+1} |\omega t|_{\mathbb{T}^m} \, \qquad \forall t \in \mathbb{R} \, .
		\end{equation}
	\end{itemize}
\end{lemma}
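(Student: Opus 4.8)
This lemma is the fast-forcing twin of Lemma~\ref{lemma:nemo}, and the plan is to run that proof with two systematic replacements: Proposition~\ref{teo:main} (equivalently Lemma~\ref{nf.thm.Hobbes}) becomes Proposition~\ref{teo:main.ff} (equivalently Lemma~\ref{nf.thm.Hobbes.ff}), and the small quantity $\varepsilon^{\frac12}$ becomes $\lambda^{-\frac{1}{8\tau_\omega}}$ --- the same substitution already used, for instance, in the proof of Lemma~\ref{nf.thm.ff} for \eqref{y.not.ff}. Concretely, I would first invoke Item~(b) of Proposition~\ref{teo:main.ff}, which gives $U_{H_\obs}(t)=Y_\obs(\lambda\omega t)\,U_H(t)$ with $Y_\obs(0)=\mathbbm 1$; since $U_H(t)$ is unitary, both (i) and (ii) reduce to estimating $\Vert O-Y_\obs^*(\lambda\omega t)\,O\,Y_\obs(\lambda\omega t)\Vert_{\mathrm{op}}$. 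Then, using the factorization $Y_\obs(\lambda\omega t)=e^{-\ii G^{(n^\star-1)}_\obs(\lambda\omega t)}\cdots e^{-\ii G^{(0)}_\obs(\lambda\omega t)}$ from Item~(b.iv) of Proposition~\ref{teo:main.ff} and inserting the partial products, a telescoping argument bounds this by $\sum_{n=0}^{n^\star-1}\Vert e^{-\ii G^{(n)}_\obs(\lambda\omega t)}\,O\,e^{\ii G^{(n)}_\obs(\lambda\omega t)}-O\Vert_{\mathrm{op}}$, and each term is controlled by Corollary~\ref{lem:TaroccoBound} as $\le C(|S_O|,d,\kappa,\rho)\,\Vert O\Vert_{\mathrm{op}}\,\Vert G^{(n)}_\obs(\lambda\omega t)\Vert_{\kappa_{\min}}$. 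One must only check the hypothesis $\Vert G^{(n)}_\obs\Vert_{2\kappa_{\min},0}\le1$ of that corollary (and of the underlying Lemma~\ref{lem:LemmaSempreDiverso}), which follows from \eqref{Gn.per.n.ff} for $\lambda\ge\lambda_0$ with $\lambda_0$ depending only on the parameters of the system associated to \eqref{eq:our.H.lambda}, and which also supplies the obvious analogues of \eqref{eq:delle.stime}.

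For part (i) I would then insert $\Vert G^{(n)}_\obs\Vert_{\kappa_n,\rho_n}\le\lambda^{-\frac{1}{8\tau_\omega}}e^{-n}\Vert V\Vert_{\kappa_0,\rho_0}$ from \eqref{Gn.per.n.ff} together with $\kappa_{\min}\le\kappa_n$, and sum the geometric series $\sum_{n\ge0}e^{-n}=\tfrac{e}{e-1}$, obtaining \eqref{ragazza.ff} with $C_1=C_1(O,d,\kappa,\rho,\Vert V\Vert_{\kappa,\rho})$.

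For part (ii) I would additionally use $G^{(n)}_\obs(0)=0$ (a consequence of $Y^{(n)}_\obs(0)=\mathbbm 1$) and estimate $\Vert G^{(n)}_\obs(\lambda\omega t)\Vert_{\kappa_{\min}}\le\Vert G^{(n)}_\obs(\lambda\omega t)-G^{(n)}_\obs(0)\Vert_{\kappa_n}$ by a Cauchy/mean-value bound in the angle variables. This gains a factor $|\omega t|_{\mathbb{T}^m}$ exactly as in Lemma~\ref{lemma:nemo}, and --- the only genuinely new feature with respect to the small-perturbation case --- an extra factor $\lambda$, since the relevant derivative is that of $\varphi\mapsto G^{(n)}_\obs(\lambda\varphi)$ and the chain rule brings down $\lambda$ from the fast phase. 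Together with $\rho_n\ge\rho_*$, \eqref{Gn.per.n.ff}, and summation over $n$, this produces $\Vert G^{(n)}_\obs(\lambda\omega t)\Vert_{\kappa_{\min}}\lesssim\rho_*^{-1}\lambda^{1-\frac{1}{8\tau_\omega}}e^{-n}\Vert V\Vert_{\kappa,\rho}\,|\omega t|_{\mathbb{T}^m}$, hence \eqref{zombie.ff}; the dependence of $C_2$ on $r$ and $\{\Vert N^{(\alpha)}\Vert_0\}$ enters only through $\rho_*$, cf.\ \eqref{def:oggetti.finali}.

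The main --- in fact, the only --- obstacle is purely quantitative bookkeeping: correctly tracking the power of $\lambda$ produced by the fast phase $\lambda\omega t$ in part~(ii), which is precisely what yields the exponent $-\frac{1}{8\tau_\omega}+1$, and verifying that all the smallness thresholds (in Corollary~\ref{lem:TaroccoBound}, Lemma~\ref{lem:LemmaSempreDiverso}, and Proposition~\ref{teo:main.ff}) are satisfied for $\lambda\ge\lambda_0$; the algebraic content is a verbatim transcription of the proof of Lemma~\ref{lemma:nemo}. Finally, the companion Lemma~\ref{lem.nome.castissimo} is handled exactly as Lemma~\ref{lem.spagnolette}: one sets $H_{\mathrm{eff}}=J\cdot N+Z_\obs$, writes $U^*_{H_\obs}(t)OU_{H_\obs}(t)-e^{-\ii H_{\mathrm{eff}}t}Oe^{\ii H_{\mathrm{eff}}t}$ via Duhamel, applies the Lieb--Robinson input of Lemma~\ref{lem:LemmaSempreDiverso} with $Z=H_{\mathrm{eff}}$, and uses $\sup_{t}\Vert V_\obs(\lambda\omega t)\Vert_{\kappa_{\min}}\le e^{-\lambda^\beta}\Vert V\Vert_{\kappa,\rho}$ from Item~(b.iii) of Proposition~\ref{teo:main.ff} in place of Item~(b.iii) of Proposition~\ref{teo:main}.
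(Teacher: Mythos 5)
Your part (i) is fine: the telescoping argument through the factorization $Y_\obs=e^{-\ii G^{(n^\star-1)}_\obs}\cdots e^{-\ii G^{(0)}_\obs}$, Corollary~\ref{lem:TaroccoBound}, the bound $\|G^{(n)}_\obs\|_{\kappa_{n+1},\rho_{n+1}}\le\lambda^{-1/(8\tau_\omega)}e^{-n}\|V\|_{\kappa_0,\rho_0}$ from \eqref{Gn.per.n.ff}, and summation of the geometric series reproduce the small-perturbation argument of Lemma~\ref{lemma:nemo} with $\varepsilon^{1/2}\mapsto\lambda^{-1/(8\tau_\omega)}$, and the smallness hypothesis of the corollary is indeed ensured for $\lambda\ge\lambda_0$. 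That is exactly the analogue the paper intends, and it matches the route used in Lemma~\ref{lemma:nemo}.

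Part (ii) has a genuine gap. You claim that viewing $\varphi\mapsto G^{(n)}_\obs(\lambda\varphi)$ and applying the chain rule ``brings down'' a factor $\lambda$, producing $\|G^{(n)}_\obs(\lambda\omega t)-G^{(n)}_\obs(0)\|_{\kappa_n}\lesssim\rho_*^{-1}\lambda\,\|G^{(n)}_\obs\|_{\kappa_n,\rho_n}\,|\omega t|_{\mathbb{T}^m}$. This does not follow from the mean-value estimate: that estimate, applied exactly as in the proof of Lemma~\ref{lemma:nemo}, reads $\|G^{(n)}_\obs(\psi)-G^{(n)}_\obs(0)\|_{\kappa_n}\le\sup_{\varphi',j}\|\partial_{\varphi'_j}G^{(n)}_\obs(\varphi')\|_{\kappa_n}\,|\psi|_{\mathbb{T}^m}$, so with $\psi=\lambda\omega t$ it yields a factor $|\lambda\omega t|_{\mathbb{T}^m}$, not $\lambda\,|\omega t|_{\mathbb{T}^m}$. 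The rewriting $\varphi\mapsto G^{(n)}_\obs(\lambda\varphi)$ is not $2\pi$-periodic in $\varphi$ (its Fourier frequencies are $\lambda\ell$ with $\ell\in\mathbb{Z}^m$, which are non-integer for $\lambda\notin\mathbb{Z}$), so the mean-value bound with the torus distance $|\varphi|_{\mathbb{T}^m}$ does not apply to it, and the inequality $|\lambda\omega t|_{\mathbb{T}^m}\le\lambda\,|\omega t|_{\mathbb{T}^m}$ is false in general (take $\omega t=2\pi e_1$ with $\lambda\notin\mathbb{Z}$). To obtain a correct statement that still serves the proof of Theorem~\ref{thm:EvLocObs.ff}, the Cauchy estimate should be kept in the form $\|G^{(n)}_\obs(\lambda\omega t)-G^{(n)}_\obs(0)\|_{\kappa_n}\le\rho_*^{-1}\|G^{(n)}_\obs\|_{\kappa_n,\rho_n}\,|\lambda\omega t|_{\mathbb{T}^m}$, and correspondingly the recurrence times $t_j$ should be chosen via the ergodization of the fast vector $\lambda\omega$ (i.e.\ $\|\lambda\omega t_j\|_{\mathbb{T}^m}<\delta$, with $\widetilde T_{\lambda\omega,\delta}=\lambda^{-1}\gamma_\omega^{-1}(a_m/\delta)^{\tau_\omega}$), rather than of $\omega$; this yields the same qualitative conclusion with slightly different exponents. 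As written, your chain-rule justification for the factor $\lambda$ in \eqref{zombie.ff} does not hold, and this is the one place where the ``verbatim transcription'' of Lemma~\ref{lemma:nemo} breaks down.
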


Note the presence of a factor $\lambda$ in \eqref{zombie.ff} due to the fast forcing.

\begin{proof}[Proof of Theorem \ref{thm:EvLocObs.ff}] The proof of Item (i) goes as the one of Item (i) of Theorem \ref{thm:EvLocObs}. We now prove Item (ii).\\
	Given $\delta >0$, let $\{t_j\}_{j \in \N}$ be defined as in \eqref{eq:sono.proprio.loro}. By  \eqref{bagigio.ff} and \eqref{zombie.ff}, one has
	\[
	\begin{split}
	\hspace{-7pt}\Vert U_H^*(t_j) O U_H(t_j) - e^{-\ii H_{\mathrm{eff}} t_j} O e^{\ii H_{\mathrm{eff}} t_j} \Vert_{\mathrm{op}} &\leq \Vert {U}_{H_\obs}^*(t_j) O {U}_{H_\obs}(t_j) -e^{-\ii H_{\mathrm{eff}} t_j} O e^{\ii H_{\mathrm{eff}} t_j} \Vert_{\mathrm{op}}\\
	&+\Vert U_H^*(t_j) O U_H(t_j)- {U}_{H_\obs}^*(t_j) O {U}_{H_\obs}(t_j)\Vert_{\mathrm{op}} \\
	& \leq C_1 t_j^{d+2} e^{-{\lambda^\beta}} + C_2 \lambda^{-\frac{1}{8\tau_\omega}+1} \delta\,,
	\end{split}
	\]
	with $C_1:= C_1(O, d, \kappa, \rho, \|H_0\|_{\kappa}, \|V\|_{\kappa, \rho})>0$ and $C_2 := C_2(O, d, r, \kappa, \rho, \{\|N^{(\alpha)}\|_0\}_\alpha, \|V\|_{\kappa, \rho})>0$.
	Then in particular ${j \widetilde{T}_{\omega, \delta}} \leq t_j \leq {(j+1)\widetilde{T}_{\omega, \delta}}$ with $\widetilde{T}_{\omega, \delta}= \gamma a_{m}^{\tau_\omega} \delta^{-\tau_\omega}$. Thus one has 
	\[
	\begin{split}
	\Vert U_{H}^*(t_j) O U_H(t_j) - e^{-\ii H_{\mathrm{eff}} t_j} O e^{\ii H_{\mathrm{eff}} t_j} \Vert_{\mathrm{op}} &\leq C_1 \left(\gamma a_m^{\tau_\omega} (j+1)\right)^{d+2} \delta^{-\tau_\omega(d+2)} e^{-{ \lambda^\beta}} + C_2 \lambda^{1-\frac{1}{8\tau_\omega}}\delta\,.
	\end{split}
	\]
	Then the thesis follows taking
	\[
	\delta:=e^{-{\mathsf{g}} \lambda^\beta} \lambda^{\frac{(1-8\tau_\omega)\mathsf{g}}{8 \tau_\omega}}\,,
	\]
	with $\mathsf{g}$ defined as in \eqref{abbandono.di.dal.maso}.
\end{proof}

\appendix

\section{Proof of Lemmas \ref{cor.lignano.pineta} and \ref{lem:LemmaSempreDiverso}}\label{append:ProofTec}

\subsection{Proof of Lemma \ref{cor.lignano.pineta}}
We are going to control the {locality} properties of $e^{\ii G(\nu t)} A (\nu t) e^{-\ii G(\nu t)}$ for $A \in \mathcal{O}^\strong_{\kappa, \rho}$ and $G \in \mathcal{O}^\strong_{\kappa, \rho}$, using the commutator expansion \eqref{eq:ConjugationAd}.
An important part of this procedure is to control the locality property of the operators $\textrm{Ad}^q_{G} A$. Indeed, considering two operators $A$ and $B$ supported on $S_A$ and $S_B$ respectively, from \eqref{eq:LocalCommutator} one sees that if $S_A \cap S_B \neq \varnothing$, then the support of $[A,B]$ is $S_A \cup S_B$. Taking several commutators, in fact, enlarges the support of each local term of $\mathrm{Ad}^q_A B$. 
This phenomenon is what is responsible for the loss of regularity in the parameter $\kappa$ in Lemma \ref{cor.lignano.pineta}.
To control the smearing out of the supports of the $\mathrm{Ad}^q_A B$'s, we use the following Lemma 
\begin{lemma}\label{lemma.fs}$S_1,\dots,S_q \in \mathcal{P}_c(\Lambda)$ and let $f(S_1,\dots,S_q)$ be a positive function. Then
	\begin{equation}
	\begin{split}
	\sup_{x \in \Lambda}&\sum_{\substack{S_1,\dots,S_q \in \mathcal{P}_c(\Lambda)\\\text{s.t. } 
			x \in \bigcup_{j=1}^q S_j \\
			S_j \cap (\bigcup_{r=j+1}^q S_r)
			\neq \varnothing \, \forall j=1,\dots,q-1}} f(S_1,\dots,S_q) \\&\leq \max_{p \in \mathfrak{S}_q} \sup_{x_{p(1)} \in \Lambda} \sum_{\substack{S_{1} \, s.t. \\ x_{p(1)} \in S_1}} \dots \sup_{x_{p(q)} \in \Lambda} \sum_{\substack{S_q \, s.t. \\ x_{p(q)} \in S_q}} 2^{q-1} \left(|S_1|+\dots + |S_q| \right)^{q-1} f(S_1,\dots,S_q)
	\end{split}
	\end{equation}
	where $\mathfrak{S}_q$ denotes the set of permutation of $q$ elements.
\end{lemma}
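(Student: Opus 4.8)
The plan is as follows. Fix $x\in\Lambda$ and consider one admissible configuration $(S_1,\dots,S_q)$ contributing to the left-hand side. Since $x\in\bigcup_{j=1}^q S_j$, there is an index $j_0$ with $x\in S_{j_0}$. Moreover, the chain hypothesis $S_j\cap\big(\bigcup_{r=j+1}^q S_r\big)\neq\varnothing$ for $j=1,\dots,q-1$ says that, in the graph on the label set $\{1,\dots,q\}$ whose edges join $j<r$ whenever $S_j\cap S_r\neq\varnothing$, every vertex $<q$ has a neighbour of strictly larger label; iterating, each vertex is connected to $q$, so this overlap graph is connected. Hence there is a permutation $p\in\mathfrak{S}_q$ with $p(1)=j_0$ such that, setting $U_{k-1}:=S_{p(1)}\cup\dots\cup S_{p(k-1)}$, one has $S_{p(k)}\cap U_{k-1}\neq\varnothing$ for all $k=2,\dots,q$ (a depth-first enumeration of a spanning tree of the overlap graph rooted at $j_0$). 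I would fix once and for all a deterministic rule (smallest available neighbour, say) producing such a $p$, so that $p$ becomes a function of the configuration, taking values in a subset of $\mathfrak{S}_q$ whose size I keep track of.

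Next I would split the left-hand sum according to the value of $p$, and for each fixed $p$ carry out the $q$ summations over $S_{p(1)},\dots,S_{p(q)}$ in that nested order, with $S_{p(1)}$ outermost. Peeling off the innermost set $S_{p(k)}$ — at which stage $S_{p(1)},\dots,S_{p(k-1)}$ are still being summed and hence play the role of already chosen sets — I would use the constraint $S_{p(k)}\cap U_{k-1}\neq\varnothing$ together with the elementary over-counting bound
\[
\sum_{S\,:\,S\cap U\neq\varnothing} g(S)\ \le\ \sum_{y\in U}\ \sum_{S\ni y} g(S)\ \le\ |U|\,\sup_{y\in\Lambda}\ \sum_{S\ni y} g(S),
\]
valid for any $g\ge 0$ since each $S$ meeting $U$ contains at least one point of $U$. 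Bounding $|U_{k-1}|\le|S_{p(1)}|+\dots+|S_{p(k-1)}|\le|S_1|+\dots+|S_q|$ at each of the $q-1$ peeling steps $k=q,q-1,\dots,2$, and treating the last set $S_{p(1)}$ with the trivial constraint $x\in S_{p(1)}$, produces exactly the weight $(|S_1|+\dots+|S_q|)^{q-1}$ and the nested structure $\sup_{x_{p(1)}}\sum_{S_{p(1)}\ni x_{p(1)}}\cdots\sup_{x_{p(q)}}\sum_{S_{p(q)}\ni x_{p(q)}}$ on the right (the original anchor $x$ being permitted by $\sup_{x_{p(1)}}$).

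Finally I would recombine over the permutations $p$ that actually occur; the resulting combinatorial multiplicity — bounded by a careful count of how many enumeration orders the canonical rule can generate — is what gets absorbed into the factor $2^{q-1}$, after which replacing the $p$ attached to each class by the worst one yields the $\max_{p\in\mathfrak{S}_q}$. I expect the genuine difficulty to be precisely this last, combinatorial, step: arranging the assignment $(S_1,\dots,S_q)\mapsto p$ to be single-valued, checking that when the innermost set is peeled off its anchor point does lie in one of the sets still being summed over (so the nesting order is respected throughout), and verifying that the number of orders so produced, together with the slack already present in the weight, stays within the stated $2^{q-1}$. The analytic ingredient, the over-counting inequality above, is entirely routine; all the care is in the graph-theoretic and permutation bookkeeping.
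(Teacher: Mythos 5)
Your analytic core is the same as the paper's: the over-counting bound
$\sum_{S:\,S\cap U\neq\varnothing}g(S)\le |U|\,\sup_{y\in\Lambda}\sum_{S\ni y}g(S)$, applied $q-1$ times along a connected enumeration rooted where the anchor $x$ lies, which is exactly how the weight $(|S_1|+\dots+|S_q|)^{q-1}$ and the nested $\sup$/$\sum$ structure arise in the paper's inductive argument as well. The step you flag as the ``genuine difficulty'' is, however, not mere bookkeeping --- it is a gap that your strategy cannot close as formulated.

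Your scheme partitions the admissible $(S_1,\dots,S_q)$ by the canonical enumeration order $p$ assigned to them; bounding each class by the corresponding nested expression and then by $\max_p(\cdots)$, and summing over classes, yields a prefactor equal to the number of distinct orders that the canonical rule can produce. This number is factorial in $q$, not $O(2^{q-1})$. Concretely, fix $x\in S_1$ only and take the overlap graph to be the path $1 - \sigma(2) - \sigma(3) - \dots - \sigma(q)$ for any permutation $\sigma$ of $\{2,\dots,q\}$ (easy to realize with short overlapping intervals in $\Lambda$): smallest-available-neighbour DFS rooted at $1$ then returns precisely the order $1,\sigma(2),\dots,\sigma(q)$, and as $\sigma$ varies all $(q-1)!$ orders occur. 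Since $(q-1)!>2^{q-1}$ for $q\ge 5$, your prefactor is strictly worse than the claimed one, and this matters downstream: in the proof of Lemma~\ref{cor.lignano.pineta} the present lemma is applied with $q+1$ sets, the resulting $2^{q}$ is paired with $(q/e)^q/\sigma^q$ and then cancelled against $1/q!$ via Stirling to give a convergent geometric series; a factorial in place of $2^{q}$ makes that series diverge. The paper sidesteps this entirely by never introducing a global enumeration: it pulls out one index $S_j$ at a time and uses the \emph{non-disjoint} cover $\{x\in S_j\}\cup\{x\in\bigcup_{r>j}S_r\}$, so that each inductive step over-counts by exactly a factor of $2$ and the $\max_p$ over orders accumulates automatically from the inductive hypothesis. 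To make your picture work you would need to replace ``partition by canonical $p$, then sum over $p$'' with such a local binary branching; counting enumeration orders will not give $2^{q-1}$.
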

\begin{proof}[Proof of Lemma \ref{lemma.fs}]
	To avoid  heavy notation, it is understood that each of the $S_j$ is such that $S_j \in \mathcal{P}_c(\Lambda)$. The proof goes by induction on the number of sets. Precisely, we prove the inductive estimate
	\[
	\sum_{\substack{S_j,\dots,S_{q}\\x \in \bigcup_{r=j}^q S_r \\
			S_r \cap \left(\bigcup_{l=r+1}^q S_l \right) \neq \varnothing \\
			\forall r=j, \dotsm q-1}} f(S_1,\dots,S_q) \; \leq \; 2^{q-j} \max_{p \in \mathfrak{S}_{[j,q]}} \sup_{x_{p(j)} \in \Lambda} \sum_{\substack{S_j\, s.t. \\ S_j \ni x_{p(j)}}} \dots \sup_{x_{p(q)} \in \Lambda} \sum_{\substack{S_q\, s.t. \\ S_q \ni x_{p(q)}}} (|S_j|+\dots+|S_q|)^{q-j} f(S_1,\dots,S_q) 
	\]
	where $\mathfrak{S}_{[j,q]}$ denotes the set of permutations of $\{j,\dots,q\}$. First we consider the case with only $S_{q-1}$ and $S_q$. We have to prove that
	\[
	\sup_{x \in \Lambda} \sum_{\substack{S_{q-1},S_q \text{ s.t.}  \\ x \in S_{q-1} \cup S_q \\
			S_{q-1} \cap S_q \neq \varnothing}} f(S_{q-1},S_q) \leq \max_{p \in \mathfrak{S}_{[q-1,q]}} \sup_{x_{p(q-1)} \in \Lambda} \sum_{\substack{S_{q-1} \, \text{s.t.} \\ x_{p(q-1)} \in S_{q-1}}} \sup_{x_{p(q)} \in \Lambda} \sum_{\substack{S_q \, \text{s.t.} \\ x_{p(q)} \in S_q}} 2 (|S_{q-1}|+|S_q|) f(S_{q-1},S_q) \, . 
	\]
	Starting from l.h.s., we notice that $x \in S_{q-1} \cup S_q$ means that either $x \in S_{q-1}$ or $x \in S_q$. Thus we can bound l.h.s. as
	\[
	\begin{split}
	\sup_{x \in \Lambda} \sum_{\substack{S_{q-1},S_q \text{ s.t.} \\ \, x \in S_{q-1} \cup S_q \, , \\
			S_{q-1} \cap S_q \neq \varnothing}} f(S_{q-1},S_q)& \leq \sup_{x \in \Lambda} \Bigg[\sum_{\substack{S_{q-1},S_q \text{ s.t.} \\ x \in S_{q-1}  \, ,\\ S_{q-1} \cap S_q \neq \varnothing}} f(S_{q-1},S_q)+ \sum_{\substack{S_q,S_{q-1} \text{ s.t.}\\  \, x \in S_q ,\\ S_{q-1} \cap S_q \neq \varnothing \, .}} f(S_{q-1},S_q)\Bigg]\;=:\; (\star)
	\end{split}
	\] 
	Fix $S_q$, to bound the sums it is convenient to define
	\[
	\begin{split}
	\mathscr{A}(S_q) \;&:=\; \big\{ S_{q-1} \, \big| \, \exists y \in S_q \cap S_{q-1} \big\} \subseteq \; \bigcup_{y \in S_q} \big\{ S_{q-1} \, \big| \, y \in S_{q-1} \big\} =\;\bigcup_{y \in S_q} \mathscr{A}_y(S_q)
	\end{split}
	\]
	which permits us to get
	\[
	\begin{split}
	(\star) \; &\leq \; \sup_{x \in \Lambda} \Bigg[ \sum_{\substack{S_q \text{ s.t.}  \\ \, x \in S_q}} \sum_{y \in S_q} \sum_{S_{q-1} \in \mathscr{A}_y(S_q)} f(S_{q-1},S_q)+ \sum_{\substack{S_{q-1}  \text{ s.t.}\\  \, x \in S_{q-1}}} \sum_{y \in S_{q-1}} \sum_{S_q \in \mathscr{A}_y(S_{q-1})} f(S_{q-1},S_q) \Bigg] \\
	&\leq \; \sup_{x \in \Lambda} \Bigg[ \sum_{\substack{S_q \text{ s.t.} \\ x \in S_q}} |S_q| \sup_{y \in S_q} \sum_{S_{q-1} \in \mathscr{A}_y(S_q)} f(S_{q-1},S_q)+ \sum_{\substack{S_{q-1} \text{ s.t.} \\ x \in S_{q-1}}} |S_{q-1}| \sup_{y \in S_{q-1}} \sum_{S_q \in \mathscr{A}_y(S_{q-1})} f(S_{q-1},S_q) \Bigg] \\
	&\leq \; \sup_{x \in \Lambda} \Bigg[ \sum_{\substack{S_q \text{ s.t.} \\  \, x \in S_q}} (|S_q|+|S_{q-1}|) \sup_{y \in S_q} \sum_{S_{q-1} \in \mathscr{A}_y(S_q)} f(S_{q-1},S_q)\\
	& \qquad +\sum_{\substack{S_{q-1} \text{ s.t.} \\  x \in S_{q-1}}} (|S_q|+|S_{q-1}|) \sup_{y \in S_{q-1}} \sum_{S_q \in \mathscr{A}_y(S_{q-1})} f(S_{q-1},S_q) \Bigg] \\
	&\leq\; \sup_{x \in \Lambda} \Bigg[ \Bigg( \sum_{\substack{S_q \text{ s.t.}\\  x \in S_q}} \sup_{y \in \Lambda} \sum_{S_{q-1} \in \mathscr{A}_y(S_{q-1})}+ \sum_{\substack{S_{q-1}\text{ s.t.} \\ x \in S_q}} \sup_{y \in \Lambda} \sum_{S_{q} \in \mathscr{A}_y(S_{q})} \Bigg) (|S_q|+|S_{q-1}|)  f(S_{q-1},S_q) \Bigg] \\
	&\leq \max_{p \in \mathfrak{S}_2} \sup_{x_{p(1)} \in \Lambda} \sum_{\substack{S_q \text{ s.t.}  \\ x_{p(1)} \in S_q}} \sup_{x_{p(2)} \in \Lambda} \sum_{\substack{S_{q-1} \text{ s.t.}\\ x_{p(2)} \in S_{q-1}}} 2 (|S_q|+|S_{q-1}|) f(S_{q-1},S_q)\,.
	\end{split}
	\]
	This proves the thesis for the function of $S_{q-1},S_q$. Now we proceed iteratively: we suppose that the thesis is true for $S_{j+1},\dots,S_q$ and we prove it is true also for $S_j,\dots,S_q$.
	
	\[
	\begin{split}
	\sum_{\substack{S_j,\dots,S_{q} \text{ s.t.}\\x \in \bigcup_{r=j}^q S_r, \\
			S_r \cap \left(\bigcup_{l=r+1}^q S_l \right) \neq \varnothing \\
			\forall r=j, \dotsm q-1}} f(S_1,\dots,S_q) & \leq \; \sup_{x \in \Lambda} \Bigg\{ \sum_{\substack{S_j \text{ s.t.} \\ x \in S_j}} \sum_{\substack{S_{j+1},\dots,S_q  \text{ s.t.}\\  S_r \cap (\bigcup_{l=r+1}^q S_l) \neq \varnothing, \\ \forall r=j+1,\dots, q-1 \\ S_j \cap \left(\bigcup_{l=j+1}^qS_l \right) \neq \varnothing}} f(S_j,\dots,S_q) \\
	&\qquad +\sum_{\substack{S_{j+1},\dots,S_q  \text{ s.t.}\\  x \in \bigcup_{r={j+1}}^q S_r, \\ S_r \cap (\bigcup_{l={r+1}}^q S_l) \neq \varnothing \\ \forall r=j+1,\dots,q-1}} \sum_{\substack{S_j \text{ s.t.} \\ S_j \cap (\bigcup_{r=j+1}^q S_r) \neq \varnothing}} f(S_j,\dots,S_q) \Bigg\} \\
	=: (\blacksquare)+(\clubsuit) \, .
	\end{split}
	\]
	We have now to estimate the two terms separately. For the first one, let us fix $S_j$ and consider
	\[
	\begin{split}
	\mathscr{A}(S_j) \;&:=\;\Big\{(S_{j+1},\dots,S_q) \, \Big| \, S_j \cap \Big(\bigcup_{r=j+1}^q S_r \Big) \neq \varnothing \Big\} \subseteq\bigcup_{y \in S_j} \Big\{\bigcup_{r={j+1}}^q S_r \, \Big| \, \bigcup_{r={j+1}}^q S_r \ni y \Big\} = \bigcup_{y \in S_j} \mathscr{A}_y(S_j) \, .
	\end{split}
	\]
	For the second one, fix $S_{j+1},\dots,S_q$ and consider
	\[
	\begin{split}
	\mathscr{A}'(S_{j+1},\dots,S_q)\;&=\; \Big\{S_j \,\Big|\, S_j \cap\Big(\bigcup_{r=j+1}^q S_r\Big) \neq \varnothing\Big\} \subseteq \bigcup_{y \in \bigcup_{r={j+1}}^q S_r} \left\{S_j \, | \, y \in S_j \right\} \;=\; \mathscr{A}'_y(S_{j+1},\dots,S_q) \, .
	\end{split}
	\]
	Thus,
	\[
	\begin{split}
	(\blacksquare) & \leq \sup_{x \in \Lambda} \sum_{\substack{S_j \text{ s.t.}\\   x \in S_j}}\, \sum_{y \in S_j} \sum_{\substack{S_{j+1},\dots,S_q \text{ s.t.} \\ y \in \bigcup_{r=j+1}^q S_r, \\ S_r \cap(\bigcup_{l={r+1}}^q S_l) \neq \varnothing, \\
			\forall r=j+2,\dots,q-1}} f(S_j,\dots,S_q) \leq\; \sup_{x \in \Lambda} \sum_{\substack{S_j \text{ s.t.} \\  x \in S_j}} |S_j| \sup_{y \in \Lambda} \sum_{\substack{S_{j+1},\dots,S_q \text{ s.t.}\\  y \in \bigcup_{r=j+1}^q S_r, \\ S_r \cap(\bigcup_{l={r+1}}^q S_l) \neq \varnothing \\
			\forall r=j+2,\dots,q-1}} f(S_j,\dots,S_q) \\
	&\!\!\!\!\!\!\!\!\!\!\overset{\text{inductive hyp}}{\leq} \sup_{x \in \Lambda} \sum_{\substack{S_j \text{ s.t.} \\ x \in S_j}} |S_j| 2^{q-j-1} \max_{p \in \mathfrak{S}_{[j+1, q]}} \sup_{x_{p(j+1)} \in \Lambda} \sum_{\substack{S_{j+1} \text{ s.t.} \\  x_{p(j+1)} \in S_{j+1}}} \dots \\
	& \qquad \qquad \qquad \dots \sup_{x_{p(q)} \in \Lambda} \sum_{\substack{S_q \text{ s.t.}\\ x_{p(q)} \in S_q}} (|S_{j+1}|+ \dots + |S_q|)^{q-j-1} f(S_j,\dots,S_q) \\
	&\leq 2^{q-j-1} \sup_{x \in \Lambda} \sum_{\substack{S_j \text{ s.t.} \\ x \in  S_j}} \max_{p \in \mathfrak{S}_{[j+1,q}}\sup_{x_{p(j+1)} \in \Lambda} \sum_{\substack{S_{j+1} \text{s.t. } x_{p(j+1)} \in S_{j+1}}} \dots \\&\qquad \dots \qquad \sup_{x_{p(q)}\in \Lambda} \sum_{\substack{S_q\text{ s.t.} \\  x_{p(q)} \in S_q}}(|S_j|+\dots+|S_q|)^{q-j} f(S_j,\dots,S_q) \, \\
	&\leq 2^{q-j-1} \max_{p \in \mathfrak{S}_{[j,q]}} \sup_{x_{p(j)} \in \Lambda} \sum_{\substack{S_j   \text{ s.t.}\\ x_{p(j)} \in S_j }} \dots \sup_{x_{p(q)} \in \Lambda} \sum_{\substack{S_q \text{ s.t.}\\ x_{p(q)} \in S_q }} (|S_j|+\dots+|S_q|)^{q-j} f(S_j,\dots,S_q)\,.
	\end{split}
	\]
	Concerning the second piece,
	\[
	\begin{split}
	(\clubsuit) & \leq \sup_{x \in \Lambda} \sum_{\substack{S_{j+1},\dots,S_q \text{ s.t.}\\ x \in S_{j+1} \cup \dots \cup S_q , \\
			S_r \cap (\bigcup_{l=r+1}^q S_l) \neq \varnothing \\
			\forall r={j+1},\dots,q-1}} \sum_{y \in S_{j+1} \cup \dots \cup S_q} \sum_{\substack{S_j \text{ s.t.} \\ y \in S_j}} f(S_j,\dots,S_q) \\
	& \leq \sup_{x \in \Lambda} \sum_{\substack{S_{j+1},\dots,S_q  \\ x \in S_{j+1} \cup \dots \cup S_q \\
			S_r \cap (\bigcup_{l=r+1}^q S_l) \neq \varnothing \\
			\forall r={j+1},\dots,q-1}} (|S_{j+1}|+\dots+|S_q|) \sup_{y \in \Lambda}\sum_{\substack{S_j \text{ s.t.} \\  S_j \ni y}} f(S_j,\dots,S_q) \\
	&{\leq} \max_{p \in \mathfrak{S}_{[j+1,q]}} \sup_{x_{p(j+1)} \in \Lambda} \hspace{-7pt} \sum_{\substack{S_{j+1} \text{ s.t.} \\ x_{p(j+1)} \in S_{j+1}}}\hspace{-20pt} \dots \sup_{x_{p(q)} \in \Lambda} \hspace{-5pt} \sum_{\substack{S_q \text{ s.t.} \\ x_{p(q)} \in S_q }}2^{q-j-1}(|S_{j+1}|+ \dots + |S_q|)^{q-j-1} \sup_{y \in \Lambda} \sum_{\substack{S_j \text{ s.t.} \\ y \in S_j}}f(S_j,\dots,S_q) 
	\end{split}
	\]
	\[
	\begin{split}
	&\leq \max_{p \in \mathfrak{S}_{[j,q]}} 2^{q-j-1} \sup_{x_{p(j)} \in \Lambda} \sum_{\substack{S_j  \text {s.t.} \\ x_{p(j)} \in S_j }} \dots \sup_{x_{p(q)} \in \Lambda}\sum_{\substack{S_q  \text{ s.t.}\\ x_{p(q)} \in S_q  }} 2^{q-j-1} (|S_{j}|+\dots+|S_q|)^{q-j} f(S_j,\dots,S_q) \, .
	\end{split}
	\]
	At this point, combining the estimates for $(\clubsuit)$ and $(\blacksquare)$ we get 
	\[
	(\blacksquare)+(\clubsuit) \leq \max_{p \in \mathfrak{S}_{[j,q]}} \sup_{x_{p(j)} \in \Lambda} \sum_{\substack{S_{j}\, s.t. \\ S_j \ni  x_{p(j)} }} \dots \sup_{x_{p(q)} \in \Lambda} \sum_{\substack{S_q\, s.t. \\ S_q \ni x_{p(q)} }} 2^{q-j} \left(|S_j|+\dots + |S_q| \right)^{q-j} f(S_j,\dots,S_q)
	\]
	proving the inductive thesis and thus the technical Lemma.
\end{proof}

To understand quantitatively how to control the norm of $\mathrm{Ad}_A^q B$, let us prove first the case of a single commutator.

\begin{lemma}\label{lem.2k}
	Let $\kappa, \rho >0$. Then $\forall \sigma>0$ and $\forall A, B \in \mathcal{O}^\strong_{\kappa + \sigma, \rho}$ one has $[A, B] \in \mathcal{O}^\strong_{\kappa, \rho}$, with
	\begin{equation}\label{comm.estimate}
	\Vert [A, B]\Vert_{\kappa, \rho} \leq \frac{4 e^{-(\kappa+1)}}{\sigma} \| A\|_{\kappa + \sigma, \rho} \| B\|_{\kappa + \sigma, \rho}\,.
	\end{equation}
\end{lemma}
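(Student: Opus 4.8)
\textbf{Proof plan for Lemma \ref{lem.2k}.} The goal is to estimate the local norm of a single commutator $[A,B]$, paying the price of a loss $\sigma$ in the locality parameter $\kappa$. The strategy is to work directly with the local decomposition \eqref{eq:LocalCommutator}: for each connected $S$, $[A,B]_S$ is the sum of $[A_{S'},B_{S''}]$ over pairs $(S',S'')$ with $S'\cup S'' = S$ and $S'\cap S''\neq\varnothing$, and each such term is bounded in operator norm by $2\|A_{S'}\|_{\mathrm{op}}\|B_{S''}\|_{\mathrm{op}}$. The plan is then to plug this into the definition of $\|\cdot\|_{\kappa,\rho}$, fix a site $x$, and sum over all $S\ni x$ together with the admissible pairs $(S',S'')$ summing to $S$.

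The first step is to reorganize the double sum: summing over $S\ni x$ and then over pairs $(S',S'')$ with $S'\cup S''=S$, $S'\cap S''\neq\varnothing$ is the same (up to the Fourier index $l$, which I split as $l=l'+l''$ and handle by $e^{\rho|l|}\le e^{\rho|l'|}e^{\rho|l''|}$) as summing over all pairs $(S',S'')$ with $S'\cap S''\neq\varnothing$ and $x\in S'\cup S''$. I would then apply Lemma \ref{lemma.fs} with $q=2$, which bounds $\sup_x\sum_{S',S'':\,x\in S'\cup S'',\,S'\cap S''\neq\varnothing}f(S',S'')$ by $2\max_{p\in\mathfrak{S}_2}\sup_{x_{p(1)}}\sum_{S'\ni x_{p(1)}}\sup_{x_{p(2)}}\sum_{S''\ni x_{p(2)}}(|S'|+|S''|)\,f(S',S'')$, with $f(S',S'') = 2\|A_{S'}\|_{\mathrm{op}}\|B_{S''}\|_{\mathrm{op}}e^{\kappa|S'\cup S''|}e^{\rho|l|}$ (summed over $l',l''$). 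Using $|S'\cup S''|\le|S'|+|S''|$, the factor $e^{\kappa|S'\cup S''|}$ is absorbed into $e^{\kappa|S'|}e^{\kappa|S''|}$, and we have a surplus $e^{\sigma|S'|}e^{\sigma|S''|}$ from the parameter $\kappa+\sigma$ that must swallow the polynomial factor $(|S'|+|S''|)$.

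The key elementary estimate is therefore the bound $(|S'|+|S''|)\,e^{-\sigma|S'|}e^{-\sigma|S''|}\le (|S'|e^{-\sigma|S'|})e^{-\sigma|S''|}+e^{-\sigma|S'|}(|S''|e^{-\sigma|S''|})$, and then $n e^{-\sigma n}\le \frac{1}{e\sigma}$ for every integer $n\ge 0$ (maximizing $x\mapsto xe^{-\sigma x}$). Combining this with the factorization of the sums — each of the two sums $\sup_{x'}\sum_{S'\ni x'}\|A_{S'}\|_{\mathrm{op}}e^{(\kappa+\sigma)|S'|}\sum_{l'}e^{\rho|l'|}\cdots$ is exactly $\|A\|_{\kappa+\sigma,\rho}$ and similarly for $B$ — gives a bound of the form $2\cdot 2\cdot\frac{2}{e\sigma}\cdot e^{-\kappa}\|A\|_{\kappa+\sigma,\rho}\|B\|_{\kappa+\sigma,\rho}$, which matches the claimed constant $\frac{4e^{-(\kappa+1)}}{\sigma}$ after bookkeeping the $e^{-1}$ and the factors of $2$. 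I expect the bookkeeping of constants to be the only slightly delicate part; conceptually, the main point that makes it work is that the extra regularity $\sigma$ converts the combinatorial blow-up of supports (the polynomial factor produced by Lemma \ref{lemma.fs}) into the harmless factor $1/(e\sigma)$. Finally, strong locality of $[A,B]$ follows because each $[A_{S'},B_{S''}]$ is strongly local supported on $S'\cup S''$ (a commutator of strongly local operators, using that $N^{(\alpha)}$ is itself strongly local, as in the proof of Lemma \ref{cor.lignano.pineta}), and the local decomposition of $[A,B]$ is by definition \eqref{eq:LocalCommutator}.
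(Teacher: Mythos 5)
Your proposal follows essentially the same route as the paper's proof: decompose via \eqref{eq:LocalCommutator}, split the Fourier index, apply Lemma \ref{lemma.fs} with two sets, and trade the polynomial factor for $1/(e\sigma)$ using the extra analyticity $\sigma$. One bookkeeping remark: the $e^{-\kappa}$ in the stated constant comes from the sharper inclusion $|S'\cup S''|\le|S'|+|S''|-1$ (valid because $S'\cap S''\neq\varnothing$), which you should use in place of $|S'\cup S''|\le|S'|+|S''|$; and bounding $n\,e^{-\sigma n}\le\frac{1}{e\sigma}$ directly with $n=|S'|+|S''|$ saves the extra factor of $2$ that your splitting introduces, recovering exactly $\frac{4e^{-(\kappa+1)}}{\sigma}$.
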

\begin{proof} Using the definition of local parts of a commutator given by \eqref{eq:LocalCommutator} and the fact that $A,B \in \mathcal{O}_{\kappa+\sigma,\rho}$, one has
	\begin{align*}
	\Vert [A, B] \Vert_{\kappa, \rho} &\;=\; \sup_{x \in \Lambda} \sum_{S \ni x} \sum_{l \in \mathbb{Z}^m} e^{\kappa |S|} e^{\rho|l|} \Vert (\widehat{[A, B]_S})_{l} \Vert_{\mathrm{op}}\\
	&\leq  2 \sup_{x \in \Lambda} \sum_{S \ni x} \sum_{S_1, S_2 : S_1 \cup S_2 = S\atop S_1 \cap S_2 \neq \emptyset}  \sum_{k \in \mathbb{Z}^m} \sum_{l' \in \mathbb{Z}^m} e^{\kappa |S|} e^{\rho|l-l'|} \Vert(\widehat{A_{S_1}})_{l-l'}\Vert_{\mathrm{op}} \Vert(\widehat{B_{S_2}})_{l'}\Vert_{\mathrm{op}}\\
	&\leq  2 e^{-\kappa} \sup_{x \in \Lambda} \sum_{S_1, S_2 : S_1 \cup S_2 \ni x \atop S_1 \cap S_2 \neq \emptyset} e^{\kappa \left( |S_1| + |S_2|\right)} F_{S_1, S_2}(A,B)\,,
	\end{align*}
	with
	\begin{align*}
	F_{S_1, S_2}(A, B) &:= \sum_{l\in \mathbb{Z}^m} \sum_{l' \in \mathbb{Z}^m} e^{\rho|l-l'|} e^{\rho|l'|} \Vert(\widehat{A_{S_1}})_{l - l'}\Vert_{\mathrm{op}} \Vert(\widehat{B_{S_2}})_{l'}\Vert_{\mathrm{op}} \\
	& = \sum_{l' \in \mathbb{Z}^m}  e^{\rho|l'|} \Vert(\widehat{B_{S_2}})_{l'}\Vert_{\mathrm{op}} \sum_{l \in \mathbb{Z}^m} e^{\rho|l|} \Vert(\widehat{A_{S_1}})_{l}\Vert_{\mathrm{op}}\,.
	\end{align*}
	By Lemma \ref{lemma.fs} with $q = 1$, one has
	\begin{align*}
	\Vert [A, B] \Vert_{\kappa,\rho} & \leq 4 e^{-\kappa} \max_{p \in \mathfrak{S}_2} \sup_{x_{p(1)} \in \Lambda} \sum_{S_1 \ni x_{p(1)}} \sup_{x_{p(2)} \in \Lambda} \sum_{S_2 \ni x_{p(2)}} e^{\kappa (|S_1| + |S_2|)} (|S_1| + |S_2|) F_{S_1, S_2}(A, B)\\
	&\leq 4 e^{-(\kappa + 1)} \sigma^{-1} \max_{p \in \mathfrak{S}_2} \sup_{x_{p(1)} \in \Lambda} \sum_{S_1 \ni x_{p(1)}} \sup_{x_{p(2)} \in \Lambda} \sum_{S_2 \ni x_{p(2)}} e^{(\kappa + \sigma) (|S_1| + |S_2|)} F_{S_1, S_2}(A, B)\\
	& = 4 e^{-(\kappa + 1)} \sigma^{-1}  \Vert A\Vert_{\kappa + \sigma, \rho} \Vert B\Vert_{\kappa + \sigma, \rho}\,,
	\end{align*}
	where we have used the fact that $	\sup_{x \in \R} x e^{-\sigma x} \leq \frac{1}{e}$.
	This proves \eqref{comm.estimate}.
	Using that $A, B \in \mathcal{O}^\strong_{\kappa + \sigma, \rho}$ and that $S'\nsubseteq S$, by \eqref{eq:LocalCommutator}
	we deduce that
	$$
	\begin{aligned}
	{[[A, B]_S, N^{(\alpha)}_{S'}]} &= \hspace{-10pt}\sum_{S_1, S_2 : S_1 \cup S_2 = S\atop S_1 \cap S_2 \neq \emptyset} \hspace{-10pt} [[A_{S_1}, B_{S_2}], N^{(\alpha)}_{S'}]\\
	&= -\hspace{-10pt} \sum_{S_1, S_2 : S_1 \cup S_2 = S\atop S_1 \cap S_2 \neq \emptyset} \hspace{-10pt} [[B_{S_2}, N^{(\alpha)}_{S'}], A_{S_1}]  - \hspace{-10pt}\sum_{S_1, S_2 : S_1 \cup S_2 = S\atop S_1 \cap S_2 \neq \emptyset}\hspace{-10pt} [[N^{(\alpha)}_{S'}, A_{S_1}], B_{S_2}] \, .
	\end{aligned}
	$$
	Now we note that, if $S' \nsubseteq S,$ $\forall S_1, S_2$ such that $S = S_1 \cup S_2$ then it must be that both $S' \nsubseteq S_1$, and $S' \nsubseteq S_2$. But then, by definition of strongly local operator
	$[B_{S_2}, N^{(\alpha)}_{S'}] = 0$ and $[N^{(\alpha)}_{S'}, A_{S_1}] = 0$, 
	which proves $	[[A, B]_S, N^{(\alpha)}_{S'}] = 0$, namely $[A,B] \in \mathcal{O}^\strong_{\kappa, \rho}$.
\end{proof}

\begin{proof}[Proof of Lemma \ref{cor.lignano.pineta}]
	We use the expansion $e^A B e^{-A}=\sum_{q\geq 0} \frac{1}{q!} \mathrm{Ad}_A^qB$ and we prove that $\mathrm{Ad}^q_A B \in \mathcal{O}^\strong_{\kappa + \sigma, \rho}$ with
	\begin{equation}\label{ad.A.B.q}
	\| \mathrm{Ad}_A^q B \|_{\kappa, \rho} \leq \left(\frac q e\right)^q \frac{(4 e^{-\kappa})^q}{\sigma^q} \| A\|^q_{\kappa + \sigma, \rho} \| B\|_{\kappa + \sigma, \rho}\,.
	\end{equation}
	We start with observing that
	\begin{equation}\label{gli.acchi}
	\begin{split}
	\mathrm{Ad}^q_{A(\varphi)} B(\varphi) &= \sum_{S \in \mathcal{P}_c(\Lambda)} (\mathrm{Ad}^q_{A(\varphi)}B(\varphi))_S\,, \\ 
	(\mathrm{Ad}^q_{A(\varphi)} B(\varphi))_S &= \sum_{\substack{S_1,\dots,S_{q+1} \, s.t. \\ \bigcup_{j=1}^{q+1} S_j = S \\ S_r \cap (\bigcup_{r+1}^{q+1} S_\ell) \neq \varnothing \\ \forall r=1, \dots, q}} [{A_{S_1}}(\varphi),[{A_{S_2}}(\varphi),[\dots,[{A_{S_q}}(\varphi),{B_{S_{q+1}}}(\varphi)] \dots]\,.
	\end{split}
	\end{equation}
	Arguing as in Lemma \ref{lem.2k}, one obtains
	\begin{align*}
	\Vert \textnormal{Ad}_A^q(B)\Vert_{\kappa, p} & \leq (2 e^{-\kappa})^{q} \sup_{x \in \Lambda} \sum_{\substack{S_1 \cup \dots \cup S_{q+1} \ni x \\
			S_r \cap (\bigcup_{s={r+1}}^{q+1} S_s) \neq \varnothing \\ \forall r=1,\dots,q}} e^{(|S_1| + \dots + |S_{q+1}|)\kappa}  F_{S_1, \dots, S_{q+1}} (A, B)\,,
	\end{align*}
	where we have defined
	$$
	\begin{aligned}
	F_{S_1, \dots, S_{q+1}} (A, B) &:= \sum_{l_1, \dots, l_{q+1} \in\mathbb{Z}^m} e^{\rho|l_1|} \Big(\prod_{j=1}^{q} \big\Vert (\widehat{A_{S_j}})_{l_j - l_{j+1}}\big\Vert_{\mathrm{op}} \Big) \big\Vert (\widehat{B_{S_{q+1}}})_{l_{q+1}}\big\Vert_{\mathrm{op}}\\
	&\leq \Big(\prod_{j = 1}^{q} \sum_{l_j \in \mathbb{Z}^m} e^{\rho|l_j|} \big\Vert (\widehat{A_{S_j}})_{l_j}\big\Vert_{\mathrm{op}}\Big) \sum_{l_{q+1} \in \mathbb{Z}^m} e^{\rho| l_{q+1}|} \big\Vert(\widehat{B_{S_{q+1}}})_{l_{q+1}}\big\Vert_{\mathrm{op}}\,.
	\end{aligned}
	$$
	Then by Lemma \ref{lemma.fs} one obtains
	\begin{multline*}
	\Vert \textnormal{Ad}^q_A B \Vert_{\kappa,\rho} \leq  (4 e^{-\kappa})^q \max_{p \in \mathfrak{S}_{q+1}} \sup_{x_{p(1)} \in \Lambda} \sum_{S_1 	\ni  x_{p(1)}}  \dots \sup_{x_{p(q+1)} \in \Lambda} \sum_{S_{q+1} \ni x_{p(q+1)}} e^{\kappa(|S_1| + \dots |S_{q+1}|)} \cdot \\ \cdot (|S_1| + \dots + |S_{q+1}|)^{q} F_{S_1, \dots, S_{q+1}}(A, B)\,,
	\end{multline*}
	which by Cauchy estimates implies \eqref{ad.A.B.q}.
	Arguing analogously as in the proof of Lemma \ref{lem.2k}, one also proves $ \mathrm{Ad}_A^q B \in \mathcal{O}^\strong_{\kappa, \rho}$.
	Summing over $q$, \eqref{ad.A.B.q} gives
	\begin{align*}
	\Big\Vert \sum_{q \geq 1} \frac{1}{q!}\textnormal{Ad}^q_A B \Big\Vert_{\kappa,\rho} &\leq \sum_{q \geq 1} \frac{1}{q!} \left(\frac{q}{e} \right)^q \left(\frac{4 e^{-\kappa} \|A\|_{\kappa + \sigma,\rho}}{\sigma}\right)^q \|B\|_{\kappa + \sigma, \rho}\\
	&= \frac{4 e^{-\kappa} \|A\|_{\kappa + \sigma, \rho}}{\sigma} \sum_{q \geq 0} \frac{1}{(q+1)!} \left(\frac{q+1}{e} \right)^{q+1} \left(\frac{4 e^{-\kappa} \|A\|_{\kappa + \sigma, \rho}}{\sigma}\right)^q \|B\|_{\kappa + \sigma, \rho}\,.
	\end{align*}
	Since, by Stirling formula $q! \geq	\left(\frac{q}{e}\right)^q \sqrt{2\pi q}$, one has
	$
	\frac{1}{(q+1)!} \left(\frac{q+1}{e}\right)^{q+1} \leq 1\,,
	$
	and the series converges due to the condition $4 e^{-\kappa} \|A\|_{\kappa + \sigma, \rho}<\sigma$,	this proves \eqref{anna}. Equation \eqref{ha.ragione} is proved in a similar way.
\end{proof}

\subsection{Proof of Lemma \ref{lem:LemmaSempreDiverso}}

	First, we use $A(\omega t)=\sum_{S \in \mathcal{P}_c(\Lambda)} A_S(\omega t)$ and triangular inequality to have
	\[
	\Vert [A(\omega t), e^{-\ii \tau Z} O e^{\ii \tau Z}] \Vert_{\mathrm{op}} \leq \sum_{x \in \Lambda} \sum_{\substack{S \in \mathcal{P}_c(\Lambda)\\ {x \in S}}} \Vert [A_S(\omega t), e^{-\ii \tau Z} O e^{\ii \tau Z}] \Vert_{\mathrm{op}} =: (\star) \, .
	\]
	We now define $Q_{S_O}$ as the smallest ball that contains $S_O$; we call $r_Q$ its radius. Then we construct $B_{S_O}$ which is the ball of radius $r_{Q}+v\tau$ and the same center as $Q_{S_O}$. Then by Lemma \ref{lieb.rob} we get for any fixed $\tau>0$
	\[
	(\star) \leq 2\sum_{x \in  B_{S_O}} \sum_{\substack{S \in \mathcal{P}_c(\Lambda) \\ \text{s.t. } x \in S}} \Vert A_S(\omega \tau) \Vert_{\mathrm{op}} \Vert O \Vert_{\mathrm{op}} + \sum_{x \in \Lambda \setminus B_{S_O}} \sum_{\substack{S \in \mathcal{P}_c(\Lambda) \\ \text{s.t. } x \in S}} \Vert A_S(\omega \tau) \Vert_{\mathrm{op}} \Vert O \Vert_{\mathrm{op}} |S_O| e^{- \kappa (d(S,S_O)-v \tau))} = :(*)\,.
	\]
	 For any $x \in S$:
	\[
	\begin{split}
	d(S,S_O)-v\tau & \geq d(x,S_O)-v\tau-|S| \geq d(x, Q_{S_O}) - v \tau - |S| = d(x,B_{S_O})-|S|
	\end{split}
	\]
	which means
	\[
	e^{-\kappa(d(S,S_O)-v\tau)} \leq e^{-\kappa d(x,B_{S_O})} e^{\kappa |S|}
	\]
	and then
	\[
	(*) \leq 2\sum_{x \in  B_{S_O}} \sum_{\substack{S \in \mathcal{P}_c(\Lambda) \\ \text{s.t. } x \in S}} \Vert A_S(\omega\tau) \Vert_{\mathrm{op}} \Vert O \Vert_{\mathrm{op}}+ \sum_{x \in \Lambda \setminus B_{S_O}} \sum_{\substack{S \in \mathcal{P}_c(\Lambda) \\ \text{s.t. } x \in S}} \Vert A_S (\omega\tau) \Vert_{\mathrm{op}} \Vert O \Vert_{\mathrm{op}} |S_O| e^{\kappa |S|} e^{-\kappa d(x,B_{S_O})} \, .
	\]
	For any $\ell \in \mathbb{N}$, let $C_\ell  = \{x \in \Lambda \,  | \, \ell < d(x,B_{S_O}) \leq \ell+1 \}$. Thus,
	\[
	\begin{split}
	\sum_{x \in \Lambda \setminus B_{S_O}} \sum_{\substack{S \in \mathcal{P}_c(\Lambda) \\ \text{s.t. } x \in S}} &\Vert A_S(\omega\tau) \Vert_{\mathrm{op}} \Vert O \Vert_{\mathrm{op}} |S_O| e^{\kappa |S|} e^{-\kappa d(x,B_{S_O})}  = \sum_{\ell \geq 0} \sum_{x \in C_\ell} \sum_{\substack{ S \in \mathcal{P}_c(\Lambda) \\ \text{s.t. } x \in S}} \Vert A_S(\omega\tau) \Vert_{\mathrm{op}} \Vert O \Vert_{\mathrm{op}} |S_O| e^{\kappa|S|} e^{-\kappa(\ell-1)} \\
	&\leq \sum_{\ell \geq 0} |S_O| (|S_O|+v\tau+\ell+1)^d e^{-\kappa (\ell-1)} \sup_{x \in \Lambda} \sum_{\substack{S \in \mathcal{P}_c(\Lambda) \\ \text{s.t. } x \in S}} \Vert A_S(\omega\tau) \Vert_{\mathrm{op}} \Vert O \Vert_{\mathrm{op}} e^{\kappa |S|} \\
	& \leq |S_O| (S_O+v\tau)^d e^{-1} \sum_{\ell\geq 0} \left( 1+\frac{\ell+1}{|S_O|+v\tau} \right)^d e^{-\kappa \ell} \sup_{x \in \Lambda} \sum_{\substack{S \in \mathcal{P}_c(\Lambda) \\ \text{s.t. } x \in S}} \Vert A_S(\omega\tau) \Vert_{\mathrm{op}} \Vert O \Vert_{\mathrm{op}} e^{\kappa |S|}\,.
	\end{split}
	\]
	Since the series in $\ell$ is convergent, we define $C(d,\kappa):= e \sum_{\ell \geq 0} (2+\ell)^d e^{-\kappa \ell}$ and we obtain
	\[
	\sum_{x \in \Lambda \setminus B_{S_O}} \sum_{\substack{S \in \mathcal{P}_c(\Lambda) \\ \text{s.t. } x \in S}} \Vert A_S(\omega\tau) \Vert_{\mathrm{op}} \Vert O \Vert_{op} |S_O| e^{\kappa |S|} e^{-\kappa d(x,B_{S_O})} \leq |S_O| (|S_O|+v \tau)^d C(d,\kappa) \Vert O \Vert_{\mathrm{op}} \Vert A(\omega\tau) \Vert_{\kappa} \, .
	\]
	Then, for fixed $\tau$, since $C(d,\kappa)\geq 1,$ one has
	\[
	\begin{aligned}
	\Vert [A(\omega\tau), e^{-\ii \tau Z} O e^{\ii \tau Z} ] \Vert_{\mathrm{op}} &\leq 2{(|S_O+v\tau)^d}\Vert O \Vert_{\mathrm{op}} \Vert A(\omega\tau) \Vert_{0} + |S_O| (|S_O|+v \tau)^d C(d,\kappa) \Vert O \Vert_{\mathrm{op}} \Vert A(\omega\tau) \Vert_{\kappa}\\
	& \leq 3|S_O| (|S_O|+v \tau)^d C(d,\kappa) \Vert O \Vert_{\mathrm{op}} \Vert A(\omega\tau) \Vert_{\kappa}.
	\end{aligned}
	\]
	Recalling that $\sup_{\tau \in \mathbb{R}} \Vert A(\omega\tau)\Vert_\kappa \leq \Vert A \Vert_{\kappa,0}$, one has
	\[
	\int_0^t \ud \tau \Vert[A(\omega\tau),e^{-\ii \tau Z} O e^{\ii \tau Z} ] \Vert_{\mathrm{op}} \leq 3|S_O|\Vert O \Vert_{\mathrm{op}} \Vert A \Vert_{\kappa,0} \frac{C(d,\kappa)}{(d+1) v(d,\kappa,\Vert Z \Vert_{2 \kappa})} \left[(|S_O|+v(d,\kappa,\Vert Z \Vert_{2 \kappa}) t)^{d+1}-|S_O|^{d+1}  \right]\,,
	\]
	Through elementary computations one sees that
	\[
		\frac{(|S_O|+y)^{d+1}-|S_O|^{d+1}}{y} \leq 2^{d+1} (d+1) |S_O|^{d+1} \langle y\rangle^{d+1} \, \qquad \forall y >0 \, .
	\]
	Putting $y=v(Z,\kappa)t$, and recalling that $v(Z,\kappa)=C(d)\kappa^{-{(d+2)}}e^{\kappa}\Vert Z \Vert_{2 \kappa}$, one gets
	\[
		\begin{split}
			\int_0^t \ud \tau \Vert[A,e^{-\ii \tau Z} O e^{\ii \tau Z} ] \Vert_{\mathrm{op}} &\leq 3 \cdot 2^{d+1} |S_0|^{d+2} C(d,\kappa) \langle tv(\kappa,Z) \rangle^{d+1} t \Vert O \Vert_{\mathrm{op}} \Vert A \Vert_{\kappa} \\
			&\leq C(d,\kappa,|S_O|)\langle \Vert Z \Vert_{2\kappa} \rangle^{d+1} \langle t \rangle^{d+1} t \Vert O \Vert_{\mathrm{op}} \Vert A \Vert_{\kappa} \, ,
		\end{split}
	\]
	with $C(|S_O|,d,\kappa)=3\cdot 2^{d+1} |S_O|^{d+2}C(d,\kappa) \big(C(d) \kappa^{-{(d+2)}} e^{\kappa}\big)^{d+1}$.

\footnotesize

\paragraph{\footnotesize Data availability statement.} This manuscript has no associated data.

\paragraph{\footnotesize Conflicts of interests/Competing interests.} The authors have no competing interests
to declare that are relevant to the content of this article.


\end{document}